\theoremstyle{plain}
\newtheorem{theorem}{Theorem}[section]
\newtheorem{lemma}[theorem]{Lemma}
\newtheorem{corollary}[theorem]{Corollary}
\newtheorem{assumption}[theorem]{Assumption}
\newtheorem{example}[theorem]{Example}
\newtheoremstyle{remm}%
{\topsep}
{\topsep}
{\sffamily}
{}
{\bfseries}
{.}
{ }
{}
\theoremstyle{remm}
\newtheorem{rem}[theorem]{Remark}
\newenvironment{remark}
{\begin{rem}\setlength{\hangindent}{30 pt}}
{\end{rem}}
\newcommand{\lb}{\left\{}
\newcommand{\rb}{\right\}}
\newcommand{\Def}{\overset{\text{def}}{=}}
\newcommand{\eps}{\varepsilon}
\newcommand{\beps}{\bar \eps}
\newcommand{\vrho}{\varrho}
\newcommand{\vkap}{\varkappa}
\newcommand{\vsig}{\varsigma}
\newcommand{\KK}{K}
\newcommand{\Err}{\mathcal{E}}
\newcommand{\err}{\textsc{\tiny E}}
\newcommand{\OO}{\mathcal{O}}
\newcommand{\R}{\mathbb{R}}
\newcommand{\N}{\mathbb{N}}
\newcommand{\C}{\mathbb{C}}
\newcommand{\Xsp}{\mathsf{X}}
\newcommand{\Borel}{\mathscr{B}}
\newcommand{\Pspace}{\mathscr{P}}
\newcommand{\PSI}{\Pspace(I)}
\newcommand{\PPSI}{\Pspace(\Pspace(I))}
\newcommand{\PSint}{\Pspace[0,1]}
\newcommand{\BP}{\mathbb{P}}
\newcommand{\BE}{\mathbb{E}}
\newcommand{\filt}{\mathscr{F}}
\newcommand{\gilt}{\mathscr{G}^{(N)}}
\newcommand{\bI}{\mathbf{I}}
\newcommand{\granup}{\lceil N \alpha \rceil}
\newcommand{\grandn}{\lfloor N \alpha \rfloor}
\DeclareMathOperator{\Hom}{\operatorname{Hom}}
\newcommand{\Prem}{\textbf{P}^{\text{prem}}}
\newcommand{\Prot}{\textbf{P}^{\text{prot}}}
\newcommand{\ex}{\text{ex}}
\newcommand{\Merton}{\mathscr{M}}
\newcommand{\strict}{\text{strict}}
\DeclareMathOperator{\supp}{\operatorname{supp}}
\newcommand{\rate}{\textsf{R}}
\newcommand{\idio}{\text{I}}
\newcommand{\system}{\text{S}}
\newcommand{\PTimes}{\mathcal{T}}
\newcommand{\tL}{\bar L}
\newcommand{\UU}{U}
\newcommand{\tUU}{\bar \UU}
\newcommand{\tUUN}{\bar \UU^{(N)}}
\newcommand{\VV}{V}
\newcommand{\tVV}{\bar \VV}
\newcommand{\fI}{\mathfrak{I}}
\newcommand{\uu}{\mathfrak{u}}
\newcommand{\CS}{\mathsf{S}}
\newcommand{\CP}{\mathcal{P}}
\newcommand{\Zn}{Z^{(n)}}
\newcommand{\calS}{\mathcal{S}}
\newcommand{\calG}{\mathcal{G}}
\newcommand{\bH}{\mathbf{H}}
\newcommand{\bPhi}{\mathbf{\Phi}}
\newcommand{\calI}{\mathcal{I}}
\begin{document}
\title{Exact Pricing Asymptotics for Investment-Grade Tranches of Synthetic CDO's.  Part II: A Large Heterogeneous Pool}

\author{Richard B. Sowers}
\address{Department of Mathematics\\
    University of Illinois at Urbana--Champaign\\
    Urbana, IL 61801}
\email{r-sowers@illinois.edu}
\date{\today}

\begin{abstract} We use the theory of large deviations to study the pricing
of investment-grade tranches of synthetic CDO's.  In this paper, we consider a
heterogeneous pool of names.  Our main tool is a large-deviations analysis which
allows us to precisely study the behavior of a large amount of idiosyncratic randomness.  Our calculations allow a fairly general treatment of correlation.
\end{abstract}

\maketitle

\section{Introduction}
It has been difficult to read the recent financial news without finding mention of Collateralized Debt Obligations (CDO's).  These financial instruments
provide ways of aggregating risk from a large number of sources (viz. bonds) and reselling
it in a number of parts, each part having different risk-reward characteristics.
Notwithstanding the role of CDO's in the recent market meltdown, the near
future will no doubt see the financial
engineering community continuing to develop structured investment
vehicles like CDO's.  Unfortunately, computational challenges in this area are formidable.
The main types of these assets have several common problematic features:
\begin{itemize}
\item they pool a large number of assets
\item they tranche the losses.
\end{itemize}
The ``problematic'' nature of this combination is that the trancheing procedure is nonlinear;
as usual, the effect of a nonlinear transformation on a high-dimensional
system is often difficult to understand.  Ideally, one would like a theory which gives,
if not explicit answers, at least some guidance.

In \cite{SowersCDOI}, we formulated a \emph{large deviations} analysis
of a homogeneous pool of names (i.e. bonds).   The
theory of large deviations is a collection of ideas which are often useful in
studying rare events (see \cite{SowersCDOI} for a more
extensive list of references to large deviations analysis of financial
problems).  In \cite{SowersCDOI}, the rare event was
that the notional loss process exceeded the tranche attachment
point for an investment-grade tranche.
Our interest here is heterogeneous pool of names, where the names can have
different statistics (under the risk-neutral probability measure).
There are several perspectives from which to view this effort.
One is that we seek some sort of \emph{homogenization} or \emph{data fusion}.  Is there
an effective macroscopic description of the behavior of the CDO 
when the underlying instruments are a large number of different types
of bonds?  Another is an investigation into the \emph{fine detail}
of the rare events which cause loss in the investment-grade tranches.
There may be many ways or ``configurations'' for the investment-grade
tranches to suffer losses.  Which one is most likely to happen?
This is not only of academic interest; it also is intimately tied to
quantities like loss given default and also to numerical simulations.

We believe this to be an important component of a larger
analysis of CDO's, particularly in cases where correlation comes from only a few sources (we will pursue a simple form of this idea in Subsection \ref{S:Correlated}).  We will find a natural generalization of the result
of \cite{SowersCDOI}, where the dominant term (as the number of names
becomes large) was a relative entropy.  Here, the dominant term will
be an integrated entropy, with the integration being against a distribution
in ``name'' space.  Our main result is given in Theorem \ref{T:Main} and \eqref{E:premas}.

\section{The Model}\label{S:Model}
As in \cite{SowersCDOI}, we let $I\Def [0,\infty]$.
We endow $I$ with its usual topology under which it is Polish (cf. \cite{SowersCDOI}).  For each $n\in \N\Def\{1,2\dots\}$, the $n$-th name will
default at time $\tau_n$, where $\tau_n$ is an $I$-valued random variable.
To fix things, our event space will be $\Omega\Def I^\N$ and\footnote{As usual, for any topological space $\Xsp$, $\Borel(\Xsp)$ is the Borel sigma-algebra of subsets of $\Xsp$, and $\Pspace(\Xsp)$ is the
collection of probability measures on $(\Xsp,\Borel(\Xsp))$.} $\filt\Def \Borel(I^\N)$.  Fix next $N\in \N$ (which corresponds to a pool of size $N$) and $\BP_N\in \Pspace(I^\N)$ and let $\BE_N$
be the associated expectation operator..
Following \cite{SowersCDOI}, we define the notional and tranched loss processes as
\begin{equation}\label{E:novac} L^{(N)}_t \Def \frac{1}{N}\sum_{n=1}^N\chi_{[0,t]}(\tau_n) \qquad
\text{and}\qquad \tL^{(N)}_t \Def \frac{(L^{(N)}_t-\alpha)^+-(L^{(N)}_t-\beta)^+}{\beta-\alpha} \end{equation}
for all $t\in \R$, with $0<\alpha<\beta\le 1$, where $\alpha$ and $\beta$ are the attachment and
detachment points of the tranche (since the $\tau_n$'s are all nonnegative, $L^{(N)}_t=0$ for $t<0$).  Our interest is then
\begin{equation*} S_N\Def  \frac{\BE_N[\Prot_N]}{\BE_N[\Prem_N]} \end{equation*}
where
\begin{equation}\label{E:sarenca} \Prot_N \Def \int_{s\in [0,T)} e^{-\rate s}d\tL^{(N)}_s \qquad\text{and}\qquad
\Prem(N)\Def \BE_N\left[\sum_{t\in\PTimes} e^{-\rate t}\left(1-\tL^{(N)}_t\right)\right] \end{equation}
with $\rate$ being the interest rate, $T$ being the time horizon of the contract, and $\PTimes$ being the (finite) set of times at which the premium payments are due (and such that $t\le T$ for all $t\in \PTimes$).  We have assumed here, for the sake of simplicity, no recovery.  Our interest specifically
is in $N$ large.

Let's now think about the sources of randomness in the names.  Each name is affected by its own \emph{idiosyncratic} randomness and by \emph{systemic} randomness (which affects all of the names).  Assumedly, the systemic randomness,
which corresponds to macroeconomic factors, is \emph{low-dimensional}
compared to the number of names.  For example, there may be only a handful of macroeconomic factors which a pool of many thousands of names.  We can capture this functionality as
\begin{equation}\label{E:structural} \chi_{\{\tau_n<T\}} = \chi_{A_n}(\xi^\idio_n,\xi^\system) \end{equation}
where the $\{\xi^\idio_n\}_{n\in \N}$ and $\xi^\system$ are
all independent random variables, and $A_n$ is some appropriate set in the product space of the sets where the $\xi^\idio_n$'s and $\xi^\system$ take values.

Our interest is to understand the implications of the structural model
\eqref{E:structural}.  We are not so much concerned with specific models for the $\xi^\idio_n$'s, the $\xi^\system$, or the $A_n$'s but rather the structure of the rare losses in the investment-grade tranches.
We would also like to avoid, as much as possible, a detailed analysis
of the parts of \eqref{E:structural} since in practice what we have available
to carry out pricing calculations is the price of credit default swaps for
the individual names; i.e. (after a transformation), $\BP_N\{\tau_N<T\}$.
Thus we can't with certainty get our hands on the details of \eqref{E:structural}.  There may in fact be several models of the type \eqref{E:structural}
which lead to the same ``price'' for the rare events involved in an
investment-grade tranche.  If we can understand more about the structure
of rare events in these tranches, we can understand which aspects of
\eqref{E:structural} are important (and then try to calibrate specific models
using that insight).

Regardless of the details of \eqref{E:structural}, we can make some headway.
The notional loss at time $T-$ will be given by
\begin{equation*} L^{(N)}_{T-} = \frac{1}{N}\sum_{n=1}^N \chi_{A_n}(\xi^\idio_n,\xi^\system). \end{equation*}
The definition of an investment-grade tranche is that $\BP\lb L^{(N)}_{T-}>\alpha\rb$ is small.  Guided by Chebychev's inequality, lets' define
\begin{equation*} \mu^{(N)} \Def \frac{1}{N}\sum_{n=1}^N \BE\left[\chi_{A_n}(\xi^\idio_n,\xi^\system)\right] \qquad \text{and}\qquad \sigma^{(N)} \Def \sqrt{\BE\left[\left(L^{(N)}_{T-}-\mu^{(N)}\right)^2\right]}. \end{equation*}
If $\alpha>\mu^{(N)}$, Chebychev's inequality gives us that
\begin{equation*} \BP\lb L^{(N)}_{T-}>\alpha\rb \le \frac{\left(\sigma^{(N)}\right)^2}{\left(\alpha-\mu^{(N)}\right)^2}. \end{equation*}
In order for this to be small, we would like that $\sigma^{(N)}$ be small;
this is the point of pooling.  For any fixed value of $x$, the conditional
law of $L^{(N)}_{T-}$ given that $\xi^\system=x$ is the variance of 
$\tfrac{1}{N}\sum_{n=1}^N\chi_{A_n}(\xi^\idio_n,x)$;
thus the conditional variance of $L^{(N)}_{T-}$ given that $\xi^\system=x$
is at most of order $\tfrac{1}{4N}$.  Hopefully, when we reinsert
the systemic randomness, the variance of $L^{(N)}$ will still be small,
and we will indeed have an investment-grade tranche.

In fact, we can do better than Chebychev's inequality.
By again conditioning on $\xi^\system$, we can write that
\begin{equation*} \BP\lb L^{(N)}_{T-}>\alpha\rb = \BE\left[\BP\lb L^{(N) }_{T-}>\alpha\big|\xi^\system\rb \right] \end{equation*}
Thus the tranche will be investment-grade if $\BP\lb L^{(N) }_{T-}>\alpha\big|\xi^\system=x\rb$ is small for ``most'' values of $x$ (see Remark \ref{R:systemic}).  As mentioned above,
however, we know the law of $L^{(N)}_{T-}$ conditioned on $\xi^\system$.
Namely, 
\begin{equation*} \BP\lb L^{(N) }_{T-}>\alpha\big|\xi^\system=x\rb = 
\BP\lb \frac{1}{N}\sum_{n=1}^N\chi_{A_n}(\xi^\idio_n,x)>\alpha\rb. \end{equation*}
This then clearly motivates a natural two-step approach.
Our first step is to condition on the value of the systemic randomness 
(which we may think of as fixing a ``state of the world'' or a ``regime'') and concentrate
on how rare events occur due to idiosyncratic randomness (i.e., to effectively \emph{suppress} the systemic randomness).  It will turn out that
this is in itself a fairly involved calculation.  Nevertheless, it is
connected with a classic problem in large deviations theory---\emph{Sanov's theorem}.  With this in hand, we should
then be able to return to the original problem and average over the systemic randomness (in Subsection \ref{S:Correlated}).  Some of the
finer details of these effects of correlation will appear in sequels
to this paper.  Here we will restrict our interest in the effects of correlation
to a very simple model (which is hopefully nevertheless illustrative).

Let's get started.  We want to consider the effect of a large number of names.  For each $N$, we suppose that $\tau_n$ (for $n\in \{1,2\dots N\}$) has distribution $\mu^{(N)}_n\in \PSI$.  To reflect our initial working assumption
that the names are independent, we thus let the risk neutral probability $\BP_N\in \Pspace(I^\N)$ be such
that\footnote{since $\BP_N$ only specifies the law of $\{\tau_n\}_{n=1}^N$,
not the law of the rest of the $\tau_n$'s, $\BP_N$ is not unique in $\Pspace(I^\N)$.}
\begin{equation*} \BP_N\left(\bigcap_{n=1}^N\{\tau_n\in A_n\}\right) = \prod_{n=1}^N \mu^{(N)}_n(A_n) \end{equation*}
 for all $\{A_n\}_{n=1}^N\subset \Borel(I)$\footnote{It is something of a personal choice that we are fixing the measurable space $(\Omega,\filt)$ and the random variables $\tau_n$, and letting the probability measure $\BP_N$ depend on
$N$.  We could just as easily have fixed a common probability measure and
let the default times be $N$-dependent.  Given our later $N$-dependent measure change in Section \ref{S:MeasureChange}, we decided to have the measure be $N$-dependent from the start.}.

\begin{example}\label{Ex:SimpleExample} Fix distributions $\check \mu_a$ and $\check \mu_b$ on $I$
\textup{(}i.e., $\check \mu_a$ and $\check \mu_b$ are in $\PSI$\textup{)}.
Assume that for each $N$, every third \textup{(}i.e., $n\in 3\N$\textup{)} name follows distribution $\check \mu_a$ and the others follow distribution $\check \mu_b$; i.e.,
\begin{equation*} \mu^{(N)}_n = \begin{cases} \check \mu_a &\text{if $n\in 3\N$} \\
\check \mu_b &\text{if $n\in \N\setminus 3\N$} \end{cases}\end{equation*}
for all $n\in \{1,2\dots N\}$.
To be even more specific, one might let $\mu_A$ correspond to a bond with Moody's A3
rating, and one might let $\mu_B$ correspond to a bond with Moody's Ba1 rating
\textup{(}see \cite{Moodys}\textup{)}.  Although we could separately carry out the analysis of \cite{SowersCDOI} for the $A$ bonds and the $B$ bonds, we shall find that the combined CDO reflects a nontrivial combination of the calculations for each separate bond.  In particular, the losses in the CDO stem from a preferred \emph{combination} of losses in both types of bonds.  See the ideas of Example \ref{Ex:twobonds}.  \end{example}

While the above example will give us insight into some calculations, another example along the lines of a Merton-type model will be of more practical interest.
\begin{example}\label{Ex:Merton}  Assume that under $\BP_N$ the default likelihoods are given by 
Merton-type models \textup{(}and of course, they are all independent\textup{)}.
To keep the ideas and notation simple, let's assume that the companies have
common risk-neutral drift $\theta$, initial valuation $1$, and bankruptcy
barrier $K\in (0,1)$.  Assume\footnote{See Section \ref{S:Merton}.}, however, that under $\BP_N$, $n$-th company has volatility $\sigma^{(N)}_n$,
and that the $\{\sigma^{(N)}_n\}_{n=1}^N$'s are approximately distributed according to
a gamma distribution of scale $\sigma_\circ>0$ and shape $\vsig>0$; i.e., for every $0<a<b<\infty$,
\begin{equation}\label{E:sigmaNn} \lim_{N\to \infty}\frac{\left|\lb n\in \{1,2\dots N\}: a<\sigma^{(N)}_n<b\rb\right|}{N} = \int_{\sigma=a}^b \frac{\sigma^{\vsig-1}e^{-\sigma/\sigma_\circ}}{\sigma_\circ^\vsig \Gamma(\vsig)}d\sigma. \end{equation}
For each $\sigma>0$, let $\check \mu^{\Merton}_\sigma\in \PSI$ be given by
\begin{multline*}\check \mu^{\Merton}_\sigma(A) \Def \int_{t\in A\cap (0,\infty)}\frac{\ln (1/K)}{\sqrt{2\pi \sigma^2 t^3}}\exp\left[-\frac{1}{2\sigma^2 t}\left(\left(\theta-\frac{\sigma^2}{2}\right)t+\ln \frac{1}{K}\right)^2\right] dt\\
+ \lb 1-\int_{t\in (0,\infty)}\frac{\ln (1/K)}{\sqrt{2\pi \sigma^2 t^3}}\exp\left[-\frac{1}{2\sigma^2 t}\left(\left(\theta-\frac{\sigma^2}{2}\right)t+\ln \frac{1}{K}\right)^2\right] dt\rb \delta_{\infty}(A). \qquad A\in \Borel(I) \end{multline*}
We take $\mu^{(N)}_n = \check \mu^{\Merton}_{\sigma^{(N)}_n}$.
\end{example}
\noindent We will frequently return to these two examples.

\begin{remark} Since the $\tau_n$'s are independent, $L^{(N)}_{T-}$ is a sum
of $N$ independent (but not identically-distributed) Bernoulli random variables.
The central idea of collateralized debt obligations (and structured finance
in general) is that by pooling together a large number of assets,
one can use the law of large numbers to reduce variance and create
derivatives which depend on tail events.  Our assumption that
the names are independent means that in some sense we have ``maximal''
randomness; the dimension of idiosyncratic randomness is the same as the
dimension of the number of names.  Good bounds on tail behavior should
thus result.  Indeed, since the variance of a Bernoulli random variable
is less than $\tfrac14$, the variance of $L^{(N)}_{T-}$ is at
most $\tfrac{1}{4N}$.  We will exploit this calculation in Lemma \ref{L:tail}.
If the names are correlated, there is in a sense ``less'' randomness,
so the variance should be larger.  Between our work here and that of
\cite{SowersCDOI}, we have a number of tools which we can use when
the degree of randomness is indeed comparable to the number of names
in the CDO.\end{remark}

Not surprisingly, we will need several assumptions.  For the moment, we will phrase these in terms of the $\mu^{(N)}_n$'s.  Later on, in Section \ref{S:LimitExists}, we will find alternate assumptions if the $\mu^{(N)}_n$'s are samples from an underlying distribution on $\PSI$.

Our first assumption is that the $\UU^{(N)}$'s have a certain type of limit;
some sort of assumption of this type is of course necessary if we are to proceed
with an analysis for large $N$.
Note from \cite{SowersCDOI} that when the default times are identically distributed, the dominant asymptotic value of the protection leg depends only on the probability of default in time $[0,T)$ (i.e., it does not depend on the 
structure of the default distribution within $[0,T)$).  We will
see the same phenomenon here.  For each $N\in \N$, define $\tUUN\in \PSint$
as
\begin{equation}\label{E:tUUNDef} \tUUN \Def \frac{1}{N}\sum_{n=1}^N\delta_{\mu^{(N)}_n[0,T)} \end{equation}
Note that since $[0,1]$ is Polish and compact, so is $\PSint$ \cite[Ch. 3]{MR88a:60130}.  Thus $\{\tUUN\}_{n\in \N}$ has at least one cluster point.  We actually
assume that it is unique;
\begin{assumption}\label{A:LimitExists}  We assume that $\tUU \Def \lim_{N\to \infty}\tUUN$ exists. \end{assumption}

\begin{example} In Example \ref{Ex:SimpleExample}, we would have that
\begin{equation*} \tUU = \frac13 \delta_{\check \mu_a[0,T)}+\frac23 \delta_{\check \mu_b[0,T)} \end{equation*}
and in Example \ref{Ex:Merton}, we would similarly have that
\begin{equation}\label{E:MertonLimit} \tUU = \int_{\sigma\in (0,\infty)}\delta_{\check \mu_\sigma^{\Merton}[0,T)}\frac{\sigma^{\vsig-1}e^{-\sigma/\sigma_\circ}}{\sigma_\circ^\vsig \Gamma(\vsig)}d\sigma. \end{equation}
\end{example}

Our next assumption reflects our interest in cases where where it is unlikely that the tranched loss process $\tL^{(N)}$ suffers any losses by time $T$.
Note here that
\begin{equation}\label{E:ENLD} \BE\left[L^{(N)}_{T-}\right] = \frac1N\sum_{n=1}^N \mu^{(N)}_n[0,T)=\int_{p\in [0,1]}p\tUUN(dp) \end{equation}
for all $N\in \N$.  Also note that by the formula \eqref{E:novac} and the
fact that the variance of an indicator is less than or equal to $\tfrac14$,
we see that the variance of $L^{(N)}_{T-}$ tends to zero as $N\to \infty$.
\begin{assumption}[Investment-grade]\label{A:IG} We assume that
\begin{equation*} \int_{p\in [0,1]}p\tUU(dp)<\alpha.\end{equation*}
\end{assumption}
\noindent Assumption \ref{A:LimitExists} implies that
\begin{equation*} \alpha>\int_{p\in [0,1]}p\tUU(dp) = \lim_{N\to \infty}\int_{p\in [0,1]}p\tUUN(dp) = \lim_{N\to \infty}\frac{1}{N}\sum_{n=1}^N \mu^{(N)}_n[0,T). \end{equation*}
Thus Assumption \ref{A:IG} is equivalent to the requirement that
\begin{equation}\label{E:AltAIG} \varlimsup_{N\to \infty}\frac{1}{N}\sum_{n=1}^N \mu^{(N)}_n[0,T)<\alpha.  \end{equation}

\begin{example} In the case of Example \ref{Ex:SimpleExample}, Assumption \ref{A:IG} is that
\begin{equation*} \frac13 \check \mu_a[0,T)+\frac23 \check \mu_b[0,T)<\alpha \end{equation*}
and in the case of Example \ref{Ex:Merton}, Assumption \ref{A:IG}
is that 
\begin{equation*}\int_{\sigma\in (0,\infty)}\check \mu_\sigma^{\Merton}[0,T)\frac{\sigma^{\vsig-1}e^{-\sigma/\sigma_\circ}}{\sigma_\circ^\vsig \Gamma(\vsig)}d\sigma<\alpha. \end{equation*}
\end{example}
\begin{lemma}\label{L:tail} Thanks to Assumption \ref{A:IG}, we have that $\lim_{N\to \infty}\BP_N\lb L^{(N)}_{T-}>\alpha\rb = 0$.\end{lemma}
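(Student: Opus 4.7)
The plan is to use Chebyshev's inequality in essentially the form already telegraphed in the text before Lemma \ref{L:tail}. The key observation is that $L^{(N)}_{T-}$ is a normalized sum of the $N$ independent Bernoulli random variables $\chi_{[0,T)}(\tau_n)$, where $\tau_n$ has law $\mu^{(N)}_n$ under $\BP_N$. Hence
\begin{equation*}
\BE_N\bigl[L^{(N)}_{T-}\bigr] = \frac{1}{N}\sum_{n=1}^N \mu^{(N)}_n[0,T) = \int_{p\in[0,1]} p\,\tUUN(dp),
\end{equation*}
and, by independence and the elementary bound $\mathrm{Var}(\chi_{[0,T)}(\tau_n)) \le 1/4$, the variance of $L^{(N)}_{T-}$ is at most $1/(4N)$.

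Next I would invoke Assumption \ref{A:IG} in the equivalent form \eqref{E:AltAIG}. Pick any $\delta>0$ with $2\delta < \alpha - \varlimsup_{N\to\infty}\BE_N\bigl[L^{(N)}_{T-}\bigr]$. Then for all sufficiently large $N$,
\begin{equation*}
\alpha - \BE_N\bigl[L^{(N)}_{T-}\bigr] > \delta,
\end{equation*}
so that $\{L^{(N)}_{T-} > \alpha\} \subset \bigl\{L^{(N)}_{T-} - \BE_N[L^{(N)}_{T-}] > \delta\bigr\}$.

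Applying Chebyshev's inequality gives
\begin{equation*}
\BP_N\bigl\{L^{(N)}_{T-} > \alpha\bigr\} \le \frac{\mathrm{Var}\bigl(L^{(N)}_{T-}\bigr)}{\delta^2} \le \frac{1}{4N\delta^2},
\end{equation*}
which tends to zero as $N\to\infty$. There is no real obstacle here; the only point requiring care is translating Assumption \ref{A:IG}, which is phrased in terms of the limit $\tUU$, into the uniform-in-$N$ gap between $\BE_N[L^{(N)}_{T-}]$ and $\alpha$ that Chebyshev requires, and this is precisely what the remark following Assumption \ref{A:IG} (equation \eqref{E:AltAIG}) accomplishes. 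The crude variance bound $1/(4N)$ is already strong enough; no refined large-deviations estimate is needed at this stage.
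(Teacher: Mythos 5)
Your proof is correct and takes essentially the same route as the paper: Chebyshev's inequality applied to $L^{(N)}_{T-}$ together with the $\tfrac{1}{4N}$ variance bound for a normalized sum of independent Bernoullis, with Assumption \ref{A:IG} (via \eqref{E:AltAIG}) supplying the gap between $\alpha$ and $\BE_N[L^{(N)}_{T-}]$. The only cosmetic difference is that you fix a uniform $\delta>0$ up front, whereas the paper keeps $\alpha-\BE_N[L^{(N)}_{T-}]$ in the denominator and relies implicitly on its convergence to the positive limit $\alpha-\int p\,\tUU(dp)$; your version makes the uniformity slightly more explicit but amounts to the same argument.
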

\begin{proof} Assumption \ref{A:IG} is exactly that for $N\in \N$
sufficiently large, $\BE\left[L^{(N)}_{T-}\right]<\alpha$.
Thus by Chebychev's inequality,
\begin{multline*} \BP_N\lb L^{(N)}_{T-}>\alpha\rb
=\BP_N\lb L^{(N)}_{T-}-\BE\left[L^{(N)}_{T-}\right]>\alpha-\BE\left[L^{(N)}_{T-}\right]\rb\\
\le \frac{\BE_N\left[\left(L^{(N)}_{T-}-\BE\left[L^{(N)}_{T-}\right]\right)^2\right]}{\left(\alpha-\BE\left[L^{(N)}_{T-}\right]\right)^2} = \frac1{N^2} \frac{\sum_{n=1}^N \mu^{(N)}_n[0,T)\lb 1-\mu^{(N)}_n[0,T)\rb}{\left(\alpha-\BE\left[L^{(N)}_{T-}\right]\right)^2} \\
\le \frac1{4N\left(\alpha-\BE\left[L^{(N)}_{T-}\right]\right)^2}. \end{multline*}
This implies the claimed result.\end{proof}
\noindent Thus the event that the CDO suffers losses is thus \emph{rare}.

Next, we need some bounds on ``certainty''. 
\begin{remark}\label{R:certainty} Suppose, for the sake of argument, that we take $\check \mu_a$
and $\check \mu_b$ in Example \ref{Ex:SimpleExample} so that $\check \mu_a[0,T)=1$
and $\check \mu_a[0,T)=0$.  In other words, every third name is \emph{sure}
to default by time $T$ and default by time $T$ on the remaining bonds is
\emph{impossible}.

Such a CDO would of course be of no practical interest.  However, we could
envision a CDO where a third of the names are of junk status, and the
remaining bonds are of impeccable quality.  Our extreme example would
thus be a natural first-order approximation in that case.  

A moment's thought reveals that $L^{(N)}_{T-} = \frac{\lfloor N/3\rfloor}{N}$, so $\BE\left[L^{(N)}_{T-}\right]=\frac{\lfloor N/3\rfloor}{N}$.  Thus
if $\alpha>1/3$, Assumption \ref{A:IG} is satisfied.
In fact if $N$ is large enough, $\BP_N\lb L^{(N)}_{T-}>\alpha\rb =0$,
so this is not a very interesting case.  There is simply too much certainty here.
\end{remark}
\noindent Note that Assumption \ref{A:IG}
implies a bound on the number of bonds with certain default; since
$\chi_{\{p=1\}}\le p$ for all $p\in [0,1]$, Assumption \ref{A:IG} implies that
\begin{equation}\label{E:PA} \tUU\{1\}\le \int_{p\in [0,1]}p\tUU(dp)<\alpha. \end{equation}
The point of Remark \ref{R:certainty} is that if too many names
cannot default by time $T$, then there is no way that $L^{(N)}_{T-}$ can
exceed $\alpha$; we want to preclude this, and make sure that tranche
losses are a rare, but possible, event.
\begin{assumption}[Non-degeneracy]\label{A:NonDegen} We assume that $\tUU\{0\}<1-\alpha$. \end{assumption}
\noindent The equivalent formulation of this assumption in terms of the $\tUUN$'s is that
\begin{equation*} \varlimsup_{\eps \searrow 0}\varlimsup_{N\to \infty}\frac{\left|\lb n\in \{1,2\dots N\}: \mu^{(N)}_n[0,T)<\eps\rb\right|}{N}<1-\alpha. \end{equation*}
To connect this to our thoughts of Remark \ref{R:certainty}, note that
if 
\begin{equation*} \frac{\left|\lb n\in \{1,2\dots N\}: \mu^{(N)}_n[0,T)=0\rb\right|}{N}\ge 1-\alpha,\end{equation*}
then
\begin{multline*} L^{(N)}_{T-} = \sum_{\substack{1\le n\le N\\ \mu^{(N)}_n[0,T)>0}}\chi_{[0,T)}(\tau_n)
\le \frac{\left|\lb n\in \{1,2\dots N\}: \mu^{(N)}_n[0,T)>0\rb\right|}{N}\\
=1-\frac{\left|\lb n\in \{1,2\dots N\}: \mu^{(N)}_n[0,T)=0\rb\right|}{N}\le \alpha, \end{multline*}
in which case $\BP\lb L^{(N)}_{T-}>\alpha\rb=0$.

We thirdly need an assumption that ensures that defaults before time $T$
can occur \emph{right} before time $T$.  This is important for the precise
asymptotics of Theorem \ref{T:Main} (and essential for the asymptotics
of Section \ref{S:HAS}).
\begin{assumption}\label{A:NotFlat} We assume that
\begin{equation*} \varlimsup_{\delta \searrow 0}\varlimsup_{\eps \searrow 0}\varlimsup_{N\to \infty}\frac{\left| \lb n\in \{1,2\dots N\}: \mu^{(N)}_n[T-\delta,T)<\eps\rb\right|}{N}<\alpha. \end{equation*}
\end{assumption}
\noindent If $\mu^{(N)}_n[T-\delta,T)=0$, then (under $\BP_N$) the $n$-th name
is ``default-free'' right before $T$.  The point of this assumption is that this is default-free bonds are not ``too'' typical.
The requirement that we allow such a default-free structure
for only $\alpha$ (in percent) of the names is also natural.  If it is violated, then $\alpha$ or more (in percent) of the names may be default-free just prior to $T$; there would be a nonvanishing (as $N\to \infty$) probability
that the CDO suffers a loss due exactly to those names, and in that case,
$L^{(N)}$ would be flat in a small region $(T^*,T)$ before $T$ (one may
further assume that $(T^*,T)$ is the maximal such interval).  In this
case, the analysis of Section \ref{S:HAS} would be a development of
$t\mapsto L^{(N)}_{T^*-t}$ instead of $t\mapsto L^{(N)}_{T-t}$; this would
then affect the results of Theorem \ref{T:Main}.  

Lemma \ref{L:NotFlatLemma} contains one framework for checking this assumption.
Another way is the following result.
\begin{lemma} Assume that there is a neighborhood $\OO$ of $T$
such that each $\mu^{(N)}_n\big|_{\Borel(\OO)}$ is absolutely continuous
with respect to Lebesgue measure \textup{(}on $(\OO,\Borel(\OO))$\textup{)} with
density $f^{(N)}_n$ and that furthermore the $f^{(N)}_n$'s are equicontinuous.
If 
\begin{equation*} \varlimsup_{\vkap\searrow 0}\varlimsup_{N\to \infty}\frac{\left|\lb n\in \{1,2\dots N\}: f^{(N)}_n(T)<\vkap\rb\right|}{N}<\alpha, \end{equation*}
then Assumption \ref{A:NotFlat} holds.
\end{lemma}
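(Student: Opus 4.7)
The plan is to use the equicontinuity of $\{f^{(N)}_n\}$ to transfer the lower bound hypothesis on the density at $T$ into a lower bound on the measure $\mu^{(N)}_n[T-\delta,T)$, and then read off the set inclusions.

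First, I would fix an arbitrary $\vkap > 0$ and invoke equicontinuity to produce a $\delta_0 = \delta_0(\vkap) > 0$ (chosen so that $(T-\delta_0, T] \subset \OO$) with the property that for every $n$, $N$ and every $t \in [T-\delta_0, T]$,
\begin{equation*}
|f^{(N)}_n(t) - f^{(N)}_n(T)| < \vkap/2.
\end{equation*}
This is precisely where equicontinuity, as opposed to pointwise continuity, is essential: we need a single modulus of continuity that works uniformly in $n$ and $N$.

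Next, for any $0 < \delta < \delta_0$, if $f^{(N)}_n(T) \ge \vkap$, then $f^{(N)}_n(t) \ge \vkap/2$ for $t \in [T-\delta, T)$, so integrating gives
\begin{equation*}
\mu^{(N)}_n[T-\delta, T) = \int_{T-\delta}^T f^{(N)}_n(t)\,dt \ge \tfrac{\delta\vkap}{2}.
\end{equation*}
Consequently, once $\eps < \delta\vkap/2$, we obtain the set inclusion
\begin{equation*}
\lb n \in \{1,\dots,N\}: \mu^{(N)}_n[T-\delta, T) < \eps \rb \subseteq \lb n \in \{1,\dots,N\}: f^{(N)}_n(T) < \vkap \rb.
\end{equation*}

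Finally, I would peel off the three limits in the stated order. Dividing by $N$ and taking $\varlimsup_{N\to\infty}$ preserves the inclusion as an inequality of densities. Because the right-hand side does not involve $\eps$ and the inclusion holds for all sufficiently small $\eps$, taking $\varlimsup_{\eps \searrow 0}$ leaves the bound unchanged; similarly, taking $\varlimsup_{\delta \searrow 0}$ (with $\delta < \delta_0(\vkap)$) still leaves a bound that depends only on $\vkap$. Since this chain of inequalities holds for every $\vkap > 0$, one last $\varlimsup_{\vkap \searrow 0}$ yields
\begin{equation*}
\varlimsup_{\delta \searrow 0}\varlimsup_{\eps \searrow 0}\varlimsup_{N\to\infty}\frac{|\{n: \mu^{(N)}_n[T-\delta, T) < \eps\}|}{N} \le \varlimsup_{\vkap \searrow 0}\varlimsup_{N\to\infty}\frac{|\{n: f^{(N)}_n(T) < \vkap\}|}{N} < \alpha,
\end{equation*}
which is exactly Assumption \ref{A:NotFlat}. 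The only subtle point is the bookkeeping of the nested limits—one must be sure that the required relation $\eps < \delta\vkap/2$ is compatible with taking $\eps \to 0$ before $\delta \to 0$ before $\vkap \to 0$, which it is, precisely because $\eps$ sits innermost.
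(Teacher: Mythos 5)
Your proof is correct and follows essentially the same route as the paper: use equicontinuity to get a uniform $\delta_0(\vkap)$, integrate to convert the density lower bound into a measure lower bound, read off the set inclusion, and peel the limits. The only cosmetic difference is that the paper first fixes a single $\vkap>0$ for which the inner $\varlimsup_N$ is already below $\alpha$ (which exists by monotonicity in $\vkap$), whereas you keep $\vkap$ arbitrary and dispatch it with one final $\varlimsup_{\vkap\searrow 0}$; both bookkeeping schemes are valid.
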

\begin{proof}  First let $\vkap>0$ be such that
\begin{equation*} \varlimsup_{N\to \infty}\frac{\left|\lb n\in \{1,2\dots N\}: f^{(N)}_n(T)<\vkap\rb\right|}{N}<\alpha. \end{equation*}
Fix next $\bar \delta>0$ such that $[T-\bar \delta,T)\subset \OO$
and such that $\sup_{t\in [T-\bar \delta,T)}\left|f^{(N)}_n(t)-f^{(N)}(T)\right|<\tfrac{\vkap}{2}$.  Fix now $\delta\in (0,\bar \delta)$ and $\eps\in (0,\delta \vkap/2)$.  If $f^{(N)}_n\ge \vkap$, then $\mu^{(N)}_n[T-\delta,T)\ge (\vkap/2)\delta>\eps$; thus
\begin{equation*} \frac{\left|\lb n\in \{1,2\dots N\}:\, \mu^{(N)}_n[T-\delta,T)<\eps\rb\right|}{N}\le \frac{\left|\lb n\in \{1,2\dots N\}:\, f^{(N)}_n(T)<\vkap\rb\right|}{N}. \end{equation*}
First let $N\to \infty$, then $\eps\searrow 0$, then $\delta \searrow 0$
to see that Assumption \ref{A:NotFlat} holds.
\end{proof}

Our main result is an asymptotic (for $N\to \infty$) formula
for $\BE_N[\Prot_N]$ and $S_N$.  Since the result will require a fair amount of
notation, let's verbally understand its structure first.  The point of \cite{SowersCDOI} was that the dominant asymptotic of the price $S_N$ was a relative entropy term; this entropy was that of $\alpha$ relative to the risk-neutral probability of default.  In \cite{SowersCDOI},
all bonds were identically distributed, so this amounted to the entropy of
a single reference coin flip (the coin flip encapsulating default).  Here we have a distribution of coins, one for each name.  Not surprisingly, perhaps,
our answer again involves relative entropy, but where we average over ``name''-space,
and where we minimize over all configurations whose average loss is $\alpha$.

To state our main result, we need some notation.  For all $\beta_1$ and $\beta_2$ in $(0,1)$, define
\begin{equation*} \hbar(\beta_1,\beta_2) \Def \begin{cases} \beta_1\ln \frac{\beta_1}{\beta_2} + (1-\beta_1)\ln \frac{1-\beta_1}{1-\beta_2} &\text{for $\beta_1$ and $\beta_2$ in $(0,1)$} \\
\ln \frac{1}{\beta_2} &\text{for $\beta_1=1$, $\beta_2\in (0,1]$} \\
\ln \frac{1}{1-\beta_2} &\text{for $\beta_1=0$, $\beta_2\in [0,1)$} \\
\infty &\text{else.}\end{cases}\end{equation*}
For each $\alpha'\in (0,1)$ and $\tVV\in \PSint$, define
\begin{equation}\label{E:IDef} \fI(\alpha',\tVV) =\inf\lb \int_{p\in[0,1]}\hbar(\phi(p),p)\tVV(dp): \phi\in B([0,1];[0,1]), \int_{p\in[0,1]}\phi(p)\tVV(dp)=\alpha'\rb. \end{equation}
We will see in Lemma \ref{L:increasing} that $\lim_{N\to \infty}\fI(\alpha,\tUUN) = \fI(\alpha,\tUU)$.  Our main claim is that as $N\to \infty$,
\begin{equation}\label{E:expasym} \boxed{\boxed{S_N \asymp \exp\left[-N \fI(\alpha,\tUU)\right].}} \end{equation}
\begin{remark}\label{R:explanation} The minimization problem \eqref{E:IDef} is fairly natural.  The asymptotic price of the protection
leg depends upon how ``unlikely'' it is that the proportion of defaults
exceeds the attachment point $\alpha$.  When there is only one
type of name (e.g. \cite{SowersCDOI}), this is seen to depend on the relative entropy
of the attachment point $\alpha$ with respect to the risk-neutral probability that
a reference bond defaults before time $T$.  If there are several types
of bonds (cf. Example \ref{Ex:SimpleExample} and the calculations of Example \ref{Ex:twobonds}), there are a number of
ways to get the total proportion of defaults to exceed $\alpha$.
Namely, allow each bond type to default at a different rate, but require
that the total default rate exceeds $\alpha$.  Since the entropy
is relative to the risk-neutral probability of default before time $T$, we can organize these
calculations around $\tUU$.  Taking the minimum entropy of all such default
configurations, we get exactly \eqref{E:IDef}.\end{remark}

To proceed a bit further, we claim that we can explicitly solve \eqref{E:IDef}.  For $p\in [0,1]$ and $\lambda\in [-\infty,\infty]$, set
\begin{equation}\label{E:phidef} \Phi(p,\lambda) \Def \begin{cases} \frac{pe^\lambda}{1-p+pe^\lambda} &\text{if $\lambda\in \R$} \\
\chi_{(0,1]}(p) &\text{if $\lambda=\infty$} \\
\chi_{\{1\}}(p) &\text{if $\lambda=-\infty$.}\end{cases}\end{equation}
Some properties of $\Phi$ are given in Remark \ref{R:Phiprops}.
For $\alpha'\in (0,1)$, we define
\begin{align*} \mu^\dagger_{\alpha'} &\Def (1-\alpha')\delta_{\{0\}} + \alpha' \delta_{\{1\}} \\
\calG_{\alpha'} &\Def \lb \tVV\in \PSint: \tVV\{1\}\le \alpha'\le 1-\tVV\{0\},\, \tVV\not = \mu^\dagger_{\alpha'}\rb \\
\calG^\strict_{\alpha'} &\Def \lb \tVV\in \PSint: \tVV\{1\}< \alpha'< 1-\tVV\{0\}\rb. \end{align*}
Note that
\begin{equation*} \lb \tVV\in \PSint: \tVV\{1\}=\alpha'= 1-\tVV\{0\}\rb = \{\mu^\dagger_{\alpha'}\}. \end{equation*}
The following result solves the minimization problem for $\fI$ in terms of $\Phi$.
\begin{lemma}\label{L:finalITmin} Fix $\alpha'\in (0,1)$ and $\tVV\in \PSint$.  If $\tVV\in \calG_{\alpha'}$, there is a unique $\Lambda(\alpha',\tVV)\in [-\infty,\infty]$ such that
\begin{equation}\label{E:equality} \int_{p\in [0,1]}\Phi\left(p,\Lambda(\alpha',\tVV)\right)\tVV(dp) = \alpha'. \end{equation}
If $\tVV\in \calG^\strict_{\alpha'}$, then $\Lambda(\alpha',\tVV)\in \R$.
We have that
\begin{equation}\label{E:IIeq} \fI(\alpha',\tVV) = \begin{cases} \int_{p\in [0,1]}\hbar\left(\Phi(p,\Lambda(\alpha',\tVV)),p\right)\tVV(dp) &\text{if $\tVV\in \calG_{\alpha'}$} \\
0 &\text{if $\tVV=\mu^\dagger_{\alpha'}$} \\
\infty &\text{else.}\end{cases}\end{equation}
Finally, $\tVV\mapsto \Lambda(\alpha',\tVV)$ is continuous on $\calG_{\alpha'}$ and $\tVV\mapsto \fI(\alpha',\tVV)$ is continuous on $\calG^\strict_{\alpha'}$.
\end{lemma}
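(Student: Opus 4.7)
The plan is to recognize \eqref{E:IDef} as a constrained minimization with a pointwise-in-$p$ structure that can be fully solved by Lagrangian duality. For each $p\in(0,1)$, the map $\beta_1\mapsto \hbar(\beta_1,p)$ is strictly convex on $[0,1]$ (it is just the relative entropy of Bernoulli($\beta_1$) with respect to Bernoulli($p$)) with derivative $\ln\bigl(\beta_1(1-p)/((1-\beta_1)p)\bigr)$; solving the first-order condition $\partial_{\beta_1}\hbar(\beta_1,p)=\lambda$ yields $\beta_1=\Phi(p,\lambda)$. The boundary conventions in the definition of $\hbar$ together with the definition of $\Phi(p,\pm\infty)$ are precisely the pointwise limits governing the degenerate combinations $p\in\{0,1\}$ and $\lambda\in\{\pm\infty\}$, so $\Phi(\cdot,\lambda)$ is the pointwise minimizer of $\hbar(\cdot,p)-\lambda\,\cdot$ for every admissible $\lambda$.

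To pin down the multiplier, I would analyze $F_{\tVV}(\lambda):=\int_{[0,1]}\Phi(p,\lambda)\tVV(dp)$ on $[-\infty,\infty]$. Since $\Phi(p,\cdot)$ is increasing (strictly so for $p\in(0,1)$) with $\Phi(p,-\infty)=\chi_{\{1\}}(p)$ and $\Phi(p,\infty)=\chi_{(0,1]}(p)$, bounded convergence shows $F_{\tVV}$ is continuous with $F_{\tVV}(-\infty)=\tVV\{1\}$ and $F_{\tVV}(\infty)=1-\tVV\{0\}$. For $\tVV\in\calG_{\alpha'}$ the assumption $\tVV\{1\}\le\alpha'\le 1-\tVV\{0\}$ together with $\tVV\ne\mu^\dagger_{\alpha'}$ (which forces $\tVV$ to carry mass on $(0,1)$, making $F_{\tVV}$ strictly increasing on $\R$) gives existence and uniqueness of $\Lambda(\alpha',\tVV)\in[-\infty,\infty]$ solving \eqref{E:equality}; strict inequalities on both sides force $\Lambda\in\R$.

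For the identity \eqref{E:IIeq}, I would invoke the supporting-hyperplane inequality from convexity: for any admissible $\phi$ and $\tVV$-a.e.\ $p$,
\begin{equation*}
\hbar(\phi(p),p)\;\ge\;\hbar\bigl(\Phi(p,\Lambda(\alpha',\tVV)),p\bigr)+\Lambda(\alpha',\tVV)\bigl(\phi(p)-\Phi(p,\Lambda(\alpha',\tVV))\bigr),
\end{equation*}
with the $\Lambda=\pm\infty$ and $p\in\{0,1\}$ boundary cases checked directly (finiteness of the entropy forces $\phi(0)=0$ and $\phi(1)=1$ at atoms of $\tVV$, matching $\Phi(0,\cdot)=0$, $\Phi(1,\cdot)=1$). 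Integrating against $\tVV$ and using $\int\phi\,d\tVV=\int\Phi(\cdot,\Lambda)\,d\tVV=\alpha'$ gives the desired lower bound, attained at $\phi=\Phi(\cdot,\Lambda)$. The remaining cases are routine: at $\tVV=\mu^\dagger_{\alpha'}$, any finite-entropy admissible $\phi$ must satisfy $\phi(0)=0,\phi(1)=1$ and then automatically meets the mean constraint, so $\fI=0$; if $\tVV\{1\}>\alpha'$ or dually $\tVV\{0\}>1-\alpha'$, finite entropy forces $\phi(1)=1$ (resp.\ $\phi(0)=0$) and then the mean constraint cannot hold, so the feasible set has no finite-entropy element and $\fI=\infty$.

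Finally, for continuity of $\Lambda(\alpha',\cdot)$ on $\calG_{\alpha'}$, I would extract along any $\tVV_n\to\tVV$ a subsequential limit $\Lambda^\star\in[-\infty,\infty]$ of $\Lambda_n:=\Lambda(\alpha',\tVV_n)$ and pass \eqref{E:equality} to the limit: for $\Lambda^\star\in\R$ weak convergence combined with joint continuity and boundedness of $\Phi$ on $[0,1]\times\R$ does it; for $\Lambda^\star=\pm\infty$ one uses monotonicity of $\Phi(p,\cdot)$ together with Portmanteau bounds on $\tVV\{0\}$ and $\tVV\{1\}$ at the closed singletons, which force the target value $\alpha'$ to equal $\tVV\{1\}$ or $1-\tVV\{0\}$. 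Uniqueness of $\Lambda(\alpha',\tVV)$ then promotes the subsequential convergence to full-sequence convergence. Continuity of $\fI$ on $\calG^\strict_{\alpha'}$ then follows by dominated convergence, since on this set $\Lambda$ stays in a compact subset of $\R$ along any convergent sequence and $(\Lambda,p)\mapsto \hbar(\Phi(p,\Lambda),p)$ is continuous and uniformly bounded for $\Lambda$ in any compact subinterval of $\R$. I expect the main obstacle to be this last continuity step near the boundary of $\calG_{\alpha'}$, where the integrand $\hbar(\Phi(p,\pm\infty),p)$ develops singularities as $p\to 0$ or $p\to 1$; this is precisely why continuity of $\fI$ is asserted only on $\calG^\strict_{\alpha'}$.
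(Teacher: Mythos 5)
Your proposal is correct, and it takes a genuinely different (and substantially more economical) route to the identity \eqref{E:IIeq} than the paper does. The paper proceeds constructively: it first restricts to $\supp\tVV\subset(0,1)$, regularizes the feasible set by confining $\phi$ to $[\eps,1-\eps]$, proves existence of a minimizer $\phi^{(\eps)}$ for the regularized problem via weak $L^2_{\tVV}$-compactness (Alaoglu plus convexity of $\hbar(\cdot,p)$), characterizes $\phi^{(\eps)}$ on its interior level set with a Lagrange-multiplier argument (Lemma \ref{L:Lagrangianbasic}), then uses the uniform bounds $c_\pm$ to show the ``pinned'' sets $A_\eps\cup C_\eps$ eventually have $\tVV$-measure zero (Lemma \ref{L:ACsmall}), identifies $\fI_\eps=\fI^*(\alpha'_\eps,\tVV)$, passes $\eps\to0$, and then relaxes the support restriction in two further approximation steps (Lemmas \ref{L:extremalsnoboundary} and \ref{L:edgesupport}). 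You instead observe that $\Phi(p,\lambda)$ is exactly the point where $\partial_{\beta_1}\hbar(\cdot,p)$ equals $\lambda$, so the subdifferential (supporting-hyperplane) inequality
\begin{equation*}
\hbar(\phi(p),p)\ge\hbar\bigl(\Phi(p,\Lambda),p\bigr)+\Lambda\bigl(\phi(p)-\Phi(p,\Lambda)\bigr)
\end{equation*}
holds pointwise for $\Lambda\in\R$ (and trivially at the atoms $p\in\{0,1\}$ once finite entropy has forced $\phi(0)=0$, $\phi(1)=1$); integrating against $\tVV$ and cancelling the multiplier term via the common constraint $\int\phi\,d\tVV=\int\Phi(\cdot,\Lambda)\,d\tVV=\alpha'$ yields the lower bound at once, with equality at $\phi=\Phi(\cdot,\Lambda)$. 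This is the standard convex-duality shortcut and it eliminates the entire regularization/approximation apparatus of Appendix B's middle section; the only thing it does not directly reproduce is the finer structural information the paper extracts along the way (the precise description of boundary behaviour of the regularized minimizers), which the paper uses only motivationally. Two small points to make sure the direct argument is watertight: at $\Lambda=\pm\infty$ the supporting-hyperplane form degenerates and you really must, as you note, argue separately — e.g.\ when $\alpha'=\tVV\{1\}$, finite entropy forces $\phi(1)=1$ and $\phi(0)=0$ at the atoms, and then the constraint forces $\phi=0$ for $\tVV$-a.e.\ $p\in(0,1)$, reducing $\fI$ to $\int\hbar(\Phi(\cdot,-\infty),p)\,\tVV(dp)$; and when integrating the hyperplane inequality you should note that $\int\hbar(\Phi(\cdot,\Lambda),p)\tVV(dp)\le|\Lambda|<\infty$ (Remark \ref{R:Hprops}) so the RHS is well defined. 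Your treatment of existence/uniqueness of $\Lambda$ and of the two continuity claims matches the paper's (Lemmas \ref{L:LambdaCont}, \ref{L:fICont}) in substance.
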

\noindent The proof of this result will be one of the main goals of
Appendix B.  We note that Assumptions \ref{A:IG} (recall \eqref{E:PA}) and \ref{A:NonDegen}  imply that $\tUU\in \calG^\strict_\alpha$. 

One more final piece of notation is needed.  For $\alpha'\in (0,1)$ and $\tVV\in \calG_{\alpha'}$, define
\begin{equation}\label{E:sigmadef} \sigma^2(\alpha',\tVV) \Def \int_{p\in [0,1]}\Phi(p,\Lambda(\alpha',\tVV))\lb 1-\Phi(p,\Lambda(\alpha',\tVV))\rb \tVV(dp). \end{equation}
Lemma \ref{L:sigmalim} ensures that $\sigma^2(\alpha',\tUU)>0$.
\begin{theorem}[Main]\label{T:Main} We have that
\begin{multline*} \BE_N[\Prot_N] = \frac{e^{-\rate T}\exp\left[-\Lambda(\alpha,\tUU)\left(\granup -N \alpha\right)\right]}{N^{3/2}(\beta-\alpha)\sqrt{2\pi\sigma^2(\alpha,\tUU)}}\\
\times \lb \frac{e^{-\Lambda(\alpha,\tUU)}}{(1-e^{-\Lambda(\alpha,\tUU)})^2} +\frac{\granup-N \alpha}{1-e^{-\Lambda(\alpha,\tUU)}} + \Err(N)\rb \exp\left[-N \fI(\alpha,\tUUN)\right] \end{multline*}
where $\lim_{N\to \infty}\Err(N)=0$.\end{theorem}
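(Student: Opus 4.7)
The plan is to combine an exponential tilt, a uniform local CLT, and a geometric summation. First, integration by parts yields
\begin{equation*}
\BE_N[\Prot_N] = e^{-\rate T}\BE_N[\tL^{(N)}_{T-}] + \rate\int_0^T e^{-\rate s}\BE_N[\tL^{(N)}_s]\,ds.
\end{equation*}
Writing $\tUUN(s) = \tfrac1N\sum_n\delta_{\mu^{(N)}_n[0,s)}$, Assumption \ref{A:NotFlat} together with continuity of $\fI$ (Lemma \ref{L:finalITmin}) yields a quantitative lower bound $\fI(\alpha,\tUUN(s)) - \fI(\alpha,\tUUN) \ge c(T-s)$ for some $c>0$ and $s$ near $T$. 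The large deviations upper bound then gives $\BE_N[\tL^{(N)}_s]\le C e^{-Nc(T-s)}\BE_N[\tL^{(N)}_{T-}]$, so the integral contributes a factor $O(1/N)$ relative to the first term, absorbed into $\Err(N)$. It suffices to compute $\BE_N[\tL^{(N)}_{T-}] = \frac{1}{N(\beta-\alpha)}\sum_{k=\granup}^{\lfloor N\beta\rfloor}(k-N\alpha)\BP_N\{NL^{(N)}_{T-}=k\}$.

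Next, set $\Lambda=\Lambda(\alpha,\tUU)$ and $p_n=\mu^{(N)}_n[0,T)$, and introduce the Cram\'er exponential tilt
\begin{equation*}
\frac{d\BP_N^\Lambda}{d\BP_N}=\prod_{n=1}^N\frac{e^{\Lambda\chi_{[0,T)}(\tau_n)}}{1-p_n+p_ne^\Lambda}.
\end{equation*}
Under $\BP_N^\Lambda$ the $\tau_n$'s remain independent with tilted default probabilities $\Phi(p_n,\Lambda)$, and $NL^{(N)}_{T-}$ has mean $N\alpha(1+o(1))$ and variance $N\sigma^2(\alpha,\tUU)(1+o(1))$. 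The Legendre identity $\log(1-p+pe^\Lambda)=\Lambda\Phi(p,\Lambda)-\hbar(\Phi(p,\Lambda),p)$ and the defining equation \eqref{E:equality} combine, on $\{NL^{(N)}_{T-}=k\}$, to give
\begin{equation*}
\frac{d\BP_N}{d\BP_N^\Lambda}=\exp[-\Lambda(k-N\alpha)-N\fI(\alpha,\tUUN)+o(1)].
\end{equation*}
A uniform local CLT under $\BP_N^\Lambda$, namely $\BP_N^\Lambda\{NL^{(N)}_{T-}=k\}=(1+o(1))/\sqrt{2\pi N\sigma^2(\alpha,\tUU)}$ for $|k-N\alpha|\le(\log N)^2$, then yields
\begin{equation*}
\BP_N\{NL^{(N)}_{T-}=k\} = \frac{e^{-\Lambda(k-N\alpha)-N\fI(\alpha,\tUUN)}}{\sqrt{2\pi N\sigma^2(\alpha,\tUU)}}(1+o(1)).
\end{equation*}

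Substituting into the sum for $\BE_N[\tL^{(N)}_{T-}]$, reindexing $j=k-\granup$, and writing $\delta=\granup-N\alpha\in[0,1)$, the evaluable series
\begin{equation*}
\sum_{j\ge 0}(j+\delta)e^{-\Lambda j} = \frac{\delta}{1-e^{-\Lambda}} + \frac{e^{-\Lambda}}{(1-e^{-\Lambda})^2}
\end{equation*}
reproduces the bracketed expression of the theorem. Tails $k>\granup+(\log N)^2$ are exponentially suppressed by $e^{-\Lambda(k-N\alpha)}$ and absorbed into $\Err(N)$. Multiplying by the $e^{-\rate T}$ from the reduction step yields the stated formula.

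The principal obstacle is the uniform local CLT for non-identically distributed Bernoullis in the Cram\'er large-deviations window. Standard Fourier inversion on $(-\pi,\pi]$ reduces this to showing that the characteristic function of each $\mathrm{Bernoulli}(\Phi(p_n,\Lambda))$ has modulus strictly below $1$ away from $\theta=0$ and that Lyapunov's condition holds, both of which follow from Assumptions \ref{A:IG} and \ref{A:NonDegen} (which guarantee that a positive fraction of the $\Phi(p_n,\Lambda)$'s stay uniformly bounded away from $\{0,1\}$). A secondary technical point is the quantitative strict monotonicity of $\fI(\alpha,\tUUN(s))$ in $s$ near $T$ needed in the reduction step.
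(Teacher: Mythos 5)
Your high-level mechanism --- exponential tilt to center the loss at $\alpha$, a uniform local CLT for the number of defaults, and the explicit geometric sum $\sum_{j\ge0}(j+\delta)e^{-\Lambda j}$ --- matches the paper's machinery (Theorem \ref{T:measurechange}, Lemma \ref{L:probasymp}, and the computation of $\tilde I_{1,N}$). Where you diverge is in the reduction to the terminal value $\tL^{(N)}_{T-}$: the paper conditions on $\gamma_N$ and proves, via a time-reversed martingale argument (Lemma \ref{L:TTimes}), that $\tilde \BE_N[T-\tau^\alpha_N\,|\,\gamma_N]\to 0$ uniformly on $\{\gamma_N\le N^{1/4}\}$, whence $H_N(s)\sim e^{-\rate T}s/(N(\beta-\alpha))$; you instead integrate by parts and try to show the term $\rate\int_0^Te^{-\rate s}\BE_N[\tL^{(N)}_s]\,ds$ is negligible. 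The integration-by-parts identity itself is correct and the decomposition is a legitimate alternative framing, but the negligibility argument has a genuine gap. The Chernoff (large-deviations) upper bound gives only $\BE_N[\tL^{(N)}_s]\le\BP_N\{L^{(N)}_s>\alpha\}\le e^{-N\fI(\alpha,\tUUN(s))}$, which has no polynomial prefactor; the quantity you are comparing against, $\BE_N[\tL^{(N)}_{T-}]$, is of order $N^{-3/2}e^{-N\fI(\alpha,\tUUN)}$. Hence even granting the (itself unjustified --- see below) lower bound $\fI(\alpha,\tUUN(s))-\fI(\alpha,\tUUN)\ge c(T-s)$, the ratio is only bounded by $N^{3/2}e^{-Nc(T-s)}$, and the integral of this against $ds$ is of order $N^{1/2}$, not $o(1)$. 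To recover the missing $N^{-3/2}$ you would need to run a local CLT uniformly over $s$ near $T$ (or, equivalently, establish the conditional estimate the paper proves with the martingale argument), so the integration-by-parts route is not actually a shortcut.

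Two secondary points. First, Assumption \ref{A:NotFlat} is an iterated $\varlimsup$ in $\delta\searrow 0$ then $\eps\searrow 0$; it does not imply the Lipschitz-type lower bound $\fI(\alpha,\tUUN(s))-\fI(\alpha,\tUUN)\ge c(T-s)$ (the assumption is compatible with $\mu^{(N)}_n[T-\delta,T)$ vanishing super-polynomially in $\delta$). A weaker quantitative modulus would still integrate to $o(1)$, but only once the polynomial-prefactor problem above is resolved. Second, you tilt with $\Lambda=\Lambda(\alpha,\tUU)$, but the defining equation \eqref{E:equality} holds for $\tUU$, not $\tUUN$; computing the Radon--Nikodym derivative on $\{NL^{(N)}_{T-}=k\}$ yields $\exp[-\Lambda(k-N\alpha_N)-N\fI(\alpha_N,\tUUN)]$ with $\alpha_N=\int\Phi(p,\Lambda)\tUUN(dp)$, and matching this to $\exp[-\Lambda(k-N\alpha)-N\fI(\alpha,\tUUN)]$ leaves a residual of order $N(\alpha_N-\alpha)^2$, which is $o(1)$ only if one assumes $d(\tUUN,\tUU)=o(N^{-1/2})$ --- an assumption the paper does not make (cf.\ Remark \ref{R:Discretization}). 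The paper sidesteps this by tilting with $\Lambda(\alpha,\tUUN)$ (Theorem \ref{T:measurechange}), for which the identity is exact, and only replacing $\Lambda(\alpha,\tUUN)$ by $\Lambda(\alpha,\tUU)$ in the $N$-free prefactor at the last step (the $\tilde\Err_5$ term). Your proof should do the same.
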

\noindent The organization of the proof is in Section \ref{S:AsympAnal}.  As in \cite{SowersCDOI}, the granularity $\granup - N \alpha$ is
unavoidable in a result of this resolution.  
As we had in \cite{SowersCDOI},
$\lim_{N\to \infty}\BE_N[\Prem_N]=\sum_{t\in \PTimes} e^{-\rate t}$ so 
the asymptotic behavior of the premium $S_N$ is given by
\begin{equation} \label{E:premas}\begin{aligned} S_N &= \frac{e^{-\rate T}\exp\left[-\Lambda(\alpha,\tUU)\left(\granup -N \alpha\right)\right]}{N^{3/2}(\beta-\alpha)\sqrt{2\pi\sigma^2(\alpha,\tUU)}\lb \sum_{t\in \PTimes} e^{-\rate t}\rb}\\
&\qquad \times \lb \frac{e^{-\Lambda(\alpha,\tUU)}}{(1-e^{-\Lambda(\alpha,\tUU)})^2} +\frac{\granup-N \alpha}{1-e^{-\Lambda(\alpha,\tUU)}} + \Err'(N)\rb \exp\left[-N \fI(\alpha,\tUUN)\right] \end{aligned}\end{equation}
where $\lim_{N\to \infty}\Err'(N)=0$.  

\begin{remark}\label{R:Discretization} Although the dominant exponential asymptotics \eqref{E:expasym} follows from Theorem \ref{T:Main}, we cannot replace $\fI(\alpha,\tUUN)$ in Theorem \ref{T:Main} by $\fI(\alpha,\tUU)$; the pre-exponential asymptotics of Theorem \ref{T:Main} are at too fine a resolution to allow that.  
A careful examination of the calculations of Lemma \ref{L:fICont} reveals that
$\fI(\alpha,\tUUN)$ and $\fI(\alpha,\tUU)$ should differ by something on the order of the distance (in the Prohorov metric) between $\tUU^{(N)}$ and $\tUU$.  In general, we should expect that this distance would be of order $1/N$; as an example consider approximating a uniform distribution on $(0,1)$ by point masses at multiples of $1/N$.
Then we would have that $N \fI(\alpha,\tUUN) = N \{ \fI(\alpha,\tUU) + \text{O}(1/N)\} = N \fI(\alpha,\tUU) + \text{O}(1)$.  This $\text{O}(1)$ term would contribute to the pre-exponential asymptotics of Theorem \ref{T:Main}.
\end{remark}

To close this section, we refer the reader to Section \ref{S:Merton}, where we
simulate our results for the Merton model of Example \ref{Ex:Merton}.
We also point out that it would not be hard to combine the calculations of
Sections \ref{S:MeasureChange} and \ref{S:AsympAnal} to get an asymptotic formula
for the loss given default of the CDO.  The terms in front of the $\exp\left[-N \fI(\alpha,\tUUN)\right]$ in Theorem \ref{T:Main} would be a major part
of the resulting expression for loss given default.  We hope to pursue this elsewhere.

\subsection{Correlation}\label{S:Correlated}
We can now introduce a simple model of correlation without too much trouble.  Assume that $\xi^\system$ takes values in a finite set $\Xsp$.  Fix $\{p(x);\, x\in \Xsp\}$ such that $\sum_{x\in \Xsp}p(x)=1$ and $p(x)>0$ for all $x\in \Xsp$;
we will assume that $\xi^\system$ takes on the value $x$ with probability $p(x)$.  We can think of the set $\Xsp$ as the collection of possible states of the world.  If we believe in \eqref{E:structural}, we should then be in the previous
case if we condition on the various values of $\xi^\system$.  To formalize this,
fix a $\{\mu^{(N)}_n(\cdot,x);\, N\in \N,\, n\in \{1,2\dots N\}, x\in \Xsp\}\subset \PSI$.  For each $N\in \N$, fix $\BP_N\in \Pspace(I^\N)$ such that
\begin{equation}\label{E:corrprob} \BP_N\left(\bigcap_{n=1}^N\{\tau_n\in A_n\}\right) = \sum_{x\in\Xsp}\lb \prod_{n=1}^N \mu^{(N)}_n(A_n,x)\rb p(x) \end{equation}
for all $\{A_n\}_{n=1}^N\subset \Borel(I)$.

To adapt the previous calculations to this case, we need the analogue of Assumptions \ref{A:LimitExists}, \ref{A:IG}, \ref{A:NonDegen}, and \ref{A:NotFlat}.
Namely, we need that the limit
\begin{equation*} \tUU_x \Def \lim_{N\to \infty}\frac{1}{N}\sum_{n=1}^N\delta_{\mu^{(N)}_n([0,T),x)} \end{equation*}
exists for each $x\in \Xsp$, we need that 
\begin{equation*} \max_{x\in \Xsp}\int_{p\in [0,1]}p\tUU(dp,x)<\alpha \qquad \text{and}\qquad \max_{x\in \Xsp}\tUU(\{0\},x)<1-\alpha \end{equation*}
and we finally need that
\begin{equation*} \max_{x\in \Xsp}\varlimsup_{\delta \searrow 0}\varlimsup_{\eps \searrow 0}\varlimsup_{N\to \infty}\frac{\left| \lb n\in \{1,2\dots N\}: \mu^{(N)}_n([T-\delta,T),x)<\eps\rb\right|}{N}<\alpha. \end{equation*}

\begin{remark}\label{R:systemic} The requirement that $\max_{x\in \Xsp}\mu([0,T),x)<\alpha$ is a particularly unrealistic one.  It means that the tranche losses will be rare
for \emph{all} values of the systemic parameter.  In any truly applicable
model, the losses will come from a combination of bad values of the
systemic parameter and from tail events in the pool of idiosyncratic randomness
(i.e., we need to balance the size of $\BP\lb L^{(N) }_{T-}>\alpha\big|\xi^\system=x\rb$ against the distribution of $\xi^\system$).
One can view our effort here as study which focusses primarily on tail events in the pool of idiosyncratic randomness.
Any structural model which attempts to study losses due to both idiosyncratic and systemic randomness will most likely involve calculations which are similar
in a number of ways to ours here.  We will explore this issue elsewhere.
\end{remark}

Then
\begin{multline*} \BE_N[\Prot_N] = \frac{e^{-\rate T}}{N^{3/2}(\beta-\alpha)}\\
\times \sum_{x\in \Xsp}\left(\frac{\exp\left[-\Lambda(\alpha,\tUU_x)\left(\granup -N \alpha\right)\right]}{\sqrt{2\pi\sigma^2(\alpha,\tUU)}}\lb \frac{e^{-\Lambda(\alpha,\tUU_x)}}{(1-e^{-\Lambda(\alpha,\tUU_x)})^2} +\frac{\granup-N \alpha}{1-e^{-\Lambda(\alpha,\tUU_x)}} + \Err_x(N)\rb \right.\\
\left.\times \exp\left[-N \fI(\alpha,\tUUN_x)\right]p(x)\right) \end{multline*}
where $\lim_{N\to \infty}\Err_x(N)=0$ for each $x\in \Xsp$.
Similarly, we have that
\begin{multline*} S_N = \frac{e^{-\rate T}}{N^{3/2}(\beta-\alpha)\lb \sum_{t\in \PTimes} e^{-\rate t}\rb}\\
\times \sum_{x\in \Xsp}\left(\frac{\exp\left[-\Lambda(\alpha,\tUU_x)\left(\granup -N \alpha\right)\right]}{\sqrt{2\pi\sigma^2(\alpha,\tUU_x)}}\lb \frac{e^{-\Lambda(\alpha,\tUU_x)}}{(1-e^{-\Lambda(\alpha,\tUU_x)})^2} +\frac{\granup-N \alpha}{1-e^{-\Lambda(\alpha,\tUU_x)}} + \Err'(N)\rb\right.\\
\left. \times \exp\left[-N \fI(\alpha,\tUUN_x)\right]p(x)\right) \end{multline*}
where $\lim_{N\to \infty}\Err_x'(N)=0$ for all $x\in \Xsp$.  If we further assume
that there is a unique $x^*\in \Xsp$ such that $\min_{x\in \Xsp}  \fI(\alpha,\tUUN_x)=\fI(\alpha,\tUUN_{x^*})$ for $N\in \N$ sufficiently large, we furthermore have that
\begin{align*} \BE_N[\Prot_N] &= \frac{e^{-\rate T}}{N^{3/2}(\beta-\alpha)}\\
&\qquad \times \frac{\exp\left[-\Lambda(\alpha,\tUU_{x^*})\left(\granup -N \alpha\right)\right]}{\sqrt{2\pi\sigma^2(\alpha,\tUU)}}\lb \frac{e^{-\Lambda(\alpha,\tUU_{x^*})}}{(1-e^{-\Lambda(\alpha,\tUU_{x^*})})^2} +\frac{\granup-N \alpha}{1-e^{-\Lambda(\alpha,\tUU_{x^*})}} + \Err(N)\rb \\
&\qquad \times \exp\left[-N \fI(\alpha,\tUUN_{x^*})\right]p(x^*) \\
S_N &= \frac{e^{-\rate T}}{N^{3/2}(\beta-\alpha)\lb \sum_{t\in \PTimes} e^{-\rate t}\rb}\\
&\qquad \times \frac{\exp\left[-\Lambda(\alpha,\tUU_{x^*})\left(\granup -N \alpha\right)\right]}{\sqrt{2\pi\sigma^2(\alpha,\tUU_{x^*})}}\lb \frac{e^{-\Lambda(\alpha,\tUU_{x^*})}}{(1-e^{-\Lambda(\alpha,\tUU_{x^*})})^2} +\frac{\granup-N \alpha}{1-e^{-\Lambda(\alpha,\tUU_{x^*})}} + \Err'(N)\rb\\
&\qquad \times \exp\left[-N \fI(\alpha,\tUUN_{x^*})\right]p(x^*) \end{align*}
where $\lim_{N\to \infty}\Err(N)=0$ and $\lim_{N\to \infty}\Err'(N)=0$.

Note that we can use this methodology to approximately study Gaussian
correlations.  Fix a positive $M\in \N$ and define $x_i\Def \tfrac{i}{M}$
for $i\in \{-M^2,-M^2+1\dots M^2\}$; set $\Xsp \Def \{x_i\}_{i=-M^2}^{M^2}$.
Define
\begin{equation*} \Phi(x) \Def \int_{t=-\infty}^x\frac{1}{\sqrt{2\pi}}\exp\left[-\frac{t^2}{2}\right]dt \qquad x\in \R \end{equation*}
as the standard Gaussian cumulative distribution function.
Define
\begin{equation*} p(x_i) \Def \begin{cases} \Phi\left(x_i+\frac{1}{2M}\right)-\Phi\left(x_i-\frac{1}{2M}\right) &\qquad \text{if $i\in \{-M^2+1,\dots M^2-1\}$} \\
\Phi\left(x_{-M^2}+\frac{1}{2M}\right) &\qquad \text{if $i=-M^2$}\\
1-\Phi\left(x_{M^2}-\frac{1}{2M}\right) &\qquad \text{if $i=M^2$}\end{cases}\end{equation*}
If we have a pool of $N$ names and the risk-neutral probabilities of default of the $n$-th bond by time $T$ is $p^{(N)}_n$ and we want to consider a Gaussian copula with
correlation $\rho>0$ (the case $\rho<0$ can be dealt with similarly),
we would take the $\mu^{(N)}_n(\cdot,x_i)$'s such that
\begin{equation*} \mu^{(N)}_n([0,T),x_i) \Def \Phi\left(\frac{\Phi^{-1}(p^{(N)}_n)-\rho x_i}{\sqrt{1-\rho^2}}\right). \end{equation*}
This is related to the calculations of \cite{Glasserman} and \cite{MR2384674};
those calculations are asymptotically related to our calculations.  We shall explore the connection with these two papers elsewhere.  We note, by way of contrast with \cite{Glasserman} and \cite{MR2384674}, that our efforts give a good picture of the \emph{dynamics} of the loss process prior to expiry.
We also note that our model of \eqref{E:corrprob} is entirely comfortable with non-Gaussian correlation.  Note also that one could also (by discretization) allow the systemic parameter $\xi^\system$ to be path-valued.

\section{Large Deviations}\label{S:LD}
The starting point for our analysis is the random measure
\begin{equation}\label{E:background} \nu^{(N)} \Def \frac{1}{N}\sum_{n=1}^N \delta_{\tau_n}; \end{equation}
then $L^{(N)}_t = \nu^{(N)}[0,t]$.  As in \cite{SowersCDOI}, we want to compute the asymptotic
(for large $N$) likelihood
that $\nu^{(N)}[0,T)>\alpha$.  We want to do this via a collection of arguments stemming from the theory of large deviations.  The value of the calculations in this section is that they naturally
lead to a measure transformation (cf. Section \ref{S:MeasureChange}) which
will lead to the precise asymptotics of Theorem \ref{T:Main}.
For the moment, it is sufficient
for our arguments to be formal; it is sufficient to \emph{guess} a large deviations
rate functional for $L^{(N)}_{T-}$.  In the ensuing parts of this paper we will
show that this guess is correct (cf. Section \ref{S:AsympAnal}).

Define now
\begin{equation}\label{E:UUNDef} \UU^{(N)} \Def \frac{1}{N}\sum_{n=1}^N \delta_{\mu^{(N)}_n}; \end{equation}
for our calculations here in this section, we will assume that
\begin{equation}\label{E:UUNLim} \UU \Def \lim_{N\to \infty}\UU^{(N)} \end{equation}
exists (as a limit in $\Pspace(\PSI)$).  See Example \ref{Ex:Walk}.

Our approach is similar to that of \cite{SowersCDOI}; we first
identify a large deviations principle for $\nu^{(N)}$, and then use
the contraction principle to find what should be a rate function
for $L^{(N)}_{T-}$.  We hopefully can identify the large deviations principle
for $\nu^{(N)}$ by looking at the asymptotic moment generating function
for $\nu^{(N)}$ and appealing to the G\"artner-Ellis theorem.  The following
result gets us started.
\begin{lemma}\label{L:momgen} For $\varphi\in C_b(I)$,
\begin{equation*} \lim_{N\to \infty}\frac{1}{N}\ln \BE_N\left[\exp\left[N \int_{t\in I}\varphi(t)\nu^{(N)}(dt)\right]\right] = \int_{\rho\in \PSI}\lb \ln \int_{t\in I}e^{\varphi(t)}\rho(dt)\rb \UU(d\rho). \end{equation*}
\end{lemma}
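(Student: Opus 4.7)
The plan is to reduce the left-hand side to the integral of a single continuous bounded function against $\UU^{(N)}$, and then to invoke the assumed weak convergence $\UU^{(N)} \to \UU$ in $\Pspace(\PSI)$ from \eqref{E:UUNLim}. First, since $N \int_{t \in I} \varphi(t)\, \nu^{(N)}(dt) = \sum_{n=1}^N \varphi(\tau_n)$ and the $\tau_n$'s are independent under $\BP_N$ with $\tau_n \sim \mu^{(N)}_n$, the expectation factors as
$$\BE_N\left[\exp\left[N\int_{t\in I}\varphi(t)\nu^{(N)}(dt)\right]\right] = \prod_{n=1}^N \int_{t\in I} e^{\varphi(t)}\mu^{(N)}_n(dt).$$
Taking $\tfrac{1}{N}\ln$ of both sides and using the definition $\UU^{(N)} = \tfrac{1}{N}\sum_{n=1}^N \delta_{\mu^{(N)}_n}$ rewrites the quantity of interest as $\int_{\rho \in \PSI} F(\rho)\, \UU^{(N)}(d\rho)$, where $F(\rho) \Def \ln \int_{t\in I} e^{\varphi(t)} \rho(dt)$.

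The remaining ingredient is to verify $F \in C_b(\PSI)$. Boundedness follows from $\|\varphi\|_\infty < \infty$: the pointwise sandwich $e^{-\|\varphi\|_\infty} \le \int_I e^{\varphi}d\rho \le e^{\|\varphi\|_\infty}$ gives $|F(\rho)| \le \|\varphi\|_\infty$. For continuity, note that $I = [0,\infty]$ is compact, so $t \mapsto e^{\varphi(t)}$ lies in $C_b(I)$; thus if $\rho_k \to \rho$ weakly in $\PSI$ then $\int_I e^\varphi\, d\rho_k \to \int_I e^\varphi\, d\rho$, and since the integrals stay bounded below away from zero, continuity of $\ln$ on $(0,\infty)$ yields $F(\rho_k) \to F(\rho)$. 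With $F \in C_b(\PSI)$ established, the weak convergence $\UU^{(N)} \to \UU$ in $\Pspace(\PSI)$ gives $\int F\, d\UU^{(N)} \to \int F\, d\UU$, which is exactly the claimed identity.

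There is no substantial obstacle here: the lemma is essentially the tensorization of a one-dimensional cumulant computation followed by passage to the empirical-measure limit. The only care needed is to identify the correct test function on $\PSI$ and confirm it lies in $C_b(\PSI)$, which is where the hypotheses that $\varphi$ is bounded and continuous and that $I$ is compact come into play.
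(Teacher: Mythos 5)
Your proof is correct and follows the same route as the paper: factorize the expectation via independence, rewrite the resulting sum as $\int_{\PSI} F\,d\UU^{(N)}$ where $F(\rho) = \ln\int_I e^{\varphi}\,d\rho$, and pass to the limit by weak convergence. The only difference is presentational: where the paper delegates the $C_b(\PSI)$ membership of $F$ to Remark \ref{R:continuity} (applied to $e^{\varphi}\in C_b(I)$, composed with $\ln$), you spell out the boundedness sandwich $e^{-\|\varphi\|_\infty}\le \int_I e^{\varphi}\,d\rho\le e^{\|\varphi\|_\infty}$ and the continuity argument explicitly.
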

To make this a bit clearer, let's first carry out these calculations for our test case.
\begin{example}  For Example \ref{Ex:SimpleExample},
\begin{align*} &\lim_{N\to \infty}\frac{1}{N}\ln \BE_N\left[\exp\left[N \int_{t\in I}\varphi(t)\nu^{(N)}(dt)\right]\right]
= \lim_{N\to \infty}\frac{1}{N}\ln \BE_N\left[\exp\left[\sum_{n=1}^N \varphi(\tau_n)\right]\right]\\
&\qquad =\lim_{N\to \infty}\frac{1}{N}\ln \prod_{n=1}^N \BE_N\left[\exp\left[\varphi(\tau_n)\right]\right]\\
&\qquad =\lim_{N\to \infty}\frac{1}{N}\lb \left \lfloor \frac{N}{3}\right\rfloor \ln \int_{t\in I}e^{\varphi(t)}\check \mu_a(dt) + \left(N-\left \lfloor \frac{N}{3}\right \rfloor\right) \ln \int_{t\in I}e^{\varphi(t)}\check \mu_b(dt)\rb \\
&\qquad =\frac{1}{3} \ln \int_{t\in I}e^{\varphi(t)}\check \mu_a(dt) + \frac{2}{3}\ln \int_{t\in I}e^{\varphi(t)}\check \mu_b(dt). \end{align*}
\end{example}
We can now prove the result in full generality.
\begin{proof}[Proof of Lemma \ref{L:momgen}] For every $N$,
\begin{align*} &\frac{1}{N}\ln \BE_N\left[\exp\left[N \int_{t\in I}\varphi(t)\nu^{(N)}(dt)\right]\right] 
= \frac{1}{N}\ln \BE_N\left[\exp\left[\sum_{n=1}^N \varphi(\tau_n)\right]\right]\\
&\qquad =\frac{1}{N}\ln \prod_{n=1}^N \BE_N\left[\exp\left[\varphi(\tau_n)\right]\right]
=\frac{1}{N}\sum_{n=1}^N \ln \int_{t\in I}e^{\varphi(t)}\mu^{(N)}_n(dt)\\
&\qquad =\frac{1}{N}\sum_{n=1}^N \int_{\rho\in \PSI}\lb \ln \int_{t\in I}e^{\varphi(t)}\rho(dt)\rb \delta_{\mu^{(N)}_n}(d\rho)= \int_{\rho\in \PSI}\lb \ln \int_{t\in I}e^{\varphi(t)}\rho(dt)\rb \UU^{(N)}(d\rho). \end{align*}
Now use Remark \ref{R:continuity}; the claimed result thus follows.\end{proof}

We next appeal to the insights of large deviations theory.  We expect\footnote{Since this section is formal, we shall not prove this; to do so, we would
have to appeal to an abstract G\"artner-Ellis result (see \cite{MR1619036}) and verify several requirements
in $\PSI$.} that $\nu^{(N)}$ will be governed by a large deviations principle (in $\PSI$) with
rate function\footnote{As suggested by the weak topology of $\PSI$, we treat $\PSI$ as a subset of $C_b^*(I)$.}
\begin{equation}\label{E:icircdef} \fI^{(1)}(m)\Def \sup_{\varphi\in C_b(I)} \lb \int_{t\in I} \varphi(t)m(dt) - \int_{\rho\in \PSI}\lb \ln \int_{t\in I}e^{\varphi(t)}\rho(dt)\rb \UU(d\rho)\rb. \qquad m\in \PSI \end{equation}
By the contraction principle of large deviations, we then expect that $\nu^{(N)}[0,T)$ should be governed by a large deviations
principle (in $[0,1]$) with rate function
\begin{equation}\label{E:CP} \begin{aligned}\fI^{(2)}(\alpha') &\Def \inf_{\substack{ m\in \PSI \\ m[0,T)=\alpha'}} \fI^{(1)}(m)\\
&=\inf_{\substack{ m\in \PSI \\ m[0,T)=\alpha'}}\sup_{\varphi\in C_b(I)} \lb \int_{t\in I} \varphi(t)m(dt) - \int_{\rho\in \PSI}\lb \ln \int_{t\in I}e^{\varphi(t)}\rho(dt)\rb \UU(d\rho)\rb \end{aligned}\end{equation}
for all $\alpha'\in (0,1)$.

While all of this this looks very intimidating, there should be an entropy representation similar to that of \cite{SowersCDOI}.  
\begin{example}\label{Ex:twobonds} In Example \ref{Ex:SimpleExample}, we have
that $\nu^{(N)}$ is equal in law to 
\begin{equation*}  \frac{\left\lfloor N/3\right\rfloor}{N}\nu_a^{\lfloor N/3\rfloor}+\frac{N-\left\lfloor N/3\right\rfloor}{N}\nu_b^{N-\lfloor N/3\rfloor} \end{equation*}
where $\nu_a^n$ is the empirical measure of $n$ i.i.d. random variables with 
law $\check \mu_a$, and $\nu_b^n$ is the empirical measure of $n$ i.i.d. random variables with law $\check \mu_b$, and where the $\nu^n_a$'s and $\nu^n_b$'s are independent.  By standard large deviations results, we can see that $(\nu_a^{\lfloor N/3\rfloor},\nu_b^{N-\lfloor N/3\rfloor})$ is a $\PSI\times \PSI$-valued random variable and, as
$N\to \infty$, that it has a large deviations principle with rate function
\begin{equation} \label{E:aaa} \tilde \fI(m_a,m_b) = \frac13 H(m_a|\check \mu_a)+\frac23 H(m_a|\check \mu_a); \end{equation}
i.e., 
\begin{equation}\label{E:bbb} \BP_N\left\{(\nu_a^{\lfloor N/3\rfloor},\nu_b^{N-\lfloor N/3\rfloor})\in A\rb \overset{N\to \infty}{\asymp} \exp\left[-N\inf_{(m_a,m_b)\in A}\tilde \fI(m_a,m_b)\right] \end{equation}
for ``regular'' subsets $A$ of $\PSI\times \PSI$.  The $\frac13$ and $\frac23$
in \eqref{E:aaa} stems from the fact that 
$\nu_a^{\lfloor N/3\rfloor}$ is the sum of \textup{(}about\textup{)} $N/3$ point masses, while $\nu_b^{N-\lfloor N/3\rfloor}$ is the sum of \textup{(}about\textup{)} $2N/3$ point masses;
on the other hand, the rate in \eqref{E:bbb} is $N$.

We thus have from the contraction principle that $\nu^{(N)}$ has
a large deviations principle with rate function
\begin{equation*} \fI^\ex(m) = \inf\lb \tilde \fI(m_a,m_b): \frac13 m_a + \frac23 m_b=m\rb
=\inf\lb \frac13 H(m_a|\check \mu_a) + \frac23 H(m_b|\check \mu_b): \frac13 m_a + \frac23 m_b=m\rb. \end{equation*}

We can see this directly from \eqref{E:icircdef};
\begin{align*} \fI^\ex(m) &= \sup_{\varphi\in C_b(I)} \lb \int_{t\in I} \varphi(t)m(dt) - \frac13 \ln \int_{t\in I}e^{\varphi(t)}\check \mu_a(dt) - \frac23 \ln \int_{t\in I}e^{\varphi(t)}\check \mu_b(dt)\rb \\
&=\sup_{\varphi\in C_b(I)} \lb \int_{t\in I} \varphi(t)m(dt) - \frac13 \sup_{m_a\in \PSI}\lb \int_{t\in I}\varphi(t)m_a(dt) - H(m_a|\check \mu_a)\rb \right.\\
&\qquad \left.- \frac23 \sup_{m_b\in \PSI}\lb \int_{t\in I}\varphi(t)m_b(dt) - H(m_b|\check \mu_b)\rb \rb \\
&=\sup_{\varphi\in C_b(I)} \inf_{m_a,m_b\in \PSI} \lb \frac13 H(m_a|\check \mu_a) + \frac23 H(m_b|\check \mu_b) + \int_{t\in I} \varphi(t)\{m(dt) - \frac13 m_a(dt)-\frac23 m_b(dt)\}\rb. \end{align*}
Here we have used the duality between entropy and exponential integrals \textup{(}see \eqref{E:entdual}\textup{)}.  We would now like to appeal to a minimax result for Lagrangians and switch
the $\sup$ and $\inf$.  Note that $\PSI\times \PSI$ is a convex subset of
$C^*_b(I)\times C_b^*(I)$ and that $(m_a,m_b)\to H(m_a|\check \mu_a)+H(m_b|\check \mu_b)$ is convex on $\PSI\times \PSI$.  Apart from the problems arising
from the fact that $C_b(I)$ and $\PSI\times \PSI$ are infinite-dimensional,
a minimax result thus looks reasonable.  Let's see where this leads.  We should
have that
\begin{multline*} \fI^\ex(m) = \inf_{m_a,m_b\in \PSI}\sup_{\varphi\in C_b(I)} \lb \frac13 H(m_a|\check \mu_a) + \frac23 H(m_b|\check \mu_b) + \int_{t\in I} \varphi(t)\{m(dt) - \frac13 m_a(dt)-\frac23 m_b(dt)\}\rb\\
= \inf\lb \frac13 H(m_a|\check \mu_a) + \frac23 H(m_b|\check \mu_b): m = \frac13 m_a=\frac23 m_b\rb. \end{multline*}
Thus
\begin{equation*} \inf_{\substack{m\in \PSI \\ m[0,T)=\alpha'}}\fI^\ex(m) = \inf\lb \frac13 H(m_a|\check \mu_a) + \frac23 H(m_b|\check \mu_b):  m_a,m_b\in \PSI: \frac13 m_a[0,T)=\frac23 m_b[0,T)=\alpha'\rb. \end{equation*}
We can then use Lemma 7.1 from \cite{SowersCDOI} to simplify things even further.
We have that
\begin{equation}\begin{aligned} \inf_{\substack{m\in \PSI \\ m[0,T)=\alpha'}}\fI^\ex(m) &= \inf\lb \frac13 H(m_a|\check \mu_a) + \frac23 H(m_b|\check \mu_b):  m_a,m_b\in \PSI,\, \frac13 m_a[0,T)+\frac23 m_b[0,T)=\alpha'\rb. \\
&=\inf\lb \frac13 H(m_a|\check \mu_a) + \frac23 H(m_b|\check \mu_b):  m_a,m_b\in \PSI,\, p_a,p_b\in [0,1],\right.\\
&\qquad \qquad \left.\frac13 p_a[0,T)+\frac23 p_b[0,T)=\alpha', m_a[0,T)=p_a,\, m_b[0,T)=p_b\rb \\
&=\inf\lb \frac13 \hbar(p_a|\check \mu_a[0,T)) + \frac23 \hbar(p_b|\check \mu_b[0,T)):  p_a,p_b\in [0,1],\, \frac13 p_a[0,T)+\frac23 p_b[0,T)=\alpha'\rb. \end{aligned}\end{equation}
\end{example}

This leads to the following generalization.
\begin{lemma}\label{L:Variational} We have that $\fI^{(2)}(\alpha') = \fI(\alpha',\tUU)$
\textup{(}where $\fI$ is as in \eqref{E:IDef}\textup{)}.
\end{lemma}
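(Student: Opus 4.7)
The plan is to generalize the explicit two-type calculation of Example \ref{Ex:twobonds} to an arbitrary mixture governed by $\UU$. First I would rewrite the inner integrand in \eqref{E:icircdef} via the Donsker--Varadhan duality $\ln \int_{t\in I} e^{\varphi(t)}\rho(dt) = \sup_{m_\rho \in \PSI}\{\int_{t\in I} \varphi(t)\,m_\rho(dt) - H(m_\rho|\rho)\}$. Integrating against $\UU(d\rho)$ and swapping the $\sup_{\varphi}$ with an $\inf$ over measurable families $\{m_\rho\}_{\rho \in \PSI}$ (via a minimax argument, justified by convexity of $\{m_\rho\} \mapsto \int H(m_\rho|\rho)\UU(d\rho)$ and linearity in $\varphi$ of the Lagrangian $\int \varphi\, dm - \int \int \varphi\, dm_\rho \UU(d\rho)$), I expect the entropy-disintegration formula
\begin{equation*}
\fI^{(1)}(m) = \inf\left\{\int_{\rho\in \PSI} H(m_\rho|\rho)\,\UU(d\rho):\ \{m_\rho\}\subset \PSI,\ \int_{\rho\in\PSI} m_\rho\,\UU(d\rho)=m\right\}.
\end{equation*}

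Next, substituting into \eqref{E:CP} and noting that the outer constraint $m[0,T)=\alpha'$ together with the disintegration identity $m = \int m_\rho\UU(d\rho)$ collapses to the scalar constraint $\int_{\PSI} m_\rho[0,T)\,\UU(d\rho)=\alpha'$, I would obtain
\begin{equation*}
\fI^{(2)}(\alpha') = \inf\left\{\int_{\rho\in\PSI} H(m_\rho|\rho)\,\UU(d\rho):\ \int_{\rho\in\PSI} m_\rho[0,T)\,\UU(d\rho)=\alpha'\right\}.
\end{equation*}
The constraint now depends on each $m_\rho$ only through the scalar $m_\rho[0,T)$, so I would optimize pointwise in $\rho$: for fixed $\rho$ and fixed value $p := m_\rho[0,T)$, Lemma 7.1 of \cite{SowersCDOI} gives $\inf\{H(m_\rho|\rho):\ m_\rho[0,T)=p\} = \hbar(p,\rho[0,T))$. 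Setting $\phi(\rho) \Def m_\rho[0,T)$ reduces the problem to
\begin{equation*}
\fI^{(2)}(\alpha') = \inf\left\{\int_{\rho\in\PSI} \hbar\bigl(\phi(\rho),\rho[0,T)\bigr)\,\UU(d\rho):\ \phi\in B(\PSI;[0,1]),\ \int_{\rho\in\PSI}\phi(\rho)\,\UU(d\rho)=\alpha'\right\}.
\end{equation*}

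Finally, since both $\hbar(\phi(\rho),\rho[0,T))$ and $\phi(\rho)$ depend on $\rho$ only through the scalar $p=\rho[0,T)$ (after reparametrizing $\phi$ through this projection, which is optimal by Jensen's inequality in $\hbar(\cdot,p)$ applied fiberwise over the level sets of $\rho\mapsto\rho[0,T)$), the integral reduces to one against the pushforward of $\UU$ under $\rho\mapsto \rho[0,T)$, which by definition \eqref{E:tUUNDef} equals $\tUU$. This yields precisely \eqref{E:IDef} evaluated at $(\alpha',\tUU)$. The main obstacle is the minimax exchange in the first step, which requires a Sion-type theorem in the infinite-dimensional duality between $C_b(I)$ and $\PSI$; since the section is explicitly formal and serves only to motivate the measure change of Section \ref{S:MeasureChange} and the rigorous asymptotic analysis of Section \ref{S:AsympAnal}, I would treat the swap as heuristic and rely on the subsequent sections to confirm that $\fI(\alpha,\tUU)$ is indeed the correct large-deviations rate.
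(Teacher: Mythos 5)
Your heuristic chain — Donsker--Varadhan duality, a minimax swap to reach the disintegration $\fI^{(1)}(m)=\inf\{\int H(m_\rho|\rho)\UU(d\rho):\int m_\rho\UU(d\rho)=m\}$, then Lemma 7.1 of \cite{SowersCDOI} fiberwise and Jensen's inequality to pass to the pushforward $\tUU$ — matches the strategy the paper uses for the inequality $\fI^{(2)}(\alpha')\ge\fI(\alpha',\tUU)$, and you are right that this is the step where a Sion-type exchange is needed. But the paper's proof of Lemma \ref{L:Variational} in Appendix B is deliberately two-sided, and you have omitted the half that requires no minimax at all: the bound $\fI(\alpha',\tUU)\ge\fI^{(2)}(\alpha')$. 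There the paper fixes a candidate $\phi\in B([0,1];[0,1])$ with $\int\phi\,d\tUU=\alpha'$ and finite cost, builds from it the explicit tilting kernel $M(\rho)$ of \eqref{E:MDef}, observes that $\hbar(\phi(\rho[0,T)),\rho[0,T))\ge H(M(\rho)|\rho)$ (with equality when $\rho[0,T)\in(0,1)$), and then applies only the elementary direction of \eqref{E:entdual}, namely $H(M(\rho)|\rho)\ge\int\psi\,dM(\rho)-\ln\int e^{\psi}\,d\rho$, integrates against $\UU$, and recognizes the result as $\fI^{(1)}(dF_{\UU M^{-1}})\ge\fI^{(2)}(\alpha')$. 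This direction is where the proof does real work, and it is also what identifies the measure change of Section \ref{S:MeasureChange}; the paper explicitly says that producing this extremal structure is the point of proving the lemma at all.

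Your plan to ``treat the swap as heuristic and rely on the subsequent sections to confirm'' also inverts the paper's logic. Sections \ref{S:MeasureChange}--\ref{S:AsympAnal} prove Theorem \ref{T:Main} directly from the explicit $\Lambda(\alpha,\tUUN)$ and $\fI(\alpha,\tUUN)$ characterized in Lemma \ref{L:finalITmin}; they never invoke Lemma \ref{L:Variational}, which is offered precisely as an independent, self-contained justification that the guessed rate functional agrees with the contraction of the Sanov-type rate $\fI^{(1)}$. So deferring to those sections does not discharge the lemma. To turn your sketch into the paper's proof you would need to (a) isolate the elementary one-sided inequality above, built around the explicit construction \eqref{E:MDef}, and (b) handle the conditioning on $\rho\mapsto\rho[0,T)$ carefully (the paper introduces a regular conditional distribution $\check\UU_p$ to make the Jensen step precise), rather than asserting the fiberwise reparametrization informally.
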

\noindent We give the proof in Appendix B.

\begin{example} In Example \ref{Ex:SimpleExample}, we have that
\begin{equation*} \fI(\alpha,\tUU) = \frac13 \hbar(\Phi(\check \mu_a[0,T),\Lambda(\alpha,\tUU)),\check \mu_a[0,T))+\frac23 \hbar(\Phi(\check \mu_b[0,T),\Lambda(\alpha,\tUU)),\check \mu_b[0,T)) \end{equation*}
where $\Lambda(\alpha,\tUU)$ satisfies
\begin{equation*} \frac13 \Phi(\check \mu_a[0,T),\Lambda(\alpha,\tUU)) + \frac23 \Phi(\check \mu_b[0,T),\Lambda(\alpha,\tUU)) = \alpha. \end{equation*}
In Example \ref{Ex:Merton}, we have that
\begin{equation*} \fI(\alpha,\tUU) = \int_{\sigma\in (0,\infty)} \hbar(\Phi(\check \mu^\Merton_\sigma[0,T),\Lambda(\alpha,\tUU)),\check \mu^\Merton_\sigma[0,T))\frac{\sigma^{\vsig-1}e^{-\sigma/\sigma_\circ}}{\sigma_\circ^\vsig \Gamma(\vsig)}d\sigma\end{equation*}
where $\Lambda(\alpha,\tUU)$ satisfies
\begin{equation*} \int_{\sigma\in (0,\infty)} \Phi(\check \mu_\sigma^\Merton[0,T),\Lambda(\alpha,\tUU))\frac{\sigma^{\vsig-1}e^{-\sigma/\sigma_\circ}}{\sigma_\circ^\vsig \Gamma(\vsig)}d\sigma=\alpha. \end{equation*}
\end{example}

\section{Measure Change}\label{S:MeasureChange}
Let's start to reconnect our thoughts to our goal---the asymptotic behavior
of the protection leg.  Namely, we want a formula which is the analog of 
Theorem 4.1 of \cite{SowersCDOI}.

Recall that the starting point of much of our analysis in Section \ref{S:LD}
was Lemma \ref{L:momgen} and \eqref{E:icircdef}.  Theorem 4.1 of
\cite{SowersCDOI} on the other hand involves a change of measure for a finite number
of the $\tau_n$'s.  To set the stage for doing the same in our case, 
let's begin with a technical lemma.
\begin{lemma}\label{L:increasing} For $N$ large enough, $\tUUN\in \calG^\strict_\alpha$ and the map $\alpha'\mapsto \fI(\alpha',\tUUN)$ is continuous and nondecreasing on the interval
\begin{equation*} \calI_N \Def \left[\int_{p\in [0,1]}p\tUUN(dp),1-\tUUN\{0\}\right). \end{equation*}
Finally,
\begin{equation}\label{E:limitts} \lim_{N\to \infty}\fI(\alpha,\tUUN)= \fI(\alpha,\tUU) \qquad \text{and}\qquad \lim_{N\to \infty}\Lambda(\alpha,\tUUN)= \Lambda(\alpha,\tUU)>0.\end{equation}
\end{lemma}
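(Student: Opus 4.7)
The plan is to handle the three parts of the lemma in sequence. For the first, weak convergence $\tUUN \to \tUU$ in $\PSint$ together with the fact that $\{0\}$ and $\{1\}$ are closed in $[0,1]$ gives, by the portmanteau inequality,
$$\varlimsup_{N \to \infty} \tUUN\{0\} \le \tUU\{0\} < 1 - \alpha \quad \text{and} \quad \varlimsup_{N \to \infty} \tUUN\{1\} \le \tUU\{1\} < \alpha,$$
where the right-hand inequalities come from Assumption \ref{A:NonDegen} and \eqref{E:PA} respectively. Hence $\tUUN\{1\} < \alpha < 1 - \tUUN\{0\}$ for $N$ large, i.e.\ $\tUUN \in \calG^\strict_\alpha$.

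Next, fix $N$ large and write $\tVV = \tUUN$. The core observation is a Legendre-transform representation of $\fI$. Introduce
$$\Gamma_\tVV(\lambda) \Def \int_{[0,1]} \ln(1 - p + p e^\lambda) \tVV(dp),$$
which is smooth on $\R$ and strictly convex there (since $\tVV \in \calG^\strict_\alpha$ forces $\tVV((0,1)) > 0$, and the integrand is strictly convex in $\lambda$ for each $p \in (0,1)$). A direct computation gives $\Gamma_\tVV'(\lambda) = \int \Phi(p, \lambda) \tVV(dp)$, which maps $\R$ bijectively and increasingly onto $(\tVV\{1\}, 1 - \tVV\{0\})$; thus \eqref{E:equality} is uniquely solvable and the implicit function theorem makes $\alpha' \mapsto \Lambda(\alpha', \tVV)$ smooth and strictly increasing on this interval. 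From the definition of $\hbar$ one obtains the pointwise identity
$$\hbar(\Phi(p, \lambda), p) = \lambda \Phi(p, \lambda) - \ln(1 - p + p e^\lambda),$$
so \eqref{E:IIeq} becomes $\fI(\alpha', \tVV) = \alpha' \Lambda(\alpha', \tVV) - \Gamma_\tVV(\Lambda(\alpha', \tVV))$. Differentiating and cancelling the derivative in $\Lambda$ using \eqref{E:equality} gives the envelope identity $\partial_{\alpha'} \fI(\alpha', \tVV) = \Lambda(\alpha', \tVV)$, which vanishes exactly at $\alpha' = \Gamma_\tVV'(0) = \int p\, \tVV(dp)$ and is strictly positive for larger $\alpha'$. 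Hence $\fI(\cdot, \tVV)$ is $C^1$ and nondecreasing on $\calI_N$.

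Finally, $\tUU \in \calG^\strict_\alpha$ by Assumptions \ref{A:IG}, \ref{A:NonDegen} and \eqref{E:PA}, and by the first step so is $\tUUN$ for $N$ large. Since $\tUUN \to \tUU$ in $\PSint$, the continuity of $\tVV \mapsto \Lambda(\alpha, \tVV)$ on $\calG_\alpha$ and of $\tVV \mapsto \fI(\alpha, \tVV)$ on $\calG^\strict_\alpha$ asserted in Lemma \ref{L:finalITmin} yields both limits in \eqref{E:limitts}. The strict positivity $\Lambda(\alpha, \tUU) > 0$ follows because Assumption \ref{A:IG} gives $\alpha > \int p\, \tUU(dp) = \Gamma_\tUU'(0)$, combined with the strict monotonicity of the previous paragraph. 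The technical hinge of the plan is the identity for $\hbar \circ \Phi$ and the ensuing Legendre representation of $\fI$; everything else reduces either to weak-convergence bookkeeping or to a direct appeal to the already-stated Lemma \ref{L:finalITmin}.
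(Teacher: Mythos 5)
Your proof is correct, and the core of it is genuinely a different route from the paper for the monotonicity part. The paper establishes $\tUUN\in\calG^\strict_\alpha$ via Lemma \ref{L:Sopen} (which itself rests on the Portmanteau inequalities you write out explicitly), then obtains monotonicity of $\fI(\cdot,\tUUN)$ by chaining monotonicity of $\bPhi(\cdot,\tVV)$ in $\lambda$ (so $\Lambda(\cdot,\tUUN)$ is nondecreasing) with the nonnegativity of $\partial_\lambda \bH$ for $\lambda\ge 0$ recorded in Remarks \ref{R:Phiprops} and \ref{R:Hprops}; finally, it computes $\Lambda(\int p\,\tUUN(dp),\tUUN)=0$ and appeals to Assumption \ref{A:IG} to get strict positivity. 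You instead pass to the Legendre representation $\fI(\alpha',\tVV)=\alpha'\Lambda(\alpha',\tVV)-\Gamma_\tVV(\Lambda(\alpha',\tVV))$ via the pointwise identity $\hbar(\Phi(p,\lambda),p)=\lambda\Phi(p,\lambda)-\ln(1-p+pe^\lambda)$ (which is correct), and then the envelope identity $\partial_{\alpha'}\fI=\Lambda$ simultaneously gives continuity, monotonicity, and the vanishing at the left endpoint in one step. Your approach buys a cleaner derivative formula and makes the convex-duality structure of $\fI$ explicit, at the cost of having to justify differentiability of $\Lambda$ (via the implicit function theorem and strict convexity of $\Gamma_\tVV$, which you do correctly since $\tVV(0,1)>0$); the paper's approach avoids any differentiation of $\Lambda$ by working purely with monotonicity facts it has already banked. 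The handling of the limits \eqref{E:limitts} and the final strict positivity is the same as the paper's: both rely on the continuity assertions of Lemma \ref{L:finalITmin} and on Assumption \ref{A:IG}.
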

\noindent The proof is in Appendix B and uses Assumptions
\ref{A:IG} and \ref{A:NonDegen}.  Although we won't explicitly use it, the fact that $\fI(\cdot,\tUUN)$ is
increasing on $\calI_N$ is a natural requirement from the standpoint of
large deviations.  A more precise form of \eqref{E:expasym} would be that
\begin{equation*} S_N \asymp \exp\left[-N \inf_{\alpha'>\alpha}\fI(\alpha',\tUUN)\right] \end{equation*}
as $N\to \infty$.  From Lemma \ref{L:increasing}, we get that
\begin{equation*} \inf_{\alpha'>\alpha}\fI(\alpha',\tUUN) = \fI(\alpha,\tUUN). \end{equation*}
See also Proposition 3.6 in \cite{SowersCDOI}.

Let's next reverse the arguments of Section \ref{S:LD}.  Using Lemma \ref{L:finalITmin} to identify the minimizer $\fI(\alpha,\tUUN)$ in \eqref{E:IDef}, we can reconstruct an $M\in \Hom(\PSI)$ similar to \eqref{E:MDef} and which
allows us to construct a near-optimal measure transformation (by near-optimal
we mean that the measure-transformation will be define by $\tUUN$ rather than $\tUU$).

In order to introduce even more notation, for each $N\in \N$ and $n\in \{1,2\dots N\}$, let's set 
\begin{equation}\label{E:PAP}\begin{aligned} \uu^{(N)}_n&\Def \mu^{(N)}_n[0,T)\\
\tilde \uu^{(N)}_n &\Def \Phi(\uu^{(N)}_n,\Lambda(\alpha,\tUUN))=\Phi(\mu^{(N)}_n[0,T),\Lambda(\alpha,\tUUN))\\
\tilde \mu^{(N)}_n(A) &\Def \begin{cases} \frac{\tilde \uu^{(N)}_n}{\uu^{(N)}_n}\mu^{(N)}_n(A\cap [0,T)) + \frac{1-\tilde \uu^{(N)}_n}{1-\uu^{(N)}_n}\mu^{(N)}_n(A\cap [T,\infty])&\text{if $\uu^{(N)}_n\in (0,1)$} \\
\mu^{(N)}_n(A) &\text{if $\uu^{(N)}_n\in \{0,1\}$}\end{cases} \qquad A\in \Borel(I)\end{aligned}\end{equation}

\begin{remark}\label{R:Phiprops}  We will also need some facts about $\Phi$,
so we here collect them.  We first note that
\begin{equation*} \inf_{p\in [0,1]}\lb 1-p+pe^\lambda\rb 
= \begin{cases} \inf_{p\in [0,1]}\lb 1+p(e^\lambda-1)\rb &\text{if $\lambda\ge 0$}\\
\inf_{p\in [0,1]}\lb 1-p\left(1-e^\lambda\right)\rb &\text{if $\lambda<0$}\end{cases} 
= \begin{cases} 1 &\text{if $\lambda\ge 0$} \\ e^\lambda &\text{if $\lambda<0$} \end{cases} = e^{\lambda^-}>0 \end{equation*}
where $\lambda^-\Def \max\{0,\lambda\}$;
thus the denominator of $\Phi$ is always strictly positive for $\lambda\in \R$.  Hence $p\mapsto \Phi(p,\lambda)$ is in $C_b[0,1]$ for all $\lambda\in \R$.
Next, we note that if $\lambda\in \R$, then $\Phi(p,\lambda)=0$ if and only if $p=0$, and $\Phi(p,\lambda)=1$ if and only if $p=1$.
We can also take derivatives.  For $p\in (0,1)$ and $\lambda\in \R$,
\begin{equation*} \frac{\partial \Phi}{\partial p}(p,\lambda) = \frac{e^\lambda}{\left(1-p+pe^\lambda\right)^2}>0 \qquad \text{and}\qquad 
\frac{\partial \Phi}{\partial \lambda}(p,\lambda) = \frac{p(1-p)e^\lambda}{\left(1-p+pe^\lambda\right)^2}>0, \end{equation*}
and thus
\begin{gather*} \left|\frac{\partial \Phi}{\partial p}(p,\lambda)\right|\le
\frac{e^\lambda}{e^{2\lambda_-}} = e^{|\lambda|}\\
\left|\frac{\partial \Phi}{\partial \lambda}(p,\lambda)\right| = \frac{1-p}{1-p+pe^\lambda}\frac{pe^\lambda}{1-p+pe^\lambda}\le 1. \end{gather*}
Finally, we have that $\Phi(p,\cdot)$ is continuous on $[-\infty,\infty]$
for each $p\in [0,1]$.
\end{remark}
\noindent In light of these thoughts, we note that 
\begin{equation}\label{E:OR} \uu^{(N)}_n\in \{0,1\} \qquad \text{if and only if} \qquad \tilde \uu^{(N)}_n\in \{0,1\},\end{equation}
and if so, $\uu^{(N)}_n=\tilde \uu^{(N)}_n$.

We can now make several calculations about \eqref{E:PAP}.  First, $\tilde \mu^{(N)}_n\ll \mu^{(N)}_n$ with
\begin{equation*} \frac{d\tilde \mu^{(N)}_n}{d\mu^{(N)}_n}(t) = \begin{cases} \frac{\tilde \uu^{(N)}_n}{\uu^{(N)}_n}\chi_{[0,T)}(t) + \frac{1-\tilde \uu^{(N)}_n}{1-\uu^{(N)}_n}\chi_{[T,\infty]}(t)&\text{if $\uu^{(N)}_n\in (0,1)$} \\
1 &\text{if $\uu^{(N)}_n\in \{0,1\}$}\end{cases} \end{equation*}
for all $t\in I$.  In light of \eqref{E:OR}, we have that each $\frac{d\tilde \mu^{(N)}_n}{d\mu^{(N)}_n}$ is finite and strictly positive.

Let's also note that
\begin{equation}\label{E:cat} \begin{aligned} \frac{1}{N}\sum_{n=1}^N \tilde \mu^{(N)}_n[0,T) =\frac{1}{N}\sum_{n=1}^N \tilde \uu^{(N)}_n&
= \frac{1}{N}\sum_{n=1}^N \Phi(\uu^{(N)}_n,\Lambda(\alpha,\tUU^{(N)})) 
= \frac{1}{N}\sum_{n=1}^N \Phi(\mu^{(N)}_n[0,T),\Lambda(\alpha,\tUU^{(N)})) \\
&\qquad = \int_{p\in [0,1]}\Phi(p,\Lambda(\alpha,\tUUN))\tUUN(dp) = \alpha. \end{aligned}\end{equation}
(clearly $\tilde \mu^{(N)}_n[0,T)=\tilde \uu^{(N)}_n$ if $\uu^{(N)}_n\in (0,1)$; by \eqref{E:OR}, we also have that $\tilde \mu^{(N)}_n[0,T)=\mu^{(N)}_n[0,T)=\uu^{(N)}_n=\tilde \uu^{(N)}_n$ if $\uu^{(N)}_n\in \{0,1\}$).
\begin{theorem}\label{T:measurechange}  We have that
\begin{equation*} \BE_N[\Prot_N] = I_Ne^{-N\fI(\alpha)} \end{equation*}
for all positive integers $N$, where
\begin{equation*} I_N\Def \tilde \BE_N\left[\Prot_N\exp\left[-\Lambda(\alpha,\tUUN)\gamma_N\right]\chi_{\{\gamma_N>0\}}\right] \end{equation*}
where in turn
\begin{align*} \tilde \BP_N(A) &\Def \BE_N\left[\chi_A\prod_{n=1}^N \frac{d\tilde \mu^{(N)}_n}{d\mu^{(N)}_n}(\tau_n)\right] \qquad A\in \filt \\
\gamma_N&= \sum_{n=1}^N \lb \chi_{[0,T)}(\tau_n)-\alpha\rb =N(L_{T-}^{(N)}-\alpha)).\end{align*}
Under $\tilde \BP_N$, $\{\tau_1,\tau_2\dots \tau_N\}$ are independent and $\tau_n$ has law $\tilde \mu^{(N)}_n$ for $n\in \{1,2\dots N\}$.
\end{theorem}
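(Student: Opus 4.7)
The plan is to perform a standard tilting of measure and then show that the algebra of the Radon--Nikodym derivative collapses, thanks to the definitions of $\Phi$, $\Lambda(\alpha,\tUUN)$ and $\fI$, into a clean product $\exp[-\Lambda(\alpha,\tUUN)\gamma_N - N\fI(\alpha,\tUUN)]$.

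First I would verify that $\tilde\BP_N$, defined by the product density $\prod_{n=1}^N(d\tilde \mu^{(N)}_n/d\mu^{(N)}_n)(\tau_n)$, is a bona fide probability measure on $(\Omega,\filt)$ under which the $\tau_n$ are independent with $\tau_n\sim\tilde\mu^{(N)}_n$. This reduces to the $\mu^{(N)}_n$-integral of $d\tilde\mu^{(N)}_n/d\mu^{(N)}_n$ being $1$, which is immediate from the definition \eqref{E:PAP}; independence and the assertion about marginals then come from the product structure of $\BP_N$. By \eqref{E:OR} each ratio is strictly positive and finite, so we can invert and write
\begin{equation*}
\BE_N[\Prot_N] = \tilde\BE_N\left[\Prot_N\prod_{n=1}^N\frac{d\mu^{(N)}_n}{d\tilde\mu^{(N)}_n}(\tau_n)\right].
\end{equation*}
Since $\tL^{(N)}$ is identically $0$ whenever $L^{(N)}\le \alpha$, we have $\Prot_N=0$ on $\{\gamma_N\le 0\}$, so I may freely multiply the integrand by $\chi_{\{\gamma_N>0\}}$.

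The heart of the argument is the explicit computation of the logarithm of the product density. Writing $\Lambda$ for $\Lambda(\alpha,\tUUN)$ and using $\tilde\uu^{(N)}_n=\Phi(\uu^{(N)}_n,\Lambda)=\uu^{(N)}_n e^\Lambda/(1-\uu^{(N)}_n+\uu^{(N)}_n e^\Lambda)$, one checks directly that
\begin{equation*}
\ln\frac{\uu^{(N)}_n}{\tilde\uu^{(N)}_n} = -\Lambda + \ln(1-\uu^{(N)}_n+\uu^{(N)}_n e^\Lambda),\qquad \ln\frac{1-\uu^{(N)}_n}{1-\tilde\uu^{(N)}_n}=\ln(1-\uu^{(N)}_n+\uu^{(N)}_n e^\Lambda).
\end{equation*}
Summing over $n$ and using $\sum_n\chi_{[0,T)}(\tau_n)=NL^{(N)}_{T-}=\gamma_N+N\alpha$ gives
\begin{equation*}
\sum_{n=1}^N\ln\frac{d\mu^{(N)}_n}{d\tilde\mu^{(N)}_n}(\tau_n) = -\Lambda\gamma_N - N\Lambda\alpha + \sum_{n=1}^N\ln\bigl(1-\uu^{(N)}_n+\uu^{(N)}_n e^\Lambda\bigr).
\end{equation*}
(For names with $\uu^{(N)}_n\in\{0,1\}$ the corresponding summand is $0$, in agreement with \eqref{E:OR}.)

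What remains is to recognize the right-hand-side tail as $-N\fI(\alpha,\tUUN)$. Using the definition of $\hbar$ together with $q=\Phi(p,\Lambda)$, a short calculation gives $\hbar(\Phi(p,\Lambda),p)=\Lambda\Phi(p,\Lambda)-\ln(1-p+pe^\Lambda)$ for $p\in(0,1)$. Integrating against $\tUUN$ and invoking the defining equation \eqref{E:equality}, namely $\int\Phi(p,\Lambda)\,\tUUN(dp)=\alpha$, yields $\fI(\alpha,\tUUN)=\Lambda\alpha-\int\ln(1-p+pe^\Lambda)\,\tUUN(dp)$, so that $N\Lambda\alpha-\sum_n\ln(1-\uu^{(N)}_n+\uu^{(N)}_n e^\Lambda)=N\fI(\alpha,\tUUN)$ by \eqref{E:tUUNDef}. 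Combining this with the above display pulls the deterministic factor $\exp[-N\fI(\alpha,\tUUN)]$ outside the expectation and leaves precisely $I_N$, proving the theorem.

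The only genuinely substantive point is the algebraic identification of $\hbar(\Phi(p,\Lambda),p)$ with $\Lambda\Phi(p,\Lambda)-\ln(1-p+pe^\Lambda)$ and the consequent identification of $\fI(\alpha,\tUUN)$ with $\Lambda\alpha-\int\ln(1-p+pe^\Lambda)\,\tUUN(dp)$; everything else is bookkeeping and a standard tilting argument. The boundary cases $\uu^{(N)}_n\in\{0,1\}$ are handled by \eqref{E:OR}, and integrability under $\tilde\BP_N$ is automatic because $\Prot_N\in[0,e^{-\rate\cdot 0}]$ and $\gamma_N>0$ on the support of the integrand.
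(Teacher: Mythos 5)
Your proof is correct and takes essentially the same approach as the paper: a standard Esscher tilt of $\BP_N$ to $\tilde\BP_N$ followed by recognizing the resulting Radon--Nikodym factor as $\exp[-\Lambda(\alpha,\tUUN)\gamma_N-N\fI(\alpha,\tUUN)]$. The only difference is organizational: you collapse $\prod_n\tfrac{d\mu^{(N)}_n}{d\tilde\mu^{(N)}_n}(\tau_n)$ directly using the identity $\hbar(\Phi(p,\Lambda),p)=\Lambda\Phi(p,\Lambda)-\ln(1-p+pe^\Lambda)$, whereas the paper first introduces the centered exponent $\Gamma_N=\sum_n\psi^{(N)}_n(\tau_n)-N\fI(\alpha,\tUUN)$ and then separately shows $\Gamma_N=\Lambda(\alpha,\tUUN)\gamma_N$ a.s.\ via the odds-ratio identity $\tfrac{\Phi(p,\lambda)}{1-\Phi(p,\lambda)}\tfrac{1-p}{p}=e^\lambda$; the two computations are algebraically equivalent.
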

\begin{proof} Set
\begin{align*} \psi^{(N)}_n(t) &\Def \ln \frac{d\tilde \mu^{(N)}_n}{d\mu_n^{(N)}}(t) = \begin{cases} \ln \frac{\tilde \uu^{(N)}_n}{\uu^{(N)}_n}\chi_{[0,T)}(t) + \ln \frac{1-\tilde \uu^{(N)}_n}{1-\uu^{(N)}_n}\chi_{[T,\infty]}(t) &\text{if $\uu^{(N)}_n\in (0,1)$} \\
0 &\text{if $\uu^{(N)}_n\in \{0,1\}$}\end{cases} \qquad t\in I\\
\Gamma_N &\Def  \sum_{n=1}^N \psi^{(N)}_n(\tau_n)- \sum_{n=1}^N \int_{t\in I}\psi^{(N)}_n(t)\tilde \mu^{(N)}_n(dt)\end{align*}
(as we pointed out above, each $d\tilde \mu^{(N)}_n/d\mu^{(N)}_n$ is positive and finite on all of $I$, ensuring that $\psi^{(N)}_n$ is well-defined).
Then
\begin{equation*} \BE_N[\Prot_N] = \frac{\BE_N\left[\Prot_N \exp\left[-\Gamma_N\right]\exp\left[\Gamma_N\right]\right]}{\BE_N\left[\exp\left[\Gamma_N\right]\right]}\BE_N\left[\exp\left[\Gamma_N\right]\right]. \end{equation*}
Some straightforward calculations (recall \eqref{E:OR}) show that
\begin{align*} \sum_{n=1}^N \int_{t\in I}\psi^{(N)}_n(t)\tilde\mu^{(N)}_n(dt)&= \sum_{n=1}^N \hbar(\tilde \uu^{(N)}_n,\uu^{(N)}_n) = N\int_{p\in [0,1]}\hbar(\Phi(p,\Lambda(\alpha,\tUUN)),p)\tUUN(dp)\\
&=N\fI(\alpha,\tUUN)\\
\exp\left[\sum_{n=1}^N \psi^{(N)}_n(\tau_n)\right]&= \exp\left[\sum_{n=1}^N\ln \frac{d\tilde \mu^{(N)}_n}{d\mu^{(N)}_n}(\tau_n)\right] = \prod_{n=1}^N \frac{d\tilde \mu^{(N)}_n}{d\mu^{(N)}_n}(\tau_n) \end{align*}
We chose $\psi^{(N)}_n$ exactly so that the following calculation holds:
\begin{equation*} \BE_N\left[\exp\left[\Gamma_N\right]\right]= e^{-N\fI(\alpha,\tUUN)}\BE_N\left[\prod_{n=1}^N \frac{d\tilde \mu^{(N)}_n}{d\mu^{(N)}_n}(\tau_n)\right] = e^{-N\fI(\alpha,\tUUN)}. \end{equation*}
We also clearly have that
\begin{equation*}\frac{\BE_N\left[\chi_A\exp\left[\Gamma_N\right]\right]}{\BE_N\left[\exp\left[\Gamma_N\right]\right]}\\
= \frac{\BE_N\left[\chi_A\exp\left[\sum_{n=1}^N \psi^{(N)}_n(\tau_n)\right]\right]}{\BE_N\left[\exp\left[\sum_{n=1}^N \psi^{(N)}_n(\tau_n)\right]\right]} = \tilde \BP_N(A) \end{equation*}
for all $A\in \filt$.  The properties of $\tilde \BP_N$ are clear from the explicit formula.
Finally, it is easy to check that
\begin{align*}\Gamma_N&=\sum_{\substack{1\le n\le N \\ \uu^{(N)}_n\in (0,1)}} \ln \frac{\tilde \uu^{(N)}_n}{\uu^{(N)}_n} \lb \chi_{[0,T)}(\tau_n)-\tilde \mu^{(N)}_n[0,T)\rb \\
&\qquad +\sum_{\substack{1\le n\le N \\ \uu^{(N)}_n\in (0,1)}}\ln \frac{1-\tilde \uu^{(N)}_n}{1-\uu^{(N)}_n} \lb \chi_{[T,\infty]}(\tau_n)-\tilde \mu^{(N)}_n[T,\infty]\rb \\
&=\sum_{\substack{1\le n\le N \\ \uu^{(N)}_n\in (0,1)}} \lb \ln \frac{\tilde \uu^{(N)}_n}{\uu^{(N)}_n}-\ln \frac{1-\tilde \uu^{(N)}_n}{1-\uu^{(N)}_n} \rb \lb \chi_{[0,T)}(\tau_n)-\tilde \mu^{(N)}_n[0,T)\rb \\
&=\sum_{\substack{1\le n\le N \\ \uu^{(N)}_n\in (0,1)}} \ln \left(\frac{\Phi(\uu^{(N)}_n,\Lambda(\alpha,\tUUN))}{1-\Phi(\uu^{(N)}_n,\Lambda(\alpha,\tUUN))} \frac{1-\uu^{(N)}_n}{\uu^{(N)}_n}\right) \lb \chi_{[0,T)}(\tau_n)-\tilde \mu^{(N)}_n[0,T)\rb. \end{align*}
A straightforward calculation shows that for any $p\in (0,1)$ and $\lambda\in \R$,
\begin{equation*} \frac{\Phi(p,\lambda)}{1-\Phi(p,\lambda)}\frac{1-p}{p} = e^\lambda. \end{equation*}
Recall now \eqref{E:OR} and note that if $\uu^{(N)}_n=0$, then $\BP_N$-a.s. $\tau_n\not \in [0,T)$, while if $\uu^{(N)}_n=1$ then $\BP_N$-a.s. $\tau_n\in [0,T)$.
Thus $\BP_N$-a.s.
\begin{gather*} \sum_{\substack{1\le n\le N \\ \uu^{(N)}_n=0}}\lb \chi_{[0,T)}(\tau_n)-\tilde \mu^{(N)}_n[0,T)\rb =0\\
\sum_{\substack{1\le n\le N \\ \uu^{(N)}_n=1}}\lb \chi_{[0,T)}(\tau_n)-\tilde \mu^{(N)}_n[0,T)\rb =\sum_{\substack{1\le n\le N \\ \uu^{(N)}_n=1}}\lb 1-1\rb = 0. \end{gather*}
Combining things together, we get that $\BP_N$-a.s., 
\begin{equation*} \Gamma_N=\Lambda(\alpha,\tUUN) N \lb L^{(N)}_{T-} - \frac{1}{N}\sum_{n=1}^N \tilde \mu^{(N)}_n[0,T)\rb.\end{equation*}
Recall now \eqref{E:cat}.  By \eqref{E:novac} and \eqref{E:sarenca}, we see that
$\Prot_N$ is nonzero only if $\gamma_N>0$; we have explicitly included this
in the expression for $I_N$.
\end{proof}

\section{Asymptotic Analysis}\label{S:AsympAnal}

We proceed now as in \cite{SowersCDOI}.  Define 
$\CS_N\Def \{n-N\alpha: N\alpha \le n\le N\}$;
then $\CS_N$ is the nonnegative collection of values which $\gamma_N$ can take.
For each $N$, let $H_N:\CS_N\to [0,1]$ be such that
\begin{equation*} H_N(\gamma_N)= \tilde \BE_N\left[\Prot_N\big|\gamma_N\right] \end{equation*}
on $\{\gamma_N>0\}$ (recall from \eqref{E:novac} that $\tL^{(N)}\le 1$; using
this in \eqref{E:sarenca}, we have that $\Prot_N\in [0,1]$).  Then
\begin{equation*} I_N = \tilde \BE_N\left[H_N(\gamma_N)\chi_{\{\gamma_N>0\}}\exp\left[-\Lambda(\alpha,\tUUN) \gamma_N\right]\right]. \end{equation*}
The behavior of $H_N$ is very nice for large $N$.
\begin{lemma}\label{L:hasymp} For all $N$, we have that
\begin{equation*} H_N(s) = \frac{e^{-\rate T}s\lb 1 + \Err_1(s,N)\rb}{N(\beta-\alpha)} \end{equation*}
where
\begin{equation*} \varlimsup_{N\to \infty}\sup_{\substack{s\in \CS_N \\ s\le N^{1/4}}}|\Err_1(s,N)|=0. \end{equation*}
\end{lemma}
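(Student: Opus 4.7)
The plan is to split $\Prot_N$ by integration by parts into a boundary piece that matches the claimed leading-order and an integral remainder, and then to bound the remainder using Assumption \ref{A:NotFlat} to show that the default times responsible for the tranche losses must be close to $T$.  Since $\tL^{(N)}_{0-}=0$, integration by parts yields
\begin{equation*} \Prot_N \;=\; e^{-\rate T}\tL^{(N)}_{T-} \;+\; \rate \int_0^T e^{-\rate t}\tL^{(N)}_t\,dt. \end{equation*}
On $\{\gamma_N = s\}$ we have $L^{(N)}_{T-}=\alpha+s/N$, so for $s\le N^{1/4}$ and $N$ large we get $s/N<\beta-\alpha$ and therefore $\tL^{(N)}_{T-}=s/(N(\beta-\alpha))$ deterministically.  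Taking conditional expectations gives
\begin{equation*} H_N(s) \;=\; \frac{e^{-\rate T}s}{N(\beta-\alpha)} \;+\; \rate\,\tilde\BE_N\!\left[\int_0^T e^{-\rate t}\tL^{(N)}_t\,dt\,\bigg|\,\gamma_N=s\right]. \end{equation*}

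Next I would dominate the remainder.  The process $\tL^{(N)}$ vanishes on $[0,\tau_{(\granup)})$ (fewer than $\granup$ defaults have occurred), and it satisfies $\tL^{(N)}_t\le \tL^{(N)}_{T-}$ on $[\tau_{(\granup)},T)$.  Hence the integral is pointwise at most $\tL^{(N)}_{T-}(T-\tau_{(\granup)})$, and comparing with the claimed formula gives
\begin{equation*} 0\;\le\; \Err_1(s,N) \;\le\; \rate\, e^{\rate T}\,\tilde\BE_N\!\left[T-\tau_{(\granup)}\,\big|\,\gamma_N=s\right]. \end{equation*}
The lemma will follow once this right-hand side is shown to tend to $0$ as $N\to\infty$ uniformly over $s\in\CS_N\cap[0,N^{1/4}]$.

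To prove that uniform bound I would proceed as follows.  Let $\tilde\eta>0$ be the slack afforded by Assumption \ref{A:NotFlat}, fix $\eta\in(0,\tilde\eta)$, and then pick $\delta,\eps>0$ and $N_0\in\N$ so that for $N\ge N_0$ the set $G_N\Def\{n:\mu^{(N)}_n[T-\delta,T)\ge\eps\}$ satisfies $|G_N^c|\le N(\alpha-\eta)$.  On the event $\{\gamma_N=s\}$ exactly $N\alpha+s$ bonds default, so a \emph{deterministic} pigeonhole bound forces at least $N\eta+s$ of them to lie in $G_N$.  Under $\tilde\BP_N$, the conditional distribution of $\tau_n$ given that bond $n$ defaults coincides with $\mu^{(N)}_n(\cdot\cap[0,T))/\uu^{(N)}_n$, so every such defaulting bond in $G_N$ has conditional probability at least $\eps$ of landing in $[T-\delta,T)$; moreover, once we condition on the configuration of defaulting bonds, the default times are mutually independent.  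A standard Hoeffding estimate applied to these at least $N\eta$ independent Bernoulli-type coordinates shows that the number of defaults in $[T-\delta,T)$ exceeds $s+1$ except on an event of $\tilde\BP_N$-probability at most $e^{-cN}$, where $c>0$ depends on $\eta,\eps$ but not on $s$ or the particular configuration.  Since $\tau_{(\granup)}\ge T-\delta$ whenever at least $n-\granup+1\le s+1$ of the $n$ default times lie in $[T-\delta,T)$, we conclude $\tilde\BE_N[T-\tau_{(\granup)}\mid\gamma_N=s]\le \delta + Te^{-cN}$, and letting $N\to\infty$ and then $\delta\searrow 0$ closes the argument.

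The main obstacle is this last step: the conditioning $\{\gamma_N=s\}$ couples the non-identically distributed default indicators $\{\chi_{\{\tau_n<T\}}\}_{n=1}^N$, and a priori one would have to understand the conditional law of ``which bonds default'' over all compatible configurations.  The key observation that sidesteps this is that the pigeonhole inequality $|G_N^c|\le N(\alpha-\eta)$ gives a configuration-free lower bound on the number of good defaulting bonds, so the analysis reduces entirely to the default-time randomness, which remains genuinely independent once the defaulting configuration is fixed.
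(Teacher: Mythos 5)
Your proposal is correct, and it shares the overall architecture of the paper's proof: after peeling off the leading term $e^{-\rate T}s/(N(\beta-\alpha))$ (the paper does this via a splitting of the integral on $[\tau^\alpha_N,\tau^\beta_N]$, you do it via integration by parts, but both are driven by $\tL^{(N)}_{T-}=s/(N(\beta-\alpha))$ once $s/N<\beta-\alpha$), everything boils down to showing that $\sup_{s\le N^{1/4}}\tilde\BE_N[T-\tau^\alpha_N\mid\gamma_N=s]\to 0$, which is precisely the paper's Lemma~\ref{L:TTimes}. Where you diverge is in how you prove that estimate. The paper reverses time, constructs compensators $A^{(N,n)}$ and martingales $M^{(N,n)}$ adapted to $\{\gilt_t\}$, verifies the martingale property (including the delicate boundary behavior at $T^*_n$), and then runs a Chebyshev-style second-moment bound on $\tilde M^{(N)}$; Assumption~\ref{A:NotFlat} enters through the quantity $\Delta_N(\eps,\delta)$ controlling $-\tilde A^{(N)}_\delta$, and the resulting tail decays only polynomially (like $N^{-1/2}$). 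You instead use Assumption~\ref{A:NotFlat} combinatorially: the pigeonhole bound $|D\cap G_N|\ge N\eta+s$ is configuration-free, and after conditioning on the set $D$ of defaulting names the default times are a genuine product measure under $\tilde\BP_N$ with $\tilde\BP_N\{\tau_n\in\cdot\mid\tau_n<T\}=\mu^{(N)}_n(\cdot\cap[0,T))/\uu^{(N)}_n$, so a Hoeffding estimate applied to at least $N\eta$ independent indicators with success probability $\ge\eps$ gives an exponential tail $e^{-cN}$, uniformly in $s$ and in $D$. Your route is more elementary (no martingale machinery) and yields a strictly stronger concentration bound; the paper's martingale route is heavier but is structured so that it can, at least in spirit, survive weakening of the independence assumption. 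One small point worth making explicit in a clean write-up: $\tL^{(N)}_t=0$ on $[0,\tau^\alpha_N)$, and $\tau_{(\granup)}\le\tau^\alpha_N$ always (with equality unless $N\alpha\in\N$), so your domination $\tL^{(N)}_t\le\tL^{(N)}_{T-}\chi_{[\tau_{(\granup)},T)}(t)$ is valid but slightly lossy in the integer case; this does not affect the conclusion.
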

\noindent Section \ref{S:HAS} is dedicated to the proof of this result.

We can also see that the distribution of $\gamma_N$ is nice for large $N$.
The proof of this result is qualitatively different than the corresponding
proof of Lemma 5.2 in \cite{SowersCDOI}.
\begin{lemma}\label{L:probasymp} We have that
\begin{equation*} \tilde \BP_N\{\gamma_N=s\} = \frac{1+\Err_2(s,N)}{\sqrt{2\pi N\sigma^2(\alpha,\tUU)}} \end{equation*}
for all $N$ and all $s\in \CS_N$, where $\sigma^2(\alpha,\tUU)$ is as in \eqref{E:sigmadef}
and where
\begin{equation*} \varlimsup_{N\to \infty}\sup_{\substack{s\in \CS_N \\ s\le N^{1/4}}}|\Err_2(s,N)|=0. \end{equation*}
\end{lemma}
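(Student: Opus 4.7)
The plan is to recognize this as a local central limit theorem for the integer-valued sum
$$S_N \Def \sum_{n=1}^N \chi_{[0,T)}(\tau_n) = \gamma_N + N\alpha.$$
Under $\tilde\BP_N$, Theorem \ref{T:measurechange} gives that the $\tau_n$'s are independent, so $S_N$ is a sum of independent but not identically distributed Bernoulli random variables with success probabilities $\tilde \uu^{(N)}_n = \Phi(\uu^{(N)}_n,\Lambda(\alpha,\tUUN))$. By \eqref{E:cat}, $\tilde\BE_N[S_N] = N\alpha$, and a direct computation for independent Bernoullis gives
$$\mathrm{Var}_{\tilde\BP_N}(\gamma_N) = \sum_{n=1}^N \tilde\uu_n^{(N)}\bigl(1 - \tilde\uu_n^{(N)}\bigr) = N \sigma^2(\alpha,\tUUN),$$
which, by Lemma \ref{L:increasing} and the continuity of $\Phi$, converges to $N\sigma^2(\alpha,\tUU)$ with $\sigma^2(\alpha,\tUU) > 0$ guaranteed by Lemma \ref{L:sigmalim}. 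So what must be proved is a Gnedenko-type local CLT for a triangular array of uniformly bounded Bernoullis.

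The natural tool is Fourier inversion on the integer lattice. Since $s + N\alpha \in \{0,1,\dots,N\}$,
$$\tilde\BP_N\{\gamma_N = s\} = \frac{1}{2\pi}\int_{-\pi}^{\pi} e^{-i\theta (s+N\alpha)} \phi_N(\theta)\,d\theta, \qquad \phi_N(\theta) \Def \prod_{n=1}^N \bigl(1 - \tilde\uu_n^{(N)} + \tilde\uu_n^{(N)} e^{i\theta}\bigr).$$
I would then split the integral into an inner region $|\theta|\le \delta_N$ (with $\delta_N\to 0$ such that $\sqrt{N}\,\delta_N\to\infty$) and an outer region $\delta_N < |\theta|\le \pi$. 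On the inner region, rescale $\theta\mapsto \theta/\sqrt{N}$ and Taylor-expand $\log \phi_N(\theta/\sqrt{N})$ around $\theta=0$ using the first two moments computed above; matching with the characteristic function of a Gaussian of variance $N\sigma^2(\alpha,\tUUN)$ produces the leading factor $\exp\bigl(-s^2/(2N\sigma^2(\alpha,\tUU))\bigr)$, which equals $1+\text{o}(1)$ uniformly in $s \le N^{1/4}$ since $s^2/N \le N^{-1/2}\to 0$---this is precisely why the statement restricts to $s\le N^{1/4}$. On the outer region, the elementary identity
$$\bigl|1 - \tilde\uu + \tilde\uu e^{i\theta}\bigr|^2 = 1 - 2\tilde\uu(1-\tilde\uu)(1-\cos\theta) \le \exp\bigl(-2\tilde\uu(1-\tilde\uu)(1-\cos\theta)\bigr)$$
shows $|\phi_N(\theta)| \le \exp\bigl(-cN(1-\cos\theta)\bigr)$, whose integral is negligible compared with $1/\sqrt{N}$.

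The main obstacle is obtaining a uniform-in-$N$ lower bound on $\frac{1}{N}\sum_n \tilde\uu_n^{(N)}(1-\tilde\uu_n^{(N)})$; this underlies both the quadratic Taylor term on the inner region and the exponential decay of $|\phi_N|$ on the outer region. Assumptions \ref{A:IG} and \ref{A:NonDegen} force (via \eqref{E:PA}) that $\tUU \in \calG^\strict_\alpha$, so by Lemma \ref{L:finalITmin} $\Lambda(\alpha,\tUU)\in\R$ and by Lemma \ref{L:sigmalim} $\sigma^2(\alpha,\tUU) > 0$. Combined with the convergence $\tUUN \to \tUU$ in $\PSint$, one concludes that a uniformly positive fraction of the transformed success probabilities $\tilde\uu^{(N)}_n$ stays in a compact subinterval of $(0,1)$ as $N\to\infty$, which is exactly what is needed to push the Fourier estimates through with the claimed uniform error $\Err_2(s,N)\to 0$.
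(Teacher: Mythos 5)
Your proposal is correct and follows essentially the same route as the paper: represent $\tilde\BP_N\{\gamma_N=s\}$ by Fourier inversion on $[-\pi,\pi]$, split into an inner zone (after rescaling by $\sqrt{N}$, where Taylor expansion of the log-characteristic function produces the Gaussian density) and an outer zone (controlled via exactly the identity $|1-\tilde\uu+\tilde\uu e^{i\theta}|^2 = 1-2\tilde\uu(1-\tilde\uu)(1-\cos\theta)$ you wrote, which the paper uses in Lemma \ref{L:ubound}), with the variance lower bound coming from Lemma \ref{L:sigmalim}. The one cosmetic difference is bookkeeping of where $s\le N^{1/4}$ enters: you would do the Gaussian integral exactly and absorb the factor $\exp(-s^2/(2N\sigma^2))=1+o(1)$, whereas the paper replaces $e^{-is\theta/\sqrt N}$ by $1$ in the integrand directly using \eqref{E:mesosmall} ($|s\theta/\sqrt N|\le N^{-1/8}$ on the inner region); these are interchangeable.
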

\noindent The proof of this is the subject of Section \ref{S:Fourier}; the result
is in some sense a statement of convergence in the ``vague'' topology.  We can now set up the
proof Theorem \ref{T:Main}.  For $\lambda>0$, define
\begin{align*} \tilde I_{1,N}(\lambda) &\Def \sum_{\substack{s\in \CS_N\\s\le N^{1/4}}}s e^{-\lambda s}\\
\tilde I_{2,N}(\lambda) &\Def \exp\left[-\lambda\left(\granup -N\alpha\right)\right]\lb \frac{e^{-\lambda}}{(1-e^{-\lambda})^2} +\frac{\granup-N\alpha}{1-e^{-\lambda}}\rb \end{align*}
Then, as in \cite{SowersCDOI}, 
\begin{equation}\label{E:QQ} \tilde I_{1,N}(\lambda) = \exp\left[-\lambda\left(\granup-N \alpha\right)\right]\lb \frac{e^{-\lambda}}{(1-e^{-\lambda})^2} + \frac{\granup-N \alpha}{1-e^{-\lambda}} + \Err_3(\lambda,N)\rb \end{equation}
where there is a $\KK>0$ such that
\begin{equation}\label{E:QQQ} |\Err_3(\lambda,N)| \le 4e^{-1}\frac{\exp\left[-\frac{\lambda}{2}(N^{1/4}-1)\right]}{\lambda(1-e^{-\lambda})^2} \end{equation}
for all positive integers $N$ and all $\lambda>0$.

\begin{proof}[Proof of Theorem \ref{T:Main}]
We have that
\begin{equation*} I_N = \frac{e^{-\rate T}\tilde I_{2,N}(\Lambda(\alpha,\tUU))}{N^{3/2}(\beta-\alpha)\sqrt{2\pi \sigma^2(\alpha,\tUU)}} + \sum_{j=1}^5 \tilde \Err_j(N)\end{equation*}
where
\begin{align*}
\tilde \Err_1(N)&\Def \tilde \BE_N\left[\Prot_N e^{-\Lambda(\alpha,\tUUN)\gamma_N}\chi_{\{\gamma_N>N^{1/4}\}}\right] \\
\tilde \Err_2(N)&\Def \frac{1}{\sqrt{2\pi N\sigma^2(\alpha,\tUU)}}\sum_{\substack{s\in \CS_N\\s\le N^{1/4}}}H_N(s)e^{-\Lambda(\alpha,\tUUN) s}\Err_2(s,N) \\
\tilde \Err_3(N)&\Def \frac{e^{-\rate T}}{N^{3/2}(\beta-\alpha)\sqrt{2\pi \sigma^2(\alpha,\tUU)}}\sum_{\substack{s\in \CS_N\\s\le N^{1/4}}}se^{-\Lambda(\alpha,\tUUN) s}\Err_1(s,N) \\
\tilde \Err_4(N)&\Def \frac{e^{-\rate T}}{N^{3/2}(\beta-\alpha)\sqrt{2\pi \sigma^2(\alpha,\tUU)}}\exp\left[-\Lambda(\alpha,\tUUN)\left(\granup-N \alpha\right)\right]\Err_3(\Lambda(\alpha,\tUUN),N)\\
\tilde \Err_5(N)&\Def \frac{e^{-\rate T}}{N^{3/2}(\beta-\alpha)\sqrt{2\pi \sigma^2(\alpha,\tUU)}}\lb \tilde I_{2,N}(\Lambda(\alpha,\tUUN))-\tilde I_{2,N}(\Lambda(\alpha,\tUU))\rb 
\end{align*}
Keep in mind now the second claim of \eqref{E:limitts}.  We see that there is a $\KK_1>0$ such that 
for sufficiently large $N$
\begin{equation*} |\tilde \Err_1(N)|\le \frac{1}{\KK_1}e^{-\KK_1 N^{1/4}}\qquad \text{and}\qquad |\tilde \Err_4(N)|\le \frac{1}{\KK_1}e^{-\KK_1 N^{1/4}}.\end{equation*}
Furthermore, we can fairly easily see that there is a $\KK_2>0$ such that
\begin{equation*}
|\tilde \Err_2(N)|\le \frac{\KK_2\tilde I_{1,N}(\Lambda(\alpha,\tUUN))}{N^{3/2}} \sup_{\substack{s\in \CS_N \\ s\le N^{1/4}}}|\Err_2(s,N)|\qquad \text{and}\qquad |\tilde \Err_3(N)|\le \frac{\KK_2 \tilde I_{1,N}(\Lambda(\alpha,\tUUN))}{N^{3/2}}\sup_{\substack{s\in \CS_N \\ s\le N^{1/4}}}|\Err_1(s,N)| \end{equation*}
for all sufficiently large $N$ (note from \eqref{E:QQ} and \eqref{E:QQQ} that $\tilde I_{1,N}(\lambda)$ is uniformly bounded in $N$ as long as $\lambda$ is bounded away from zero from below).  Finally, we get that there is a $\KK_3>0$ such that
\begin{equation*} |\tilde \Err_5(N)|\le \frac{\KK_3}{N^{3/2}}|\Lambda(\alpha,\tUUN)-\Lambda(\alpha,\tUU)| \end{equation*}
for sufficiently large $N$.  Combine things together within the framework of
Theorem \ref{T:measurechange} to get the stated result.\end{proof}

\section{The Merton Model}\label{S:Merton}
As an example of how the computations of Section \ref{S:Model} work, let's delve a
bit more deeply into Example \ref{Ex:Merton}.  To be very explicit, 
let's assume that all the names are governed by the Merton model
with risk-neutral drift $\theta=6$, initial valuation $1$, and
bankruptcy barrier $K=.857$.  We assume that expiry is $T=5$.
Assume that the volatility is distributed according to a gamma
distribution with size parameter $\sigma_\circ=.3$ and shape
parameter $\vsig=2$; $\tUU$ is then given by \eqref{E:MertonLimit}.
Numerical integration shows that
\begin{equation*} \int_{p\in [0,1]}p\tUU(dp)=.0738.\end{equation*}
To understand how our calculations work, Figure \ref{fig:FA} is a plot
of the function
\begin{equation*} \lambda \mapsto \int_{p\in [0,1]}\Phi(p,\lambda)\tUU(dp).\end{equation*}
Thus if the attachment point of the tranche
is $\alpha=0.1$, we would have $\Lambda(0.1,\tUU)=.5848$.
\begin{figure}[t]
\includegraphics[width=5in, height=3in]{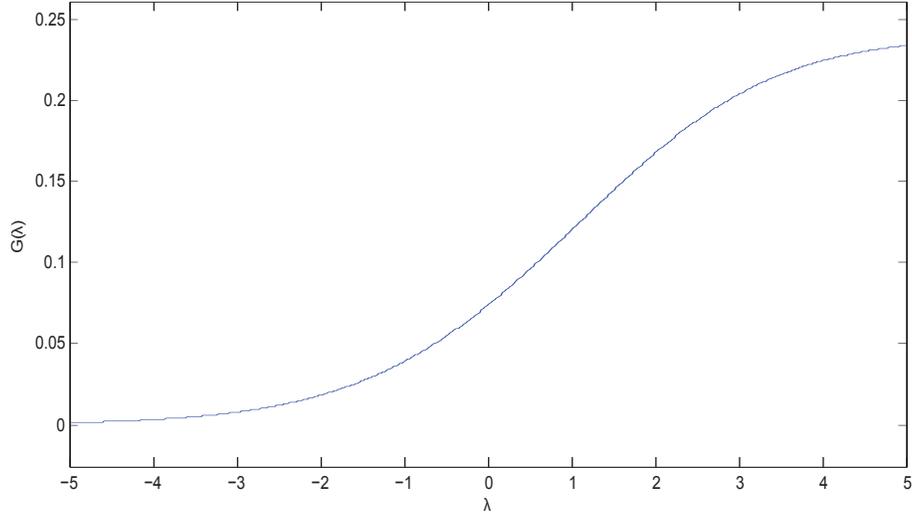}
\caption{Plot of $\lambda \mapsto \int_{p\in [0,1]}\Phi(p,\lambda)\tUU(dp)$}
\label{fig:FA}
\end{figure}

Let's next explicitly construct some $\mu^{(N)}$'s
as in \eqref{E:sigmaNn}.  We do this as follows.  Define
\begin{equation*} \hat F(t) \Def \int_{s=0}^t \frac{\sigma e^{-\sigma/.3}}{.09}d\sigma \end{equation*}
for all $t>0$ (using the fact that $(.3)^2 \Gamma(2) = .09$).  For each $N$, define
$x^{(N)}_n = \frac{n}{N+1}$ for $n\in \{1,2\dots N\}$.  Set
\begin{equation*} \sigma^{(N)}_n = \hat F^{-1}(x^{(N)}_n) \end{equation*}
for all $n\in \{1,2\dots N\}$.  Then for every $0<a<b<\infty$,
\begin{multline*} \lim_{N\to \infty}\frac{\left|\lb n\in \{1,2\dots N\}: a<\sigma^{(N)}_n<b\rb\right|}{N} = \lim_{N\to \infty}\frac1{N}\sum_{n=1}^N \chi_{(a,b)}(\hat F^{-1}(x^{(N)}_n)) \\
= \int_{x=0}^1 \chi_{(a,b)}(\hat F^{(-1)}(x))dx = \int_{\sigma=a}^b \frac{\sigma e^{-\sigma/.3}}{.09}d\sigma. \end{multline*}

We can then finally plot the ``theoretical'' CDO price against the number
$N$ of names for several values of $\alpha$.  The results are in Figure \ref{fig:FB} for three values of $\alpha$.  By ``theoretical'', we mean the quantity 
\begin{multline*} S^*_N \Def \frac{e^{-\rate T}\exp\left[-\Lambda(\alpha,\tUU)\left(\granup -N \alpha\right)\right]}{N^{3/2}(\beta-\alpha)\sqrt{2\pi\sigma^2(\alpha,\tUU)}\lb \sum_{t\in \PTimes} e^{-\rate t}\rb}\\
\times \lb \frac{e^{-\Lambda(\alpha,\tUU)}}{(1-e^{-\Lambda(\alpha,\tUU)})^2} +\frac{\granup-N \alpha}{1-e^{-\Lambda(\alpha,\tUU)}} + \Err'(N)\rb \exp\left[-N \fI(\alpha,\tUUN)\right] \end{multline*}
We have here set $\Err'\equiv 0$ in \eqref{E:premas}.  Figure \ref{fig:FB} also
removes the prefactor
\begin{equation*} \frac{e^{-\rate T}}{(\beta-\alpha)\sqrt{2\pi}\sum_{t\in \PTimes} e^{-\rate t}}. \end{equation*}  
\begin{figure}[b]
\includegraphics[width=5in, height=3in]{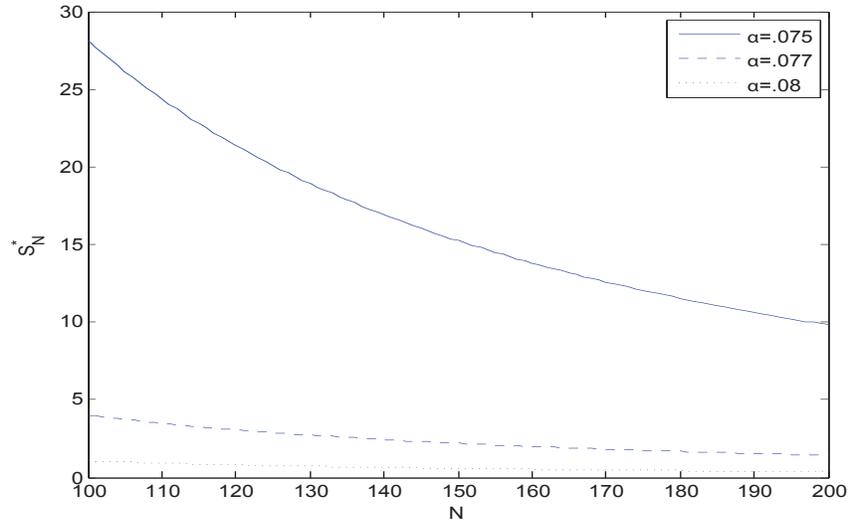}
\caption{$S^*_N$ for several values of $\alpha$}
\label{fig:FB}
\end{figure}

\section{Proof of Lemma \ref{L:hasymp}}\label{S:HAS}

We here study $H_N$.  Large sections of the proof will be similar to Section 5 of
\cite{SowersCDOI}.  Set
\begin{align*} \tau^\alpha_N &\Def \inf\{r>0: \tL^{(N)}_r>0\} = \inf\{r>0: L^{(N)}_r>\alpha\} \\
\tau^\beta_N &\Def \sup\{r>0: \tL^{(N)}_r<\beta-\alpha\} = \sup\{r>0: L^{(N)}_r<\beta\}. \end{align*}
On $\{\gamma_N>0\}$,
\begin{equation}\label{E:AA} \Prot_N = \int_{s\in [\tau^\alpha_N,\tau^\beta_N]\cap[0,T)}e^{-\rate s}d\tL^{(N)}_s. \end{equation}
The heart of Lemma \ref{L:hasymp} is the following result, the proof of which
is at the end of this section.
\begin{lemma}\label{L:TTimes} We have that
\begin{equation*} \varlimsup_{N\to \infty}\sup_{\substack{s\in \CS_N \\ s\le N^{1/4}}}\tilde \BE_N\left[T-\tau^\alpha_N\bigg|\gamma_N\right]\chi_{\{\gamma_N=s\}}=0. \end{equation*}
\end{lemma}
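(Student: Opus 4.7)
The plan is to reduce the bound to a concentration estimate for the number of late defaults under the tilted measure $\tilde\BP_N$. Fix $\delta > 0$. Since on $\{\gamma_N > 0\}$ one has $\tau^\alpha_N \le T$, the estimate
\[
(T - \tau^\alpha_N)\chi_{\{\gamma_N = s\}} \le \delta\,\chi_{\{\gamma_N = s\}} + T\,\chi_{\{\tau^\alpha_N < T - \delta\}}\chi_{\{\gamma_N = s\}}
\]
gives
\[
\tilde\BE_N\bigl[T - \tau^\alpha_N \bigm| \gamma_N\bigr]\chi_{\{\gamma_N = s\}} \le \delta\,\chi_{\{\gamma_N = s\}} + T\,\frac{\tilde\BP_N\{\tau^\alpha_N < T - \delta,\, \gamma_N = s\}}{\tilde\BP_N\{\gamma_N = s\}}\chi_{\{\gamma_N = s\}}.
\]
It thus suffices to show the last fraction vanishes as $N \to \infty$, uniformly in $s \in \CS_N$ with $s \le N^{1/4}$; letting $\delta \searrow 0$ afterwards then yields the claim.

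Next I would convert the event $\{\tau^\alpha_N < T - \delta\} \cap \{\gamma_N = s\}$ into an inequality for the late-default count $D^{(N)}_\delta \Def \#\{n \le N : \tau_n \in [T-\delta, T)\}$. On $\{\gamma_N = s\}$ the total number of defaults in $[0, T)$ equals $N\alpha + s$, and $\{\tau^\alpha_N < T - \delta\}$ forces at least $\granup$ of these into $[0, T-\delta)$; hence at most $s + N\alpha - \granup \le s$ defaults fall in $[T-\delta, T)$. Therefore
\[
\tilde\BP_N\{\tau^\alpha_N < T - \delta,\, \gamma_N = s\} \le \tilde\BP_N\{D^{(N)}_\delta \le N^{1/4}\}.
\]

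The core of the argument is to bound $\tilde\BP_N\{D^{(N)}_\delta \le N^{1/4}\}$ by $e^{-cN}$ for some $c > 0$. I would invoke Assumption \ref{A:NotFlat} to select $\delta, \eps > 0$ and $\eta > 0$ so that $|\{n \le N : \mu^{(N)}_n[T-\delta, T) \ge \eps\}| \ge (1 - \alpha + \eta)N$ for all large $N$. From Lemma \ref{L:increasing} we have $\Lambda(\alpha,\tUUN) > 0$ for large $N$, and from \eqref{E:phidef} one checks that $\Phi(p,\lambda) \ge p$ whenever $\lambda \ge 0$; hence $\tilde\mu^{(N)}_n[T-\delta, T) \ge \mu^{(N)}_n[T-\delta, T)$ for every $n$. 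Under $\tilde\BP_N$, $D^{(N)}_\delta$ is therefore a sum of $N$ independent Bernoullis with total mean at least $\eps(1 - \alpha + \eta)N$, and a standard Chernoff/Hoeffding bound delivers the desired exponential estimate.

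Combining these steps with the uniform lower bound $\tilde\BP_N\{\gamma_N = s\} \ge \KK N^{-1/2}$ for $s \le N^{1/4}$ supplied by Lemma \ref{L:probasymp}, the conditional probability is dominated by $\KK^{-1}\sqrt{N}\,e^{-cN}$, which vanishes uniformly in $s$. The main obstacle is establishing the Chernoff estimate, which hinges on showing that the tilted expectation $\tilde\BE_N[D^{(N)}_\delta]$ grows linearly in $N$; this is exactly what Assumption \ref{A:NotFlat} provides, once coupled with the positivity of $\Lambda(\alpha,\tUUN)$ and the monotonicity of $\Phi(p,\cdot)$ recorded in Remark \ref{R:Phiprops}.
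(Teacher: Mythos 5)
Your proposal is correct and takes a genuinely different route from the paper's. The paper uses a martingale argument: it works in reversed time, writes $L^{(N)}_{(T-\delta)-}=L^{(N)}_{T-}+\tilde A^{(N)}_\delta+\tilde M^{(N)}_\delta$ with $\tilde A^{(N)}$ a nonpositive compensator, and on $\{\vrho^\alpha_N>\delta\}$ sandwiches $-\tilde A^{(N)}_\delta$ between $\eps\Delta_N(\eps,\delta)$ (from Assumption~\ref{A:NotFlat}) and $(L^{(N)}_{T-}-\alpha)^++1/N+|\tilde M^{(N)}_\delta|$; a second-moment estimate on the martingale (of size $O(1/N)$) then gives $\tilde\BP_N\{\vrho^\alpha_N>\delta\mid\gamma_N\}=O(N^{-1/2})$ uniformly over $s\le N^{1/4}$, which is all that is needed. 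You instead reformulate $\{\tau^\alpha_N<T-\delta\}\cap\{\gamma_N=s\}$ as a deficiency of late defaults $D^{(N)}_\delta\le s\le N^{1/4}$, note (via $\Lambda(\alpha,\tUUN)>0$ and the monotonicity of $\Phi(p,\cdot)$ from Remark~\ref{R:Phiprops}) that the tilted measure only \emph{increases} each $\mu^{(N)}_n[T-\delta,T)$, invoke Assumption~\ref{A:NotFlat} to get a linear-in-$N$ lower bound on $\tilde\BE_N[D^{(N)}_\delta]$, and apply Chernoff to obtain $\tilde\BP_N\{D^{(N)}_\delta\le N^{1/4}\}\le e^{-cN}$; dividing by the local-CLT lower bound $\tilde\BP_N\{\gamma_N=s\}\gtrsim N^{-1/2}$ from Lemma~\ref{L:probasymp} gives a bound vanishing much faster than needed. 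The trade-off is that your route imports Lemma~\ref{L:probasymp}, whose Fourier-analytic proof in Section~\ref{S:Fourier} is logically independent of Lemma~\ref{L:TTimes}, so there is no circularity, but it does couple the two lemmas where the paper keeps them separate. Your argument is more elementary (Chernoff rather than a martingale problem) and quantitatively much stronger, while the paper's is self-contained; one small presentational fix in your write-up is to select the $\delta,\eps,\eta$ from Assumption~\ref{A:NotFlat} \emph{before} writing the splitting $T-\tau^\alpha_N\le\delta+T\chi_{\{\tau^\alpha_N<T-\delta\}}$, noting that the assumption in fact furnishes such a triple for every sufficiently small $\delta$, so the final $\delta\searrow0$ limit is legitimate.
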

We assume that $\{0<\gamma_N\le N^{1/4}\}$ and that $N>(\beta-\alpha)^{-4/3}$ (thus $\alpha+\gamma_N/N<\beta$).  Then as in Section 7 of \cite{SowersCDOI},
we have that
\begin{equation}\label{E:BB} \int_{s\in [\tau^\alpha_N,\tau^\beta_N]\cap [0,T)}e^{-\rate s}d\tL^{(N)}_s = e^{-\rate T}\frac{L^{(N)}_{T-}-\alpha}{\beta-\alpha} + \err_N = \frac{e^{-\rate T}}{\beta-\alpha}\frac{\gamma_N}{N} + \err_N \end{equation}
where
\begin{equation*} \err_N = -e^{\rate T}\frac{\left(L^{(N)}_{\tau^\alpha_N-}-\alpha\right)^+}{\beta-\alpha} +\int_{s\in [\tau^\alpha_N,T)}e^{-\rate s}\{1-e^{-\rate (T-s)}\}d\tL^{(N)}_s. \end{equation*}
Furthermore, we have that
\begin{equation*} |\err_N|\le \frac{1}{\beta-\alpha}\lb \frac{1}{T}+\rate\rb (T-\tau^\alpha_N)\frac{\gamma_N}{N}. \end{equation*}
Then
\begin{proof}[Proof of Lemma \ref{L:hasymp}] For $N\ge (\beta-\alpha)^{-4/3}$ and $s\in \CS_N$ such that $s\le N^{1/4}$, we have that
\begin{equation*} \Err_1(s,N)= (\beta-\alpha)e^{\rate T}\frac{\tilde \BE_N\left[\err_N\big|\gamma_N\right]}{\frac{\gamma_N}{N}}\chi_{\{\gamma_N=s\}} \le e^{\rate T}\lb \frac{1}{T}+\rate\rb \tilde \BE_N[T-\tau^\alpha_N|\gamma_N]\chi_{\{\gamma_N=s\}}. \end{equation*}
Combine \eqref{E:AA} and \eqref{E:BB} and Lemma \ref{L:TTimes}. \end{proof}

We now need to prove Lemma \ref{L:TTimes}.  As in \cite{SowersCDOI}, we will
use the martingale problem as applied to a time-reversed martingale.

Define
\begin{equation*} \Zn_t \Def \chi_{\{\tau_n< T-t\}}=\chi_{(t,\infty)}(T-\tau_n) \qquad t\in [0,T) \end{equation*}
for each positive integer $n$ (note that the $\Zn$'s are right-continuous, have left-hand limits, and are nonincreasing).  Also define
$\gilt_t \Def \sigma\{\Zn_s: 0\le s\le t,\, n\in \{1,2\dots N\}\}$ for all $t\in [0,T)$ and $N\in \N$.  Observe that
\begin{equation}\label{E:llab} L^{(N)}_{t-} = \frac{1}{N}\sum_{n=1}^N \chi_{[0,t)}(\tau_n)=\frac{1}{N}\sum_{n=1}^N \Zn_{T-t}. \qquad t\in (0,T] \end{equation}
For all $t\in [0,T)$, $N\in \N$, and $n\in \{1,2\dots N\}$, define
\begin{align*}
A^{(N,n)}_t &= -\int_{r\in [T-t,T)} \frac{1}{\mu^{(N)}_n[0,r]} \Zn_{(T-r)-}\mu^{(N)}_n(dr)\\
M^{(N,n)}_t&\Def \Zn_t-\Zn_0-A^{(N,n)}_t. \end{align*}
Several comments are in order concerning $A^{(N,n)}$.  The integrand $(1/\mu^{(N)}_n[0,r])\Zn_{(T-r)-}$
is nonnegative (but possibly infinite), so $A^{(N,n)}$ is well-defined (but possibly infinite) via the theory of Lebesgue integration; we can approximate $r\mapsto 1/\mu^{(N)}_n[0,r]$ from below via simple functions. 
Also, $A^{(N,n)}$ is negative, nonincreasing, and right-continuous.  As we pointed out in \cite{SowersCDOI},
\begin{equation*} \Zn_{(T-r)-} = \chi_{\{\tau_n\le r\}} \end{equation*}
for all $r\in [0,T)$.  Thus
\begin{equation}\label{E:integrability} \begin{aligned} \tilde \BE_N\left[\left|A^{(N,n)}_{T-}\right|\right] 
&= \tilde \BE_N\left[\int_{r\in (0,T)} \frac{1}{\mu^{(N)}_n[0,r]} \Zn_{(T-r)-}\mu^{(N)}_n(dr)\right]\\
&= \int_{r\in (0,T)} \frac{1}{\mu^{(N)}_n[0,r]} \tilde \BP_N\{\tau_n\le r\} \mu^{(N)}_n(dr)\\
&= \begin{cases} \frac{\tilde \uu^{(N)}_n}{\uu^{(N)}_n} \mu^{(N)}(0,T) &\text{if $\uu^{(N)}_n\in (0,1)$} \\
\mu^{(N)}_n(0,T) &\text{if $\uu^{(N)}_n\in \{0,1\}$} \end{cases}
\le \begin{cases} \tilde \uu^{(N)}_n &\text{if $\uu^{(N)}_n\in (0,1)$} \\
\mu^{(N)}_n(0,T) &\text{if $\uu^{(N)}_n\in \{0,1\}$} \end{cases} \le 1.\end{aligned}\end{equation}
Thus $A^{(N,n)}_{T-}$ is $\tilde \BP_N$-finite (by Tonelli's theorem).
\begin{lemma} For every $n\in \{1,2\dots N\}$, $M^{(N,n)}$ is a $\tilde \BP_N$-zero-mean-martingale with
respect to $\{\gilt_t;\, t\in [0,T)\}$; i.e., for $0\le s\le t<T$,
$\tilde \BE_N[M^{(N,n)}_t|\gilt_s]=M^{(N,n)}_s$.
\end{lemma}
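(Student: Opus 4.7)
The plan is to exploit the independence of $\tau_1, \ldots, \tau_N$ under $\tilde\BP_N$ (from Theorem \ref{T:measurechange}) together with the fact that $M^{(N,n)}$ depends only on $\tau_n$ to reduce the conditional expectation $\tilde\BE_N[M^{(N,n)}_t \mid \gilt_s]$ to $\tilde\BE_N[M^{(N,n)}_t \mid \sigma(\Zn_r : r \le s)]$. Integrability of $M^{(N,n)}_t$ comes from $|\Zn_t - \Zn_0| \le 1$ combined with the bound $\tilde\BE_N|A^{(N,n)}_t| \le 1$ already verified in \eqref{E:integrability}, and $M^{(N,n)}_0 = 0$ will then supply the zero-mean conclusion. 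On the event $\{\Zn_s = 0\} = \{\tau_n \ge T-s\}$, every $r \in [T-t, T-s)$ satisfies $r < T-s \le \tau_n$, so $\chi_{\{\tau_n \le r\}}$ vanishes on that range of integration, forcing $A^{(N,n)}_t = A^{(N,n)}_s$; likewise $\Zn_u = 0$ for all $u \ge s$, so $M^{(N,n)}_t = M^{(N,n)}_s$ pointwise on this event.

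On the complementary event $\{\tau_n < T-s\} = \{\Zn_s = 1\}$, the trajectory $r \mapsto \Zn_r$ is identically $1$ on $[0, s]$, so this event is an atom of $\sigma(\Zn_r : r \le s)$, and the conditional law of $\tau_n$ there is $\tilde\mu^{(N)}_n$ restricted to $[0, T-s)$ and renormalized. A direct computation then produces
\begin{align*}
\tilde\BE_N[\Zn_t - \Zn_s \mid \sigma(\Zn_r : r \le s)] &= -\frac{\tilde\mu^{(N)}_n[T-t, T-s)}{\tilde\mu^{(N)}_n[0, T-s)}, \\
\tilde\BE_N[A^{(N,n)}_t - A^{(N,n)}_s \mid \sigma(\Zn_r : r \le s)] &= -\frac{1}{\tilde\mu^{(N)}_n[0, T-s)}\int_{r \in [T-t, T-s)} \frac{\tilde\mu^{(N)}_n[0, r]}{\mu^{(N)}_n[0, r]}\mu^{(N)}_n(dr),
\end{align*}
the latter via Fubini and the identity $\tilde\BP_N(\tau_n \le r \mid \tau_n < T-s) = \tilde\mu^{(N)}_n[0, r]/\tilde\mu^{(N)}_n[0, T-s)$ for $r \in [0, T-s)$.

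The reconciliation rests on the explicit construction \eqref{E:PAP}: whenever $u^{(N)}_n \in (0,1)$ and $r \in [0, T)$, one has $\tilde\mu^{(N)}_n[0, r] = (\tilde u^{(N)}_n / u^{(N)}_n)\mu^{(N)}_n[0, r]$, so the ratio $\tilde\mu^{(N)}_n[0, r]/\mu^{(N)}_n[0, r]$ is constant in $r$ on $[0, T)$. Factoring this constant out of the integral collapses the second display to $-\mu^{(N)}_n[T-t, T-s)/\mu^{(N)}_n[0, T-s)$, which, upon multiplying top and bottom by $\tilde u^{(N)}_n / u^{(N)}_n$, matches the first display. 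The degenerate cases $u^{(N)}_n \in \{0, 1\}$ are immediate: for $u^{(N)}_n = 0$, both $\Zn$ and $A^{(N,n)}$ vanish identically, while for $u^{(N)}_n = 1$ one has $\tilde\mu^{(N)}_n = \mu^{(N)}_n$ and the argument goes through with trivial ratio. The principal conceptual obstacle to watch for is that $A^{(N,n)}$ is built from the original measure $\mu^{(N)}_n$ rather than $\tilde\mu^{(N)}_n$; the fact that it nevertheless remains the $\tilde\BP_N$-compensator of $\Zn$ hinges on the piecewise-constant form of $d\tilde\mu^{(N)}_n/d\mu^{(N)}_n$, which is precisely what makes the cumulative Radon--Nikodym ratio independent of $r$ on $[0, T)$.
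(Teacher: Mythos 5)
Your proof is correct, and it takes a genuinely different route from the paper's. The paper's own argument invokes Lemma 6.2 of \cite{SowersCDOI} to establish the martingale property on the interval $[0,T^*_n)$, where $T^*_n\Def \inf\{t\in[0,T]:\mu^{(N)}_n[0,T-t)=0\}\wedge T$; it then shows that $M^{(N,n)}$ is $\tilde\BP_N$-a.s.\ constant on $[T^*_n,T)$, and finally verifies separately (equation \eqref{E:SC}) that the jump at $T^*_n$ has vanishing conditional expectation, splicing the three regimes together by iterated conditioning. You instead give a self-contained, direct computation: you first use independence of the $\tau_m$'s under $\tilde\BP_N$ to reduce $\gilt_s$-conditioning to conditioning on $\sigma(\Zn_r:r\le s)$, and you then observe that on $\{\Zn_s=0\}$ the increments of both $\Zn$ and $A^{(N,n)}$ vanish pointwise, while on the atom $\{\Zn_s=1\}$ a direct Tonelli computation gives both conditional increments explicitly. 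The reconciliation of the two expressions hinges, exactly as you identify, on the fact that the Radon--Nikodym density $d\tilde\mu^{(N)}_n/d\mu^{(N)}_n$ is constant on $[0,T)$, so the cumulative ratio $\tilde\mu^{(N)}_n[0,r]/\mu^{(N)}_n[0,r]$ is the constant $\tilde\uu^{(N)}_n/\uu^{(N)}_n$ $\mu^{(N)}_n$-a.e.\ on $[0,T)$. What your approach buys is elementarity and self-containment: it avoids the external citation and it absorbs the paper's $T^*_n$ boundary case automatically, since whenever $\tilde\mu^{(N)}_n[0,T-s)=0$ the event $\{\Zn_s=1\}$ is null and only the pointwise branch is active. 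What the paper's approach buys is modularity (reuse of the homogeneous-pool lemma) and a cleaner separation of where the integrand $1/\mu^{(N)}_n[0,r]$ is honestly finite from where a $0\cdot\infty$ convention is being leaned on; your computation is correct as written, but it would be worth a sentence making explicit that the set $\{r:\mu^{(N)}_n[0,r]=0\}$ is $\mu^{(N)}_n$-null, so that the indeterminate values of the ratio never contribute to the integral.
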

\begin{proof} Recall Lemma 6.2 of \cite{SowersCDOI} and its proof.  Measurability and integrability
are clear (use \eqref{E:integrability} instead of (13) of \cite{SowersCDOI}).  Define next
\begin{equation*} T^*_n \Def \inf\lb t\in [0,T]: \mu^{(N)}_n[0,T-t)=0\rb \wedge T; \end{equation*}
then $\mu^{(N)}_n[0,T-T^*_n)=0$ but $\mu^{(N)}_n[0,T-t)>0$ for all $t\in (0,T^*_n)$.  We can thus use 
Lemma 6.2 of \cite{SowersCDOI} to see that if $0\le s\le t<T^*_n$, then $\tilde \BE_N[M^{(N,n)}_t|\gilt_s]=M^{(N,n)}_s$.  

Next, assume that $T^*_n\le s<t<T$.  Then
\begin{align*} \tilde \BE_N\left[\left|A^{(N,n)}_t-A^{(N,n)}_s\right|\right]&= \tilde \BE_N\left[\int_{r\in [T-t,T-s)}\frac{1}{\mu^{(N)}_n[0,r]}\chi_{\{\tau_n\le r\}}\mu^{(N)}_n(dr)\right] \\
&=\int_{r\in [T-t,T-s)}\frac{1}{\mu^{(N)}_n[0,r]}\tilde \BP_N\{\tau_n\le r\}\mu^{(N)}_n(dr)\\
\tilde \BE_N\left[\left|\Zn_t-\Zn_s\right|\right] & = \tilde \BE_N\left[\Zn_s-\Zn_t\right]
=\tilde \BP_N\lb T-t\le \tau_n<T-s\rb. \end{align*}
For any $0<r<T-s$, we have that
\begin{equation*} \tilde \BP_N\lb \tau_n\le r\rb = \begin{cases} \frac{\tilde \uu^{(N)}_n}{\uu^{(N)}_n}\mu^{(N)}_n[0,r] &\text{if $\uu^{(N)}_n\in (0,1)$} \\
\mu^{(N)}_n[0,r] &\text{if $\uu^{(N)}_n\in \{0,1\}$}\end{cases}
\le \begin{cases} \frac{\tilde \uu^{(N)}_n}{\uu^{(N)}_n}\mu^{(N)}_n[0,T-T^*_n) &\text{if $\uu^{(N)}_n\in (0,1)$} \\
\mu^{(N)}_n[0,T-T^*_n) &\text{if $\uu^{(N)}_n\in \{0,1\}$}\end{cases}=0. \end{equation*}
Standard arguments thus imply that $M^{(N,n)}$ is $\tilde \BP_N$-a.s. constant on $[T^*_n,T)$, and so for any $T^*_n\le s<t<T$, we of course have that
$\tilde \BE_N[M^{(N,n)}_t|\gilt_s]=M^{(N,n)}_s$.

Finally, we claim that for any $0\le s<T^*_n$, 
\begin{equation} \label{E:SC} \tilde \BE_N\left[M^{(N,n)}_{T^*_n}-M^{(N,n)}_{T^*_n-}\bigg|\gilt_s\right]=0;\end{equation}
if so, we can fairly easily conclude that $\tilde \BE_N[M^{(N,n)}_{T^*_n}|\gilt_s]=M^{(N,n)}_s$ for any $0\le s<T^*_n$.  By standard martingale-type arguments involving iterated conditioning,
this will finish the proof.  To see \eqref{E:SC}, we compute that
\begin{equation*} M^{(N,n)}_{T^*_n}-M^{(N,n)}_{T^*_n-} = -\chi_{\{\tau_n=T-T^*_n\}} + \int_{r\in [T-T^*_n,T-T^*_n]}\frac{1}{\mu^{(N)}_n[0,T-T^*_n]}\chi_{\{\tau_n\le T-T^*_n\}}\mu^{(N)}_n(dr). \end{equation*}
If $\mu^{(N)}_n\{T-T^*_n\}=0$, then $M^{(N,n)}_{T^*_n}-M^{(N,n)}_{T^*_n-} = -\chi_{\{\tau_n=T-T^*_n\}}$
and we note that $\BP_N$-a.s. (and thus by absolute continuity $\tilde \BP_N$-a.s.)
\begin{equation*} \tilde \BP_N\{\tau_n=T-T^*_n\} = \begin{cases} \frac{\tilde \uu^{(N)}_n}{\uu^{(N)}_n}\mu^{(N)}_n\{T-T^*_n\} &\text{if $\uu^{(N)}_n\in (0,1)$} \\
\mu^{(N)}_n\{T-T^*_n\} &\text{if $\uu^{(N)}_n\in \{0,1\}$} \end{cases} =0. \end{equation*}
On the other hand, assume that $\mu^{(N)}_n\{T-T^*_n\}>0$.  Then $\mu^{(N)}_n[0,T-T^*_n] = \mu^{(N)}_n\{T-T^*_n\}>0$, and so
$\BP_N$-a.s. (and thus again by absolute continuity $\tilde \BP_N$-a.s.)
\begin{equation*} M^{(N,n)}_{T^*_n}-M^{(N,n)}_{T^*_n-} = -\chi_{\{\tau_n=T-T^*_n\}} + \chi_{\{\tau_n\le T-T^*_n\}}= \chi_{\{\tau_n<T-T^*_n\}}. \end{equation*}
Here we calculate that
\begin{equation*} \tilde \BP_N\{\tau_n<T-T^*_n\} = \begin{cases} \frac{\tilde \uu^{(N)}_n}{\uu^{(N)}_n}\mu^{(N)}_n[0,T-T^*_n) &\text{if $\uu^{(N)}_n\in (0,1)$} \\
\mu^{(N)}_n[0,T-T^*_n) &\text{if $\uu^{(N)}_n\in \{0,1\}$}\end{cases} =0. \end{equation*}
This proves \eqref{E:SC} and completes the proof.
\end{proof}

Let's now recombine things.  Set
\begin{equation*} \tilde M^{(N)}_t \Def \frac{1}{N}\sum_{n=1}^N M^{(N,n)}_t \qquad \text{and}\qquad \tilde A^{(N)}_t \Def \frac{1}{N}\sum_{n=1}^N A^{(N,n)}_t \end{equation*}
for $t\in [0,T)$. 
Also observe that
\begin{equation*} L^{(N)}_{T-} = \frac{1}{N}\sum_{n=1}^N \Zn_0. \end{equation*}
We next rewrite $\tau^\alpha_N$ to be in reverse time.  Set
\begin{equation*} \vrho^\alpha_N \Def \inf\lb t\in [0,T): L^{(N)}_{(T-t)-}\le \frac{\grandn}{N}\rb\wedge T = \inf\lb t\in [0,T): \frac{1}{N}\sum_{n=1}^N \Zn_t\le \frac{\grandn}{N}\rb\wedge T; \end{equation*}
then, as in Section 7 of \cite{SowersCDOI}, we know that $\vrho^\alpha_N=T-\tau^\alpha_N$.

Fix now two parameters $\delta\in (0,T)$ and $\eps\in (0,1)$. 
We want to show (this will occur in \eqref{E:hh}) that it is unlikely that $\vrho^\alpha_N>\delta$;
we want to do this by exploiting the equation
\begin{equation*} L^{(N)}_{(T-\delta)-} = L^{(N)}_{T-} + \tilde A^{(N,n)}_\delta + \tilde M^{(N,n)}_\delta. \end{equation*}
Assume now that in fact $\rho^\alpha_N>\delta$.
Firstly, this implies that
\begin{equation}\label{E:LoL} L^{(N)}_{T-}\ge \frac{\grandn+1}{N}>\alpha \qquad \text{and}\qquad L^{(N)}_{(T-\delta)-}>\frac{\grandn}{N}\ge \alpha-\frac{1}{N}\end{equation}
(see Figure 3 of \cite{SowersCDOI}).
Thus
\begin{equation*} - \tilde A^{(N,n)}_\delta = L^{(N)}_{T-} - L^{(N)}_{(T-\delta)-} + \tilde M^{(N,n)}_\delta \le L^{(N)}_{T-} - \alpha + \frac{1}{N}+ \left|\tilde M^{(N,n)}_\delta\right|\end{equation*}
On the other hand, we can combine \eqref{E:llab} and the second inequality of \eqref{E:LoL}
and the fact that the $\Zn$'s are nonincreasing to see that for $r\in [T-\delta,T)$,
\begin{equation*} \frac{1}{N}\sum_{n=1}^N \Zn_{(T-r)-}
\ge \frac{1}{N}\sum_{n=1}^N \Zn_{T-r} 
\ge \frac{1}{N}\sum_{n=1}^N \Zn_\delta = L^{(N)}_{(T-\delta)-}\ge \alpha-\frac{1}{N}.\end{equation*}
Also, $\mu^{(N)}_n[0,r]\le 1$, so some straightforward calculations show that
\begin{multline*} -\tilde A^{(N)}_\delta 
\ge \frac{1}{N}\sum_{n=1}^N \Zn_\delta\mu^{(N)}_n[T-\delta,T)
\ge \eps \lb \frac{1}{N}\sum_{n=1}^N \Zn_\delta\chi_{\{\mu^{(N)}_n[T-\delta,T)\ge \eps\}}\rb \\
\ge \eps \lb \frac{1}{N}\sum_{n=1}^N \Zn_\delta - \frac{1}{N}\sum_{n=1}^N \Zn_\delta\chi_{\{\mu^{(N)}_n[T-\delta,T)< \eps\}}\rb 
\ge \eps \Delta_N(\eps,\delta) \end{multline*}
where
\begin{align*} \Delta_N(\eps,\delta) &\Def \alpha-\frac1{N} - \frac{1}{N}\sum_{n=1}^N\chi_{\{\mu^{(N)}_n[T-\delta,T)< \eps\}}\\
&=\alpha-\frac1{N} - \frac{\left|\lb n\in \{1,2\dots N\}: \mu^{(N)}_n[T-\delta,T)<\eps\rb\right|}{N} \end{align*}
Thanks to Assumption \ref{A:NotFlat}, we have that
\begin{equation}\label{E:Deltadown} \varliminf_{\delta\searrow 0}\varliminf_{\eps\searrow 0}\varliminf_{N\to \infty} \Delta_N(\eps,\delta)>0. \end{equation}

Combining our above calculations going back to \eqref{E:LoL}, we have that if $\Delta_N(\eps,\delta)>0$,
\begin{equation}\label{E:rush}\begin{aligned} \chi_{\{\vrho^\alpha_N>\delta\}}
\le \frac{1}{\eps \Delta_N(\eps,\delta)}\lb (L^{(N)}_{T-}-\alpha) + \frac{1}{N}+ \left|\tilde M^{(N)}_\delta\right|\rb \chi_{\{\vrho^\alpha_N>\delta\}} \\
\le \frac{1}{\eps \Delta_N(\eps,\delta)}\lb (L^{(N)}_{T-}-\alpha)^+ + \frac{1}{N}+ \left|\tilde M^{(N)}_\delta\right|\rb\end{aligned}\end{equation}
(recall the first inequality of \eqref{E:LoL}).

\begin{proof}[Proof of Lemma \ref{L:TTimes}]  Take conditional expectations of \eqref{E:rush} with respect to $\gilt_0$, and use the fact that $L^{(N)}_{T-}$ is $\gilt_0$-measurable.  Thus if $\Delta_N(\eps,\delta)>0$,
\begin{equation*} \tilde \BP_N\lb \vrho^\alpha_N> \delta\big| \gilt_0\rb \le \frac{1}{\eps \Delta_N(\eps,\delta)}\lb (L^{(N)}_{T-}-\alpha)^+ + \frac{1}{N}+ \tilde \BE_N\left[\left|\tilde M^{(N)}_\delta\right|\bigg|\gilt_0\right]\rb. \end{equation*}
Then (heavily using the fact that the $\Zn$'s are independent under $\tilde \BP_N$), we get that
\begin{equation*} \tilde \BE_N\left[\left|\tilde M^{(N)}_\delta\right|\bigg|\gilt_0\right]
\le \sqrt{\tilde \BE_N\left[\left|\tilde M^{(N)}_\delta\right|^2\bigg|\gilt_0\right]}
\le \lb \frac{3}{N^2}\sum_{n=1}^N \lb 2 + \tilde \BE_N\left[\left|A^{(N,n)}_\delta\right|^2\bigg|\gilt_0\right]\rb \rb^{1/2}. \end{equation*}
We next compute that
\begin{multline*} \tilde \BE_N\left[\left|A^{(N,n)}_\delta\right|^2\bigg|\gilt_0\right] \le \tilde \BE_N\left[\left|A^{(N,n)}_{T-}\right|^2\bigg|\gilt_0\right]\\
= \int_{r_1\in(0,T)}\int_{r_2\in (0,T)}\frac{1}{\mu^{(N)}_n[0,r_1]\mu^{(N)}_n[0,r_2]}\tilde \BP_N\lb \tau_n\le r_1\wedge r_2\big|\gilt_0\rb \mu^{(N)}_n(dr_1)\mu^{(N)}_n(dr_2)\\
\le 2\int_{r_1\in(0,T)}\int_{r_2\in (0,r_1]}\frac{1}{\mu^{(N)}_n[0,r_1]\mu^{(N)}_n[0,r_2]}\tilde \BP_N\lb \tau_n\le r_2\big|\gilt_0\rb \mu^{(N)}_n(dr_1)\mu^{(N)}_n(dr_2). \end{multline*}
If $\mu^{(N)}_n[0,T)=0$, then clearly $\tilde \BE_N\left[\left|A^{(N,n)}_\delta\right|^2\bigg|\gilt_0\right]=0$.  Assume next that $\mu^{(N)}_n[0,T)>0$.
For $r_2\in (0,T)$,
\begin{equation*} \tilde \BP_N\{\tau_n\le r_2, \Zn_0=0\} = \tilde \BP_N\{\tau_n\le r_2, \tau_n\ge T\}=0.\end{equation*}
Thus for $r_2\in (0,T)$ (again using the fact that $\{\tau_n\}_{n=1}^N$'s are independent under $\tilde \BP_N$) we have that $\tilde \BP_N$-a.s.
\begin{multline*} \tilde \BP_N\lb \tau_n\le r_2\big| \gilt_0\rb 
=\tilde \BP_N\lb \tau_n\le r_2\big| \Zn_0\rb \\
= \frac{\tilde \BP_N\{ \tau_n\le r_2,\tau_n< T\}}{\tilde \BP_N\{\tau_n< T\}}\chi_{\{1\}}(\Zn_0)
= \frac{\tilde \BP_N\{ \tau_n\le r_2\}}{\tilde \BP_N\{\tau_n< T\}}\Zn_0
= \frac{\mu^{(N)}_n[0,r_2]}{\mu^{(N)}_n[0,T)}\Zn_0. \end{multline*}
Thus
\begin{multline*} \tilde \BE_N\left[\left|A^{(N,n)}_{\vrho^\alpha_N\wedge \delta}\right|^2\bigg|\gilt_0\right] 
\le \frac{2\Zn_0}{\mu^{(N)}_n[0,T)}\int_{r_1\in (0,T)}\int_{r_2\in (0,r_1]}\frac{1}{\mu^{(N)}_n[0,r_1]}\mu^{(N)}_n(dr_2)\mu^{(N)}_n(dr_1)\\
\le\frac{2\Zn_0}{\mu^{(N)}_n[0,T)}\int_{r_1\in(0,T)}\mu^{(N)}_n(dr_1) \le 2. \end{multline*}
Summarizing thus far, we have that
\begin{equation}\label{E:hh} \tilde \BP_N\lb \vrho^\alpha_N> \delta\big| \gilt_0\rb \le \frac{1}{\eps \Delta_N(\eps,\delta)}\lb (L^{(N)}_{T-}-\alpha)^+ + \frac{1}{N}+ \sqrt{\frac{12}{N}}\rb. \end{equation}
Since $\sigma\{\gamma_N\}=\sigma\{L^{(N)}_{T-}\}\subset \gilt_0$, we have
\begin{equation*} \tilde \BP_N\lb \vrho^\alpha_N> \delta\big| \gamma_N\rb  \le \frac{1}{\eps \Delta_N(\eps,\delta)}\lb \left(\frac{\gamma_N}{N}\right)^+ + \frac{1}{N}+ \sqrt{\frac{12}{N}}\rb. \end{equation*}

Let's finally bound $T-\tau^\alpha_N$.  The above bound will show us that it is unlikely that $\vrho^\alpha_N>\delta$.  On the other hand, if $\vrho^\alpha_N\le \delta$, then in fact $T-\tau^\alpha_N=\vrho^\alpha_N\le \delta$.  Thus
\begin{equation*} \tilde \BE_N\left[T-\tau^\alpha_N\bigg|\gamma_N\right]
\le \delta +  \frac{T}{\eps \Delta_N(\eps,\delta)}\lb \frac{\gamma^+_N}{N} + \frac{1}{N}+ \sqrt{\frac{12}{N}}\rb. \end{equation*}

In other words,
\begin{equation*} \sup_{\substack{s\in \CS_N \\ s\le N^{1/4}}}\tilde \BE_N\left[T-\tau^\alpha_N\big| \gamma_N\right]\chi_{\{\gamma_N=s\}}\le 
\delta +  \frac{T}{\eps \Delta_N(\eps,\delta)}\lb \frac{1}{N^{3/4}} + \frac{1}{N}+ \sqrt{\frac{12}{N}}\rb. \end{equation*}
We now use \eqref{E:Deltadown}.  Take $N\to \infty$, then $\eps\searrow 0$, and finally $\delta \searrow 0$.\end{proof}

\section{Proof of Lemma \ref{L:probasymp}}\label{S:Fourier}

Let's start by representing $\tilde \BP_N\{\gamma_N=s\}$ as a Fourier transform;
that will allow us to mimic various arguments from the central limit theorem.
For $N\in \N$ and $\theta\in \R$, define
\begin{equation*} \CP_N(\theta) \Def \tilde \BE_N\left[e^{i\theta \gamma_N}\right]
= \sum_{n=0}^N \exp\left[i\theta(n-N \alpha)\right]\tilde \BP_N\lb \gamma_N = n- N \alpha\rb. \end{equation*}
Thus
\begin{equation*}\CP_N(\theta)\exp\left[i\theta N \alpha\right] = \sum_{n=0}^N e^{i\theta n}\tilde \BP_N\lb \gamma_N = n- N \alpha\rb. \end{equation*}
Thus for $s= n-N \alpha$ for some $n\in \{0,1\dots N\}$,
\begin{multline*} \tilde \BP_N\lb \gamma_N=s\rb = \tilde \BP_N\lb \gamma_N=n- N\alpha\rb = \frac{1}{2\pi}\int_{\theta=-\pi}^{\pi} \CP_N(\theta)\exp\left[ i \theta N \alpha\right] e^{-i\theta n}d\theta \\
= \frac{1}{2\pi}\int_{\theta=-\pi}^{\pi} \CP_N(\theta) e^{-i\theta s}d\theta. \end{multline*}
and so by a change of variables,
\begin{equation*} \sqrt{2\pi N}\tilde \BP_N\lb \gamma_N=s\rb = \frac{1}{\sqrt{2\pi}}\int_{\theta=-\pi\sqrt{N}}^{\pi\sqrt{N}}\CP_N\left(\frac{\theta}{\sqrt{N}}\right)\exp\left[-i\theta\frac{s}{\sqrt{N}}\right]d\theta. \end{equation*}
This last representation is the same scaling as for the central limit theorem.

The advantage of using $\CP_N$ is that we can explicitly compute it.  Recall \eqref{E:cat}.  We have that
\begin{equation} \label{E:BestWorlds}\begin{aligned} \CP_N(\theta) &= \tilde \BE_N\left[\exp\left[i\theta \sum_{n=1}^N\lb \chi_{[0,T)}(\tau_n) - \tilde \uu^{(N)}_n\rb\right]\right] 
= \prod_{n=1}^N \tilde \BE_N\left[\exp\left[i\theta \lb \chi_{[0,T)}(\tau_n) - \tilde \uu^{(N)}_n\rb\right]\right] \\
&= \prod_{n=1}^N \lb \tilde \BE_N\left[\exp\left[i\theta \chi_{[0,T)}(\tau_n)\right]\right]\exp\left[-i\theta \tilde \uu^{(N)}_n\right]\rb
= \prod_{n=1}^N \lb \left(e^{i\theta}\tilde \uu^{(N)}_n + 1-\tilde \uu^{(N)}_n\right)\exp\left[-i\theta \tilde \uu^{(N)}_n\right]\rb\end{aligned}\end{equation}
(the part of the last equality due to $n$ for which $\uu^{(N)}_n\in (0,1)$
is obvious; for those $n$ for which $\uu^{(N)}_n\in \{0,1\}$ we use \eqref{E:OR})
We can now start to see the important asymptotic behavior of $\CP_N$.
Before actually launching into the proof, we need to study $\sigma^2(\alpha,\tUUN)$ of
\eqref{E:sigmadef} for a moment.
\begin{lemma}\label{L:sigmalim} We have that
\begin{equation*} \frac{1}{N}\sum_{n=1}^N\tilde \uu^{(N)}_n\left(1-\tilde \uu^{(N)}_n\right) = \sigma^2(\alpha,\tUUN). \end{equation*}
For each $\alpha'\in (0,1)$, the map $\tVV\mapsto \sigma^2(\alpha',\tVV)$
is continuous and positive on $\calG^\strict_{\alpha'}$.
\end{lemma}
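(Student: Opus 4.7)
The plan breaks into three small pieces, one for each assertion (the equality, continuity, and positivity), and none of them should be especially hard.

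For the equality, I would simply read off the definitions. By \eqref{E:PAP}, $\tilde\uu^{(N)}_n = \Phi(\mu^{(N)}_n[0,T),\Lambda(\alpha,\tUUN))$, so by \eqref{E:tUUNDef},
\begin{equation*}
\frac{1}{N}\sum_{n=1}^N \tilde\uu^{(N)}_n(1-\tilde\uu^{(N)}_n)
=\int_{p\in[0,1]} \Phi(p,\Lambda(\alpha,\tUUN))\bigl(1-\Phi(p,\Lambda(\alpha,\tUUN))\bigr)\,\tUUN(dp),
\end{equation*}
which is exactly $\sigma^2(\alpha,\tUUN)$ by \eqref{E:sigmadef}. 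This is a purely bookkeeping step.

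For continuity on $\calG^{\strict}_{\alpha'}$, the plan is to piggyback on Lemma \ref{L:finalITmin}. Fix $\alpha'\in(0,1)$ and suppose $\tVV_k\to \tVV$ weakly in $\calG^{\strict}_{\alpha'}$. By Lemma \ref{L:finalITmin}, $\Lambda(\alpha',\tVV_k)\to \Lambda(\alpha',\tVV)\in\R$. From Remark \ref{R:Phiprops}, $(p,\lambda)\mapsto \Phi(p,\lambda)(1-\Phi(p,\lambda))$ is continuous and bounded by $1$ on $[0,1]\times\R$, and in fact is Lipschitz in $\lambda$ uniformly in $p$ on compact $\lambda$-ranges (since $|\partial_\lambda\Phi|\le 1$). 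A standard truncation argument then gives
\begin{equation*}
\left|\sigma^2(\alpha',\tVV_k)-\sigma^2(\alpha',\tVV)\right|
\le \bigl\|g_{\Lambda(\alpha',\tVV_k)}-g_{\Lambda(\alpha',\tVV)}\bigr\|_\infty
+\left|\int g_{\Lambda(\alpha',\tVV)}\,d\tVV_k - \int g_{\Lambda(\alpha',\tVV)}\,d\tVV\right|,
\end{equation*}
where $g_\lambda(p)\Def \Phi(p,\lambda)(1-\Phi(p,\lambda))$. The first term tends to zero by the uniform Lipschitz bound in $\lambda$, and the second by weak convergence of $\tVV_k$ to $\tVV$ together with continuity and boundedness of $g_{\Lambda(\alpha',\tVV)}$.

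For positivity, I would isolate where the integrand vanishes. For $\tVV\in\calG^{\strict}_{\alpha'}$ we have $\Lambda(\alpha',\tVV)\in\R$, and Remark \ref{R:Phiprops} tells us that for such finite $\lambda$, $\Phi(p,\lambda)\in(0,1)$ precisely when $p\in(0,1)$. Hence the integrand $p\mapsto \Phi(p,\Lambda(\alpha',\tVV))(1-\Phi(p,\Lambda(\alpha',\tVV)))$ is strictly positive on $(0,1)$ and vanishes only at $\{0,1\}$. Since $\tVV\in\calG^{\strict}_{\alpha'}$ gives $\tVV\{0\}<1-\alpha'$ and $\tVV\{1\}<\alpha'$, we get $\tVV((0,1))=1-\tVV\{0\}-\tVV\{1\}>0$, and thus $\sigma^2(\alpha',\tVV)>0$.

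The only step with any subtlety is establishing that $\Lambda(\alpha',\cdot)$ stays in a compact subset of $\R$ as we approach the boundary of $\calG^{\strict}_{\alpha'}$, which is needed to make the uniform Lipschitz estimate on $g_\lambda$ work; but this is immediate from the continuity of $\Lambda$ already furnished by Lemma \ref{L:finalITmin}.
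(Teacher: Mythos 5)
Your proof is correct and follows essentially the same route as the paper: read off the equality from the definitions, split $|\sigma^2(\alpha',\tVV_k)-\sigma^2(\alpha',\tVV)|$ into a term controlled by the Lipschitz-in-$\lambda$ bound on $\Phi(p,\lambda)(1-\Phi(p,\lambda))$ together with continuity of $\Lambda$, plus a term controlled by weak convergence, and obtain positivity from the fact that the integrand vanishes only at $p\in\{0,1\}$ while $\tVV(0,1)>0$ on $\calG^{\strict}_{\alpha'}$. The one cosmetic difference is that the paper keeps the first term as $\int|g_{\Lambda_k}-g_\Lambda|\,d\tVV_k$ and bounds it directly by $|\Lambda(\alpha',\tVV_k)-\Lambda(\alpha',\tVV)|$ (citing Lemma \ref{L:LambdaCont}) rather than passing through $\|\cdot\|_\infty$, and no truncation is actually required since the Lipschitz constant is globally $1$.
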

\begin{proof}  We first observe that
\begin{multline*} \frac{1}{N}\sum_{n=1}^N\tilde \uu^{(N)}_n\left(1-\tilde \uu^{(N)}_n\right) 
=\frac{1}{N}\sum_{n=1}^N\Phi(\mu^{(N)}_n[0,T),\Lambda(\alpha,\tUUN))\lb 1-\Phi(\mu^{(N)}_n[0,T),\Lambda(\alpha,\tUUN))\rb\\
=\int_{p\in [0,1]}\Phi(p,\Lambda(\alpha,\tUUN))\lb 1-\Phi(p,\Lambda(\alpha,\tUUN)\rb\tUUN(dp)
= \sigma^2(\alpha,\tUUN). \end{multline*}
If $(\tVV_n)_{n\in \N}$ and $\tVV$ in $\calG^\strict_{\alpha'}$ are such that $\lim_{n\to \infty}\tVV_n=\tVV$, then we can write
\begin{multline*} \left|\sigma^2(\alpha',\tVV_n)-\sigma^2(\alpha',\tVV)\right|\\
\le \int_{p\in [0,1]}\left|\Phi(p,\Lambda(\alpha,\tVV_n))\lb 1-\Phi(p,\Lambda(\alpha,\tVV_n))\rb -\Phi(p,\Lambda(\alpha,\tVV))\lb 1-\Phi(p,\Lambda(\alpha,\tVV))\rb\right|\tVV_n(dp)\\
+ \left|\int_{p\in [0,1]}\Phi(p,\Lambda(\alpha,\tVV))\lb 1-\Phi(p,\Lambda(\alpha,\tVV))\rb\tVV_n(dp)-\int_{p\in [0,1]}\Phi(p,\Lambda(\alpha,\tVV))\lb 1-\Phi(p,\Lambda(\alpha,\tVV))\rb\tVV(dp)\right| \end{multline*}
From Remark \ref{R:Phiprops} and in a way similar to arguments in the proofs of Lemmas \ref{L:LambdaCont} and \ref{L:fICont}, we have that
\begin{multline*} \int_{p\in [0,1]}\left|\Phi(p,\Lambda(\alpha,\tVV_n))\lb 1-\Phi(p,\Lambda(\alpha,\tVV_n))\rb -\Phi(p,\Lambda(\alpha,\tVV))\lb 1-\Phi(p,\Lambda(\alpha,\tVV))\rb\right|\tVV_n(dp)\\
\le \left|\Lambda(\alpha,\tVV_n)-\Lambda(\alpha,\tVV)\right| \end{multline*}
and we then use the continuity of Lemma \ref{L:LambdaCont} in Appendix B, and by weak convergence
\begin{equation*} \lim_{n\to \infty}\int_{p\in [0,1]}\Phi(p,\Lambda(\alpha,\tVV))\lb 1-\Phi(p,\Lambda(\alpha,\tVV))\rb\tVV_n(dp)=\int_{p\in [0,1]}\Phi(p,\Lambda(\alpha,\tVV))\lb 1-\Phi(p,\Lambda(\alpha,\tVV))\rb\tVV(dp). \end{equation*}
This proves the stated continuity.  Finally, if $\tVV\in \calG^\strict_{\alpha'}$, then $\sigma^2(\alpha',\tVV)=0$ if and only if the integrand (which is nonnegative) in \eqref{E:sigmadef} is $\tVV$-a.s. zero.  This occurs if and only if $\Phi(p,\Lambda(\alpha,\tVV))\in \{0,1\}$
for $\tVV$-a.e. $p\in [0,1]$, which, by Remark \ref{R:Phiprops}, occurs if and only
if $\tVV\{0,1\}=1$.  But since $\tVV\in \calG^\strict_{\alpha'}$, 
\begin{equation*} \tVV\{0,1\} = \tVV\{0\}+\tVV\{1\}<1-\alpha'+\alpha'=1, \end{equation*}
implying the desired positivity.\end{proof}

We also note that there are two positive constants $\vkap_-$ and $\vkap_+$ such that
\begin{equation*} \vkap_- \theta^2 \le 1-\cos(\theta)\le \vkap_+ \theta^2 \end{equation*}
for all $\theta\in (-\pi,\pi)$.  Indeed, the function $\theta\mapsto \frac{1-\cos(\theta)}{\theta^2}$ is continuous and positive on $[-\pi,\pi]\setminus \{0\}$,
and $\lim_{\theta\to 0}\frac{1-\cos(\theta)}{\theta^2} = \frac12>0$.
A direct computation in particular thus shows that
\begin{equation}\label{E:cisbound} \left|e^{i\theta}-1\right| = \sqrt{(\cos \theta-1)^2 + \sin^2\theta} = \sqrt{2(1-\cos(\theta))} \le \sqrt{2\vkap_+}|\theta| \end{equation}
for all $\theta\in (-\pi,\pi)$.

We will need two bounds in the proof of Lemma \ref{L:probasymp}.  The first
bound is that $\CP_N\left(\tfrac{\theta}{\sqrt{N}}\right)$ is close to
$\exp\left[-\tfrac12 \sigma^2(\alpha,\tUUN)\theta^2\right]$ for $\theta$ not
too large.  The second bound is that $\CP_N\left(\tfrac{\theta}{\sqrt{N}}\right)$
uniformly (as $N\to \infty$) decays in an integrable way in $\theta$.  By ``not too large'' we mean less than $N^{1/8}$; we will use the fact that
\begin{equation}\label{E:mesosmall} \sup_{\substack{s\in \CS_N \\ s\le N^{1/4} \\ |\theta|\le N^{1/8}}}\left|\frac{\theta s}{\sqrt{N}}\right|\le \frac{N^{3/8}}{N^{1/2}} = \frac{1}{N^{1/8}}. \end{equation}

We first prove the desired asymptotics of $\CP_N$.
\begin{lemma}\label{L:momgenas} For each $\theta\in \R$,
\begin{equation*} \CP_N\left(\frac{\theta}{\sqrt{N}}\right) = \exp\left[-\frac{\sigma^2(\alpha,\tUU)\theta^2}{2}+ \tilde \Err_N(\theta)\right] \end{equation*}
where there is a constant $\KK_{\ref{L:momgenas}}>0$ such that
\begin{equation*} \sup_{|\theta|\le N^{1/8}}\left|\tilde \Err_N(\theta)\right| \le \frac{\KK_{\ref{L:momgenas}}}{N^{1/8}} \end{equation*}
for all $N\in \N$ sufficiently large.
\end{lemma}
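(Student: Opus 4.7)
The plan is to start from the explicit product representation \eqref{E:BestWorlds}, take logarithms, and carry out a Taylor expansion carefully enough to isolate the leading Gaussian term while controlling the remainder uniformly for $|\theta|\le N^{1/8}$. Writing
\begin{equation*}
\ln \CP_N\!\left(\tfrac{\theta}{\sqrt N}\right) = \sum_{n=1}^N \lb \ln\left(1 + \tilde \uu^{(N)}_n(e^{i\theta/\sqrt N}-1)\right) - i\tfrac{\theta}{\sqrt N}\tilde \uu^{(N)}_n\rb,
\end{equation*}
the key preliminary observation (using \eqref{E:cisbound} and $\tilde\uu^{(N)}_n\in[0,1]$) is that each argument $z_n\Def \tilde \uu^{(N)}_n(e^{i\theta/\sqrt N}-1)$ satisfies $|z_n|\le \sqrt{2\vkap_+}|\theta|/\sqrt N \le \OO(N^{-3/8})$ for $|\theta|\le N^{1/8}$. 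This is uniformly small in $n$, so $\ln(1+z_n) = z_n - z_n^2/2 + \OO(|z_n|^3)$ holds with a universal constant in the remainder.

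Next I would Taylor-expand $e^{i\theta/\sqrt N} - 1 = i\theta/\sqrt N - \theta^2/(2N) + \OO(|\theta|^3/N^{3/2})$ and substitute into both $z_n$ and $z_n^2$. The linear piece $z_n - i\theta \tilde \uu^{(N)}_n/\sqrt N$ collapses to $-\tilde \uu^{(N)}_n \theta^2/(2N) + \OO(|\theta|^3/N^{3/2})$, while $-z_n^2/2 = (\tilde \uu^{(N)}_n)^2 \theta^2/(2N) + \OO(|\theta|^3/N^{3/2})$ since $(e^{i\theta/\sqrt N}-1)^2 = -\theta^2/N + \OO(|\theta|^3/N^{3/2})$, and the cubic remainder is $\OO(|\theta|^3/N^{3/2})$ automatically. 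The $n$-th summand thus equals $-\tilde \uu^{(N)}_n(1-\tilde \uu^{(N)}_n)\theta^2/(2N) + \OO(|\theta|^3/N^{3/2})$ with constants independent of $n$, $\theta$, $N$. Summing over $n$ and invoking the identity of Lemma \ref{L:sigmalim} yields
\begin{equation*}
\ln \CP_N\!\left(\tfrac{\theta}{\sqrt N}\right) = -\tfrac{\theta^2}{2}\sigma^2(\alpha,\tUUN) + \OO\!\left(\tfrac{|\theta|^3}{\sqrt N}\right),
\end{equation*}
and for $|\theta|\le N^{1/8}$ the remainder is bounded by a constant times $N^{3/8}/\sqrt N = N^{-1/8}$, exactly as required.

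The final step is to swap $\sigma^2(\alpha,\tUUN)$ for $\sigma^2(\alpha,\tUU)$; since Assumption \ref{A:LimitExists} gives $\tUUN\to\tUU$ and Lemma \ref{L:sigmalim} supplies continuity of $\tVV\mapsto\sigma^2(\alpha,\tVV)$ on $\calG^\strict_\alpha$, the cost of the swap is $\tfrac{\theta^2}{2}|\sigma^2(\alpha,\tUUN)-\sigma^2(\alpha,\tUU)|$. Defining $\tilde\Err_N(\theta) \Def \ln\CP_N(\theta/\sqrt N) + \tfrac12\sigma^2(\alpha,\tUU)\theta^2$ and exponentiating then gives the stated form.

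The main obstacle is the quantitative control in this final swap: naively, continuity alone only gives $o(1)\cdot \theta^2$, which on $|\theta|\le N^{1/8}$ produces $o(N^{1/4})$ rather than $\OO(N^{-1/8})$. One must therefore supply a rate $|\sigma^2(\alpha,\tUUN)-\sigma^2(\alpha,\tUU)|=\OO(N^{-3/8})$ (or better). As anticipated in Remark \ref{R:Discretization}, this comes from the $\OO(1/N)$ Prohorov-distance bound between $\tUUN$ and $\tUU$ combined with the Lipschitz-in-$\Lambda$ estimates on the integrand defining $\sigma^2$ that follow from Remark \ref{R:Phiprops} (together with $\Lambda(\alpha,\tUUN)\to\Lambda(\alpha,\tUU)$ from Lemma \ref{L:increasing}). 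Apart from this one rate estimate, the entire argument is a careful but routine complex-analytic Taylor expansion.
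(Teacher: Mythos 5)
Your Taylor expansion reproduces the paper's proof essentially line for line: both arguments bound $\left|\tilde\uu^{(N)}_n\left(e^{i\theta/\sqrt N}-1\right)\right|\le \sqrt{2\vkap_+}|\theta|/\sqrt N = \OO(N^{-3/8})$ uniformly on $|\theta|\le N^{1/8}$, expand $\ln(1+z)$ through second order with a universal cubic remainder, substitute the expansion of $e^{i\theta/\sqrt N}-1$, sum over $n$, and invoke Lemma \ref{L:sigmalim}. Your $\OO(|\theta|^3/\sqrt N)$ remainder matches the paper's $\KK_4|\theta|^3/\sqrt N$ exactly.

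You are also right to be suspicious of the final swap from $\sigma^2(\alpha,\tUUN)$ to $\sigma^2(\alpha,\tUU)$: continuity from Lemma \ref{L:sigmalim} only gives $o(1)\cdot\theta^2 = o(N^{1/4})$ on the relevant window, which is nowhere near $\OO(N^{-1/8})$. But the resolution is not the one you propose. As stated the lemma appears to carry a misprint: the proof itself establishes the result with $\sigma^2(\alpha,\tUUN)$ in the exponent, and the downstream use in the proof of Lemma \ref{L:probasymp} requires exactly that version. Indeed, the second displayed equality for $\err_3(N)$ there is only valid if $\CP_N(\theta/\sqrt N) = \exp\left[-\tfrac12\sigma^2(\alpha,\tUUN)\theta^2 + \tilde\Err_N(\theta)\right]$, and the Gaussian it compares against has variance $\sigma^2(\alpha,\tUUN)$, not $\sigma^2(\alpha,\tUU)$. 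The $\tUUN\to\tUU$ replacement is then made separately and only once, in the $\err_5(N)$ term, where it appears as a ratio of normalizations and is controlled by plain convergence $\sigma^2(\alpha,\tUUN)\to\sigma^2(\alpha,\tUU)>0$; no rate is needed there.

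Your proposed repair via Remark \ref{R:Discretization} would not hold up: that remark is explicitly heuristic (``we should expect that this distance would be of order $1/N$''), not a proved estimate, and under Assumption \ref{A:LimitExists} alone the Prohorov distance between $\tUUN$ and $\tUU$ has no guaranteed rate. Extracting $|\sigma^2(\alpha,\tUUN)-\sigma^2(\alpha,\tUU)|=\OO(N^{-3/8})$ would require additional hypotheses on the $\mu^{(N)}_n$'s that the paper does not impose. The cleaner fix is simply to state and prove the lemma with $\sigma^2(\alpha,\tUUN)$, which is exactly what your own expansion delivers.
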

\begin{proof} We would like to rewrite the last line of \eqref{E:BestWorlds} using
exponentials of logarithms.  Note that for all $\tilde \theta\in (-\pi,\pi)$ and all $u\in [0,1]$,
\begin{equation*} e^{i\tilde \theta}u + 1-u = 1+u\left(e^{i\tilde \theta}-1\right) = 1 + u(\cos(\tilde \theta)-1) + iu\sin(\tilde \theta); \end{equation*}
thus $\{e^{i\tilde \theta}u + 1-u:\, \tilde \theta\in (-\pi,\pi),\, u\in [0,1]\} \subset \C \setminus \R_-$, so we can use the principal branch of the logarithm.  We thus have
\begin{equation*} \CP_N\left(\frac{\theta}{\sqrt{N}}\right) = \exp\left[\sum_{n=1}^N \lb \ln\left(1+\tilde \uu^{(N)}_n\left(\exp\left[i\frac{\theta}{\sqrt{N}}\right]-1\right)\right)- i\tilde \uu^{(N)}_n\frac{\theta}{\sqrt{N}}\rb \right] \end{equation*}
for all $\theta\in (-\pi\sqrt{N},\pi\sqrt{N})$. 

For any fixed $\theta\in \R$, $\theta/\sqrt{N}$ is small for $N$ large enough;
we thus want to expand the logarithmic term near $\theta/\sqrt{N}\approx 0$.
We want this approximation to be uniform in the $\tilde \uu^{(N)}_n$'s, however,
so we need to be a bit careful.  For $\tilde \theta\in (-\pi,\pi)$ and $u\in [0,1]$, \eqref{E:cisbound} implies that
\begin{equation*} \left|u\left(e^{i\tilde \theta}-1\right)\right| \le \left|e^{i\tilde \theta}-1\right| \le \sqrt{2\vkap_+}|\tilde \theta|. \end{equation*}
Fix $\theta_c <\min\{\pi,1/\sqrt{8\vkap_+}\}$.  If $\tilde \theta\in (-\theta_c,\theta_c)$ and $u\in [0,1]$,
then $\left|u\left(e^{i\tilde \theta}-1\right)\right|<1/2$, and
we can use the Taylor expansion of the logarithm to see that
\begin{equation*} \ln \left(1+ u\left(e^{i\tilde \theta}-1\right)\right) = 1+ u\left(e^{i\tilde \theta}-1\right) - \frac12 u^2\left(e^{i\tilde \theta}-1\right)^2 + \err_1(\tilde \theta,u) \end{equation*}
where there is a constant $\KK_1>0$ such that $|\err_1(\tilde \theta,u)|\le \KK_1 |\tilde \theta|^3$ for all $\tilde \theta\in (-\theta_c,\theta_c)$ and all $u\in [0,1]$.  Recall next
the standard fact that
\begin{equation*} e^{i\tilde \theta} = 1+i\tilde \theta - \frac12 \tilde \theta^2 + \err_2(\tilde \theta) \end{equation*}
for all $\tilde \theta\in (-\theta_c,\theta_c)$, where there is a $\KK_2>0$ such that
$|\err_2(\tilde \theta)|\le \KK_2 |\tilde \theta|^3$ for all $\tilde \theta\in (-\theta_c,\theta_c)$.
Combining things together, we conclude that
\begin{align*} \ln\left(1+u\left(e^{i\tilde \theta}-1\right)\right)
&= u\left(e^{i\tilde \theta}-1\right) - \frac12 u^2\left(e^{i\tilde \theta}-1\right)^2 + \err_1(\tilde \theta,u) \\
&= iu \tilde \theta - \frac12 u\tilde \theta^2 + u\err_2(\tilde \theta)
- \frac12 u^2\left(i\tilde \theta-\frac12 \tilde \theta^2 + \err_2(\tilde \theta)\right)^2 + \err_1(\tilde \theta,u) \\
&= iu\tilde \theta - \frac12 u(1-u) \tilde \theta^2 + \err_3(\tilde \theta,u) \end{align*}
for all $\tilde \theta\in (-\theta_c,\theta_c)$ and $u\in [0,1]$, where there is a $\KK_3>0$ such that $|\err_3(\tilde \theta,u)|\le \KK |\tilde \theta|^3$
for all $\tilde \theta\in (-\theta_c,\theta_c)$ and $u\in [0,1]$.

Collecting our calculations, we thus have that
\begin{equation*} \sum_{n=1}^N \lb \ln\left(1+\tilde \uu^{(N)}_n\left(\exp\left[i\frac{\theta}{\sqrt{N}}\right]-1\right)\right)- i\tilde \uu^{(N)}_n\frac{\theta}{\sqrt{N}}\rb = -\frac12 \sigma^2(\alpha,\tUUN)\theta^2 + \tilde \Err_N(\theta) \end{equation*}
for all $N$ such that $|\theta/\sqrt{N}|\le \theta_c$,
where there is a $\KK_4>0$ such that $|\tilde \Err_N(\theta)|\le \KK_4|\theta|^3/\sqrt{N}$
for all $\theta\in (-\pi,\pi)$ and $N\in \N$ such that $|\theta/\sqrt{N}|\le \theta_c$.  The claimed result now easily follows.\end{proof}

We next prove the uniform bound on $\CP_N$.
\begin{lemma}\label{L:ubound} There is a $\vkap_{\ref{L:ubound}}>0$ such that
\begin{equation*} \left|\CP_N\left(\frac{\theta}{\sqrt{N}}\right)\right| \le \exp\left[-\vkap_{\ref{L:ubound}} \sigma^2(\alpha,\tUUN)\theta^2\right] \end{equation*}
for all $\theta\in (-\pi\sqrt{N},\pi\sqrt{N})$ and $N\in \N$.
\end{lemma}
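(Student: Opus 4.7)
The plan is to use the explicit product form of $\CP_N$ from \eqref{E:BestWorlds} and bound each factor in modulus. Since the factors $\exp[-i\theta\tilde\uu^{(N)}_n]$ have modulus one, we have
\begin{equation*}
\left|\CP_N(\theta)\right| = \prod_{n=1}^N \left|e^{i\theta}\tilde\uu^{(N)}_n + 1-\tilde\uu^{(N)}_n\right|.
\end{equation*}
A direct expansion (using $\cos^2+\sin^2=1$) gives the identity
\begin{equation*}
\left|e^{i\theta}u + 1-u\right|^2 = 1 - 2u(1-u)(1-\cos\theta) \qquad u\in[0,1],\ \theta\in\R,
\end{equation*}
so each factor equals $\sqrt{1 - 2\tilde\uu^{(N)}_n(1-\tilde\uu^{(N)}_n)(1-\cos\theta)}$.

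Next I would apply the elementary inequality $\sqrt{1-x}\le e^{-x/2}$ for $x\in[0,1]$ to each factor (noting $2u(1-u)(1-\cos\theta)\in[0,1]$ since $u(1-u)\le 1/4$ and $1-\cos\theta\le 2$), yielding
\begin{equation*}
\left|\CP_N(\theta)\right| \le \exp\left[-(1-\cos\theta)\sum_{n=1}^N \tilde\uu^{(N)}_n(1-\tilde\uu^{(N)}_n)\right].
\end{equation*}
Then I would invoke the lower bound $1-\cos\tilde\theta \ge \vkap_- \tilde\theta^2$ for $\tilde\theta\in(-\pi,\pi)$ stated just before Lemma \ref{L:momgenas}. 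Substituting $\tilde\theta = \theta/\sqrt N$ (which lies in $(-\pi,\pi)$ for the allowed range $\theta\in(-\pi\sqrt N,\pi\sqrt N)$) gives
\begin{equation*}
\left|\CP_N\!\left(\tfrac{\theta}{\sqrt N}\right)\right| \le \exp\left[-\vkap_-\tfrac{\theta^2}{N}\sum_{n=1}^N \tilde\uu^{(N)}_n(1-\tilde\uu^{(N)}_n)\right].
\end{equation*}

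Finally, I would identify the sum as $N\sigma^2(\alpha,\tUUN)$ via the first equality in Lemma \ref{L:sigmalim}, obtaining the claimed bound with $\vkap_{\ref{L:ubound}} = \vkap_-$. There is no real obstacle here: the only subtle point is checking the parameter range so that the lower bound on $1-\cos(\cdot)$ is applicable, and that $2u(1-u)(1-\cos\theta)$ lies in the interval where the square-root inequality can be used — both are immediate from the standing restrictions.
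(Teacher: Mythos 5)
Your argument is correct and follows essentially the same route as the paper: bound each factor of the product $\CP_N$ in modulus, compute $\left|e^{i\tilde\theta}u+1-u\right|^2 = 1-2u(1-u)(1-\cos\tilde\theta)$, pass to an exponential bound, and invoke the quadratic lower bound on $1-\cos(\cdot)$ together with the sum identity from Lemma \ref{L:sigmalim}. The one small (and welcome) simplification is that you use the sharp inequality $\sqrt{1-x}\le e^{-x/2}$ (equivalently $1-x\le e^{-x}$), which gives $\vkap_{\ref{L:ubound}}=\vkap_-$ directly, whereas the paper appeals in a footnote to the existence of some $\vkap>0$ with $1-x\le e^{-\vkap x}$ on $[0,1]$ and ends up with $\vkap_{\ref{L:ubound}}=2\vkap\vkap_-$; both are fine since only positivity of the constant matters.
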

\begin{proof} For $u\in [0,1]$ and $\tilde \theta\in \R$, a calculation
like \eqref{E:cisbound} shows that
\begin{multline*} \left|1+u\left(e^{i\tilde \theta}-1\right)\right|
=\left|1-u+u\cos (\tilde \theta) + iu\sin (\tilde \theta)\right|
=\sqrt{(1-u+u\cos(\tilde \theta))^2 + u^2 \sin^2(\tilde \theta)}\\
=\sqrt{(1-u)^2 + 2u(1-u)\cos(\tilde \theta) + u^2}
=\sqrt{(1-u)^2 + u^2 + 2u(1-u) - 2u(1-u)\left(1-\cos(\tilde \theta)\right)}\\
=\sqrt{1-2u(1-u)\left(1-\cos(\tilde \theta)\right)}.\end{multline*}
Note that $0\le 2u(1-u)\left(1-\cos(\tilde \theta)\right)\le 1$.  Secondly,
note\footnote{This is clearly true at $x=0$ for any $\vkap>0$.  Next check that $\sup_{x\in (0,1]}\frac{\ln (1-x)}{x}<0$.  To do so, it suffices by continuity to check $x\searrow 0$; this can easily be done via L'H\^opital's rule.} that there is an $\vkap>0$ such that $1-x\le e^{-\vkap x}$
for all $x\in [0,1]$.
Thus
\begin{equation*} 1-2u(1-u)\left(1-\cos(\tilde \theta)\right) \le \exp\left[-2\vkap u(1-u)\left(1-\cos(\tilde \theta)\right)\right]
\le \exp\left[-2\vkap \vkap_- u(1-u) \tilde \theta^2\right] \end{equation*}
for all $\tilde \theta\in (-\pi,\pi)$.  Consequently
\begin{equation*} \left|\CP_N\left(\frac{\theta}{\sqrt{N}}\right)\right| \le \exp\left[-2\vkap \vkap_-\sigma^2(\alpha,\tUUN) \theta^2\right] \end{equation*}
for all $\theta\in (-\pi\sqrt{N},\pi\sqrt{N})$ and all $N\in \N$.  The claimed result follows.\end{proof}

\begin{proof}[Proof of Lemma \ref{L:probasymp}] Combining Lemmas \ref{L:Sopen}
and \ref{L:sigmalim}, we know that $\sigma^2(\alpha,\tUUN)>0$ for $N\in \N$
sufficiently large enough.  For such $N$,
\begin{equation*} \Err_2(s,N)=\sqrt{2\pi N \sigma^2(\alpha,\tUU)}\tilde \BP_N\lb \gamma_N=s\rb - 1 = \err_1(s,N) + \err_2(s,N)+ \err_3(N) + \err_4(N) + \err_5(N) \end{equation*}
where
\begin{align*} \err_1(s,N) &= \sqrt{\frac{\sigma^2(\alpha,\tUU)}{2\pi}}\int_{N^{1/8}\le |\theta|\le \pi \sqrt{N}}\CP_N\left(\frac{\theta}{\sqrt{N}}\right)\exp\left[-is\theta/\sqrt{N}\right]d\theta \\
\err_2(s,N) &= \sqrt{\frac{\sigma^2(\alpha,\tUU)}{2\pi}}\int_{|\theta|<N^{1/8}}\CP_N\left(\frac{\theta}{\sqrt{N}}\right)\lb \exp\left[-is\theta/\sqrt{N}\right]-1\rb d\theta \\
\err_3(N) &= \sqrt{\frac{\sigma^2(\alpha,\tUU)}{2\pi}}\int_{|\theta|<N^{1/8}}\lb \CP_N\left(\frac{\theta}{\sqrt{N}}\right)-\exp\left[-\frac12 \sigma^2(\alpha,\tUUN)\theta^2\right]\rb d\theta\\
&= \sqrt{\frac{\sigma^2(\alpha,\tUU)}{2\pi}}\int_{|\theta|<N^{1/8}}\exp\left[-\frac12 \sigma^2(\alpha,\tUUN)\theta^2 \right]\lb \exp\left[\tilde \Err_N(\theta)\right]-1\rb d\theta\\
\err_4(N) &= -\sqrt{\frac{\sigma^2(\alpha,\tUU)}{2\pi}}\int_{|\theta|\ge N^{1/8}}\exp\left[-\frac12 \sigma^2(\alpha,\tUUN)\theta^2\right]d\theta\\
\err_5(N) &= \sqrt{\frac{\sigma^2(\alpha,\tUU)}{2\pi}}\int_{\theta\in \R}\exp\left[-\frac12 \sigma^2(\alpha,\tUUN)\theta^2\right]d\theta-1 \\
&=\frac{\sqrt{\sigma^2(\alpha,\tUU)}-\sqrt{\sigma^2(\alpha,\tUUN)}}{\sqrt{\sigma^2(\alpha,\tUUN)}} \end{align*}
Here we have used the standard calculation that for all $A>0$,
\begin{equation}\label{E:Gaussint} \frac{1}{\sqrt{2\pi}}\int_{\theta\in \R}\exp\left[-\frac12 A\theta^2\right]d\theta 
=\frac{1}{\sqrt{A}}\int_{\theta\in \R}\frac{\exp\left[-\frac12 A \theta^2\right]}{\sqrt{2\pi/A}}d\theta
= \frac{1}{\sqrt{A}} \end{equation}
(namely, we use this calculation with $A=\sigma^2(\alpha,\tUUN)$).
By Lemma \ref{L:sigmalim}, we have that $\lim_{N\to \infty}\sigma^2(\alpha,\tUUN)=\sigma^2(\alpha,\tUU)>0$.  This directly implies that $\lim_{N\to \infty}\err_5(N)=0$.
Similarly to \eqref{E:Gaussint}, we also have that for $A>0$ and $N\in \N$
\begin{multline*} \frac{1}{\sqrt{2\pi}}\int_{|\theta|\ge N^{1/8}}\exp\left[-\frac12 A\theta^2\right]d\theta 
=\frac{2}{\sqrt{2\pi}}\int_{\theta=N^{1/8}}^\infty\exp\left[-\frac12 A\theta^2\right]d\theta \\
=\frac{2}{\sqrt{2\pi}}\exp\left[-\frac12 A N^{1/4}\right]\int_{\theta=0}^\infty\exp\left[-\frac12 A\lb \left(\theta+N^{1/8}\right)^2-N^{1/4}\rb \right]d\theta \\
\le \frac{2}{\sqrt{2\pi}}\exp\left[-\frac12 A N^{1/4}\right]\int_{\theta=0}^\infty\exp\left[-\frac12 A\theta^2\right]d\theta 
=\frac{1}{\sqrt{A}}\exp\left[-\frac12 A N^{1/4}\right]\lb 2\int_{\theta=0}^\infty\frac{\exp\left[-\frac12 A\theta^2\right]}{\sqrt{2\pi/A}}d\theta\rb \\
= \frac{1}{\sqrt{A}}\exp\left[-\frac12 A N^{1/4}\right]. \end{multline*}
Thus (since $\left|\exp\left[is \theta/\sqrt{N}\right]\right|\le 1$) we have that
\begin{align*} \sup_{\substack{s\in \CS_N \\ s\le N^{1/4}}}\left|\err_1(s,N)\right|&\le \sqrt{\frac{\sigma^2(\alpha,\tUU)}{\vkap_{\ref{L:ubound}} \sigma^2(\alpha,\tUUN)}}\exp\left[-\frac{\vkap_{\ref{L:ubound}}}{2}\sigma^2(\alpha,\tUUN)N^{1/4}\right] \\
\left|\err_4(N)\right|&\le \sqrt{\frac{\sigma^2(\alpha,\tUU)}{\sigma^2(\alpha,\tUUN)}}\exp\left[-\frac12\sigma^2(\alpha,\tUUN)N^{1/4}\right]. \end{align*}
Recalling \eqref{E:cisbound}, \eqref{E:mesosmall}, and \eqref{E:Gaussint}, we have that
\begin{equation*} \sup_{\substack{s\in \CS_N \\ s\le N^{1/4}}}\left|\err_2(s,N)\right|\le \frac{1}{N^{1/8}}\sqrt{\frac{2\vkap_+ \sigma^2(\alpha,\tUU)}{\vkap_{\ref{L:ubound}}\sigma^2(\alpha,\tUUN)}} \end{equation*}
To finally bound $\err_3(N)$, define
\begin{equation*} \KK \Def \sup_{\substack{z\in \C \\ z\not = 0}} \frac{\left|e^z-1\right|}{|z|e^{|z|}} \end{equation*}
which is fairly easily seen to be finite.  Again using \eqref{E:Gaussint},
we have that for $N\in \N$ sufficiently large
\begin{equation*}\left|\err_3(N)\right|\le \frac{\KK\KK_{\ref{L:momgenas}}\exp\left[\KK_{\ref{L:momgenas}}/N^{1/8}\right]}{N^{1/8}}\sqrt{\frac{\sigma^2(\alpha,\tUU)}{\sigma^2(\alpha,\tUUN)}}. \end{equation*}
Combining things, the stated claim follows.\end{proof}

\section{Appendix A: Sampling from a Distribution}\label{S:LimitExists}
We have intentionally formulated our assumptions to reflect their
usage.  For a large $N$, we can readily check in a given situation if
\begin{gather*} \frac{1}{N}\sum_{n=1}^N \mu^{(N)}_n[0,T)<\alpha, \qquad 
\frac{\left|\lb n\in \{1,2\dots N\}: \mu^{(N)}_n[0,T)=0\rb\right|}{N}<1-\alpha \\
\varlimsup_{\delta \searrow 0}\frac{\left| \lb n\in \{1,2\dots N\}: \mu^{(N)}_n(T-\delta,T)=0\rb\right|}{N}<\alpha. \end{gather*}
Furthermore, we can construct the measure $\tUUN$ of \eqref{E:tUUNDef}.
For a finite but large $N$, this would suggest that we use
Theorem \ref{T:Main} and \eqref{E:premas} to price the CDO.
Our goal here is to take a slightly different tack and restructure
our assumptions in the framework that the $\mu^{(N)}_n$'s are, in a sense, samples
from an underlying distribution.  We would like to reframe
our assumptions in terms of this underlying distribution.

Our setup here is as follows.  We define $\UU^{(N)}$ as in \eqref{E:UUNDef},
and we assume that \eqref{E:UUNLim} holds.

\begin{example}\label{Ex:Walk} For Example \ref{Ex:SimpleExample}, we would have that
\begin{equation*} \UU = \frac13 \delta_{\check \mu_a} + \frac23 \delta_{\check \mu_b} \end{equation*}
and for Example \ref{Ex:Merton}, we would have that
\begin{equation*} \UU = \int_{\sigma\in (0,\infty)}\delta_{\check \mu^{\Merton}_\sigma}\frac{\sigma^{\vsig-1}e^{-\sigma/\sigma_\circ}}{\sigma_\circ^\vsig \Gamma(\vsig)}d\sigma \end{equation*}
\end{example}

\begin{remark}\label{R:BHO} We also note that the relation between the $\mu^{(N)}_n$'s and $\UU$
can allow some complexities.  For example, let
\begin{equation*} \mu^{(N)}_n(A) \Def \frac{\int_{t\in A\cap [0,\infty)}\exp\left[-\frac{n(t-1)^2}{2}\right]dt}{\int_{t\in [0,\infty)}\exp\left[-\frac{n(t-1)^2}{2}\right]dt}. \qquad A\in \Borel(I)\end{equation*}
For every $N$ and $n$, $\mu^{(N)}_n$ is very nice.  However, it is fairly easy to see that $\lim_{N\to \infty}\UU^{(N)} = \delta_{\delta_1}$, where the measure $\delta_1$ (as an element of $\PSI$) does not have a density with respect to Lebesgue measure.

This suggests that in certain situations, there is value in stating regularity assumptions on the limiting measure $\UU$,
rather than on the approximating sequence of the $\mu^{(N)}_n$'s.
\end{remark}

Let's next define
\begin{equation*} F(t) \Def \int_{\rho\in \PSI}\rho[0,t]\UU(d\rho) \qquad t\in I\end{equation*}
By Lemma \ref{L:distmeas}, we know that $F$ is a well-defined cdf on $I$; informally, $F$
is the expected notional loss distribution (see \eqref{E:ENLD}).

\begin{example} For Example \ref{Ex:SimpleExample}, we would have that
\begin{equation*} F(t) =\frac13 \check \mu_a[0,t] + \frac23 \check \mu_b[0,t] \end{equation*}
and for Example \ref{Ex:Merton}, we would have that
\begin{equation*} F(t) \Def \int_{\sigma\in (0,\infty)} \check \mu^{\Merton}_\sigma[0,t]\frac{\sigma^{\vsig-1}e^{-\sigma/\sigma_\circ}}{\sigma_\circ^\vsig \Gamma(\vsig)}d\sigma \end{equation*}
\end{example}

For each $\rho\in \PSI$, define $P(\rho) \Def \rho[0,T)$.  By Lemma \ref{L:meas},
we know that $P$ is a measurable map from $\PSI$ to $[0,1]$.  Let's
then define $P_*:\PPSI\to \PSint$ as
\begin{equation*} (P_*\VV)(A) \Def (\VV P^{-1})(A) \Def \VV\lb \rho\in \PSI: P(\rho)\in A\rb \qquad A\in \Borel[0,1]\end{equation*}
for all $\VV\in \PPSI$.
Let's now turn to our assumptions.
\begin{lemma} If $F(T)=F(T-)$, then Assumption \ref{A:LimitExists} holds
and $\tUU = P_*\UU$.
\end{lemma}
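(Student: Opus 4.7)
The plan is to identify the map that converts $\UU^{(N)}$ into $\tUUN$ and then apply the continuous mapping / portmanteau machinery for weak convergence. The key algebraic observation is that, for every Borel $A\subset [0,1]$,
\begin{equation*}
(P_*\UU^{(N)})(A) = \UU^{(N)}\{\rho\in \PSI:\rho[0,T)\in A\} = \frac{1}{N}\sum_{n=1}^N \chi_A(\mu^{(N)}_n[0,T))= \tUUN(A),
\end{equation*}
so $\tUUN = P_*\UU^{(N)}$. Consequently, Assumption \ref{A:LimitExists} with $\tUU=P_*\UU$ will follow once we show that $P_*\UU^{(N)}\Rightarrow P_*\UU$ in $\PSint$.

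Next I would verify that $P:\PSI\to[0,1]$ is continuous at every $\rho\in \PSI$ with $\rho\{T\}=0$. Indeed, $[0,T)$ is a Borel subset of $I$ whose topological boundary (in $I$) is $\{T\}$, so by the portmanteau theorem, $\rho_n\to \rho$ in $\PSI$ together with $\rho\{T\}=0$ gives $\rho_n[0,T)\to \rho[0,T)$. Let
\begin{equation*}
D\Def \{\rho\in \PSI:\rho\{T\}=0\}.
\end{equation*}
Since $\{T\}$ is closed, $\rho\mapsto \rho\{T\}$ is upper semicontinuous, hence Borel; thus $D\in \Borel(\PSI)$ and $P|_D$ is continuous.

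I would then use the hypothesis $F(T)=F(T-)$ to show $\UU(D)=1$. By Lemma \ref{L:distmeas} (and dominated or monotone convergence applied to $\rho[0,t]\searrow \rho[0,T)$ as $t\nearrow T$),
\begin{equation*}
F(T)-F(T-) = \int_{\rho\in \PSI}\rho\{T\}\,\UU(d\rho)=0,
\end{equation*}
and since the integrand is nonnegative, this forces $\rho\{T\}=0$ for $\UU$-a.e.\ $\rho$, i.e.\ $\UU(D)=1$.

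With $P$ continuous on a set of full $\UU$-measure and $\UU^{(N)}\Rightarrow \UU$ in $\PPSI$ (by \eqref{E:UUNLim}), the continuous mapping theorem for weakly convergent measures yields $P_*\UU^{(N)}\Rightarrow P_*\UU$, i.e.\ $\tUUN\Rightarrow P_*\UU$. This both establishes that the limit exists (Assumption \ref{A:LimitExists}) and identifies it as $\tUU=P_*\UU$. The only delicate point is the continuity-set argument; everything else is bookkeeping involving the definitions of $P$, $P_*$, and $F$.
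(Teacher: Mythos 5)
Your proof is correct, and it takes a genuinely different route from the paper's. The paper proceeds by an explicit approximation argument: it introduces continuous functions $\psi^{\pm}_{T,m}$ that squeeze $\chi_{[0,T)}$ and $\chi_{[0,T]}$, writes out a three-term telescoping bound for $\left|\int\Psi\,d\tUUN - \int\Psi\,d(P_*\UU)\right|$ using a modulus of continuity $\omega_\Psi$ of a test function $\Psi$, and controls the ``bad set'' $\{\rho:\,|\rho[0,T)-\bI_{\psi^-_{T,m}}(\rho)|\ge\delta\}$ via Markov's inequality, eventually sending $N\to\infty$, $m\to\infty$, $\delta\searrow 0$ and using $F(T)=F(T-)$ at the very last step. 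You instead recognize the situation as an instance of the mapping theorem for weak convergence: $\tUUN = P_*\UU^{(N)}$, the map $P(\rho)=\rho[0,T)$ is continuous precisely at those $\rho$ with $\rho(\partial[0,T))=\rho\{T\}=0$ (by portmanteau), and the hypothesis $F(T)=F(T-)=\int_{\PSI}\rho\{T\}\,\UU(d\rho)+F(T-)$ forces $\UU\{\rho:\rho\{T\}>0\}=0$; hence $P_*\UU^{(N)}\Rightarrow P_*\UU$. Both proofs are valid; yours is shorter and surfaces the conceptual content (the discontinuity set of $P$ being $\UU$-null is what $F(T)=F(T-)$ is really saying), while the paper's hands-on argument has the virtue of staying entirely within the approximation machinery it has already set up in Appendix C and thus avoids invoking the mapping theorem as an external black box. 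One small nit: you wrote ``$P|_D$ is continuous,'' which is weaker than what you need and what you actually argued earlier in the same paragraph, namely that $P$ (as a map on all of $\PSI$) is continuous at every point of $D$; the mapping theorem requires the latter, so keep the stronger phrasing.
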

\begin{proof}  We first note that $\tUUN = P_* \UU^{(N)}$.  Fix $\Psi\in C[0,1]$.  Define
\begin{equation*} \omega_\Psi(\delta) \Def \sup_{\substack{p_1,p_2\in [0,1] \\|p_1-p_2|<\delta}}|\Psi(p_1)-\Psi(p_2)|.\qquad \delta>0 \end{equation*}
Since $[0,1]$ is compact, $\lim_{\delta \searrow 0}\omega_\Psi(\delta)=0$.

Fix now $m\in \N$.  Then (using the notation of Section \ref{S:Proofs})
\begin{align*} &\left|\int_{p\in [0,1]}\Psi(p)\tUUN(dp)-\int_{p\in [0,1]}\Psi(p)(P_*\UU)(dp)\right|=\left|\int_{\rho\in \PSI}\Psi(\rho[0,T))\UU^{(N)}(d\rho)-\int_{\rho\in \PSI}\Psi(\rho[0,T))\UU(d\rho)\right|\\
&\qquad \le  \left|\int_{\rho\in \PSI}\lb \Psi(\rho[0,T))-\Psi(\bI_{\psi^-_{T,m}}(\rho))\rb \UU^{(N)}(d\rho)\right|\\
&\qquad \qquad +\left|\int_{\rho\in \PSI}\Psi(\bI_{\psi^-_{T,m}}(\rho))\UU^{(N)}(d\rho)-\int_{\rho\in \PSI}\Psi(\bI_{\psi^-_{T,m}}(\rho))\UU(d\rho)\right| \\
&\qquad \qquad +\left|\int_{\rho\in \PSI}\lb \Psi(\bI_{\psi^-_{T,m}}(\rho))-\Psi(\rho[0,T))\rb \UU(d\rho)\right|. \end{align*}
By weak convergence, we have that
\begin{equation*} \lim_{N\to \infty}\left|\int_{\rho\in \PSI}\Psi(\bI_{\psi^-_{T,m}}(\rho))\UU^{(N)}(d\rho)-\int_{\rho\in \PSI}\Psi(\bI_{\psi^-_{T,m}}(\rho))\UU(d\rho)\right|=0 \end{equation*}
for each $m\in \N$.  By dominated convergence, we also have that
\begin{equation*} \lim_{m\to \infty}\left|\int_{\rho\in \PSI}\lb \Psi(\bI_{\psi^-_{T,m}}(\rho))-\Psi(\rho[0,T))\rb \UU(d\rho)\right|=0. \end{equation*}
Thirdly, we calculate that for each $\delta>0$
\begin{multline*}  \left|\int_{\rho\in \PSI}\lb \Psi(\rho[0,T))-\Psi(\bI_{\psi^-_{T,m}}(\rho))\rb \UU^{(N)}(d\rho)\right| \le \omega_{\Psi}(\delta)\\
+ 2\|\Psi\|_{C[0,1]}\UU^{(N)}\lb \rho\in \PSI: \left|\rho[0,T)-\bI_{\psi^-_{T,m}}(\rho)\right|\ge \delta\rb. \end{multline*}
For every $\rho\in \PSI$, $\bI_{\psi^+_{T,m}}(\rho) \ge \rho[0,T)\ge \bI_{\psi^-_{T,m}}(\rho)$, so by
Markov's inequality
\begin{multline*} \UU^{(N)}\lb \rho\in \PSI: \left|\rho[0,T)-\bI_{\psi^-_{T,m}}(\rho)\right|\ge \delta\rb\\
\le \UU^{(N)}\lb \rho\in \PSI: \rho[0,T)-\bI_{\psi^-_{T,m}}(\rho)\ge \delta\rb
\le \frac{1}{\delta}\int_{\rho\in \PSI}\lb \rho[0,T)-\bI_{\psi^-_{T,m}}(\rho)\rb \UU^{(N)}(d\rho)\\
\le \frac{1}{\delta}\int_{\rho\in \PSI}\lb \bI_{\psi^+_{T,m}}(\rho)-\bI_{\psi^-_{T,m}}(\rho)\rb \UU^{(N)}(d\rho) \end{multline*}
Thus
\begin{equation*} \varlimsup_{m\to \infty}\varlimsup_{N\to \infty}\UU^{(N)}\lb \rho\in \PSI: \left|\rho[0,T)-\bI_{\psi^-_{T,m}}(\rho)\right|\ge \delta\rb \le \frac{1}{\delta}\lb F(T)-F(T-)\rb = 0. \end{equation*}
Combine things together,  Take $N\to \infty$, them $m\to \infty$, and finally $\delta\searrow 0$.
\end{proof}

\begin{lemma} If $F(T)<\alpha$, then Assumption \ref{A:IG} holds. \end{lemma}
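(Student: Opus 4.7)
The plan is to derive the equivalent formulation \eqref{E:AltAIG}, namely $\varlimsup_{N\to \infty} \tfrac{1}{N}\sum_{n=1}^N \mu^{(N)}_n[0,T) < \alpha$; by the discussion following Assumption \ref{A:IG}, this is equivalent to Assumption \ref{A:IG} once Assumption \ref{A:LimitExists} is in hand. (The preceding lemma supplies the latter when $F(T) = F(T-)$, in which case one may even argue directly on $\tUU = P_*\UU$.)

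First I would observe, using only the definition \eqref{E:UUNDef} of $\UU^{(N)}$, that
\[
\frac{1}{N}\sum_{n=1}^N \mu^{(N)}_n[0,T) \;\le\; \frac{1}{N}\sum_{n=1}^N \mu^{(N)}_n[0,T] \;=\; \int_{\rho \in \PSI} \rho[0,T]\, \UU^{(N)}(d\rho).
\]
The key observation is that the evaluation functional $g(\rho) = \rho[0,T]$ is bounded on $\PSI$ and upper semi-continuous in the weak topology: since $[0,T]$ is closed in $I$, the Portmanteau theorem gives $\varlimsup_k \rho_k[0,T] \le \rho[0,T]$ whenever $\rho_k \to \rho$ weakly in $\PSI$. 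Combined with the weak convergence $\UU^{(N)} \to \UU$ in $\PPSI$ assumed in \eqref{E:UUNLim}, a second, outer-level Portmanteau application for the bounded u.s.c.\ functional $g$ yields
\[
\varlimsup_{N \to \infty} \int_{\rho \in \PSI} \rho[0,T]\, \UU^{(N)}(d\rho) \;\le\; \int_{\rho \in \PSI} \rho[0,T]\, \UU(d\rho) \;=\; F(T) \;<\; \alpha,
\]
and chaining the two displays produces \eqref{E:AltAIG}.

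The only non-routine ingredient is the second (outer) Portmanteau step, where the integrand $g$ is merely upper semi-continuous rather than continuous. This is standard: either approximate $g$ from above by a monotone sequence of bounded continuous functions on the Polish space $\PSI$ and then apply ordinary weak convergence followed by bounded convergence, or invoke the closed-set form of Portmanteau directly via the superlevel sets $\{\rho \in \PSI : \rho[0,T] \ge c\}$, which are weakly closed. All other steps are bookkeeping, so this u.s.c.\ reduction is the only (and very mild) obstacle.
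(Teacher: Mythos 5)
Your proof is correct and takes essentially the same route as the paper. The paper also reduces to \eqref{E:AltAIG} and then bounds $\tfrac1N\sum_n\mu^{(N)}_n[0,T)$ by $\int_{\rho\in\PSI}\bI_{\psi^+_{T,m}}(\rho)\UU^{(N)}(d\rho)$, passes $N\to\infty$ by weak convergence since $\bI_{\psi^+_{T,m}}\in C_b(\PSI)$, and then lets $m\to\infty$ with dominated convergence --- exactly the ``approximate the u.s.c.\ integrand $\rho\mapsto\rho[0,T]$ from above by bounded continuous functions'' route you describe as one way to justify the outer Portmanteau step; you have merely packaged that approximation into the standard u.s.c.\ form of the Portmanteau inequality.
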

\begin{proof} We will use the equivalent characterization of Assumption \ref{A:IG} given in \eqref{E:AltAIG}.  For each $N$ and $m$ in $\N$, we have that
\begin{equation*} \frac1N\sum_{n=1}^N \mu^{(N)}_n[0,T)=\int_{\rho\in \PSI}\rho[0,T)\UU^{(N)}(d\rho) \le \int_{\rho \in \PSI}\bI_{\psi^+_{T,m}}(\rho)\UU^{(N)}(d\rho). \end{equation*}
Let $N\to \infty$ to get that
\begin{equation*} \varlimsup_{N\to \infty}\frac1N\sum_{n=1}^N \mu^{(N)}_n[0,T)\le 
\int_{\rho \in \PSI}\bI_{\psi^+_{T,m}}(\rho)\UU(d\rho). \end{equation*}
Now let $m\to \infty$ and use dominated convergence to see that
\begin{equation*} \varlimsup_{N\to \infty}\frac1N\sum_{n=1}^N \mu^{(N)}_n[0,T)\le F(T). \end{equation*}
This gives the desired claim.
\end{proof}

\begin{example} We can also check Assumption \ref{A:NonDegen} in our two favorite examples.  For Example \ref{Ex:SimpleExample}, we have that
\begin{equation*} \tUU\{0\} = \frac13 \chi_{\{0\}}(\mu_a[0,T)) + \frac23 \chi_{\{0\}}(\mu_b[0,T)) \end{equation*}
which is zero if $\mu_a[0,T)>0$ and $\mu_b[0,T)>0$.
For Example \ref{Ex:Merton}, we similarly have that
\begin{equation*} \tUU\{0\} = \int_{\sigma\in (0,\infty)}\chi_{\{0\}}(\check \mu^{\Merton}_\sigma[0,T))\frac{\sigma^{\vsig-1}e^{-\sigma/\sigma_\circ}}{\sigma_\circ^\vsig \Gamma(\vsig)}d\sigma = 0. \end{equation*}
\end{example}

We finally turn our attention to Assumption \ref{A:NotFlat}.
\begin{lemma}\label{L:NotFlatLemma} If
\begin{equation*} \lim_{\delta \to 0}\UU\lb \rho\in \PSI: \rho(T-\delta,T)=0\rb <\alpha,\end{equation*}
then Assumption \ref{A:NotFlat} holds.
\end{lemma}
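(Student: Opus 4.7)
The plan is to rewrite the quantity in Assumption \ref{A:NotFlat} as an integral against the empirical measures $\UU^{(N)}$ and then sandwich the relevant set between an open and a closed set on which the Portmanteau theorem bites.

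First, I observe that
\begin{equation*}
\frac{\left|\lb n\in\{1,\dots,N\}:\mu^{(N)}_n[T-\delta,T)<\eps\rb\right|}{N}
=\UU^{(N)}\lb \rho\in\PSI:\rho[T-\delta,T)<\eps\rb.
\end{equation*}
The set $A_{\delta,\eps}\Def\{\rho\in\PSI:\rho[T-\delta,T)<\eps\}$ is defined via the half-open interval $[T-\delta,T)$ and a strict inequality, so it is in general neither open nor closed in the weak topology on $\PSI$. The key manoeuvre is to fix $\delta'\in(0,\delta)$; then $(T-\delta',T)\subset[T-\delta,T)$, hence $\rho(T-\delta',T)\le\rho[T-\delta,T)$ and
\begin{equation*}
A_{\delta,\eps}\subset C_{\delta',\eps}\Def\lb \rho\in\PSI:\rho(T-\delta',T)\le\eps\rb.
\end{equation*}

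Next I would show that $C_{\delta',\eps}$ is closed in $\PSI$. Since $(T-\delta',T)$ is open in $I$, the Portmanteau theorem applied to weak convergence in $\PSI$ gives that the map $\rho\mapsto\rho(T-\delta',T)$ is lower semicontinuous, so the sub-level set $C_{\delta',\eps}=\{\rho:\rho(T-\delta',T)\le\eps\}$ is closed. Applying the Portmanteau theorem once more, now to the weak convergence $\UU^{(N)}\to\UU$ in $\Pspace(\PSI)$ guaranteed by \eqref{E:UUNLim}, yields
\begin{equation*}
\varlimsup_{N\to\infty}\UU^{(N)}(A_{\delta,\eps})\le\varlimsup_{N\to\infty}\UU^{(N)}(C_{\delta',\eps})\le\UU(C_{\delta',\eps}).
\end{equation*}

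I now peel off the inner limits. As $\eps\searrow 0$ the sets $C_{\delta',\eps}$ decrease to $\{\rho:\rho(T-\delta',T)=0\}$, and since $\UU$ is a probability measure, downward continuity of measure gives
\begin{equation*}
\varlimsup_{\eps\searrow 0}\varlimsup_{N\to\infty}\UU^{(N)}(A_{\delta,\eps})\le\UU\lb \rho\in\PSI:\rho(T-\delta',T)=0\rb.
\end{equation*}
This holds for every $\delta'\in(0,\delta)$, so in particular for $\delta'=\delta/2$. Letting $\delta\searrow 0$ and invoking the monotonicity of $\delta\mapsto\UU\{\rho:\rho(T-\delta,T)=0\}$ together with the hypothesis of the lemma,
\begin{equation*}
\varlimsup_{\delta\searrow 0}\varlimsup_{\eps\searrow 0}\varlimsup_{N\to\infty}\UU^{(N)}(A_{\delta,\eps})\le\lim_{\delta\searrow 0}\UU\lb \rho\in\PSI:\rho(T-\delta/2,T)=0\rb<\alpha,
\end{equation*}
which is exactly Assumption \ref{A:NotFlat}.

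The only subtle step is the replacement of $[T-\delta,T)$ by $(T-\delta',T)$ for a strictly smaller $\delta'$: one must ensure this slack is compatible with the outer $\varlimsup$ in $\delta$, which is why taking $\delta'=\delta/2$ and then letting $\delta\searrow 0$ makes the argument go through without needing to know anything about the behaviour of $\UU$ at the single point $T$ or at the boundary $T-\delta$.
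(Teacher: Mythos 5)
Your proof is correct, and it takes a genuinely different route from the paper's. The paper's proof works by building explicit continuous approximations: it introduces a decreasing bump $\psi\in C_b(I)$ to dominate the indicator $\chi_{[0,\eps)}$, and a second family $\tilde\psi_{\delta,m}\in C_b(I)$ supported in $(T-\delta,T)$ to approximate the open-interval mass $\rho(T-\delta,T)$ from below; the composite $\rho\mapsto\psi\!\left(\bI_{\tilde\psi_{\delta,m}}(\rho)/\eps\right)$ is then in $C_b(\PSI)$, so the limit $N\to\infty$ is passed by weak convergence, and $m\to\infty$, $\eps\searrow 0$, $\delta\searrow 0$ are handled by dominated convergence. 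You instead sandwich the awkward set $A_{\delta,\eps}=\{\rho:\rho[T-\delta,T)<\eps\}$ (neither open nor closed in $\PSI$) inside the closed set $C_{\delta',\eps}=\{\rho:\rho(T-\delta',T)\le\eps\}$ for $\delta'<\delta$, using lower semicontinuity of $\rho\mapsto\rho(G)$ for open $G$; then one appeal to the closed-set half of Portmanteau replaces the paper's entire $C_b$-machinery, and a single application of downward continuity of $\UU$ handles $\eps\searrow 0$. The shrink from $\delta$ to $\delta'=\delta/2$ is exactly the right slack to make the outer $\varlimsup_{\delta\searrow 0}$ absorb the adjustment, since the limiting quantity $\lim_{\delta\to 0}\UU\{\rho:\rho(T-\delta,T)=0\}$ exists by monotonicity. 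Your argument is shorter and more conceptual; the paper's is more explicit and does not invoke Portmanteau by name (it effectively reproves the needed one-sided inequality). One small gap you gloss over: you should note, as the paper does via Lemma \ref{L:meas}, that $\rho\mapsto\rho[T-\delta,T)=\rho[0,T)-\rho[0,T-\delta)$ is Borel measurable on $\PSI$, so that $\UU^{(N)}(A_{\delta,\eps})$ is well defined in the first place.
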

\begin{proof} For all $\delta\in (0,T)$ and $\rho\in \PSI$,
$\rho(T-\delta,T)=\rho[0,T)-\rho[0,T-\delta]$ so by Lemma \ref{L:meas}, we know that the map $\rho\mapsto \rho(T-\delta,T)$ is a measurable map
from $\PSI$ to $[0,1]$ for each $\delta\in (0,T)$.  Secondly, for all
$\eps>0$, $\delta\in (0,T)$ and $N\in \N$,
\begin{multline*} \frac{\left| \lb n\in \{1,2\dots N\}: \mu^{(N)}_n[T-\delta,T)<\eps\rb\right|}{N} = \frac1N\sum_{n=1}^N \chi_{[0,\eps)}\left(\mu^{(N)}_n[T-\delta,T)\right)\\
= \frac1N\sum_{n=1}^N \int_{\rho\in \PSI}\chi_{[0,\eps)}(\rho[T-\delta,T))\delta_{\mu^{(N)}_n}(d\rho)\\
= \int_{\rho\in \PSI}\chi_{[0,\eps)}(\rho[T-\delta,T))\UU^{(N)}(d\rho).\end{multline*}
Next, let $\psi\in C_b(I)$ be such that $0\le \psi\le 1$, $\psi$ is decreasing,
$\psi(t)=1$ if $t\le 1$, and $\psi(t)=0$ if $t\ge 2$.  For each $\delta\in (0,T)$ and $m\in \N$, let $\tilde \psi_{\delta,m}\in C_b(I)$ be such that $0\le \tilde \psi_{\delta,m}\le 1$, $\tilde \psi_{\delta,m}(t)=1$ if $T-\delta+\tfrac1m\le t\le T-\tfrac1m$, and $\tilde \psi_{\delta,m}(t)=0$ if $t\not\in (T-\delta,T)$.  We
note that $\rho[T-\delta,T)\ge \bI_{\tilde \psi_{\delta,m}}(\rho)$ for all $\delta\in (0,T)$, $m\in \N$, and $\rho\in \PSI$, and that $\lim_{m\to \infty}\bI_{\tilde \psi_{\delta,m}}(\rho)=\rho(T-\delta,T)$ for all $\rho\in \PSI$ and $\delta\in (0,T)$.

Fix $\delta\in (0,T)$, $\eps>0$, and $N$ and $m$ in $\N$.  Then
\begin{equation*}\int_{\rho\in \PSI}\chi_{[0,\eps)}(\rho[T-\delta,T))\UU^{(N)}(d\rho)
\le \int_{\rho\in \PSI}\psi\left(\frac{\rho[T-\delta,T)}{\eps}\right)\UU^{(N)}(d\rho)
\le \int_{\rho\in \PSI}\psi\left(\frac{\bI_{\tilde \psi_{\delta,m}}(\rho)}{\eps}\right)\UU^{(N)}(d\rho). \end{equation*}
Take first $N\to \infty$.  We get that
\begin{equation*}\varlimsup_{N\to \infty}\int_{\rho\in \PSI}\chi_{[0,\eps)}(\rho[T-\delta,T))\UU^{(N)}(d\rho) \le \int_{\rho\in \PSI}\psi\left(\frac{\bI_{\tilde \psi_{\delta,m}}(\rho)}{\eps}\right)\UU(d\rho). \end{equation*}
Now let $m\to \infty$ and then $\eps\searrow 0$, and use dominated convergence
in both calculations.  We get that
\begin{multline*}\varlimsup_{\eps \searrow 0}\varlimsup_{N\to \infty}\int_{\rho\in \PSI}\chi_{[0,\eps)}(\rho[T-\delta,T))\UU^{(N)}(d\rho) \le \int_{\rho\in \PSI}\chi_{\{0\}}\left(\rho(T-\delta,T)\right)\UU(d\rho) \\
= \UU\lb \rho\in \PSI: \rho(T-\delta,T)=0\rb. \end{multline*}
Now let $\delta \searrow 0$ to get the claim.
\end{proof}

\begin{example} For Example \ref{Ex:SimpleExample}, we have that
\begin{equation*} \UU\lb \rho\in \PSI: \rho(T-\delta,T)=0\rb = \frac13 \chi_{\{0\}}(\check \mu_a(T-\delta,T)) + \frac23 \chi_{\{0\}}(\check \mu_b(T-\delta,T)) \end{equation*}
which is zero if either $\check \mu_a$ or $\check \mu_b$ is not flat at $T$.  For Example \ref{Ex:Merton}, we have that
\begin{equation*} \UU\lb \rho\in \PSI: \rho(T-\delta,T)=0\rb = \int_{\sigma\in (0,\infty)}\chi_{\{0\}}(\check \mu^{\Merton}_\sigma(T-\delta,T))\frac{\sigma^{\vsig-1}e^{-\sigma/\sigma_\circ}}{\sigma_\circ^\vsig \Gamma(\vsig)}d\sigma=0. \end{equation*}
\end{example}

\section{Appendix B: Variational Problems}\label{S:extremalsproof}

In this section we look more deeply into the variational problems which
have appeared in our arguments.  Most of this section is motivational;
the only results we need in the body of the paper are the regularity results
of Lemmas \ref{L:LambdaCont}, \ref{L:fICont}, and \ref{L:Sopen}, and the proof of Lemma \ref{L:increasing}.  The remainder
of the section is devoted to proving Lemmas \ref{L:finalITmin} and \ref{L:Variational}.  Looking carefully at our arguments, we see that
we could in fact \emph{define} $\fI$ as in \eqref{E:IIeq} and proceed with the rest of our paper.  Nevertheless, we prove both Lemma \ref{L:finalITmin} and Lemma \ref{L:Variational} so that we can have a fairly complete understanding of the calculations
involved in identifying how the rare events are most likely to form.

To begin our calculations, we first explore some regularity
of the objects described in Lemma \ref{L:finalITmin}.

Define
\begin{align*} \calS &\Def \lb (\alpha',\tVV)\in (0,1)\times \PSint: \tVV\in \calG_{\alpha'}\rb \\
\calS^\strict &\Def \lb (\alpha',\tVV)\in (0,1)\times \PSint: \tVV\in \calG^\strict_{\alpha'}\rb. \end{align*}
Also define
\begin{equation*} \bPhi(\lambda,\tVV)\Def \int_{p\in [0,1]}\Phi(p,\lambda)\tVV(dp) \end{equation*}
for all $\lambda\in [-\infty,\infty]$ and $\tVV\in \PSint$.
Then we have
\begin{lemma}\label{L:LambdaCont} For each $(\alpha',\tVV)\in \calS$, the solution $\Lambda(\alpha',\tVV)$ of \eqref{E:equality} exists and is unique.  If $(\alpha',\tVV)\in \calS^\strict$, then $\Lambda(\alpha',\tVV)\in \R$.  Thirdly, the map
$(\alpha',\tVV)\mapsto \Lambda(\alpha',\tVV)$ is continuous on $\calS$
\textup{(}as a map from $(0,1)\times \PSint$ to $[-\infty,\infty]$\textup{)}.
\end{lemma}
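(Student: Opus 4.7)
The argument proceeds in two stages: first existence and uniqueness of $\Lambda(\alpha',\tVV)$, then continuity via a compactness/subsequence argument. By Remark \ref{R:Phiprops}, for fixed $p \in (0,1)$ the map $\lambda \mapsto \Phi(p,\lambda)$ is strictly increasing and smooth on $\R$, while for $p \in \{0,1\}$ it is the constant $\chi_{\{1\}}(p)$. Hence $\lambda \mapsto \bPhi(\lambda,\tVV)$ is continuous and nondecreasing on $\R$, and by dominated/monotone convergence it extends continuously to $[-\infty,\infty]$ with endpoint values $\bPhi(-\infty,\tVV) = \tVV\{1\}$ and $\bPhi(+\infty,\tVV) = 1-\tVV\{0\}$. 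Moreover $\bPhi(\cdot,\tVV)$ is strictly increasing on $\R$ unless $\tVV\{0\}+\tVV\{1\}=1$, in which case $\tVV \in \calG_{\alpha'}$ forces $\tVV = \mu^\dagger_{\alpha'}$, excluded by definition of $\calG_{\alpha'}$. For $\tVV \in \calG_{\alpha'}$ the intermediate value theorem therefore produces a unique $\Lambda(\alpha',\tVV) \in [-\infty,\infty]$, and the strict bounds $\tVV\{1\} < \alpha' < 1-\tVV\{0\}$ defining $\calG^\strict_{\alpha'}$ place the solution in $\R$.

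\textbf{Reduction of continuity to subsequences.} Fix $(\alpha',\tVV) \in \calS$ and a sequence $(\alpha'_n,\tVV_n) \to (\alpha',\tVV)$ in $\calS$. By compactness of $[-\infty,\infty]$ it suffices to show every convergent subsequence $\Lambda_{n_k} \Def \Lambda(\alpha'_{n_k},\tVV_{n_k}) \to \Lambda_*$ satisfies $\Lambda_* = \Lambda(\alpha',\tVV)$. The finite case is standard: when $\Lambda_* \in \R$, the bound $|\partial\Phi/\partial\lambda| \leq 1$ from Remark \ref{R:Phiprops} gives $\Phi(\cdot,\Lambda_{n_k}) \to \Phi(\cdot,\Lambda_*)$ uniformly on $[0,1]$, which combined with the weak convergence $\tVV_{n_k} \to \tVV$ yields $\alpha'_{n_k} = \bPhi(\Lambda_{n_k},\tVV_{n_k}) \to \bPhi(\Lambda_*,\tVV)$, forcing $\bPhi(\Lambda_*,\tVV) = \alpha'$ and hence $\Lambda_* = \Lambda(\alpha',\tVV)$ by uniqueness.

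\textbf{The infinite endpoints (main obstacle).} When $\Lambda_* = +\infty$, the pointwise limit $\chi_{(0,1]}$ of $\Phi(\cdot,\Lambda_{n_k})$ is discontinuous at $0$, so the weak convergence $\tVV_{n_k} \to \tVV$ is not enough on its own; I handle this using portmanteau-type estimates. Since $\Phi(p,\lambda) \leq \chi_{(0,1]}(p)$, we have
\begin{equation*} \alpha'_{n_k} = \bPhi(\Lambda_{n_k},\tVV_{n_k}) \leq 1 - \tVV_{n_k}\{0\}, \end{equation*}
and letting $k \to \infty$ with $\varlimsup_{k} \tVV_{n_k}\{0\} \leq \tVV\{0\}$ (portmanteau on the closed set $\{0\}$) gives $\alpha' \leq 1-\tVV\{0\}$. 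For the reverse inequality I fix a continuity point $\delta > 0$ of $\tVV$ and use $\Phi(p,\Lambda_{n_k}) \geq \Phi(\delta,\Lambda_{n_k})\chi_{[\delta,1]}(p)$ to obtain $\alpha'_{n_k} \geq \Phi(\delta,\Lambda_{n_k})\tVV_{n_k}[\delta,1]$; since $\Phi(\delta,\Lambda_{n_k}) \to 1$ and $[\delta,1]$ is a $\tVV$-continuity set, the right-hand side converges to $\tVV[\delta,1]$, and sending $\delta \searrow 0$ through continuity points yields $\alpha' \geq 1-\tVV\{0\}$. Thus $\alpha' = 1-\tVV\{0\}$, which by the existence/uniqueness step forces $\Lambda(\alpha',\tVV) = +\infty = \Lambda_*$.

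The case $\Lambda_* = -\infty$ is entirely symmetric, using the closed set $\{1\}$ and continuity points of $\tVV$ in intervals $[0,1-\delta]$, together with the inequalities $\chi_{\{1\}}(p) \leq \Phi(p,\lambda) \leq \Phi(1-\delta,\lambda) + \chi_{(1-\delta,1]}(p)$, and yields $\alpha' = \tVV\{1\}$ with $\Lambda(\alpha',\tVV) = -\infty$. The main technical point throughout is the careful use of portmanteau on continuity sets to bypass the discontinuity of $\Phi(\cdot,\pm\infty)$; once this is in hand, the rest is bookkeeping.
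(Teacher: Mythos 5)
Your proof is correct and follows essentially the same strategy as the paper's: existence and uniqueness via strict monotonicity and continuity of $\bPhi(\cdot,\tVV)$ on $[-\infty,\infty]$ with endpoint values $\tVV\{1\}$ and $1-\tVV\{0\}$, and continuity of $\Lambda$ via a subsequence argument in the compact space $[-\infty,\infty]$. The only notable difference is organizational: where you argue ``if a subsequence of $\Lambda_n$ tends to $+\infty$, then $\alpha'=1-\tVV\{0\}$,'' the paper proves the contrapositive ``if $\alpha'<1-\tVV\{0\}$, then $\varlimsup_n\Lambda_n<\infty$'' (and similarly at $-\infty$), then does a case split on which side of $\calS^\strict$ the pair $(\alpha',\tVV)$ sits. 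A small technical contrast: your argument uses the portmanteau theorem on $\tVV$-continuity sets $[\delta,1]$ and lets $\delta\searrow 0$ through continuity points, while the paper only needs the upper-semicontinuity half of portmanteau on a single fixed closed interval $[1-\delta,1]$ chosen so that $\tVV[1-\delta,1]<\alpha'-\delta$ — which avoids having to select continuity points at all. Both are valid; the paper's is slightly leaner, yours gives the cleaner contrapositive and isolates the identity $\alpha'=1-\tVV\{0\}$ more explicitly.
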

\begin{proof} Remark \ref{R:Phiprops} ensures that $\bPhi(\cdot,\tVV)$ is strictly
increasing on $[-\infty,\infty]$ as long as $\tVV(0,1)=1-\tVV\{0\}-\tVV\{1\}>0$.
Fixing $\tVV\in \PSint$, the continuity of $\Phi(p,\cdot)$ (again using Remark \ref{R:Phiprops}) and dominated convergence imply that $\bPhi(\cdot,\tVV)$ is continuous on $[-\infty,\infty]$.  Noting that
\begin{equation*} \bPhi(-\infty,\tVV)=\tVV\{1\}= \lim_{\lambda\to -\infty}\bPhi(\lambda,\tVV) \qquad \text{and}\qquad \bPhi(\infty,\tVV)=\tVV(0,1]= \lim_{\lambda\to -\infty}\bPhi(\lambda,\tVV), \end{equation*}
we can conclude that $\Lambda(\alpha',\tVV)$ defined as in \eqref{E:equality}
exists and is unique for $(\alpha',\tVV)\in S$.  We note that if $\alpha'=\tVV\{1\}$, then $\Lambda(\alpha',\tVV)=-\infty$, while if $\alpha'=1-\tVV\{0\}=\tVV(0,1]$, then $\Lambda(\alpha',\tVV)=\infty$.  Otherwise, $\Lambda(\alpha',\tVV)\in \R$.

Let's next address continuity.  We begin with some general comments which
we will at the end organize in several ways.  Fix $\left((\alpha'_n,\tVV_n)\right)_{n\in \N}$ and $(\alpha',\tVV)$ in $\calS$ such that
$\lim_{n\to \infty}(\alpha'_n,\tVV_n)=(\alpha',\tVV)$ (in the product topology).
Assume also that $\lambda\in [-\infty,\infty]$
is such that $\lim_{n\to \infty}\Lambda(\alpha_n',\tVV_n)=\lambda$.  

If $\lambda\in \R$, then
\begin{equation*} \left|\alpha'-\bPhi(\lambda,\tVV)\right|
\le |\alpha'-\alpha'_n| + \left|\bPhi(\Lambda(\alpha'_n,\tVV_n),\tVV_n)-\bPhi(\lambda,\tVV_n)\right| + \left|\bPhi(\lambda,\tVV_n)-\bPhi(\lambda,\tVV)\right|. \end{equation*}
Let $n\to\infty$.  Remark \ref{R:Phiprops} implies that $\left|\bPhi(\Lambda(\alpha'_n,\tVV_n),\tVV_n)-\bPhi(\lambda,\tVV_n)\right|\le \left|\Lambda(\alpha'_n,\tVV_n)-\lambda\right|$.
By weak convergence, we have that $\lim_{n\to \infty}\left|\bPhi(\lambda,\tVV_n)-\bPhi(\lambda,\tVV)\right|=0$.  Combine all of these things to see that $\bPhi(\lambda,\tVV)=\alpha'$.

Assume next that $\tVV\{1\}<\alpha'$.  Then there is a $\delta>0$ such that
$\tVV[1-\delta,1]<\alpha'-\delta$, so by Portmanteau's theorem, $\varlimsup_{n\to \infty}\tVV_n[1-\delta,1]\le \tVV[1-\delta,1]<\alpha'-\delta$.  Since $p\mapsto \Phi\left(p,\Lambda(\alpha'_n,\tVV_n)\right)$ is increasing for each $n\in \N$, we have that
\begin{multline*} \alpha'_n = \int_{p\in [0,1-\delta)}\Phi\left(p,\Lambda(\alpha'_n,\tVV_n)\right)\tVV_n(dp) +\int_{p\in [1-\delta,1]}\Phi\left(p,\Lambda(\alpha'_n,\tVV_n)\right)\tVV_n(dp) \\
\le \Phi\left(1-\delta,\Lambda(\alpha'_n,\tVV_n)\right) + \tVV_n[1-\delta,1]. \end{multline*}
Thus $\varliminf_{n\to \infty}\Phi\left(1-\delta,\Lambda(\alpha'_n,\tVV_n)\right) \ge \delta$, so $\varliminf_{n\to \infty}\Lambda(\alpha'_n,\tVV_n)>-\infty$.

We similarly now assume that $\tVV\{0\}<1-\alpha'$.  Then there is a $\delta>0$ such that
$\tVV[0,\delta]<1-\alpha'-\delta$, so by Portmanteau's theorem, $\varlimsup_{n\to \infty}\tVV_n[0,\delta]\le \tVV[0,\delta]<1-\alpha'-\delta$.  Monotonicity of $p\mapsto \Phi\left(p,\Lambda(\alpha'_n,\tVV_n)\right)$ now implies that
\begin{multline*} 1-\alpha'_n = \int_{p\in (\delta,1]}\lb 1-\Phi\left(p,\Lambda(\alpha'_n,\tVV_n)\right)\rb\tVV_n(dp) +\int_{p\in [0,\delta]}\lb 1-  \Phi\left(p,\Lambda(\alpha'_n,\tVV_n)\right)\rb \tVV_n(dp) \\
\le \lb 1-\Phi\left(1-\delta,\Lambda(\alpha'_n,\tVV_n)\right)\rb  + \tVV_n[0,\delta]. \end{multline*}
Thus
\begin{equation*} \varlimsup_{n\to \infty}\Phi\left(\delta,\Lambda(\alpha'_n,\tVV_n)\right) \le  \alpha' + \tVV[0,\delta]<1-\delta, \end{equation*}
so $\varlimsup_{n\to \infty}\Lambda(\alpha'_n,\tVV_n)<\infty$.

Let's collect things together.  If $\tVV\in \calG^\strict_{\alpha'}$, then the previous
two calculations imply that
\begin{equation*} \varlimsup_{n\to \infty}|\Lambda(\alpha'_n,\tVV_n)|<\infty;\end{equation*}
if $\lambda$ is a cluster point of $\{\Lambda(\alpha'_n,\tVV_n)\}_{n\in \N}$, then
$\bPhi(\lambda,\tVV)=\alpha'$, so in fact $\lambda=\Lambda(\alpha',\tVV)$.  In other
words, if $\tVV\in \calG^\strict_{\alpha'}$, then $\lim_{n\to \infty}\Lambda(\alpha'_n,\tVV_n)=\Lambda(\alpha',\tVV)$.  Next assume that $\tVV\{1\}=\alpha'<1-\tVV\{0\}$;
then $\Lambda(\alpha',\tVV)=-\infty$.  We know that
$\varlimsup_{n\to \infty}\Lambda(\alpha'_n,\tVV_n)<\infty$.  If $\lambda\in\R$ is
a cluster point of $\{\Lambda(\alpha'_n,\tVV_n)\}_{n\in \N}$, then $\bPhi(\lambda,\tVV)=\alpha'$, which violates uniqueness of the definition of $\Lambda(\alpha',\tVV)$.
Thus if $\tVV\{1\}=\alpha'<1-\tVV\{0\}$, we must have that
$\lim_{n\to \infty}\Lambda(\alpha'_n,\tVV_n)=-\infty=\Lambda(\alpha',\tVV)$.
Similarly, we next assume that $\tVV\{1\}<\alpha'=1-\tVV\{0\}$.  
Then $\Lambda(\alpha',\tVV)=\infty$.  We at least know that
$\varliminf_{n\to \infty}\Lambda(\alpha',\tVV_n)>-\infty$.  If $\lambda\in\R$ is
a cluster point of $\{\Lambda(\alpha'_n,\tVV_n)\}_{n\in \N}$, then again $\bPhi(\lambda,\tVV)=\alpha'$, again violating the uniqueness of the definition of $\Lambda(\alpha',\tVV)$.  Thus if $\tVV\{1\}<\alpha'=1-\tVV\{0\}$, we must have that
$\lim_{n\to \infty}\Lambda(\alpha_n',\tVV_n)=\infty=\Lambda(\alpha',\tVV)$.
\end{proof}

For each $\lambda\in \R$, we next define
\begin{equation*} \bH(p,\lambda) \Def \hbar(\Phi(p,\lambda),p) = \frac{pe^\lambda}{1-p+pe^\lambda}\ln \frac{e^\lambda}{1-p+pe^\lambda} +\frac{1-p}{1-p+pe^\lambda}\ln \frac{1}{1-p+pe^\lambda} \end{equation*}
for all $p\in [0,1]$.  Note that $\bH(p,\lambda)=0$ for $p\in \{0,1\}$ and all $\lambda\in \R$.
\begin{remark}\label{R:Hprops}  We have that
\begin{equation*} \frac{\partial \bH}{\partial \lambda}(p,\lambda)
= \frac{\partial \hbar}{\partial \beta_1}(\Phi(p,\lambda),p)\frac{\partial \Phi}{\partial \lambda}(p,\lambda) = \lambda\frac{\partial \Phi}{\partial \lambda}(p,\lambda)>0\end{equation*}
for all $p\in (0,1)$ and $\lambda\in \R$, and
\begin{equation*} \left|\frac{\partial \bH}{\partial \lambda}(p,\lambda)\right|\le |\lambda| \end{equation*}
for all $p\in [0,1]$ and $\lambda\in \R$.  Thus
\begin{equation*} \left|\bH(p,\lambda_1)-\bH(p,\lambda_2)\right|\le \left(|\lambda_1| + |\lambda_2|\right)|\lambda_1-\lambda_2| \end{equation*}
for all $p\in [0,1]$ and $\lambda_1$ and $\lambda_2$ in $\R$.
Finally, Remark \ref{R:Phiprops} implies that for $\lambda\in \R$ and $p\in [0,1]$,
\begin{equation*} 0\le \bH(p,\lambda)
\le \frac{pe^\lambda}{1-p+pe^\lambda}\ln \frac{e^\lambda}{e^{\lambda^-}}+\frac{1-p}{1-p+pe^\lambda}\ln \frac{1}{e^{\lambda^-}}
\le \frac{pe^\lambda}{1-p+pe^\lambda}\lambda^++\frac{1-p}{1-p+pe^\lambda}(-\lambda^-)
\le |\lambda| \end{equation*}
where $\lambda^+\Def \max\{\lambda,0\}$.
\end{remark}

We now study the right-hand side of \eqref{E:IIeq}.  To avoid confusion
with $\fI$ of \eqref{E:IDef}, define now
\begin{equation*} \fI^*(\alpha',\tVV) \Def \int_{p\in [0,1]}\bH(p,\Lambda(\alpha',\tVV))\tVV(dp) \end{equation*}
for all $(\alpha',\tVV)\in \calS$.
\begin{lemma}\label{L:fICont} We have that $\fI^*$ is continuous on $\calS^\strict$. \end{lemma}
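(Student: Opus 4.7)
The plan is to use the triangle inequality to split the increment of $\fI^*$ into two pieces: one measuring how $\bH$ changes when $\Lambda$ moves, and one measuring how the integral of a fixed function changes under weak convergence of $\tVV$. Both pieces will go to zero using the tools already in hand.

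More precisely, fix a convergent sequence $(\alpha'_n,\tVV_n)\to(\alpha',\tVV)$ in $\calS^\strict$, and write $\Lambda_n \Def \Lambda(\alpha'_n,\tVV_n)$ and $\Lambda \Def \Lambda(\alpha',\tVV)$. By Lemma \ref{L:LambdaCont}, $\Lambda_n\to\Lambda$ in $\R$; in particular $\{|\Lambda_n|\}_{n\in\N}$ is bounded. I would then estimate
\begin{equation*}
\left|\fI^*(\alpha'_n,\tVV_n)-\fI^*(\alpha',\tVV)\right|
\le \int_{p\in[0,1]}\left|\bH(p,\Lambda_n)-\bH(p,\Lambda)\right|\tVV_n(dp)
+\left|\int_{p\in[0,1]}\bH(p,\Lambda)\{\tVV_n(dp)-\tVV(dp)\}\right|.
\end{equation*}

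For the first term, I would apply the Lipschitz bound from Remark \ref{R:Hprops}, which gives an integrand bounded pointwise by $(|\Lambda_n|+|\Lambda|)|\Lambda_n-\Lambda|$; since $\tVV_n$ is a probability measure and $\Lambda_n\to\Lambda$ with the sequence of $|\Lambda_n|$ bounded, this term tends to zero. For the second term, I would observe that for the fixed finite $\Lambda\in\R$, Remark \ref{R:Phiprops} shows the denominator $1-p+pe^\Lambda$ is bounded away from zero uniformly in $p\in[0,1]$, so $p\mapsto \bH(p,\Lambda)$ is continuous and bounded on $[0,1]$; weak convergence $\tVV_n\to\tVV$ then forces this term to zero as well.

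The only conceptual point worth emphasizing is the role of $\calS^\strict$ versus $\calS$: the argument genuinely uses $\Lambda\in\R$ (not merely in $[-\infty,\infty]$), because the Lipschitz estimate and the boundedness of $\bH(\cdot,\Lambda)$ on $[0,1]$ both break down when $\Lambda=\pm\infty$. Since Lemma \ref{L:LambdaCont} guarantees $\Lambda\in\R$ precisely on $\calS^\strict$, the restriction to $\calS^\strict$ in the statement is exactly what is needed. I do not anticipate any serious obstacle here; the lemma is essentially a clean assembly of the continuity of $\Lambda$, the regularity estimates for $\bH$ in Remark \ref{R:Hprops}, and the definition of weak convergence.
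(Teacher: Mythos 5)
Your proposal is correct and follows essentially the same route as the paper: split the increment of $\fI^*$ by the triangle inequality, control the $\Lambda$-variation term via the Lipschitz bound of Remark \ref{R:Hprops} together with the continuity of $\Lambda$ from Lemma \ref{L:LambdaCont}, and control the measure-variation term by weak convergence against the fixed bounded continuous function $p\mapsto\bH(p,\Lambda)$. Your closing remark on why $\calS^\strict$ (and not $\calS$) is the right domain is also accurate and worth keeping.
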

\begin{proof} Fix $\left((\alpha'_n,\tVV_n)\right)_{n\in \N}$ and $(\alpha',\tVV)$ 
in $\calS^\strict$ such that $\lim_{n\to \infty}(\alpha'_n,\tVV_n)=(\alpha',\tVV)$.  Then $\lim_{n\to \infty}\Lambda(\alpha'_n,\tVV_n)=\Lambda(\alpha',\tVV)\in \R$.  We write that
\begin{align*} \left|\fI^*(\alpha_n',\tVV_n)-\fI^*(\alpha',\tVV)\right|
&\le \left|\int_{p\in [0,1]}\bH\left(p,\Lambda(\alpha'_n,\tVV_n)\right)\tVV_n(dp)-\int_{p\in [0,1]}\bH\left(p,\Lambda(\alpha',\tVV)\right)\tVV(dp)\right|\\
&\le \left|\int_{p\in [0,1]}\lb \bH\left(p,\Lambda(\alpha_n',\tVV_n)\right)-\bH\left(p,\Lambda(\alpha',\tVV)\right)\rb \tVV_n(dp)\right|\\
&\qquad + \left|\int_{p\in [0,1]}\bH\left(p,\Lambda(\alpha',\tVV)\right)\tVV_n(dp)-\int_{p\in [0,1]}\bH\left(p,\Lambda(\alpha',\tVV)\right)\tVV(dp)\right|.\end{align*}
By Remark \ref{R:Hprops}, we have that
\begin{equation*}\left|\int_{p\in [0,1]}\lb \bH\left(p,\Lambda(\alpha_n',\tVV_n)\right)-\bH\left(p,\Lambda(\alpha',\tVV)\right)\rb \tVV_n(dp)\right|\le \left|\Lambda(\alpha_n',\tVV_n)+\Lambda(\alpha',\tVV)\right|\left|\Lambda(\alpha_n',\tVV_n)-\Lambda(\alpha',\tVV)\right|, \end{equation*}
and by weak convergence that
\begin{equation*} \lim_{n\to \infty}\int_{p\in [0,1]}\bH\left(p,\Lambda(\alpha',\tVV)\right)\tVV_n(dp)=\int_{p\in [0,1]}\bH\left(p,\Lambda(\alpha',\tVV)\right)\tVV(dp). \end{equation*}
Combining things together, we get the desired result.\end{proof}

We can now prove Lemma \ref{L:increasing}.  The following
result will help us with the continuity claims.
\begin{lemma}\label{L:Sopen} The set $\calS^\strict$ is open.  Furthermore, for
each $\alpha'\in (0,1)$, $\calG^\strict_{\alpha'}$ is open. \end{lemma}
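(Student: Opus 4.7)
\medskip

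\noindent\textbf{Proof plan.} The plan is to reduce both openness statements to the upper semicontinuity of the point-mass evaluation functionals $\tVV \mapsto \tVV\{0\}$ and $\tVV \mapsto \tVV\{1\}$ on $\PSint$. Recall that $\PSint$ carries the weak topology, and since $\{0\}$ and $\{1\}$ are closed subsets of $[0,1]$, the Portmanteau theorem gives immediately that for any convergent sequence $\tVV_n \to \tVV$ in $\PSint$,
\begin{equation*}
\varlimsup_{n\to\infty} \tVV_n\{0\} \le \tVV\{0\} \qquad\text{and}\qquad \varlimsup_{n\to\infty} \tVV_n\{1\} \le \tVV\{1\}.
\end{equation*}
Equivalently, for any constant $c \in \R$, the sublevel sets $\{\tVV\in\PSint: \tVV\{0\}<c\}$ and $\{\tVV\in\PSint: \tVV\{1\}<c\}$ are open in $\PSint$.

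For the second assertion, fix $\alpha'\in(0,1)$. By definition
\begin{equation*}
\calG^\strict_{\alpha'} = \lb \tVV\in\PSint: \tVV\{1\}<\alpha'\rb \cap \lb \tVV\in\PSint: \tVV\{0\}<1-\alpha'\rb,
\end{equation*}
which is the intersection of two open sets, hence open.

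For the first assertion, I would argue pointwise in the product topology. Fix $(\alpha_0',\tVV_0)\in\calS^\strict$, so $\tVV_0\{1\}<\alpha_0'<1-\tVV_0\{0\}$. Choose $\delta>0$ small enough that $\tVV_0\{1\}+2\delta<\alpha_0'<1-\tVV_0\{0\}-2\delta$ and $(\alpha_0'-\delta,\alpha_0'+\delta)\subset(0,1)$. By the upper semicontinuity recalled above, there is an open neighborhood $U$ of $\tVV_0$ in $\PSint$ such that $\tVV\{1\}<\tVV_0\{1\}+\delta$ and $\tVV\{0\}<\tVV_0\{0\}+\delta$ for every $\tVV\in U$. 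Then for every $(\alpha',\tVV)\in(\alpha_0'-\delta,\alpha_0'+\delta)\times U$ we have
\begin{equation*}
\tVV\{1\}<\tVV_0\{1\}+\delta<\alpha_0'-\delta<\alpha' \qquad\text{and}\qquad \tVV\{0\}<\tVV_0\{0\}+\delta<1-\alpha_0'-\delta<1-\alpha',
\end{equation*}
so $(\alpha',\tVV)\in\calS^\strict$. This exhibits an open product neighborhood of $(\alpha_0',\tVV_0)$ inside $\calS^\strict$.

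The only real input is upper semicontinuity of the atom functionals, which is a standard Portmanteau consequence; the rest is bookkeeping. I do not anticipate a meaningful obstacle here — the slight care needed is just to choose $\delta$ uniformly for both the $\{0\}$- and $\{1\}$-constraints and to respect $\alpha'\in(0,1)$.
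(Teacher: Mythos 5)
Your proof is correct and relies on the same key fact as the paper: the Portmanteau theorem applied to the closed singletons $\{0\}$ and $\{1\}$, yielding upper semicontinuity of $\tVV\mapsto\tVV\{0\}$ and $\tVV\mapsto\tVV\{1\}$ on $\PSint$. The only cosmetic difference is that the paper proves $\calS^\strict$ open first (via a sequential argument) and then deduces openness of $\calG^\strict_{\alpha'}$, whereas you prove openness of $\calG^\strict_{\alpha'}$ directly as an intersection of open sublevel sets and then exhibit an explicit product neighborhood for $\calS^\strict$.
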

\begin{proof}  Fix $(\alpha',\tVV)\in \calS^\strict$ and $\left((\alpha'_n,\tVV_n)\right)_{n\in \N}$ in $(0,1)\times \PSint$ such that $\lim_{n\to \infty}(\alpha'_n,\tVV_n)=(\alpha',\tVV)$ in the product topology.    By definition of $\calS^\strict$, we 
have that there is a $\delta>0$ such that
\begin{equation*} \tVV\{1\}<\alpha'-\delta \qquad \text{and}\qquad \tVV\{0\}\le 1-\alpha'-\delta. \end{equation*}
Since $\{0\}$ and $\{1\}$ are closed subsets of $[0,1]$, Portmanteau's theorem
implies that $\varlimsup_{n\to \infty}\tVV_n\{1\}\le \tVV\{1\}<\alpha'-\delta$
and $\varlimsup_{n\to \infty}\tVV_n\{0\}\le \tVV\{0\}<1-\alpha'-\delta$.  Thus
for $n\in \N$ sufficiently large, $(\alpha'_n,\tVV_n)\in \calS^\strict$.  Hence
$\calS^\strict$ is open.

Fix next $\alpha'\in (0,1)$, $\tVV\in \calG^\strict_{\alpha'}$, and $(\tVV_n)_{n\in \N}$ in $\PSint$ such that $\lim_{n\to \infty}\tVV_n=\tVV$.  Then $(\alpha',\tVV)\in \calS^\strict$, and $\lim_{n\to \infty}(\alpha',\tVV_n)= (\alpha',\tVV)$.  Since
$\calS^\strict$ is open, we thus have that $(\alpha',\tVV_n)\in \calS^\strict$
for all $n\in \N$ sufficiently large; i.e., $\tVV_n\in \calG^\strict_{\alpha'}$
for $n\in \N$ sufficiently large.  Hence $\calG^\strict_{\alpha'}$ is indeed open.  
\end{proof}

\begin{proof}[Proof of Lemma \ref{L:increasing}]
We use Lemma \ref{L:Sopen} to see that $\tUUN\in G^\strict_\alpha$ if $N\in \N$ is sufficiently large.  We use Lemmas \ref{L:LambdaCont} and \ref{L:fICont} to
get the convergence claims of \eqref{E:limitts}.  

By Assumption \ref{A:IG} and \ref{A:NonDegen}, we get that there is an $N_\circ\in \N$ such that
\begin{equation*} \tUUN\{0\}<1-\alpha \qquad \text{and}\qquad \int_{p\in [0,1]}p\tUUN(dp)<\alpha \end{equation*}
for all $N\ge N_\circ$.  Thus for $N\ge N_\circ$, we have that (use a calculation similar to \eqref{E:PA})
\begin{equation*} \tUUN\{0,1\} = \tUUN\{0\} + \tUUN\{1\} \le \tUUN\{0\} + \int_{p\in [0,1]}p\tUUN(dp)<1-\alpha+\alpha<1; \end{equation*}
thus for $N\ge N_\circ$, $\tUUN(0,1)>0$, so in fact we have the following
string of inequalities:
\begin{equation}\label{E:PhilC} 1-\tUUN\{0\} >\alpha> \int_{p\in [0,1]}p\tUUN(dp)>\tUUN\{1\}. \end{equation}
Thus for $N\ge N_\circ$, $\calI_N\times\{\tUUN\}\subset \calS^\strict$.
Lemma \ref{L:fICont} thus ensures that $\fI(\cdot,\tUUN)$ is continuous on $\calI_N$ for $n\ge N_\circ$.  Remark \ref{R:Phiprops} implies that $\Phi$ is nondecreasing
in its second argument, so $\Lambda(\cdot,\tUUN)$ must also be nondecreasing
on $\calI_N$.  Remark \ref{R:Hprops} ensures that $\bH$ is also nondecreasing
in its second argument, so we can now conclude that $\fI(\cdot,\tUUN)$
is nondecreasing on $\calI_N$.

To finally understand the sign of $\Lambda(\alpha,\tUU)$, note that
\begin{equation*} \bPhi\left(\Lambda\left(\int_{p\in [0,1]}p\tUUN(dp),\tUUN\right),\tUUN\right) = \int_{p\in [0,1]}p\tUUN(dp) = \bPhi(0,\tUUN); \end{equation*}
Thus $\Lambda\left(\int_{p\in [0,1]}p\tUUN(dp),\tUUN\right)=0$.  By
\eqref{E:PhilC}, we know that $\alpha\in \calI_N$ for $N\ge N_\circ$,
so monotonicity implies that
\begin{equation*} \Lambda(\alpha,\tUUN)\ge \Lambda\left(\int_{p\in [0,1]}p\tUUN(dp),\tUUN\right)=0, \end{equation*}
and so $\Lambda(\alpha,\tUU)\ge 0$.  If $\Lambda(\alpha,\tUU)=0$, then
\begin{equation*} \alpha = \int_{p\in [0,1]}\Phi(p,\Lambda(\alpha,\tUU))\tUU(dp)
= \int_{p\in [0,1]}\Phi(p,0)\tUU(dp)\\
= \int_{p\in [0,1]}p\tUU(dp), \end{equation*}
which violates Assumption \ref{A:IG}.  Thus $\Lambda(\alpha,\tUU)>0$, finishing
the proof.\end{proof}

We next turn to the proof of Lemma \ref{L:Variational}.  While Lemma \ref{L:Variational} is not really needed in the paper, it does represent a key step
in our chain of reasoning.  Namely, the G\"artner-Ellis theorem of
large deviations tells us that the first step in studying rare events
is to take the Legendre-Fenchel transform of a limiting logarithmic moment-generating function.  The background object of interest is the empirical measure
\eqref{E:background}, and the appropriate Legendre-Fenchel transform is given
in \eqref{E:icircdef}.  The contraction principle tells us how to ``project''
a large deviations principle for $\nu^{(N)}$ onto $L^{(N)}_t$; that is
\eqref{E:CP}.  This is the ``rigorous'' way to study the
rare events leading to the losses in investment-grade tranches.
Assumedly, they should lead to the intuitively-appealing rate function
\eqref{E:IDef} and agree with the fairly straightforward calculations
of Example \ref{Ex:twobonds}, both of which encapsulate the idea that
there are many configurations leading to a loss, but we want the
one which is least unlikely.  Aside of intellectual curiosity, the
value of a proof of Lemma \ref{L:Variational} is that in the course
of the calculations, a number of properties of extremals are identified;
these have direct implications for the rest of our calculations.  More
exactly, they identify the measure change which we use in Section \ref{S:MeasureChange}.  More generally, this measure change is closely related to 
importance sampling methods.  Thus we believe that the extra effort
needed to prove Lemma \ref{L:Variational} is worthwhile.

As a final comment before we begin, we note that
\begin{equation}\label{E:hbardegen} \hbar(\beta_1,0) = \begin{cases} 0 &\text{if $\beta_1=0$} \\
\infty &\text{else}\end{cases} \qquad \text{and}\qquad \hbar(\beta_1,1) = \begin{cases} 0 &\text{if $\beta_1=1$} \\
\infty &\text{else}\end{cases} \end{equation}
\begin{proof}[Proof of Lemma \ref{L:Variational}]
An important part of the proof is the duality between entropy and exponential integrals.  For any $\mu\in \PSI$,
\begin{equation} \label{E:entdual} \begin{aligned} \ln \int_{t\in I}e^{\phi(t)}\mu(dt) &= \sup_{\nu\in \PSI}\lb \int_{t\in I}\phi(t)\nu(dt) - H(\nu|\mu)\rb \qquad \phi\in C_b(I) \\
H(\nu|\mu)&= \sup_{\phi\in C_b(I)}\lb \int_{t\in I}\phi(t)\nu(dt) - \ln \int_{t\in I}e^{\phi(t)}\mu(dt)\rb. \qquad \nu\in \PSI \end{aligned}\end{equation}
Also, for $M\in B(\PSI;\PSI)$, let $dF_{\UU M^{-1}}$ be the unique element of $\PSI$ such that
\begin{equation*} \int_{\rho\in \PSI}\lb\int_{t\in I}\varphi(t)(M(\rho))(dt)\rb \UU(d\rho) = \int_{t\in I}\varphi(t)dF_{\UU M^{-1}}(dt); \end{equation*}
Lemma \ref{L:distmeas} ensures that the map $\UU\mapsto dF_{\UU M^{-1}}$ is a measurable map from $\PPSI$ to $\PSI$.

Let's first prove that
\begin{equation} \label{E:eqa} \fI^{(2)}(\alpha') \ge \fI(\alpha',\tUU). \end{equation}
Fix $m\in \PSI$ such that $m[0,T)=\alpha'$.  Fix also $\phi\in C_b(I)$.  For each $\rho\in \PSI$, define $M_\phi(\rho)\in \PSI$ as
\begin{equation*} M_\phi(\rho)(A) \Def \frac{\int_{t\in A}e^{\phi(t)}\rho(dt)}{\int_{t\in I} e^{\phi(t)}\rho(dt)}. \qquad A\in \Borel(I)\end{equation*}
\begin{equation*}\ln \int_{t\in I}e^{\phi(t)}\rho(dt) = \int_{t\in I}\phi(t)M_\phi(\rho)(dt) - H(M_\phi(\rho)|\rho). \end{equation*}
Note that if $(\rho_n)_{n\in \N}$ is a sequence
in $\PSI$ converging (in the weak topology on $\PSI$) to $\rho\in \PSI$,
then for any $\psi$ and $\varphi$ in $C_b(I)$ 
\begin{align*} \lim_{n\to \infty}\int_{t\in I}\psi(t)M_\phi(\rho)(dt) =
\lim_{n\to \infty}\frac{\int_{t\in I}\psi(t)e^{\phi(t)}\rho_n(dt)}{\int_{t\in I}e^{\phi(t)}\rho_n(dt)} = \frac{\int_{t\in I}\psi(t)e^{\phi(t)}\rho(dt)}{\int_{t\in I}e^{\phi(t)}\rho(dt)}=\int_{t\in I}\psi(t)M_\phi(\rho)(dt);\end{align*}
thus $\rho\mapsto M_\phi(\rho)$ is in $C(\PSI;\PSI)\subset B(\PSI;\PSI)$.

We can now proceed.  We have that
\begin{multline*} \sup_{\phi\in C_b(I)} \lb \int_{t\in I} \phi(t)m(dt) - \int_{\rho\in \PSI}\lb \ln \int_{t\in I}e^{\phi(t)}\rho(dt)\rb \UU(d\rho)\rb \\
=\sup_{\phi\in C_b(I)} \lb \int_{t\in I} \phi(t)m(dt) - \int_{\rho\in \PSI}\lb \int_{t\in I}\phi(t)M_\phi(\rho)(dt) - H(M_\phi(\rho)|\rho)\rb \UU(d\rho)\rb \\
=\sup_{\phi\in C_b(I)} \lb \int_{\rho\in \PSI}H(M_\phi(\rho)|\rho)\UU(d\rho) +\int_{t\in I} \phi(t)m(dt) - \int_{t\in I}\phi(t)dF_{\UU M_\phi^{-1}}(dt)\rb\\
\ge \inf_{\tilde M\in B(\PSI;\PSI)}\sup_{\phi\in C_b(I)} \lb \int_{\rho\in \PSI}H(\tilde M(\rho)|\rho)\UU(d\rho) +\int_{t\in I} \phi(t)m(dt) - \int_{t\in I}\phi(t)dF_{\UU \tilde M^{-1}}(dt)\rb. \end{multline*}
If $\tilde M\in B(\PSI;\PSI)$ is such that $dF_{\UU \tilde M^{-1}}(dt)\not = m$, then the supremum is $\infty$.
Using this, we have that
\begin{multline*} \sup_{\varphi\in C_b(I)} \lb \int_{t\in I} \varphi(t)m(dt) - \int_{\rho\in \PSI}\lb \ln \int_{t\in I}e^{\varphi(t)}\rho(dt)\rb \UU(d\rho)\rb \\
\ge \inf\lb \int_{\rho\in \PSI}H(M(\rho)|\rho)\UU(d\rho): M\in B(\PSI;\PSI),\, dF_{\UU M^{-1}}=m\rb. \end{multline*}
Varying $m$, we thus have that
\begin{equation*} \fI^{(2)}(\alpha') \ge \inf\lb \int_{\rho\in \PSI}H(\tilde M(\rho)|\rho)\UU(d\rho): \tilde M\in B(\PSI;\PSI),\, dF_{\UU \tilde M^{-1}}[0,T)=\alpha'\rb. \end{equation*}
Note that for any $\tilde M\in B(\PSI;\PSI)$,
\begin{equation*} dF_{\UU \tilde M^{-1}}[0,T) = \int_{\rho\in \PSI}\tilde M(\rho)[0,T)\UU(d\rho). \end{equation*}
We thus invoke Lemma 7.1 from \cite{SowersCDOI} and see that
\begin{equation*} \fI^{(2)}(\alpha') \ge \inf\lb \int_{\rho\in \PSI}\hbar(\tilde M(\rho)[0,T),\rho[0,T))\UU(d\rho): \tilde M\in B(\PSI;\PSI),\, 
\int_{\rho\in \PSI}\tilde M(\rho)[0,T)\UU(d\rho)=\alpha'\rb. \end{equation*}

Let's next \emph{condition} on the value of $\rho[0,T)$.  Since the map
$\rho \mapsto \rho[0,T)$ is a measurable map from $\PSI$ to $[0,1]$
(both of which are Polish spaces; see also Lemma \ref{L:meas}), there is a measurable map $p\mapsto \check U_p$
from $[0,1]$ to $\PSI$ such that
\begin{equation*} \int_{\rho\in \PSI}\chi_A(\rho)\psi(\rho[0,T))\UU(d\rho)
=\int_{p\in [0,1]}\check \UU_p(A)\psi(p)\tUU(dp) \end{equation*}
for all $A\in \Borel(\PSI)$ and all $\psi\in B([0,1])$.

Fix now $\tilde M\in B(\PSI;\PSI)$ such that
\begin{equation*} \int_{\rho\in \PSI}\tilde M(\rho)[0,T)\UU(d\rho)=\alpha'. \end{equation*}
For each $p\in [0,1]$, define now
\begin{equation*} \phi(p) \Def \int_{\rho\in \PSI}\tilde M(\rho)[0,T)\check \UU_p(d\rho). \end{equation*}
Then $\phi\in B([0,1];[0,1])$.  Clearly
\begin{equation*} \int_{p\in [0,1]} \phi(p)\tUU(dp) =\int_{\rho\in \PSI} M\tilde (\rho)[0,T)\UU(d\rho)=\alpha'. \end{equation*}
Convexity of $H$ in the first argument thus implies that
\begin{multline*} \int_{\rho\in \PSI}\hbar(\tilde M(\rho)[0,T),\rho[0,T))\UU(d\rho)
=\int_{p\in [0,1]}\lb \int_{\rho\in \PSI}\hbar(\tilde M(\rho)[0,T),p)\check \UU_p(d\rho)\rb \tUU(dp)\\
\ge \int_{p\in [0,1]}\hbar\left(\int_{\rho\in \PSI}\hbar(\tilde M(\rho)[0,T),p)\check \UU_p(d\rho)\rb \tUU(dp) 
= \int_{p\in [0,1]}\hbar(\phi(p),p)\tUU(dp). \end{multline*}
This directly leads to \eqref{E:eqa}

Let's now prove the reverse inequality; i.e, that
\begin{equation} \label{E:eqb} \fI(\alpha',\tUU)\ge \fI^{(2)}(\alpha'). \end{equation}
Fix $\phi\in B([0,1];[0,1])$ such that $\int_{p\in [0,1]}\phi(p)\tUU(dp)=\alpha'$.
We can of course also assume that
\begin{equation}\label{E:hfinite} \int_{p\in [0,1]}\hbar(\phi(p),p)\tUU(dp)<\infty. \end{equation}
For every $\rho\in \PSI$, define
\begin{equation} \label{E:MDef} M(\rho)(A) \Def \frac{\phi(\rho[0,T))}{\rho[0,T)}\rho(A\cap [0,T)) + \frac{1-\phi(\rho[0,T))}{1-\rho[0,T)}\rho(A\cap [T,\infty)) \end{equation}
if $\rho[0,T)\in (0,1)$, and define $M(\rho)\Def \rho$ if $\rho[0,T)\in \{0,1\}$.  We first claim that
\begin{equation*} \hbar(\phi(\rho[0,T)),\rho[0,T)) \ge H(M(\rho)|\rho) \end{equation*}
for all $\rho\in \PSI$.  If $\rho[0,T)\in (0,1)$, a direct calculation
shows
that this is in fact an equality.  If $\rho[0,T)\in \{0,1\}$, then $H(M(\rho)|\rho)=H(\rho|\rho)=0$.
Thus
\begin{equation*} \int_{p\in [0,1]}\hbar(\phi(p),p)\tUU(dp)  
=\int_{\rho\in \PSI}\hbar(\phi(\rho[0,T)),\rho[0,T))\tUU(dp) 
\ge \int_{\rho\in \PSI}H(M(\rho)|\rho)\tUU(dp). \end{equation*}
Fix next $\psi\in C_b(I)$.  By \eqref{E:entdual}, we thus have that
\begin{multline*} \int_{p\in [0,1]}\hbar(\phi(p),p)\tUU(dp)  
\ge \int_{\rho\in \PSI}\lb \int_{t\in I}\psi(t)M(\rho)(dt) - \ln \int_{t\in I}e^{\psi(t)}\rho(dt)\rb \UU(d\rho)\\
=\int_{t\in I}\psi(t)dF_{\UU M^{-1}}(dt) - \int_{\rho\in \PSI}\lb \ln \int_{t\in I}e^{\psi(t)}\rho(dt)\rb \UU(d\rho). \end{multline*}
Note now that if $\rho[0,T)\in (0,1)$, then $M(\rho)[0,T)=\phi(\rho[0,T))$.
Also, \eqref{E:hbardegen} and \eqref{E:hfinite} imply that if $\tUU\{0\}>0$, then $\phi(0)=0$,
and if $\tUU\{1\}>0$, then $\phi(1)=1$.  Thus
\begin{equation*} \UU\{\rho\in \PSI: \rho[0,T)=0,\, \phi(\rho[0,T))\not =0\}=0 \qquad \text{and}\qquad \UU\{\rho\in \PSI: \rho[0,T)=1,\, \phi(\rho[0,T))\not =1\}=0. \end{equation*}
Thus
\begin{equation*} dF_{\UU M^{-1}}[0,T) = \int_{\rho\in \PSI}M(\rho)[0,T)\UU(d\rho) = \int_{\rho\in \PSI}\phi(\rho[0,T))\UU(d\rho) = \int_{p\in [0,1]}\phi(p)\tUU(dp)=\alpha'.\end{equation*}
Thus
\begin{equation*} \int_{p\in [0,1]}\hbar(\phi(p),p)\tUU(dp)
\ge \sup_{\psi\in C_b(I)}\lb \int_{t\in I}\psi(t)dF_{\UU M^{-1}}(dt) - \int_{\rho\in \PSI}\lb \ln \int_{t\in I}e^{\psi(t)}\rho(dt)\rb \UU(d\rho)\rb \end{equation*}
and \eqref{E:eqb} holds.
\end{proof}

Let's now turn to showing that the minimization problem \eqref{E:IDef}
is indeed solved by $\fI^*$ as stated in Lemma \ref{L:finalITmin}.  This will be a fairly involved proof.
Again, this is not essential to the paper.  However, it is essential
to understanding that \eqref{E:IIeq} does indeed
give the optimal distribution of rare events leading to loss
in the tranche; i.e., it explicitly solves \eqref{E:IDef}.
We note before starting that for $\beta_1$ and $\beta_2$ in $(0,1)$,
\begin{equation}\label{E:stp}\begin{aligned} \frac{\partial \hbar}{\partial \beta_1}(\beta_1,\beta_2) &= \ln \left(\frac{\beta_1}{1-\beta_1}\frac{1-\beta_2}{\beta_2}\right) = \ln \frac{\frac{1}{\beta_2}-1}{\frac{1}{\beta_1}-1} \\
\frac{\partial^2 \hbar}{\partial \beta_1^2}(\beta_1,\beta_2) &= \frac{1}{\beta_1}+\frac{1}{1-\beta_1}>0 \\
\frac{\partial^2 \hbar}{\partial \beta_1\partial \beta_2}(\beta_1,\beta_2) &= -\frac{1}{\beta_2}-\frac{1}{1-\beta_2}<0. \end{aligned}\end{equation}
Observe that $\frac{\partial \hbar}{\partial \beta_1}$
has singularities at $\beta_1\in \{0,1\}$ and $\beta_2\in \{0,1\}$.

Our first step is to solve \eqref{E:IDef} when the singularities are
more controlled.  Fix now $\tVV\in \PSint$ such that $\supp \tVV\subset (0,1)$.
Fix also $\alpha'\in (0,1)$.  Our goal is Lemma \ref{L:extremalscompactsupport};
to show that $\fI(\alpha',\tVV)=\fI^*(\alpha',\tVV)$.  Along the way, Corollary \ref{C:correctmin} will require approximation of $\alpha'$; let $(\alpha'_\eps)_{\eps>0}$ be in $(0,1)$ such that $\lim_{\eps \to 0}\alpha'_\eps=\alpha'$.
For $\eps\in (0,1)$, define
\begin{align*} \mathcal{F}_\eps&\Def \lb \phi\in B([0,1];[\eps,1-\eps]): \int_{p\in [0,1]}\phi(p)\tVV(dp)=\alpha'_\eps\rb \\
\fI_\eps&\Def \inf\lb \int_{p\in [0,1]}\hbar(\phi(p),p)\tVV(dp): \phi\in \mathcal{F}_\eps\rb. \end{align*}
Let $\beps_1\in (0,1)$ be such that $\eps<\min\{\alpha'_\eps,1-\alpha'_\eps\}$ for all $\eps\in (0,\beps_1)$ (we use here the requirement that $\alpha'\in (0,1)$); then for $\eps\in (0,\beps_1)$, we have
that $\phi\equiv \alpha'_\eps$ is in $\mathcal{F}_\eps$, so $\mathcal{F}_\eps\not = \emptyset$.
Since $\int_{p\in [0,1]}\hbar(\alpha'_\eps,p)\tVV(dp)<\infty$ (the support of $\tVV$ is a compact subset of $(0,1)$, and $\hbar$ is continuous on $[0,1]\times (0,1)$), we also know that $\fI_\eps<\infty$.

Then we have
\begin{lemma}\label{L:regularizedextremal}  Fix $\eps\in (0,\beps_1)$.  The variational problem $\fI_\eps$ has a minimizer $\phi^{(\eps)}$.
\end{lemma}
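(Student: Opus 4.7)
The plan is to apply the direct method of the calculus of variations. Three features combine to make this straightforward: the pointwise bound $\eps \le \phi \le 1-\eps$ confines the admissible class to a weakly compact subset of $L^2([0,1],\tVV)$; the convexity of $\hbar(\cdot,p)$ in its first argument (see \eqref{E:stp}) yields weak lower semicontinuity of the cost; and nonnegativity of $\hbar$ lets Fatou's lemma be applied without integrability concerns. Finiteness of $\fI_\eps$ (evaluate at the constant $\phi\equiv\alpha'_\eps$, which uses $\supp \tVV \subset (0,1)$) both shows $\mathcal{F}_\eps\neq\emptyset$ and gives a minimizing sequence with bounded cost.

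Step 1 (compactness and constraints). Let $\{\phi_n\}_{n\in\N} \subset \mathcal{F}_\eps$ be a minimizing sequence, so $\int_{p\in[0,1]}\hbar(\phi_n(p),p)\tVV(dp) \to \fI_\eps$. Since each $\phi_n$ takes values in $[\eps,1-\eps]$, the sequence is bounded in the Hilbert space $L^2([0,1],\tVV)$; by reflexivity, a subsequence (still denoted $\phi_n$) converges weakly in $L^2$ to some $\phi^*$. The set of $L^2$ functions with $\eps \le f \le 1-\eps$ $\tVV$-a.s.\ is closed and convex, hence weakly closed by Mazur's theorem, so $\phi^*$ satisfies the same bounds $\tVV$-a.s. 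The constraint $\int \phi\,d\tVV = \alpha'_\eps$ is a continuous linear functional on $L^2(\tVV)$ (the constant $1$ lies in $L^2(\tVV)$), so it passes to the weak limit. Modifying $\phi^*$ on a $\tVV$-null Borel set, for instance setting it equal to $\alpha'_\eps$ outside $\supp \tVV$ and wherever the a.s.\ bounds fail, produces a representative in $B([0,1];[\eps,1-\eps])$ without altering any $\tVV$-integral, so $\phi^* \in \mathcal{F}_\eps$.

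Step 2 (weak lower semicontinuity and conclusion). The functional $F(\phi) \Def \int_{p\in[0,1]}\hbar(\phi(p),p)\tVV(dp)$ is convex because $\hbar(\cdot,p)$ is convex for each fixed $p\in(0,1)$. It is strongly lower semicontinuous on the admissible class: if $\phi_n \to \phi$ strongly in $L^2(\tVV)$, pass to a further subsequence converging $\tVV$-a.e., observe that $\hbar(\phi_n(p),p) \to \hbar(\phi(p),p)$ for $\tVV$-a.e.\ $p$ by continuity of $\hbar$ on $[\eps,1-\eps]\times(0,1)$, and apply Fatou together with $\hbar \ge 0$. A convex strongly lsc functional on a Banach space is automatically weakly lsc, so
\[
F(\phi^*) \le \liminf_{n\to\infty} F(\phi_n) = \fI_\eps,
\]
which forces equality, and $\phi^{(\eps)} \Def \phi^*$ attains the infimum. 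The only subtle point in the whole argument is the lsc step, which is handled by the convexity-plus-Fatou combination just described; everything else is a routine application of reflexivity of $L^2$ and weak closedness of the pointwise constraint set.
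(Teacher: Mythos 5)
Your proof is correct. Both you and the paper use the direct method of the calculus of variations in $L^2_{\tVV}[0,1]$: a minimizing sequence is bounded, a weakly convergent subsequence is extracted via reflexivity/Alaoglu, the weak limit $\phi^{(\eps)}$ is shown to lie in $\mathcal{F}_\eps$ (by testing against indicators and the constant $1$), and one then argues that $\phi^{(\eps)}$ attains the infimum. The difference is in how the lower-semicontinuity step is established. You package it abstractly: the cost functional is convex (in $\phi$) and strongly lower semicontinuous (via a.e.\ convergence of a sub-subsequence plus Fatou, using $\hbar\ge 0$ and continuity of $\hbar$ on $[\eps,1-\eps]\times(0,1)$), and then invoke the Mazur-type fact that a convex strongly lsc functional is weakly lsc. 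The paper instead establishes the key inequality by hand with the supporting-hyperplane (tangent) inequality for the convex integrand $\hbar(\cdot,p)$, pairing the weakly convergent increment $\phi^{(\eps)}_{n_k}-\phi^{(\eps)}$ against the gradient $\tfrac{\partial\hbar}{\partial\beta_1}(\phi^{(\eps)}(\cdot),\cdot)$, which is shown to lie in $L^2_{\tVV}$ because $\phi^{(\eps)}\in[\eps,1-\eps]$ and $\supp\tVV$ is a compact subset of $(0,1)$. Both are valid; your route is more modular and sidesteps the need to bound the derivative explicitly, while the paper's argument is more self-contained and makes the role of the compact-support hypothesis in the lsc step more visible.
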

\begin{proof}  Let $(\phi^{(\eps)}_n)_{n\in \N}$ be a sequence in $\mathcal{F}_\eps$ such that $\int_{p\in [0,1]}\hbar(\phi^{(\eps)}_n(p),p)\tVV(dp)<\fI_\eps+1/n$.
Clearly 
\begin{equation*} \int_{p\in [0,1]}|\phi^{(\eps)}_n(p)|^2\tVV(dp)\le 1,\end{equation*}
so $\{\phi^{(\eps)}_n\}_{n\in \N}$ is in the unit ball in $L^2_{\tVV}[0,1]$.  Thanks to
Alaoglu's theorem and the fact that $L^2_{\tVV}[0,1]$ is reflexive, we know that there is a subsequence $(\phi^{(\eps)}_{n_k})_{k\in \N}$ and a $\phi^{(\eps)}\in L^2_{\tVV}[0,1]$ such that $\lim_{k\to \infty}\phi^{(\eps)}_{n_k}=\phi^{(\eps)}$ weakly in $L^2_{\tVV}[0,1]$.  For any $A\in \Borel[0,1]$,
\begin{gather*} \int_{p\in A}\{\phi^{(\eps)}(p)-\eps\} \tVV(dp) = \lim_{k\to \infty}\int_{p\in [0,1]}\chi_A(p)\phi^{(\eps)}_{n_k}(p)\tVV(dp)-\eps\tVV(A) \ge 0 \\
\int_{p\in A}\{1-\eps-\phi^{(\eps)}(p)\} \tVV(dp) = (1-\eps)\tVV(A)-\lim_{k\to \infty}\int_{p\in [0,1]}\chi_A(p)\phi^{(\eps)}_{n_k}(p)\tVV(dp)\ge 0 \end{gather*}
and 
\begin{equation*} \int_{p\in [0,1]}\phi^{(\eps)}(p)\tVV(dp) = \lim_{k\to \infty}\int_{p\in [0,1]}\phi^{(\eps)}_{n_k}(p)\tVV(dp)=\alpha'_\eps. \end{equation*}
Thus $\phi^{(\eps)}\in \mathcal{F}_\eps$.  Clearly
\begin{equation}\label{E:up} \int_{p\in [0,1]}\hbar(\phi^{(\eps)}(p),p)\tVV(dp) \ge \fI_\eps. \end{equation}
Since $\hbar$ is convex in its first argument, we can also see that
\begin{multline*} \fI_\eps + \frac{1}{n_k}\ge \int_{p\in [0,1]}\hbar(\phi^{(\eps)}_{n_k}(p),p)\tVV(dp)\\
=\int_{p\in [0,1]}\hbar(\phi^{(\eps)}(p),p)\tVV(dp) + \int_{p\in [0,1]}\lb \hbar(\phi^{(\eps)}_{n_k}(p),p)-\hbar(\phi^{(\eps)}(p),p)\rb \tVV(dp)\\
\ge \int_{p\in [0,1]}\hbar(\phi^{(\eps)}(p),p)\tVV(dp) + \int_{p\in [0,1]}\frac{\partial \hbar}{\partial \beta_1}(\phi^{(\eps)}(p),p)\lb \phi^{(\eps)}_{n_k}(p)-\phi^{(\eps)}(p)\rb \tVV(dp). \end{multline*}
We next use the facts that $\phi^{(\eps)}$ takes values between $\eps$ and $1-\eps$ and that
\begin{equation*} \sup_{\substack{\eps\le \beta_1\le 1-\eps\\ \beta_2\in \supp \tVV}}\left|\frac{\partial \hbar}{\partial \beta_1}(\beta_1,\beta_2)\right|<\infty \end{equation*}
to ensure that $p\mapsto \frac{\partial \hbar}{\partial \beta_1}(\phi^{(\eps)}(p),p)$ is in $L^2_{\tVV}[0,1]$.  Hence
\begin{equation*} \lim_{k\to \infty}\int_{p\in [0,1]}\frac{\partial \hbar}{\partial \beta_1}(\phi^{(\eps)}(p),p)\lb \phi^{(\eps)}_{n_k}(p)-\phi^{(\eps)}(p)\rb \tVV(dp)=0, \end{equation*}
and so
\begin{equation*} \fI_\eps\ge \int_{p\in [0,1]}\hbar(\phi^{(\eps)}(p),p)\tVV(dp). \end{equation*}
In combination with \eqref{E:up}, this gives us the desired claim.\end{proof}
\noindent Note here that the minimizer $\phi^{(\eps)}$ may not be unique; in particular, we can change $\phi^{(\eps)}$ any way we want outside of the support of $\tVV$ and we will still have a minimizer.

Let's next study $\phi^{(\eps)}$ a bit more.  Define the ($\Borel[0,1]$-measurable) sets
\begin{gather*} A^\eps\Def \{p\in [0,1]: \phi^{(\eps)}(p)=\eps\}, \quad B^\eps\Def \{p\in [0,1]: \phi^{(\eps)}(p)\in (\eps,1-\eps)\}\\
C^\eps\Def \{p\in [0,1]: \phi^{(\eps)}(p)=1-\eps\}.\end{gather*}
For convenience, let's also define 
\begin{equation*} B^\eps_\delta\Def \{p\in [0,1]: \phi^{(\eps)}(p)\in [\delta,1-\delta]\}\end{equation*}
for $\delta>\eps$ and note that $B^\eps_\delta \nearrow B^\eps$ as $\delta \searrow \eps$.
Also note that at the moment, we can't preclude that $\tVV(A_\eps)$,
$\tVV(B_\eps)$, or $\tVV(C_\eps)$ are zero (we will later,
in Lemma \ref{L:ACsmall} show that in fact $\tVV(A_\eps\cup C_\eps)$ is
zero if $\eps$ is small enough).
\begin{lemma}\label{L:Lagrangianbasic}  Fix $\eps\in (0,\beps_1)$.  There is a $\lambda^\eps\in \R$ such that
$\frac{\partial \hbar}{\partial \beta_1}(\phi^{(\eps)}(p),p)=\lambda_\eps$ for $\tVV$-a.e. $p\in B^\eps$.  Thus $\phi^{(\eps)}(p) = \Phi(p,\lambda_\eps)$
for $\tVV$-a.e. $p\in B^\eps$ \textup{(}where $\Phi$ is as in \eqref{E:phidef}\textup{)}.
\end{lemma}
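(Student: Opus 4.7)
The plan is a standard calculus-of-variations argument: perturb $\phi^{(\eps)}$ along mean-zero directions that preserve the constraint $\int \phi\, d\tVV = \alpha'_\eps$ and the box constraint $\phi \in [\eps,1-\eps]$, then use minimality to extract an Euler--Lagrange condition. To have room to perturb, I would first fix $\delta\in(\eps,1/2)$ and restrict attention to $B^\eps_\delta$, where $\phi^{(\eps)}$ is bounded away from both $\eps$ and $1-\eps$. For any measurable sets $E_1,E_2\subset B^\eps_\delta$ with $\tVV(E_i)>0$, define
\begin{equation*}
\phi_s(p) \Def \phi^{(\eps)}(p) + s\left(\frac{\chi_{E_1}(p)}{\tVV(E_1)} - \frac{\chi_{E_2}(p)}{\tVV(E_2)}\right).
\end{equation*}
For $|s|$ sufficiently small (depending on $\delta$ and on $\tVV(E_i)$), $\phi_s$ takes values in $[\eps,1-\eps]$ and $\int \phi_s\, d\tVV=\alpha'_\eps$, so $\phi_s\in\mathcal{F}_\eps$.

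Next I would differentiate $s\mapsto \int_{[0,1]}\hbar(\phi_s(p),p)\tVV(dp)$ at $s=0$. Since $\supp\tVV$ is a compact subset of $(0,1)$ and $\phi^{(\eps)}$ stays in $[\delta,1-\delta]$ on $B^\eps_\delta$, the map $\frac{\partial \hbar}{\partial\beta_1}(\phi^{(\eps)}(p),p)$ is bounded on $B^\eps_\delta$, so dominated convergence justifies passing the derivative inside the integral. Minimality of $\phi^{(\eps)}$ forces this derivative to vanish, giving
\begin{equation*}
\frac{1}{\tVV(E_1)}\int_{E_1}\frac{\partial \hbar}{\partial\beta_1}(\phi^{(\eps)}(p),p)\tVV(dp) = \frac{1}{\tVV(E_2)}\int_{E_2}\frac{\partial \hbar}{\partial\beta_1}(\phi^{(\eps)}(p),p)\tVV(dp).
\end{equation*}
Since $E_1,E_2\subset B^\eps_\delta$ are arbitrary, a routine argument (compare the integrals of $\frac{\partial \hbar}{\partial\beta_1}(\phi^{(\eps)},\cdot)-c$ on super- and sublevel sets relative to its $\tVV$-average $c$ on $B^\eps_\delta$) forces $p\mapsto \frac{\partial\hbar}{\partial\beta_1}(\phi^{(\eps)}(p),p)$ to be $\tVV$-a.e. constant on $B^\eps_\delta$, with constant $c(\delta)$. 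Letting $\delta\searrow\eps$ and using $B^\eps_\delta\nearrow B^\eps$, the values $c(\delta)$ are forced to agree $\tVV$-a.e. on the nested union, producing a single $\lambda_\eps\in\R$ with $\frac{\partial\hbar}{\partial\beta_1}(\phi^{(\eps)}(p),p)=\lambda_\eps$ for $\tVV$-a.e. $p\in B^\eps$.

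To conclude, I would invert the relation using \eqref{E:stp}: solving $\ln\bigl(\tfrac{\beta_1(1-\beta_2)}{\beta_2(1-\beta_1)}\bigr)=\lambda$ for $\beta_1$ yields $\beta_1 = \frac{\beta_2 e^\lambda}{1-\beta_2+\beta_2 e^\lambda} = \Phi(\beta_2,\lambda)$, as defined in \eqref{E:phidef}. Applied pointwise with $\beta_2=p$ and $\lambda=\lambda_\eps$, this gives $\phi^{(\eps)}(p)=\Phi(p,\lambda_\eps)$ for $\tVV$-a.e. $p\in B^\eps$. The main obstacle is the verification that the perturbation $\phi_s$ remains admissible and that the Gateaux derivative can be interchanged with the integral; both are handled by restricting to $B^\eps_\delta$ (where everything is uniformly bounded away from the singularities of $\frac{\partial\hbar}{\partial\beta_1}$) before sending $\delta\searrow\eps$.
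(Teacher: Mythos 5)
Your proposal is correct and follows essentially the same strategy as the paper's proof: restrict perturbations to the inner set $B^\eps_\delta$ so that the integrand's derivative is bounded and the perturbed $\phi$ stays admissible, differentiate the objective at the minimizer to extract a first-order condition, and then send $\delta\searrow\eps$. The only cosmetic difference is in how you phrase the final extraction of the constant: the paper states a Lagrange-multiplier conclusion (there is $\lambda_\eps$ with $\int(\tfrac{\partial\hbar}{\partial\beta_1}(\phi^{(\eps)},\cdot)-\lambda_\eps)\eta\,d\tVV=0$ for all mean-zero $\eta$ supported on $B^\eps$), whereas you perturb by the concrete family $\chi_{E_1}/\tVV(E_1)-\chi_{E_2}/\tVV(E_2)$ and argue constancy of the integrand directly from equality of averages over sublevel/superlevel sets. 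Both arguments are the same in substance; your version is marginally more self-contained since it avoids invoking a generic Lagrange-multiplier lemma. One small omission: you should dispose of the degenerate case $\tVV(B^\eps)=0$ at the outset (as the paper does), since otherwise there are no positive-measure $E_i\subset B^\eps_\delta$ to perturb along.
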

\begin{proof}  The result is of course trivially true if $\tVV(B^\eps)=0$;
we thus assume that $\tVV(B^\eps)>0$.
Define the vector spaces
\begin{align*} V &\Def \lb \eta\in B[0,1]: \eta\big|_{[0,1]\setminus B^\eps}\equiv 0\rb \\
V_\delta &\Def \lb \eta\in B[0,1]: \eta\big|_{[0,1]\setminus B^\eps_\delta}\equiv 0\rb. \qquad \delta>\eps \end{align*}
Fix $\delta>\eps$.  Fix $\eta\in V_\delta$ such that
\begin{equation}\label{E:etazero}  \int_{p\in [0,1]}\eta(p)\tVV(dp)=0. \end{equation}
If $\nu$ is small enough, $\phi^{(\eps)}+\nu \eta\in \mathcal{F}_\eps$, so
\begin{equation*}\int_{p\in [0,1]}\hbar(\phi^{(\eps)}(p)+\nu \eta(p),p)\tVV(dp)\ge \int_{p\in [0,1]}\hbar(\phi^{(\eps)}(p),p)\tVV(dp). \end{equation*}
Thus
\begin{equation}\label{E:GatDer} \int_{p\in [0,1]}\frac{\partial \hbar}{\partial \beta_1}(\phi^{(\eps)}(p),p)\eta(p)\tVV(dp)=0. \end{equation}

We next want to extend ths result to $V$.  We first note that by continuity 
and the positivity assumption,
$\lim_{\delta \searrow \eps}\tVV(B^\eps_\delta)=\tVV(B^\eps)>0$.  Thus there is a $\bar \delta>\eps$ such that $\tVV(B^\eps_\delta)>0$ if $\delta\in (\eps,\bar \delta)$.
Fix now $\eta\in V$ such that \eqref{E:etazero} holds.  For $\delta\in (\eps,\bar \delta)$, define
\begin{align*} c_\delta &\Def \frac{1}{\tVV(B^\eps_\delta)}\int_{p\in B^\eps_\delta}\eta(p)\tVV(dp) \\
\eta_\delta &\Def (\eta-c_\delta)\chi_{B^\eps_\delta}. \end{align*}
Then $\eta_\delta\in V_\delta$ and
\begin{equation*} \int_{p\in [0,1]}\eta_\delta(p)\tVV(dp)
= \int_{p\in B^\eps_\delta}\eta(p)\tVV(dp)- c_\delta \tVV(B^\eps_\delta)=0. \end{equation*}
Hence
\begin{equation*} \int_{p\in [0,1]}\frac{\partial \hbar}{\partial \beta_1}(\phi^{(\eps)}(p),p)\eta_\delta(p)\tVV(dp)=0. \end{equation*}
Note that $\|\eta_\delta\|_{B[0,1]} \le 2\|\eta\|_{B[0,1]}$ and that
$\lim_{\delta \searrow \eps}\eta_\delta = \eta$ $\tVV$-a.s.  Thus by dominated
convergence, \eqref{E:GatDer} holds.  In fact, we have now proved that
\eqref{E:GatDer} holds for all $\eta\in V$ such that \eqref{E:etazero} holds.

We finish the proof by arguments standard from the theory of Lagrange multipliers.
We see that there is a $\lambda_\eps\in \R$ such that
\begin{equation*} \int_{p\in [0,1]}\lb \frac{\partial \hbar}{\partial \beta_1}(\phi^{(\eps)}(p),p)-\lambda^\eps\rb \eta(p)\tVV(dp)=0 \end{equation*}
for all $\eta\in V$.  From this an explicit computation completes the proof.
\end{proof}

Let's now understand what happens at points where $\phi^{(\eps)}$ is either
$\eps$ or $1-\eps$.
For convenience, define
\begin{align*} c_+ &\Def 0\vee \sup\lb \lambda_\eps: \tVV(B_\eps)>0=\tVV(A_\eps), \eps\in (0,\beps_1)\rb\\
c_- &\Def 0\wedge \inf\lb \lambda_\eps: \tVV(B_\eps)>0=\tVV(C_\eps), \eps\in (0,\beps_1)\rb. \end{align*}
\begin{lemma}   We have that $c_->-\infty$ and $c_+<\infty$.\end{lemma}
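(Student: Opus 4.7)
The plan is to prove $c_+<\infty$ by contradiction; the argument for $c_->-\infty$ is entirely symmetric, with the roles of $A$ and $C$, and of $p_0\Def\inf\supp\tVV$ and $p_1\Def\sup\supp\tVV$, swapped. Since $\supp\tVV$ is a compact subset of $(0,1)$ by assumption, $p_0>0$ and $p_1<1$. Suppose for contradiction that $c_+=\infty$; then there exists a sequence $\eps_n\in(0,\beps_1)$ with $\tVV(A_{\eps_n})=0$, $\tVV(B_{\eps_n})>0$, and $\lambda_{\eps_n}\to\infty$. After passing to a subsequence we may assume $\eps_n\to\eps^*\in[0,\beps_1]$.

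The crux is a simple pointwise lower bound on $\phi^{(\eps_n)}$. By Lemma \ref{L:Lagrangianbasic}, $\phi^{(\eps_n)}(p)=\Phi(p,\lambda_{\eps_n})$ for $\tVV$-a.e.~$p\in B_{\eps_n}$; on $C_{\eps_n}$ we have trivially $\phi^{(\eps_n)}\equiv 1-\eps_n$; and $\tVV(A_{\eps_n})=0$ ensures the remainder is $\tVV$-null. Combining these with the monotonicity of $\Phi$ in its first argument (Remark \ref{R:Phiprops}) and the fact that $p\ge p_0$ for $\tVV$-a.e.~$p$, I obtain
\begin{equation*}
\phi^{(\eps_n)}(p)\ge \min\lb \Phi(p_0,\lambda_{\eps_n}),\,1-\eps_n\rb \qquad \tVV\text{-a.e.,}
\end{equation*}
and integrating against $\tVV$ yields $\alpha'_{\eps_n}\ge \min\lb \Phi(p_0,\lambda_{\eps_n}),\,1-\eps_n\rb$.

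Now split on $\eps^*$. If $\eps^*>0$, then eventually $\eps_n\ge\eps^*/2$, while $\Phi(p_0,\lambda_{\eps_n})\to 1$ since $\lambda_{\eps_n}\to\infty$; hence $\Phi(p_0,\lambda_{\eps_n})>1-\eps_n$ for $n$ large, the minimum equals $1-\eps_n$, and $\alpha'_{\eps_n}\ge 1-\eps_n$, contradicting the defining inequality $\alpha'_{\eps_n}<1-\eps_n$ of $\beps_1$. If $\eps^*=0$, then both $\Phi(p_0,\lambda_{\eps_n})\to 1$ and $1-\eps_n\to 1$, so $\alpha'_{\eps_n}\to 1$, contradicting $\alpha'_{\eps_n}\to\alpha'<1$. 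Either way a contradiction is obtained, so $c_+<\infty$. The dual estimate $\phi^{(\eps_n)}(p)\le \max\lb \Phi(p_1,\lambda_{\eps_n}),\,\eps_n\rb$ (from $\tVV(C_{\eps_n})=0$, monotonicity of $\Phi$, and $p\le p_1$) runs the analogous argument for $c_->-\infty$, contradicting either $\alpha'_{\eps_n}>\eps_n$ (when $\eps^*>0$) or $\alpha'_{\eps_n}\to\alpha'>0$ (when $\eps^*=0$). The main obstacle is locating the pointwise estimate that isolates the extremal point $p_0$ (respectively $p_1$), and then recognizing that the two genuinely different inputs---the strict inclusion $\alpha'_{\eps_n}\in(\eps_n,1-\eps_n)$ built into $\beps_1$ in the regime $\eps^*>0$, and the limit $\alpha'_{\eps_n}\to\alpha'\in(0,1)$ in the regime $\eps^*=0$---are exactly what rule out blowup of $\lambda_{\eps_n}$ for every possible asymptotic behavior of the approximating sequence.
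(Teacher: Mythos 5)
Your proof is correct, and it takes a genuinely different route from the paper's. The paper keeps the constraint in integral form, applies dominated convergence to conclude $\int\Phi(\cdot,\lambda_{\eps_n})\,d\tVV\to\tVV\{1\}=0$ (and dually $\int(1-\Phi)\,d\tVV\to\tVV\{0\}=0$), and compares this with the bound $\varliminf_n\{\alpha'_{\eps_n}-\eps_n\}\ge\inf_{\eps\in(0,\beps_1)}\{\alpha'_\eps-\eps\}>0$. You instead produce a pointwise estimate: monotonicity of $\Phi$ in its first argument together with $\supp\tVV\subset[p_0,p_1]\subset(0,1)$ yields $\phi^{(\eps_n)}\ge\min\{\Phi(p_0,\lambda_{\eps_n}),1-\eps_n\}$ $\tVV$-a.e.\ (and the dual bound for $c_-$), so that only the elementary scalar limit $\Phi(p_0,\lambda)\to 1$ as $\lambda\to\infty$ is needed, with no appeal to dominated convergence. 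Your case split on $\eps^*=\lim\eps_n$ is actually a small improvement: it extracts the contradiction from exactly the two facts the paper guarantees about $(\alpha'_\eps)$ --- the strict chain $\eps<\alpha'_\eps<1-\eps$ for $\eps\in(0,\beps_1)$, and the limit $\alpha'_\eps\to\alpha'\in(0,1)$ --- whereas the paper's asserted positivity of $\inf_{\eps\in(0,\beps_1)}\{\alpha'_\eps-\eps\}$ is not automatic for an arbitrary sequence satisfying only those hypotheses (the infimum can degenerate as $\eps\uparrow\beps_1$). The one thing your route leans on harder is compactness of the support, via $p_0>0$ and $p_1<1$, where the paper uses only $\tVV\{0\}=\tVV\{1\}=0$; under the standing hypothesis $\supp\tVV\subset(0,1)$ these coincide, so nothing is lost.
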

\begin{proof}  We use an argument by contradiction to show that $c_->-\infty$.
Assume that there is a sequence $(\eps_n)_{n\in \N}$ in $(0,\beps_1)$ such that
$\tVV(B_{\eps_n})>0 = \tVV\left(C_{\eps_n}\right)$ for all $n\in \N$ and
such that $\lim_{n\to \infty}\lambda_{\eps_n}=-\infty$.
For all $n\in \N$,
\begin{equation*} \alpha'_{\eps_n} = \eps_n\tVV\left(A_{\eps_n}\right) + \int_{p\in B_{\eps_n}}\Phi\left(p,\lambda_{\eps_n}\right)\tVV(dp)
\le \eps_n + \int_{p\in (0,1)}\Phi\left(p,\lambda_{\eps_n}\right)\tVV(dp) \end{equation*}
and so
\begin{equation*} \varliminf_{n\to \infty}\int_{p\in [0,1]}\Phi(p,\lambda_{\eps_n})\tVV(dp) \ge \varliminf_{n\to \infty}\{\alpha_{\eps_n}-\eps_n\} \ge \inf_{\eps\in (0,\beps_1)}\{\alpha_\eps-\eps\}>0. \end{equation*}
Since $\lim_{n\to \infty}\lambda_n=-\infty$, dominated convergence implies that
\begin{equation*} \lim_{n\to \infty}\int_{p\in [0,1]}\Phi(p,\lambda_{\eps_n})\tVV(dp) =\tVV\{1\}=0, \end{equation*}
which is a contradiction.  Thus $c_->-\infty$.

Similarly, to show that $c_+<\infty$, assume that there is a sequence $(\eps_n)_{n\in \N}\in (0,\beps_1)$ such that
$\tVV(B_{\eps_n})>0 = \tVV\left(A_{\eps_n}\right)$ for all $n\in \N$ and
such that $\lim_{n\to \infty}\lambda_{\eps_n}=\infty$.
Then $\tVV\left(C_{\eps_n}\right) = 1-\tVV\left(B_{\eps_n}\right)$, so for all $n\in \N$
\begin{multline*} \alpha'_{\eps_n} = (1-\eps_n)\tVV\left(C_{\eps_n}\right) + \int_{p\in B_{\eps_n}}\Phi\left(p,\lambda_{\eps_n}\right)\tVV(dp) \\
= 1-\eps_n\tVV\left(C_{\eps_n}\right) -\int_{p\in B_{\eps_n}}\lb 1-\Phi\left(p,\lambda_{\eps_n}\right)\rb \tVV(dp) \\
\ge 1-\eps_n - \int_{p\in (0,1)}\lb 1-\Phi\left(p,\lambda_{\eps_n}\right)\rb \tVV(dp) \end{multline*}
and so
\begin{equation*} \varliminf_{n\to \infty}\int_{p\in [0,1]}\lb 1-\Phi(p,\lambda_{\eps_n})\rb\tVV(dp) \ge \varliminf_{n\to \infty}\{1-\eps_n-\alpha_{\eps_n}\} \ge \inf_{\eps\in (0,\beps_1)}\{1-\alpha_\eps-\eps\}>0. \end{equation*}
Since here $\lim_{n\to \infty}\lambda_n=\infty$, we now have that
\begin{equation*} \lim_{n\to \infty}\int_{p\in [0,1]}\lb 1-\Phi(p,\lambda_{\eps_n})\rb\tVV(dp) =\tVV\{0\}=0. \end{equation*}
Again we have a contradiction, implying that indeed $c_+<\infty$.
\end{proof}

We next disallow some degeneracies.
\begin{lemma}\label{L:nosmallsets} There is an $\beps_2 \in (0,\beps_1)$ such that $\tVV(A_\eps \cup B_\eps)>0$ and $\tVV(B_\eps \cup C_\eps)>0$ if $\eps\in (0,\beps_2)$.\end{lemma}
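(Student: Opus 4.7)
The plan is to argue by contradiction, exploiting the constraint $\int_{[0,1]}\phi^{(\eps)}(p)\,\tVV(dp)=\alpha'_\eps$ together with the fact that $\alpha'_\eps \to \alpha' \in (0,1)$.

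First, I would observe that $A_\eps$, $B_\eps$, $C_\eps$ partition $[0,1]$ (by the definition of these sets in terms of where $\phi^{(\eps)}$ lands in $\{\eps\}$, $(\eps,1-\eps)$, or $\{1-\eps\}$), so $\tVV(A_\eps) + \tVV(B_\eps) + \tVV(C_\eps) = 1$. Suppose, for contradiction, that there is a sequence $\eps_n \searrow 0$ with $\tVV(A_{\eps_n} \cup B_{\eps_n}) = 0$. Then $\tVV(C_{\eps_n}) = 1$, so $\phi^{(\eps_n)}(p) = 1-\eps_n$ for $\tVV$-a.e.\ $p$. Integrating against $\tVV$ gives
\begin{equation*}
\alpha'_{\eps_n} = \int_{p\in[0,1]}\phi^{(\eps_n)}(p)\,\tVV(dp) = 1-\eps_n.
\end{equation*}
Letting $n\to\infty$ yields $\alpha' = 1$, contradicting $\alpha' \in (0,1)$.

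Symmetrically, if there is a sequence $\eps_n \searrow 0$ with $\tVV(B_{\eps_n} \cup C_{\eps_n}) = 0$, then $\tVV(A_{\eps_n}) = 1$, so $\phi^{(\eps_n)} \equiv \eps_n$ $\tVV$-a.s., whence $\alpha'_{\eps_n} = \eps_n \to 0$, again contradicting $\alpha' \in (0,1)$. Combining these two contradictions, we may choose $\beps_2 \in (0,\beps_1)$ so small that $\tVV(A_\eps\cup B_\eps)>0$ and $\tVV(B_\eps\cup C_\eps)>0$ simultaneously for all $\eps\in(0,\beps_2)$.

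There is no real obstacle here; the only care required is the bookkeeping that $\phi^{(\eps)}$ is defined only up to $\tVV$-null sets, but that is irrelevant since the entire argument is phrased in terms of $\tVV$-measures of the level sets. One could alternatively quantify $\beps_2$ explicitly: any $\beps_2 \in (0,\beps_1)$ with $\sup_{\eps<\beps_2}\eps < \inf_{\eps<\beps_2}\alpha'_\eps$ and $1-\sup_{\eps<\beps_2}\eps > \sup_{\eps<\beps_2}\alpha'_\eps$ will work, which is possible since $\alpha'_\eps \to \alpha' \in (0,1)$.
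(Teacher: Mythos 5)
Your proof is correct and rests on the same essential ingredient as the paper's: the constraint $\int_{[0,1]}\phi^{(\eps)}\,\tVV(dp)=\alpha'_\eps$ forces $\alpha'_\eps$ to lie close to $1$ (resp.\ $0$) whenever $C_\eps$ (resp.\ $A_\eps$) carries full $\tVV$-measure, which clashes with $\alpha'\in(0,1)$. The paper's version is a direct quantitative bound $\tVV(B_\eps\cup C_\eps)\ge\alpha'_\eps-\eps$ and $\tVV(A_\eps\cup B_\eps)\ge 1-\alpha'_\eps/(1-\eps)$ (so one may even take $\beps_2=\beps_1$ outright), whereas yours is a soft argument by contradiction along a sequence $\eps_n\searrow 0$; the quantitative remark at the end of your proposal closes this small gap and brings the two proofs into essential agreement.
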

\begin{proof} We start with the fact that
\begin{equation*} \alpha'_\eps = \int_{p\in [0,1]}\phi^{(\eps)}(p)\tVV(dp) = \eps \tVV(A_\eps) + (1-\eps) \tVV(C_\eps) + \int_{p\in B_\eps}\Phi(p,\lambda_\eps)\tVV(dp). \end{equation*}
Since $0\le \Phi\le 1$, we have 
\begin{align*} \alpha'_\eps&\le \eps + \tVV(B_\eps\cup C_\eps)\\
\alpha_\eps' &\ge (1-\eps)\tVV(C_\eps) = (1-\eps)\left(1-\tVV(A_\eps\cup B_\eps)\right). \end{align*}
Thus for $\eps\in (0,\beps_1)$,
\begin{equation*} \tVV(B_\eps \cup C_\eps)\ge \alpha'_\eps-\eps \qquad \text{and}\qquad \tVV(A_\eps \cup B_\eps) \ge 1-\frac{\alpha'_\eps}{1-\eps}, \end{equation*}
which gives us what we want.\end{proof}

Let now $\beps_3\in (0,\beps_2)$ be such that $\supp \tVV\subset [\eps,1-\eps]$
for all $\eps\in (0,\beps_3)$.
\begin{lemma} For $\eps\in (0,\beps_3)$, we have that
$\tfrac{\partial \hbar}{\partial \beta_1}(\eps,p)\ge c_-$ for $\tVV$-a.e. $p\in A_\eps$ and $\tfrac{\partial \hbar}{\partial \beta_1}(1-\eps,p)\le c_+$ for $\tVV$-a.e. $p\in C_\eps$. \end{lemma}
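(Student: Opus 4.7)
The plan is to use a Lagrange-multiplier-style variational argument; the two claims are symmetric, so I will focus on the first. The case $\tVV(A_\eps)=0$ is vacuous, so assume $\tVV(A_\eps)>0$. By Lemma \ref{L:nosmallsets}, at least one of $\tVV(B_\eps)$ and $\tVV(C_\eps)$ is positive, so we may test $\phi^{(\eps)}$ against admissible variations that nudge mass from $A_\eps$ upward.

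Concretely, fix an arbitrary measurable $A'\subset A_\eps$ with $\tVV(A')>0$ and a measurable $E\subset \{p:\phi^{(\eps)}(p)>\eps\}$ with $\tVV(E)>0$, and define, for small $\nu>0$,
\[ \phi_\nu \Def \phi^{(\eps)} + \nu\chi_{A'} - \frac{\nu\tVV(A')}{\tVV(E)}\chi_E. \]
Then $\phi_\nu\in \mathcal{F}_\eps$, and minimality of $\phi^{(\eps)}$ (with a dominated-convergence argument using that $\partial\hbar/\partial\beta_1$ is bounded on $[\eps,1-\eps]\times\supp\tVV$) forces the one-sided first-order condition
\[ \int_{A'}\frac{\partial\hbar}{\partial\beta_1}(\eps,p)\,\tVV(dp) \;\ge\; \frac{\tVV(A')}{\tVV(E)}\int_{E}\frac{\partial\hbar}{\partial\beta_1}\bigl(\phi^{(\eps)}(p),p\bigr)\,\tVV(dp). \]
Since $A'$ is arbitrary, this gives a pointwise $\tVV$-a.e. lower bound on $\partial\hbar/\partial\beta_1(\eps,\cdot)$ over $A_\eps$ in terms of an averaged pressure on $E$.

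Next I would split by cases on the structure of $\phi^{(\eps)}$. \textbf{Case 1}: $\tVV(B_\eps)>0$. Pick $E\subset B_\eps$; by Lemma \ref{L:Lagrangianbasic} the integrand on $E$ equals the Lagrange multiplier $\lambda_\eps$, so $\partial\hbar/\partial\beta_1(\eps,p)\ge \lambda_\eps$ for $\tVV$-a.e.\ $p\in A_\eps$. If also $\tVV(C_\eps)=0$, then the pair $(\eps,\lambda_\eps)$ lies in the index set defining $c_-$, hence $\lambda_\eps\ge c_-$ and the claim is proved. \textbf{Case 2}: $\tVV(C_\eps)>0$. Here I would combine the preceding bound (when $\tVV(B_\eps)>0$) with the complementary variation $B_\eps\to C_\eps$, which yields $\lambda_\eps\ge \operatorname{ess\,sup}_{C_\eps}\partial\hbar/\partial\beta_1(1-\eps,\cdot)$, and exploit the explicit monotonicity of $\partial\hbar/\partial\beta_1$ in \eqref{E:stp} together with the uniform separation $\supp\tVV\subset[\beps_3,1-\beps_3]$ (which holds for $\eps<\beps_3$) to lower bound all such boundary derivatives by a constant $\ge c_-$; when $\tVV(B_\eps)=0$, directly take $E\subset C_\eps$ and use the same monotonicity to obtain the bound.

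The main obstacle is Case 2, since when $\tVV(C_\eps)>0$ the pair $(\eps,\lambda_\eps)$ is not one of those infimized in the definition of $c_-$, so one cannot simply read off $\lambda_\eps\ge c_-$. The delicate point is to show that the boundary values $\partial\hbar/\partial\beta_1(1-\eps,p)$ for $p\in C_\eps\cap\supp\tVV$ do not drift below $c_-$; this is where the uniform support condition $\supp\tVV\subset[\beps_3,1-\beps_3]$ and the explicit formula $\partial\hbar/\partial\beta_1(1-\eps,p)=\ln\bigl(\tfrac{1-\eps}{\eps}\tfrac{1-p}{p}\bigr)$, which is large and positive when $\eps$ is small and $p$ is bounded away from $1$, do the work, and patching the resulting bounds together with the inequality $\partial\hbar/\partial\beta_1(\eps,p)\ge \lambda_\eps$ yields the uniform lower bound $c_-$ required by the lemma.
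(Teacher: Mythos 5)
Your proposal is correct and takes essentially the same variational route as the paper: derive one-sided first-order (Lagrange-type) inequalities from the minimality of $\phi^{(\eps)}$, then combine them with the sign facts $\tfrac{\partial\hbar}{\partial\beta_1}(\eps,p)\le 0\le \tfrac{\partial\hbar}{\partial\beta_1}(1-\eps,p)$ for $p\in\supp\tVV$ when $\eps<\beps_3$. The only cosmetic difference is in the subcase $\tVV(B_\eps)>0$, $\tVV(C_\eps)>0$: the paper tests the $A_\eps\leftrightarrow C_\eps$ variation directly (its second inequality in \eqref{E:vh}) to get $\tfrac{\partial\hbar}{\partial\beta_1}(\eps,\cdot)\ge 0\ge c_-$ on $A_\eps$ without invoking $\lambda_\eps$, whereas you chain $A_\eps\to B_\eps$ and $B_\eps\to C_\eps$ to first pin down $\lambda_\eps\ge 0\ge c_-$ and then use $\tfrac{\partial\hbar}{\partial\beta_1}(\eps,\cdot)\ge\lambda_\eps$; both are valid and both rely on the same support condition and on \eqref{E:stp}. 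You also correctly identify the only delicate point (that $(\eps,\lambda_\eps)$ need not lie in the index set defining $c_-$ once $\tVV(C_\eps)>0$) and resolve it. One small technical point you gloss over but the paper handles via $B^\eps_\delta$ and $\delta\searrow\eps$: when $E\subset B_\eps$, one must restrict to a subset where $\phi^{(\eps)}$ is bounded away from $\eps$ so that the perturbation $\phi_\nu$ stays admissible, then pass to the limit.
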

\begin{proof} Again fix $\delta>\eps$.  Fix also sets $A$, $B$, and $C$ in $\Borel[0,1]$ such that $A\subset A^\eps$, $B\subset B^\eps_\delta$, and $C\subset C^\eps$.  Set
\begin{equation*} \eta_1 \Def \tVV(B)\chi_A - \tVV(A)\chi_B,\quad \eta_2 \Def \tVV(C)\chi_A - \tVV(A)\chi_C \quad \text{and}\quad \eta_3 = \tVV(C)\chi_B - \tVV(B)\chi_C. \end{equation*}
Then for $\nu_1$, $\nu_2$, and $\nu_3$ positive and sufficiently small, $\phi^{(\eps)}+\nu_1\eta_1+\nu_2\eta_2+\nu_3\eta_3\in \mathcal{F}_\eps$, so
\begin{equation*} \int_{p\in [0,1]}\hbar(\phi^{(\eps)}(p)+\nu_1 \eta_1(p)+\nu_2\eta_2(p)+\nu_3\eta_3(p),p)\tVV(dp)\ge \int_{p\in [0,1]}\hbar(\phi^{(\eps)}(p),p)\tVV(dp). \end{equation*}
Differentiating with respect to $\nu_1$, $\nu_2$ and $\nu_3$, we conclude that
\begin{gather*} \int_{p\in [0,1]}\frac{\partial \hbar}{\partial \beta_1}(\phi^{(\eps)}(p),p)\eta_1(p)\tVV(dp)\ge 0, \quad
\int_{p\in [0,1]}\frac{\partial \hbar}{\partial \beta_1}(\phi^{(\eps)}(p),p)\eta_2(p)\tVV(dp)\ge 0\\
\int_{p\in [0,1]}\frac{\partial \hbar}{\partial \beta_1}(\phi^{(\eps)}(p),p)\eta_3(p)\tVV(dp)\ge 0.\end{gather*}
In other words,
\begin{equation} \label{E:vh} \begin{gathered} \tVV(B)\int_{p\in A}\frac{\partial \hbar}{\partial \beta_1}(\eps,p)\tVV(dp) \ge \tVV(A)\int_{p\in B}\frac{\partial \hbar}{\partial \beta_1}(\phi^{(\eps)}(p),p)\tVV(dp)=\tVV(A)\tVV(B)\lambda_\eps \\
\tVV(C)\int_{p\in A}\frac{\partial \hbar}{\partial \beta_1}(\eps,p)\tVV(dp) \ge \tVV(A)\int_{p\in C}\frac{\partial \hbar}{\partial \beta_1}(1-\eps,p)\tVV(dp)\\
\tVV(C)\tVV(B)\lambda_\eps = \tVV(C)\int_{p\in B}\frac{\partial \hbar}{\partial \beta_1}(\phi^{(\eps)}(p),p)\tVV(dp) \ge \tVV(B)\int_{p\in C}\frac{\partial \hbar}{\partial \beta_1}(1-\eps,p)\tVV(dp).\end{gathered}\end{equation}
Letting $\delta\searrow \eps$, we see that these inequalities hold for any sets
$A$, $B$, and $C$ in $\Borel[0,1]$ such that $A\subset A^\eps$, $B\subset B^\eps$,
and $C\subset C^\eps$.

From the third equation of \eqref{E:stp}, we see that $\tfrac{\partial \hbar}{\partial \beta_1}$ is decreasing in its second argument.    Thus for $p\in \supp \tVV$, we have that
\begin{equation*} \frac{\partial \hbar}{\partial \beta_1}(\eps,p) \le \frac{\partial \hbar}{\partial \beta_1}(\eps,\eps)=0 \qquad \text{and}\qquad 
\frac{\partial \hbar}{\partial \beta_1}(1-\eps,p) \ge \frac{\partial \hbar}{\partial \beta_1}(1-\eps,1-\eps)=0 \end{equation*}
if $\eps\in (0,\beps_3)$.

Fix now $\eps\in (0,\beps_3)$.  Assume that $\tVV(A_\eps)>0$.  By Lemma \ref{L:nosmallsets}, we have that either $\tVV(B_\eps)>0=\tVV(C_\eps)$, or $\tVV(C_\eps)>0$.
In the first case, we get from the first equation of \eqref{E:vh} that $\tfrac{\partial \hbar}{\partial \beta_1}(\eps,p)\ge \lambda_\eps\ge c_-$, and in the second case
we get from the second equation of \eqref{E:vh} that $\tfrac{\partial \hbar}{\partial \beta_1}(\eps,p)\ge 0\ge c_-$.  Similarly, we can next assume that 
$\tVV(C_\eps)>0$.  By Lemma \ref{L:nosmallsets}, we have that either $\tVV(B_\eps)>0=\tVV(A_\eps)$, or $\tVV(A_\eps)>0$.
In the first case, we get from the last equation of \eqref{E:vh} that $\tfrac{\partial \hbar}{\partial \beta_1}(1-\eps,p)\le \lambda_\eps\le c_+$, and in the second case
we get from the second equation of \eqref{E:vh} that $\tfrac{\partial \hbar}{\partial \beta_1}(1-\eps,p)\le 0\le c_+$.\end{proof}

Finally, we have
\begin{lemma}\label{L:ACsmall}  There is an $\beps_4\in (0,\beps_3)$ such that $\tVV(B_\eps)=1$ for all $\eps\in (0,\beps_3)$.\end{lemma}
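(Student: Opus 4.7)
The plan is to show that for $\eps$ sufficiently small, the boundary sets $A_\eps$ and $C_\eps$ carry no $\tVV$-mass, by exploiting the previous lemma's Lagrangian inequalities together with the boundedness of $\supp\tVV$ away from $\{0,1\}$.

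First, since $\supp\tVV$ is a compact subset of $(0,1)$, I would pick $\eta\in(0,1/2)$ such that $\supp\tVV\subset[\eta,1-\eta]$. Because $\tVV(A_\eps)=\tVV(A_\eps\cap\supp\tVV)$, I may restrict attention to $p\in[\eta,1-\eta]$. Using the explicit formula in \eqref{E:stp},
\begin{equation*}
\frac{\partial\hbar}{\partial\beta_1}(\eps,p)=\ln\!\left(\frac{\eps}{1-\eps}\frac{1-p}{p}\right),
\end{equation*}
and since $p\mapsto\tfrac{1-p}{p}$ is decreasing, for every $p\in[\eta,1-\eta]$,
\begin{equation*}
\frac{\partial\hbar}{\partial\beta_1}(\eps,p)\le \ln\!\left(\frac{\eps}{1-\eps}\frac{1-\eta}{\eta}\right),
\end{equation*}
which tends to $-\infty$ as $\eps\searrow 0$. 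Choose $\beps_4'\in(0,\beps_3)$ so that the right-hand side is strictly less than $c_-$ for all $\eps\in(0,\beps_4')$. By the preceding lemma, for any such $\eps$, the set where $\tfrac{\partial\hbar}{\partial\beta_1}(\eps,\cdot)\ge c_-$ has full $\tVV$-measure in $A_\eps$; but this inequality fails on all of $\supp\tVV$, so $\tVV(A_\eps)=0$.

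A symmetric argument handles $C_\eps$: compute
\begin{equation*}
\frac{\partial\hbar}{\partial\beta_1}(1-\eps,p)=\ln\!\left(\frac{1-\eps}{\eps}\frac{1-p}{p}\right)\ge \ln\!\left(\frac{1-\eps}{\eps}\frac{\eta}{1-\eta}\right)\to+\infty
\end{equation*}
as $\eps\searrow 0$, uniformly in $p\in[\eta,1-\eta]$. Choose $\beps_4\in(0,\beps_4')$ so that this lower bound exceeds $c_+$ for all $\eps\in(0,\beps_4)$. Then the previous lemma's bound $\tfrac{\partial\hbar}{\partial\beta_1}(1-\eps,p)\le c_+$ for $\tVV$-a.e.\ $p\in C_\eps$ is violated on all of $\supp\tVV$, forcing $\tVV(C_\eps)=0$.

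Finally, since $\phi^{(\eps)}$ takes values in $[\eps,1-\eps]$, the sets $A_\eps$, $B_\eps$, and $C_\eps$ partition $[0,1]$, so
\begin{equation*}
\tVV(B_\eps)=1-\tVV(A_\eps)-\tVV(C_\eps)=1
\end{equation*}
for all $\eps\in(0,\beps_4)$. The only mild subtlety is bookkeeping with the ``$\tVV$-a.e.'' clause in the preceding lemma, but since $\tVV$ assigns no mass outside $\supp\tVV\subset[\eta,1-\eta]$, pointwise violation of the derivative bound on $[\eta,1-\eta]$ suffices to conclude that the relevant boundary set has $\tVV$-measure zero. No serious obstacle is anticipated; the crux is just matching the asymptotics of the logarithmic derivative against the finite constants $c_\pm$ secured in the previous lemma.
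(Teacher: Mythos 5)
Your proposal is correct and takes essentially the same approach as the paper: both exploit that $\supp\tVV$ is a compact subset of $(0,1)$ to show that the Lagrangian inequalities $\tfrac{\partial\hbar}{\partial\beta_1}(\eps,p)\ge c_-$ on $A_\eps$ and $\tfrac{\partial\hbar}{\partial\beta_1}(1-\eps,p)\le c_+$ on $C_\eps$ eventually fail uniformly on $\supp\tVV$ as $\eps\searrow 0$. The paper presents the contrapositive, bounding $p$ away from the support; you bound the derivative directly, which is logically equivalent (and, as a side note, your conclusion ``for all $\eps\in(0,\beps_4)$'' corrects what appears to be a typo in the paper's statement of the lemma, which writes $\eps\in(0,\beps_3)$).
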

\begin{proof} Fix $\eps\in (0,\beps_3)$ such that $\eps<1/2$.  

Some straightforward calculations show that if $\tfrac{\partial \hbar}{\partial \beta_1}(\eps,p)\ge c_-$, then
\begin{equation*} p\le \frac{\eps}{\eps+e^{c_-}(1-\eps)} \le 2\eps e^{c_-}; \end{equation*}
thus
\begin{equation*} \tVV(A_\eps)= \tVV\left(A_\eps \cap \left[0,2\eps e^{c_-}\right]\right) \le \tVV\left[0,2\eps e^{c_-}\right]. \end{equation*}

Similarly, if $\tfrac{\partial \hbar}{\partial \beta_1}(1-\eps,p)\le c_+$, then
\begin{equation*} p\ge 1-\frac{\eps e^{c_+}}{1+\eps\left(e^{c_+}-1\right)} \ge 1- \eps e^{c_+}; \end{equation*}
hence
\begin{equation*} \tVV(C_\eps)= \tVV\left(C_\eps \cap \left[1-\eps e^{c_+},1\right]\right) \le \tVV\left[1-\eps e^{c_+},1\right].\end{equation*}

Since $\supp \tVV$ is a compact subset of $(0,1)$, the claim now follows.
\end{proof}

Thus
\begin{corollary}\label{C:correctmin} For $\eps\in (0,\beps_4)$, we have that
$\lambda_\eps = \Lambda(\alpha'_\eps,\tVV)$ and $\fI_\eps = \fI^*(\alpha'_\eps,\tVV)$.\end{corollary}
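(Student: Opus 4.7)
The plan is to combine the three preceding lemmas into a short piecing-together argument; all the heavy lifting has already been done. First I would note that since $\supp\tVV\subset (0,1)$ we have $\tVV\{0\}=\tVV\{1\}=0$, so for any $\alpha'_\eps\in (0,1)$ we have $\tVV\in \calG^\strict_{\alpha'_\eps}$, and hence $\Lambda(\alpha'_\eps,\tVV)$ is well-defined and lies in $\R$ by Lemma \ref{L:LambdaCont}.

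Next, for $\eps\in (0,\beps_4)$, Lemma \ref{L:ACsmall} tells us that $\tVV(B_\eps)=1$, so that $\tVV(A_\eps)=\tVV(C_\eps)=0$. Combining this with Lemma \ref{L:Lagrangianbasic}, the identity $\phi^{(\eps)}(p)=\Phi(p,\lambda_\eps)$ holds for $\tVV$-a.e.\ $p\in [0,1]$ (not merely $\tVV$-a.e.\ $p\in B_\eps$). Because $\phi^{(\eps)}\in \mathcal{F}_\eps$, integrating against $\tVV$ then yields
\begin{equation*}
\int_{p\in[0,1]}\Phi(p,\lambda_\eps)\,\tVV(dp)\;=\;\int_{p\in[0,1]}\phi^{(\eps)}(p)\,\tVV(dp)\;=\;\alpha'_\eps.
\end{equation*}
The uniqueness statement in Lemma \ref{L:LambdaCont} characterizes $\Lambda(\alpha'_\eps,\tVV)$ as the only element of $[-\infty,\infty]$ with this property, so $\lambda_\eps=\Lambda(\alpha'_\eps,\tVV)$.

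Substituting this identification back into the functional gives
\begin{equation*}
\fI_\eps \;=\; \int_{p\in[0,1]}\hbar\bigl(\phi^{(\eps)}(p),p\bigr)\,\tVV(dp)
\;=\; \int_{p\in[0,1]}\hbar\bigl(\Phi(p,\Lambda(\alpha'_\eps,\tVV)),p\bigr)\,\tVV(dp)
\;=\; \fI^*(\alpha'_\eps,\tVV),
\end{equation*}
where in the middle equality I use that the integrand only depends on the values of $\phi^{(\eps)}$ on a set of full $\tVV$-measure. This gives both claims of the corollary.

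The proof is essentially bookkeeping; there is no real obstacle, since the analytic work (existence of $\phi^{(\eps)}$, the Euler--Lagrange characterization on $B_\eps$, and the decay estimates forcing $\tVV(A_\eps)=\tVV(C_\eps)=0$) has been carried out in Lemmas \ref{L:regularizedextremal}, \ref{L:Lagrangianbasic}, and \ref{L:ACsmall}. The only mild subtlety is to verify that $\tVV\in\calG^\strict_{\alpha'_\eps}$ so that Lemma \ref{L:LambdaCont} applies and delivers the uniqueness needed to identify $\lambda_\eps$ with $\Lambda(\alpha'_\eps,\tVV)$.
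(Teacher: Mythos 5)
Your proof is correct and follows essentially the same chain of reasoning as the paper: use Lemma \ref{L:ACsmall} to see that $\tVV$ is concentrated on $B_\eps$, use Lemma \ref{L:Lagrangianbasic} to identify $\phi^{(\eps)}$ with $\Phi(\cdot,\lambda_\eps)$ there, integrate to match $\alpha'_\eps$, and invoke the uniqueness part of Lemma \ref{L:LambdaCont}. The only difference is presentational: you spell out explicitly why $\tVV\in\calG^\strict_{\alpha'_\eps}$, a point the paper's proof leaves implicit.
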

\begin{proof} Fix $\eps\in (0,\beps_4)$.  We have that
\begin{equation*} \alpha'_\eps=\int_{p\in [0,1]}\phi^{(\eps)}(p)\tVV(dp)
=\int_{p\in B_\eps}\phi^{(\eps)}(p)\tVV(dp)=\int_{p\in B_\eps}\Phi(p,\lambda^\eps)\tVV(dp)
=\int_{p\in [0,1]}\Phi(p,\lambda^\eps)\tVV(dp). \end{equation*}
By the uniqueness claim of Lemma \ref{L:LambdaCont}, we thus have that
$\lambda_\eps = \Lambda(\alpha_\eps',\tVV)$.
Similarly,
\begin{equation*} \fI_\eps = \int_{p\in [0,1]}\hbar(\phi^{(\eps)}(p),p)\tVV(dp)
=\int_{p\in B_\eps}\hbar(\phi^{(\eps)}(p),p)\tVV(dp)
=\int_{p\in B_\eps}\hbar(\Phi(p,\lambda_\eps),p)\tVV(dp)
=\fI^*(\alpha_\eps',\tVV). \end{equation*}
This implies the claimed statement. \end{proof}

We finally can show that $\fI(\alpha',\tVV)=\fI^*(\alpha',\tVV)$ agree
(under our current assumption that $\supp \tVV\subset (0,1)$).
In light of Corollary \ref{C:correctmin}, this is informally tantamount to showing
that $\lim_{\eps \to 0}\fI_\eps = \fI(\alpha',\tVV)$. 
Here we also use the ability to approximate $\alpha'$.
\begin{lemma}\label{L:extremalscompactsupport} We have that $\fI(\alpha',\tVV)=\fI^*(\alpha',\tVV)$ \textup{(}under the current assumption that $\supp \tVV\subset (0,1)$\textup{)}.\end{lemma}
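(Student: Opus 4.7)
The plan is to prove the two inequalities separately. For the easy direction $\fI(\alpha',\tVV) \le \fI^*(\alpha',\tVV)$, I note that since $\supp \tVV \subset (0,1)$ we have $\tVV\{0\} = \tVV\{1\} = 0$, so $\tVV \in \calG^\strict_{\alpha'}$ and Lemma \ref{L:LambdaCont} yields a finite $\Lambda(\alpha',\tVV) \in \R$. The test function $\phi(p) \Def \Phi(p,\Lambda(\alpha',\tVV))$ lies in $B([0,1];[0,1])$ by Remark \ref{R:Phiprops} and satisfies $\int \phi\, d\tVV = \alpha'$ by \eqref{E:equality}, so substituting it into \eqref{E:IDef} gives the desired upper bound.

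For the reverse direction, I will reduce to the regularized problem via Corollary \ref{C:correctmin}. Fix $\eta > 0$ so that $\supp \tVV \subset [\eta, 1-\eta]$, and fix any admissible $\phi \in B([0,1];[0,1])$ with $\int \phi\, d\tVV = \alpha'$; we may assume $\int \hbar(\phi(p),p)\tVV(dp) < \infty$. For small $\eps > 0$ define the truncation $\phi_\eps \Def (\phi \vee \eps) \wedge (1-\eps)$ and set $\alpha'_\eps \Def \int \phi_\eps\, d\tVV$. Then $\phi_\eps \in \mathcal{F}_\eps$ with this $\alpha'_\eps$, and by dominated convergence $\alpha'_\eps \to \alpha'$. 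The key pointwise observation is that $\hbar(\phi_\eps(p), p) \le \hbar(\phi(p), p)$ for every $p \in \supp \tVV$: equality holds when $\phi(p) \in [\eps, 1-\eps]$, and when $\phi(p) < \eps$ (resp.\ $\phi(p) > 1-\eps$) the convexity of $\hbar(\cdot,p)$ with unique minimum at $p \in [\eta,1-\eta]$ makes $\hbar(\cdot,p)$ monotone decreasing on $[0, p] \supset [0, \eps]$ (resp.\ increasing on $[p, 1] \supset [1-\eps, 1]$), so moving $\phi(p)$ closer to $p$ can only decrease $\hbar$.

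Integrating this inequality and invoking Corollary \ref{C:correctmin} for $\eps$ small enough yields
\begin{equation*}
\fI^*(\alpha'_\eps, \tVV) \;=\; \fI_\eps \;\le\; \int_{p \in [0,1]} \hbar(\phi_\eps(p), p)\, \tVV(dp) \;\le\; \int_{p \in [0,1]} \hbar(\phi(p), p)\, \tVV(dp).
\end{equation*}
Since $\tVV \in \calG^\strict_{\alpha'}$ and $\calS^\strict$ is open by Lemma \ref{L:Sopen}, we have $(\alpha'_\eps, \tVV) \in \calS^\strict$ for small $\eps$, and Lemma \ref{L:fICont} gives $\lim_{\eps \to 0}\fI^*(\alpha'_\eps,\tVV) = \fI^*(\alpha',\tVV)$. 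Passing to the limit and taking the infimum over admissible $\phi$ produces $\fI^*(\alpha',\tVV) \le \fI(\alpha',\tVV)$, completing the proof. The main subtle point is that Corollary \ref{C:correctmin} is applied with a sequence $(\alpha'_\eps)$ constructed from $\phi$ rather than fixed in advance; this is legitimate because the proofs of Lemmas \ref{L:regularizedextremal}--\ref{L:ACsmall} only require that $\alpha'_\eps \in (0,1)$ with $\eps$ below the thresholds $\beps_1,\dots,\beps_4$, and our $\alpha'_\eps$ (being close to $\alpha' \in (0,1)$) satisfies this for all sufficiently small $\eps$.
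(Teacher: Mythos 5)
Your proof is correct and follows the same broad route as the paper (truncate the test function into $[\eps,1-\eps]$, reduce to the regularized problem via Corollary~\ref{C:correctmin}, then pass $\eps\to 0$ using the continuity in Lemma~\ref{L:fICont}). The one genuine difference is your pointwise domination $\hbar(\phi_\eps(p),p)\le \hbar(\phi(p),p)$ for $p\in\supp\tVV$: since $\supp\tVV\subset[\eta,1-\eta]$ and $\eps<\eta$, the truncation moves $\phi(p)$ monotonically towards the minimizer $p$ of the convex map $\hbar(\cdot,p)$. This lets you start from an arbitrary admissible $\phi$ and take the infimum at the end. The paper instead fixes a $\delta$-near-minimizer $\phi$, invokes dominated convergence (using $\sup_{\beta_1\in[0,1],\,\beta_2\in\supp\tVV}\hbar(\beta_1,\beta_2)<\infty$) to get $\int\hbar(\phi_\eps(p),p)\tVV(dp)\to\int\hbar(\phi(p),p)\tVV(dp)$, and then lets $\delta\to 0$ at the end. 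Your monotonicity observation buys a small simplification: it replaces the $\delta$-bookkeeping and dominated convergence with a one-line comparison. (The only thing to state explicitly is that the pointwise inequality requires $\eps<\eta$; this is implicit in your ``small $\eps$'' but worth flagging.) Your closing remark about the dependence of the thresholds $\beps_1,\dots,\beps_4$ on the sequence $(\alpha'_\eps)$ constructed from $\phi$ is accurate and matches what the paper does via its display \eqref{E:approximatalpha}.
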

\begin{proof}   Clearly $\fI(\alpha',\tVV)\le \fI^*(\alpha',\tVV)$.  Fix next $\delta>0$
and fix $\phi\in \Hom[0,1]$ such that
\begin{equation*} \int_{p\in [0,1]}\phi(p)\tVV(dp)=\alpha' \qquad \text{and}\qquad \int_{p\in [0,1]}\hbar(\phi(p),p)\tVV(dp)<\fI(\alpha',\tVV)+\delta. \end{equation*}
For each $\eps\in (0,1)$, define
\begin{equation}\label{E:approximatalpha}\begin{aligned} \phi_\eps(p) &\Def \begin{cases} \phi(p) &\text{if $\eps<\phi(p)<1-\eps$} \\
\eps &\text{if $\phi(p)\le \eps$} \\
1-\eps &\text{if $\phi(p)\ge 1-\eps$} \end{cases}\\
\alpha'_\eps &\Def \int_{p\in [0,1]}\phi_\eps(p)\tVV(dp). \end{aligned}\end{equation}
Note that $\sup_{\substack{0\le \beta_1\le 1 \\ \beta_2\in \supp \tVV}}\hbar(\beta_1,\beta_2)<\infty$.
Thus, by dominated convergence
\begin{equation*} \lim_{\eps \to 0}\int_{p\in [0,1]}\hbar(\phi_\eps(p),p)\tVV(dp)=\int_{p\in [0,1]}\hbar(\phi(p),p)\tVV(dp) \qquad \text{and}\qquad 
\lim_{\eps \to 0}\alpha'_\eps = \alpha'. \end{equation*}
By the first of these equalities, we see that there is an $\beps_\delta\in (0,\beps_4)$ such that
\begin{equation*} \int_{p\in [0,1]}\hbar(\phi_\eps(p),p)\tVV(dp) < \fI(\alpha',\tVV)+2\delta \end{equation*}
for all $\eps\in (0,\beps_\delta)$.  Thus for $\eps\in (0,\beps_\delta)$,
\begin{equation*} \fI(\alpha',\tVV) + 2\delta \ge \int_{p\in [0,1]}\hbar(\phi_\eps(p),p)\tVV(dp) \ge \fI_\eps = \fI^*(\alpha'_\eps,\tVV). \end{equation*}
We have of course used here Corollary \ref{C:correctmin} to get the last equality,
and we use \eqref{E:approximatalpha} to define the approximation sequence
for $\alpha'$.
Take now $\eps\to 0$ and use the continuity result of Lemma \ref{L:fICont} (note that $(\alpha',\tVV)$ and the $(\alpha'_\eps,\tVV)$'s are all in $\calS^\strict$).  Then let $\delta\to 0$ and conclude that $\fI(\alpha',\tVV)\ge \fI^*(\alpha',\tVV)$.
\end{proof}
\noindent Summarizing thus far our work since \eqref{E:stp}, we now know that $\fI(\alpha',\tVV)=\fI^*(\alpha',\tVV)$ if $\supp \tVV\subset (0,1)$.

We now want to relax the restriction that $\supp \tVV\subset (0,1)$.
\begin{lemma}\label{L:extremalsnoboundary} We have that $\fI(\alpha',\tVV)=\fI^*(\alpha',\tVV)$ for all $\alpha'\in (0,1)$ and $\tVV\in \PSint$ such that $\tVV(0,1)=1$.\end{lemma}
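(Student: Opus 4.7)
The plan is to reduce to Lemma \ref{L:extremalscompactsupport} by approximating $\tVV$ from within by measures supported in compact subintervals of $(0,1)$, and then passing to the limit using the continuity provided by Lemma \ref{L:fICont}. Since $\tVV(0,1)=1$ forces $\tVV\{0\}=\tVV\{1\}=0$, we immediately have $(\alpha',\tVV)\in\calS^\strict$ and $\Lambda(\alpha',\tVV)\in\R$ via Lemma \ref{L:LambdaCont}, so the minimizer candidate $\phi(p)=\Phi(p,\Lambda(\alpha',\tVV))$ from the compactly supported case remains well-defined here.

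The inequality $\fI(\alpha',\tVV)\le\fI^*(\alpha',\tVV)$ is obtained by substituting this $\phi$ into \eqref{E:IDef}: by Remark \ref{R:Phiprops} it lies in $B([0,1];[0,1])$, and by the defining equation \eqref{E:equality} it has mean $\alpha'$ under $\tVV$, so it is admissible and realizes precisely the value $\fI^*(\alpha',\tVV)$.

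For the reverse inequality, fix an arbitrary $\phi\in B([0,1];[0,1])$ with $\int\phi\,d\tVV=\alpha'$ and $\int\hbar(\phi,p)\tVV(dp)<\infty$ (otherwise there is nothing to prove). For $\eps\in(0,1/2)$ small enough that $\tVV[\eps,1-\eps]>0$, define the conditioned measure and its adjusted target
\[
\tVV_\eps(A)\Def\frac{\tVV(A\cap[\eps,1-\eps])}{\tVV[\eps,1-\eps]},\qquad \alpha'_\eps\Def\int_{[0,1]}\phi(p)\,\tVV_\eps(dp).
\]
Because $\tVV(\{0,1\})=0$, we have $\tVV[\eps,1-\eps]\to 1$, $\tVV_\eps\to\tVV$ weakly, and $\alpha'_\eps\to\alpha'$. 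For $\eps$ small enough, $\alpha'_\eps\in(0,1)$, and since $\supp\tVV_\eps\subset[\eps,1-\eps]\subset(0,1)$, Lemma \ref{L:extremalscompactsupport} applies to give $\fI(\alpha'_\eps,\tVV_\eps)=\fI^*(\alpha'_\eps,\tVV_\eps)$. As $\phi$ is itself admissible for $\fI(\alpha'_\eps,\tVV_\eps)$ and $\hbar\ge 0$,
\[
\int_{[0,1]}\hbar(\phi,p)\tVV(dp)\ge\int_{[\eps,1-\eps]}\hbar(\phi,p)\tVV(dp)=\tVV[\eps,1-\eps]\int_{[0,1]}\hbar(\phi,p)\tVV_\eps(dp)\ge\tVV[\eps,1-\eps]\,\fI^*(\alpha'_\eps,\tVV_\eps).
\]
Letting $\eps\searrow 0$ and invoking the continuity of $\fI^*$ on $\calS^\strict$ from Lemma \ref{L:fICont} gives $\int\hbar(\phi,p)\tVV(dp)\ge\fI^*(\alpha',\tVV)$; infimizing over admissible $\phi$ then yields $\fI(\alpha',\tVV)\ge\fI^*(\alpha',\tVV)$.

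The main obstacle is the continuity step: one must verify that $(\alpha'_\eps,\tVV_\eps)\in\calS^\strict$ (automatic from $\tVV_\eps\{0,1\}=0$) and that $(\alpha'_\eps,\tVV_\eps)\to(\alpha',\tVV)$ in the product topology of $(0,1)\times\PSint$. Both reduce to the absence of atoms of $\tVV$ at the endpoints, which is precisely the hypothesis $\tVV(0,1)=1$; this is the critical feature distinguishing the current lemma from Lemma \ref{L:extremalscompactsupport}.
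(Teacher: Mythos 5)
Your proof is correct and follows essentially the same route as the paper: truncate $\tVV$ to $[\eps,1-\eps]$, apply Lemma \ref{L:extremalscompactsupport} to the conditioned measure with the shifted target $\alpha'_\eps$, use $\hbar\ge 0$ to dominate the truncated integral by the original, and pass to the limit via Lemma \ref{L:fICont}. The only cosmetic difference is that you infimize over admissible $\phi$ at the end, whereas the paper fixes a $\delta$-near-optimal $\phi$ at the outset and lets $\delta\searrow 0$ at the end; these are equivalent.
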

\begin{proof} Again, we clearly have that $\fI(\alpha',\tVV)\le \fI^*(\alpha',\tVV)$.
To show the other direction, we must approximate.
As in the proof of Lemma \ref{L:extremalscompactsupport}, fix $\delta>0$
and $\phi\in \Hom[0,1]$ such that
\begin{equation*} \int_{p\in [0,1]}\phi(p)\tVV(dp)=\alpha' \qquad \text{and}\qquad \int_{p\in [0,1]}\hbar(\phi(p),p)\tVV(dp)<\fI(\alpha',\tVV)+\delta. \end{equation*}
Since $\tVV(0,1)>0$, there is a $\bar \vkap\in (0,1)$ such that $\tVV[\vkap,1-\vkap]>0$ for $\vkap\in (0,\bar \vkap)$.  For $\vkap\in (0,\bar \vkap)$, define
\begin{equation*} \tVV_\vkap(A) \Def \frac{\tVV(A\cap [\vkap,1-\vkap])}{\tVV[\vkap,1-\vkap]}. \qquad A\in \Borel[0,1] \end{equation*}
For $\vkap\in (0,\bar \vkap)$, define
\begin{equation*} \alpha'_\vkap \Def \int_{p\in [0,1]}\phi(p)\tVV_\vkap(dp) = \frac{\int_{p\in (0,1)}\phi(p)\chi_{[\vkap,1-\vkap]}(p)\tVV(dp)}{\tVV[\vkap,1-\vkap]}. \end{equation*}
Then $\lim_{\vkap \to 0}\alpha'_\vkap = \alpha'$.  Since $\hbar\ge 0$, we have that
\begin{equation*} \fI(\alpha',\tVV)+\delta \ge \int_{p\in [0,1]}\hbar(\phi(p),p)\tVV_\vkap(dp)\tVV[\vkap,1-\vkap] \ge \fI(\alpha_\vkap,\tVV_\vkap)\tVV[\vkap,1-\vkap]
=\fI^*(\alpha_\vkap,\tVV_\vkap)\tVV[\vkap,1-\vkap] \end{equation*}
for all $\vkap\in (0,\bar \vkap)$.
Take now $\vkap\to 0$ and use the continuity result of Lemma \ref{L:fICont}.
Note that $\tVV_\vkap \to \tVV$ in the topology of $\PSint$; as in the proof of
Lemma \ref{L:extremalscompactsupport}, $(\alpha',\tVV)$ and the $(\alpha'_\vkap,\tVV_\vkap)$'s are also all in $\calS^\strict$. We have that $\fI(\alpha',\tVV)+\delta\ge \fI^*(\alpha',\tVV)$.  Then let $\delta \to 0$.
\end{proof}

Thirdly, we want to allow $\tVV$ to assign nonzero measure to $\{0,1\}$.
Before proceeding with this calculation, let's next simplify \eqref{E:IDef} a bit.
Namely, we remove from the admissible set of $\phi\in \Hom[0,1]$
those for which $\int_{p\in [0,1]}\hbar(\phi(p),p)\tVV(dp)$ is obviously infinite.
Recall \eqref{E:hbardegen}.  Thus if $\tVV\{0\}>0$,
we can restrict the admissible $\phi\in \Hom([0,1])$ to those with $\phi(0)=0$,
and for such $\phi$, we have that
\begin{equation*} \int_{p\in \{0\}}\hbar(\phi(p),p)\tVV(dp)=0 \qquad \text{and}\qquad \int_{p\in \{0\}}\phi(p)\tVV(dp)=0.\end{equation*}
Note that both of these equations also of course hold if $\tVV\{0\}=0$.
Similarly, if $\tVV\{1\}>0$,
we can restrict the admissible $\phi\in \Hom([0,1])$ to those with $\phi(1)=1$,
and for such $\phi$, we have that
\begin{equation*} \int_{p\in \{1\}}\hbar(\phi(p),p)\tVV(dp)=0 \qquad \text{and}\qquad \int_{p\in \{1\}}\phi(p)\tVV(dp)=\tVV\{1\}.\end{equation*}
Again, both of these equations also hold if $\tVV\{1\}=0$.  Combining our thoughts,
we have that
\begin{equation}\label{E:altIDef} \fI(\alpha,\tVV)=\inf\lb \int_{p\in(0,1)}\hbar(\phi(p),p)\tVV(dp): \phi\in \Hom([0,1]), \int_{p\in(0,1)}\phi(p)\tVV(dp)=\alpha'-\tVV\{1\}\rb \end{equation}
\begin{lemma}\label{L:edgesupport} We have that $\fI(\alpha',\tVV)=\fI^*(\alpha',\tVV)$ for all $\alpha'\in (0,1)$ and $\tVV\in \calG_\alpha'$.
\end{lemma}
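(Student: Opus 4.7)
The plan is to reduce to Lemma \ref{L:extremalsnoboundary} by peeling off the atoms of $\tVV$ at $0$ and $1$ and renormalizing onto $(0,1)$. Set $a \Def \tVV(0,1)$. I would first observe that $a > 0$ for every $\tVV \in \calG_{\alpha'}$: otherwise $\tVV$ is supported on $\{0,1\}$, and the two inequalities $\tVV\{1\} \le \alpha' \le 1-\tVV\{0\}$ pin $\tVV\{1\} = \alpha'$ and $\tVV\{0\} = 1-\alpha'$, contradicting $\tVV \ne \mu^\dagger_{\alpha'}$.

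Consider first the interior case $\tVV \in \calG^\strict_{\alpha'}$. Define the probability measure $\tVV_*(A) \Def a^{-1}\tVV(A \cap (0,1))$ on $\Borel[0,1]$ and the adjusted attachment point $\tilde \alpha \Def (\alpha'-\tVV\{1\})/a$; the strict inequalities defining $\calG^\strict_{\alpha'}$ place $\tilde \alpha$ strictly in $(0,1)$, and $\tVV_*(0,1) = 1$ by construction. Since $\hbar(0,0)=\hbar(1,1)=0$, the simplified formulation \eqref{E:altIDef} shows that both the objective and the constraint defining $\fI(\alpha',\tVV)$ depend only on $\tVV$ restricted to $(0,1)$, so a direct change of variables produces $\fI(\alpha',\tVV) = a\,\fI(\tilde \alpha,\tVV_*)$, and Lemma \ref{L:extremalsnoboundary} then yields $\fI(\alpha',\tVV) = a\,\fI^*(\tilde \alpha,\tVV_*)$. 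To identify $a\,\fI^*(\tilde \alpha,\tVV_*)$ with $\fI^*(\alpha',\tVV)$, I would invoke uniqueness in Lemma \ref{L:LambdaCont}: because $\Phi(0,\cdot)\equiv 0$ and $\Phi(1,\cdot)\equiv 1$, the defining equation $\int \Phi(p,\lambda)\tVV_*(dp) = \tilde \alpha$ rewrites as $\int \Phi(p,\lambda)\tVV(dp) = \alpha'$, giving $\Lambda(\tilde \alpha,\tVV_*) = \Lambda(\alpha',\tVV)$. The vanishing of $\hbar$ at $(0,0)$ and $(1,1)$ then kills the contributions of the atoms of $\tVV$ to $\fI^*(\alpha',\tVV)$, and the desired identity drops out.

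For the boundary pieces of $\calG_{\alpha'}\setminus \calG^\strict_{\alpha'}$ the variational problem degenerates and can be solved by hand. If $\tVV\{1\} = \alpha'$ (so $\Lambda(\alpha',\tVV) = -\infty$), the constraint in \eqref{E:altIDef} reads $\int_{(0,1)}\phi(p)\tVV(dp) = 0$ and forces every admissible $\phi$ to vanish $\tVV$-a.e.\ on $(0,1)$; evaluating at $\phi\equiv 0$ there gives $\fI(\alpha',\tVV) = \int_{(0,1)}\ln\tfrac{1}{1-p}\tVV(dp)$. On the other side, $\Phi(p,-\infty) = \chi_{\{1\}}(p)$ yields exactly the same integral for $\fI^*(\alpha',\tVV)$, so the two agree. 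The symmetric case $1-\tVV\{0\} = \alpha'$ is handled identically, with $\phi \equiv 1$ on $(0,1)$ and $\Lambda(\alpha',\tVV) = \infty$.

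I expect the only real obstacle to be careful bookkeeping in the interior case: confirming that $(\tilde \alpha,\tVV_*)$ actually lies in the domain $\calG^\strict_{\tilde \alpha}$ of Lemma \ref{L:extremalsnoboundary}, and correctly tracking how the atoms of $\tVV$ at $\{0,1\}$ are absorbed by the constraint adjustment $\alpha'\leadsto \alpha'-\tVV\{1\}$ in \eqref{E:altIDef}. Neither requires new ideas beyond those already in the appendix.
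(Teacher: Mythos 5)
Your proposal is correct and follows essentially the same route as the paper's own proof: peel off the atoms at $0$ and $1$, renormalize $\tVV$ onto $(0,1)$ (your $\tVV_*$ is the paper's $\tVV_\circ$), pass the rescaled pair $(\tilde\alpha,\tVV_*)$ through Lemma~\ref{L:extremalsnoboundary} via the simplified formulation~\eqref{E:altIDef}, match the Lagrange multipliers using $\Phi(0,\cdot)\equiv 0$ and $\Phi(1,\cdot)\equiv 1$, and then treat the two boundary subcases of $\calG_{\alpha'}\setminus\calG^\strict_{\alpha'}$ by hand with $\Lambda=\mp\infty$. The only cosmetic difference is that you make the observation $\tVV(0,1)>0$ explicit at the outset, which the paper uses implicitly.
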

\begin{proof} Assume first that $\tVV\in \calG^\strict_{\alpha'}$.  Then $\tVV(0,1) = 1-\tVV\{0\}-\tVV\{1\}>0$, and we define $\tVV_\circ\in \PSint$ as
\begin{equation*} \tVV_\circ(A) \Def \frac{\tVV(A\cap (0,1))}{\tVV(0,1)}. \qquad A\in \Borel[0,1] \end{equation*}
From \eqref{E:altIDef} and
Lemma \ref{L:extremalsnoboundary}, we now have that
\begin{multline*} \fI(\alpha,\tVV)=\inf\lb \int_{p\in[0,1]}\hbar(\phi(p),p)\tVV_\circ(dp)\tVV(0,1): \phi\in \Hom([0,1]), \int_{p\in[0,1]}\phi(p)\tVV_\circ(dp)=\frac{\alpha'-\tVV\{1\}}{\tVV(0,1)}\rb \\
= \fI\left(\frac{\alpha'-\tVV\{1\}}{\tVV(0,1)},\tVV_\circ\right) \tVV(0,1) = \fI^*\left(\frac{\alpha'-\tVV\{1\}}{\tVV(0,1)},\tVV_\circ\right) \tVV(0,1); \end{multline*}
we used here the fact that since $\tVV\in \calG^\strict_{\alpha'}$, 
\begin{equation*} 0<\frac{\alpha'-\tVV\{1\}}{\tVV(0,1)}< \frac{1-\tVV\{0\}-\tVV\{1\}}{\tVV(0,1)}=1. \end{equation*}
Note that
\begin{equation*} \int_{p\in [0,1]}\Phi\left(p,\Lambda\left(\frac{\alpha'-\tVV\{1\}}{\tVV(0,1)},\tVV_\circ\right)\right)\tVV(dp)
=\int_{p\in [0,1]}\Phi\left(p,\Lambda\left(\frac{\alpha'-\tVV\{1\}}{\tVV(0,1)},\tVV_\circ\right)\right)\tVV_\circ(dp)\tVV(0,1) + \tVV\{1\}=\alpha' \end{equation*}
so in fact $\Lambda\left(\frac{\alpha'-\tVV\{1\}}{\tVV(0,1)},\tVV_\circ\right)=\Lambda(\alpha',\tVV)$.  Thus
\begin{equation*} \fI^*\left(\frac{\alpha'-\tVV\{1\}}{\tVV(0,1)},\tVV_\circ\right) \tVV(0,1) = \int_{p\in (0,1)}H\left(p,\Lambda(\alpha',\tVV)\right)\tVV_\circ(dp)\tVV(0,1) = \fI^*(\alpha',\tVV). \end{equation*}
This proves the result when $\tVV\in \calG^\strict_{\alpha'}$.

Assume next that $\tVV\{1\}=\alpha'<1-\tVV\{0\}$.  Then
\begin{multline*} \fI(\alpha',\tVV) = \inf\lb \int_{p\in(0,1)}\hbar(\phi(p),p)\tVV(dp): \phi\in \Hom([0,1]), \int_{p\in(0,1)}\phi(p)\tVV(dp)=0\rb \\
= \int_{p\in (0,1)}\hbar(0,p)\tVV(dp) = \int_{p\in [0,1]}\hbar(\Phi(p,-\infty),p)\tVV(dp). \end{multline*}
Note that here $\Lambda(\alpha',\tVV)=-\infty$.
On the other hand, if $\tVV\{1\}<\alpha'=1-\tVV\{0\}$, then $\alpha'-\tVV\{1\}=\tVV(0,1)$, so 
\begin{multline*} \fI(\alpha',\tVV) = \inf\lb \int_{p\in(0,1)}\hbar(\phi(p),p)\tVV(dp): \phi\in \Hom([0,1]), \int_{p\in(0,1)}\phi(p)\tVV(dp)=\tVV(0,1)\rb \\
= \int_{p\in (0,1)}\hbar(1,p)\tVV(dp) = \int_{p\in [0,1]}\hbar(\Phi(p,\infty),p)\tVV(dp). \end{multline*}
Here $\Lambda(\alpha',\tVV)=\infty$.
\end{proof}

By putting things together, we can prove all of our extremal results.

\begin{proof}[Proof of Lemma \ref{L:finalITmin}] 
The existence and uniqueness of $\Lambda$ is given in Lemma \ref{L:LambdaCont}.
Lemma \ref{L:edgesupport} proves \eqref{E:IIeq} when $\tVV\in \calG_{\alpha'}$.
If $\tVV=\mu^\dagger_{\alpha'}$, then we note that
\begin{equation*} \int_{p\in [0,1]}p\mu^\dagger_{\alpha'}(dp)= \alpha' \end{equation*}
so
\begin{equation*} 0\le \fI(\alpha',\tVV)\le \int_{p\in [0,1]}\hbar(p,p)\mu^\dagger_{\alpha'}(dp) = 0. \end{equation*}

Next, let's look more closely at \eqref{E:altIDef}. If $\phi\in \Hom[0,1]$ is such that
\begin{equation*}\int_{p\in (0,1)}\phi(p)\tVV(dp) = \alpha'-\tVV\{1\}, \end{equation*}
then
\begin{equation*} 0\le \alpha'-\tVV\{1\} \le \tVV(0,1)= 1-\tVV\{0\}-\tVV\{1\}. \end{equation*}
Thus $\alpha'\ge \tVV\{1\}$ and $1-\tVV\{0\}\ge \alpha'$, so in fact
$\tVV\in \calG_{\alpha'}\cup \{\mu^\dagger_{\alpha'}\}$.  In other words, if $\tVV$
is not in $\calG_{\alpha'}\cup \{\mu^\dagger_{\alpha'}\}$, then
the admissible set of $\phi$'s in \eqref{E:altIDef} is empty, implying that $\fI(\alpha,\tVV)=\infty$.

The continuity of $\Lambda$ and $\fI$ follows directly from Lemmas \ref{L:LambdaCont} and \ref{L:fICont}.
\end{proof}

\section{Appendix C: Some Approximation and Measurability Results}\label{S:Proofs}

We here prove some of the really technical measurability results which we have used.  This is essentially for the sake of completeness.
We start with an obvious comment.
\begin{remark}\label{R:continuity} If $\phi\in C_b(I)$, then the map
\begin{equation*} \bI_\varphi(\rho)\Def \int_{t\in I}\phi(t)\rho(dt) \qquad \rho\in \PSI\end{equation*}
is in $C_b(\PSI)$.
In fact, this defines the topology of $\PSI$.\end{remark}

For future reference, let's next define
\begin{align*}\psi_{t,m}^+(s) &\Def \begin{cases} 1 &\text{if $s\le t$} \\
1-m(s-t) &\text{if $t<s< t+\frac{1}{m}$} \\
0 &\text{if $s\ge t+\frac1{m}$} \end{cases} \\
\psi_{t,m}^-(s) &\Def \begin{cases} 1 &\text{if $s\le t-\frac{1}{m}$} \\
1-m\left(s-t+\frac{1}{m}\right) &\text{if $t-\frac{1}{m}<s<t$} \\
0 &\text{if $s\ge t$} \end{cases} \end{align*}
for all $s>0$ and $m\in \N$.
Then $\{\psi_{t,m}^+\}_{m\in \N}$ and $\{\psi_{t,m}^-\}_{m\in \N}$ are in $C_b(I)$,
and
\begin{equation*} \psi_{t,m}^-\le \chi_{[0,t)}\le \chi_{[0,t]}\le \psi_{t,m}^+ \end{equation*}
and pointwise on $I$ we have (as $m\to \infty$)
$\psi_{t,m}^- \nearrow \chi_{[0,t)}$ and $\psi_{t,m}^+\searrow \chi_{[0,t]}$.  The
value of these approximations, at least in the context of Section \ref{S:LimitExists} is that convergence in the topology of $\Pspace(\PSI)$ directly allows
us to pass to the limit only when
integrating against an element of $C_b(\PSI)$ (e.g. $\bI_\varphi$ of Remark \ref{R:continuity}).  To justify passing to the
limit when integrating against an element of $B(\PSI)$, we must approximate.

The following measurability result which will frequently be used.
\begin{lemma}\label{L:meas} For any $t\in I$, the maps
$\rho\mapsto \rho[0,t)$ and $\rho\mapsto \rho[0,t]$ are in $B(\PSI)$.
\end{lemma}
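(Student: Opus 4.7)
The plan is to approximate the indicator functions $\chi_{[0,t)}$ and $\chi_{[0,t]}$ by the continuous bump functions $\psi_{t,m}^\pm$ introduced in the paragraph preceding the lemma. Indeed, by Remark~\ref{R:continuity}, for any $\varphi \in C_b(I)$ the map $\bI_\varphi : \rho \mapsto \int_{t \in I} \varphi(t)\,\rho(dt)$ lies in $C_b(\PSI)$, and in particular is Borel measurable on $\PSI$.

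First I would handle $t \in (0,\infty)$. The functions $\psi_{t,m}^\pm$, once extended continuously to the endpoints of $I = [0,\infty]$ (by the constant value $1$ near $0$ and $0$ near $\infty$, with the linear piece truncated at $[0,1]$ if $t < 1/m$), belong to $C_b(I)$. By their explicit construction,
$$\psi_{t,m}^- \nearrow \chi_{[0,t)} \qquad \text{and} \qquad \psi_{t,m}^+ \searrow \chi_{[0,t]}$$
pointwise on $I$ as $m \to \infty$. Applied $\rho$ by $\rho$, monotone convergence then gives
$$\rho[0,t) = \lim_{m \to \infty} \bI_{\psi_{t,m}^-}(\rho), \qquad \rho[0,t] = \lim_{m \to \infty} \bI_{\psi_{t,m}^+}(\rho)$$
for every $\rho \in \PSI$. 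Each right-hand side is a pointwise limit of continuous (hence Borel measurable) functions bounded in $[0,1]$, so both maps belong to $B(\PSI)$.

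The boundary cases $t \in \{0,\infty\}$ either reduce to constants or to monotone limits of the preceding case: $\rho[0,0) \equiv 0$ and $\rho[0,\infty] \equiv 1$, while $\rho\{0\} = \rho[0,0] = \lim_{n \to \infty} \rho[0,1/n]$ and $\rho[0,\infty) = \lim_{n \to \infty} \rho[0,n]$ are monotone limits of maps already shown to lie in $B(\PSI)$.

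There is no real obstacle here; the only point that demands care is verifying that the cutoff functions $\psi_{t,m}^\pm$, defined by cases on the half-line, are genuinely elements of $C_b(I)$ when $I$ is read as the one-point compactification $[0,\infty]$. Once that bookkeeping is recorded, the lemma is a textbook application of monotone convergence together with the definition of the weak topology on $\PSI$.
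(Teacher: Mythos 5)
Your proof is correct and follows exactly the paper's argument: approximate $\chi_{[0,t)}$ and $\chi_{[0,t]}$ by $\psi_{t,m}^{\mp}$, note by Remark~\ref{R:continuity} that $\bI_{\psi_{t,m}^{\pm}}\in C_b(\PSI)$, and conclude that $\rho\mapsto\rho[0,t)$ and $\rho\mapsto\rho[0,t]$ are pointwise limits of continuous maps, hence Borel. The additional bookkeeping for $t\in\{0,\infty\}$ and for continuity of the cutoffs on the compactified $I$ is sound but not a change of route.
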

\begin{proof} For each $\rho\in \PSI$, $\rho[0,t) = \lim_{m\to \infty}\bI_{\psi_{t,m}^-}(\rho)$ and $\rho[0,t]=\lim_{m\to \infty}\bI_{\psi_{t,m}^+}(\rho)$; as the pointwise limit of
elements of $C_b(\PSI)$, we have the claimed inclusion in $B(\PSI)$.\end{proof}

We then can prove
\begin{lemma}\label{L:distmeas} Fix $\VV\in \Pspace(\PSI)$.  The function
\begin{equation*} F_{\VV}(t) \Def \int_{\rho\in \PSI}\rho[0,t]\VV(d\rho)\qquad t\in I \end{equation*}
is a well-defined cdf on $I$ \textup{(}i.e., $0\le F_{\VV}\le 1$,and $F_{\VV}$ is left-continuous and nondecreasing\textup{)}.  Furthermore $dF_\VV$ is the
unique element of $\PSI$ such that
\begin{equation}\label{E:CCC} \int_{\rho\in \PSI}\lb \int_{t\in I}\psi(t)\rho(dt)\rb \VV(d\rho) = \int_{t\in I}\psi(t)dF_\VV(dt) \end{equation}
for all $\psi\in C_b(I)$.  Finally, the map $\VV\mapsto dF_\VV$ is a measurable
map from $\Pspace(\PSI)$ to $\PSI$.\end{lemma}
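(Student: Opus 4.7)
The plan is to break the proof into three steps, each leveraging the approximation scheme $\psi^\pm_{t,m}$ introduced just before Lemma \ref{L:meas}.

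First, for well-definedness as a cdf: by Lemma \ref{L:meas}, $\rho\mapsto \rho[0,t]$ is in $B(\PSI)$, so the defining integral makes sense and lies in $[0,1]$, and monotonicity in $t$ is immediate from $\rho[0,s]\le \rho[0,t]$ when $s\le t$. For the appropriate one-sided continuity (right-continuity, as is natural for the closed endpoint $[0,t]$), I would take $t_n\searrow t$, note $[0,t]=\bigcap_n[0,t_n]$ so $\rho[0,t_n]\searrow \rho[0,t]$ pointwise in $\rho$, then apply bounded convergence under $\VV$ to conclude $F_\VV(t_n)\searrow F_\VV(t)$. The usual limits at $0$ and $\infty$ follow the same way, so $dF_\VV$ is a bona fide Borel probability measure on $I$.

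Second, the identity \eqref{E:CCC} and uniqueness: by construction, taking $\psi=\chi_{[0,t]}$ makes both sides equal $F_\VV(t)$, so the identity holds on the indicator class. I would then extend to $\psi\in C_b(I)$ in two moves. Starting from the bracketing $\psi^-_{t,m}\le \chi_{[0,t]}\le \psi^+_{t,m}$, bounded convergence applied twice (first on the inner integral against $\rho$, then on the outer integral against $\VV$) verifies \eqref{E:CCC} for every $\psi^\pm_{t,m}$; then a monotone class / Dynkin $\pi$-$\lambda$ argument, using linearity and boundedness of both sides, promotes the identity from these continuous approximants to all of $C_b(I)$. Uniqueness of $dF_\VV$ is then immediate, since $C_b(I)$ separates probability measures on the Polish space $I$.

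Third, measurability of $\VV\mapsto dF_\VV$: since $\PSI$ is equipped with the weak topology, it suffices to check that for every $\psi\in C_b(I)$ the composition
\begin{equation*} \VV \mapsto \int_{t\in I}\psi(t)\,dF_\VV(dt) \end{equation*}
is Borel measurable on $\PPSI$. But by \eqref{E:CCC} this map equals $\VV \mapsto \int_{\rho\in\PSI}\bI_\psi(\rho)\,\VV(d\rho)$, and Remark \ref{R:continuity} tells us $\bI_\psi\in C_b(\PSI)$, so the resulting map on $\PPSI$ is in fact continuous in the weak topology, hence Borel measurable.

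The one genuine technical hurdle is the middle step: before even writing the outer integral on the left of \eqref{E:CCC}, one needs the integrand $\rho\mapsto \int\psi\,d\rho=\bI_\psi(\rho)$ to be measurable in $\rho$, which is precisely the content of Remark \ref{R:continuity}. With that secured, the extension from indicators $\chi_{[0,t]}$ to $C_b(I)$ via the sandwich $\psi^-_{t,m}\le \chi_{[0,t]}\le \psi^+_{t,m}$ is a routine dominated-convergence computation, and no machinery beyond what was already used in Lemma \ref{L:meas} is required.
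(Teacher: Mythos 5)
Your proposal is correct and follows essentially the same path as the paper: Lemma \ref{L:meas} plus dominated convergence for the cdf facts (you correctly note that right-continuity, not the ``left-continuity'' stated, is the relevant property for $t\mapsto \rho[0,t]$), approximation by indicators for \eqref{E:CCC}, and Remark \ref{R:continuity} to get continuity---and hence measurability---of $\VV \mapsto dF_\VV$. One small cleanup: the extension to $C_b(I)$ is more naturally stated going from $\chi_{[0,t]}$ to simple functions by linearity and then to arbitrary $\psi\in C_b(I)$ by uniform approximation and dominated convergence, rather than detouring through the $\psi^\pm_{t,m}$ and a monotone-class argument, and the construction of the measure $dF_\VV$ from the cdf $F_\VV$ should be cited (the paper invokes Carath\'eodory after compactifying $I$), but the essential argument is the paper's.
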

\begin{proof} Lemma \ref{L:meas} immediately implies that the
integral defining $F_\VV$ is well-defined.  It is fairly clear that $F_\VV$
is indeed a cumulative cdf on $I$ (use dominated convergence to show right-continuity).  We define $dF_\VV$ by setting $dF_\VV[0,t]= F_\VV(t)$
(by mapping $I$ to $[0,\pi/2]$, it is sufficient by Carath\'eodory's extension theorem to see that this defines a measure on a semialgebra which generates $\Borel(I)$; see \cite[Section 12.2]{MR90g:00004}).
Standard approximation results (viz., approximate $\psi$ by indicators) then imply
\eqref{E:CCC}.  The right-hand side of \eqref{E:CCC} uniquely defines $F_{\VV}$.
Finally, by Remark \ref{R:continuity}, we can easily see that
if $\VV_n\to \VV$ in $\Pspace(\PSI)$, then for any $\psi\in C_b(I)$,
\begin{multline*} \lim_{n\to \infty}\int_{t\in I}\psi(t)dF_{\VV_n}(dt) 
=\lim_{n\to \infty}\int_{\rho\in \PSI}\lb \int_{t\in I}\psi(t)\rho(dt)\rb \VV_n(d\rho) \\
=\lim_{n\to \infty}\int_{\rho\in \PSI}\bI_{\psi}(\rho)\VV_n(d\rho) 
=\int_{\rho\in \PSI}\bI_{\psi}(\rho)\VV(d\rho) \\
= \int_{\rho\in \PSI}\lb \int_{t\in I}\psi(t)\rho(dt)\rb \VV(d\rho)
=\int_{t\in I}\psi(t)dF_{\VV}(dt). \end{multline*}
Thus the map $\VV\mapsto dF_\VV$ is continuous (and thus measurable).\end{proof}

\bibliographystyle{alpha}
\def\cprime{$'$} \def\cprime{$'$} \def\cprime{$'$} \def\cprime{$'$}
  \def\polhk#1{\setbox0=\hbox{#1}{\ooalign{\hidewidth
  \lower1.5ex\hbox{`}\hidewidth\crcr\unhbox0}}}

\end{document}